\theoremstyle{plain}
\newtheorem{thm}{Theorem}
\newtheorem{lem}[thm]{Lemma}
\newtheorem{coro}[thm]{Corollary}
\newtheorem{prop}[thm]{Proposition}
\newtheorem{clm}[thm]{Claim}
\newtheorem{ex}[thm]{Example}
\theoremstyle{definition}
\newtheorem{defn}[thm]{Definition}
\theoremstyle{remark}
\newtheorem{rmk}[thm]{Remark}
 \newcommand{\linkdest}[1]{\Hy@raisedlink{\hypertarget{#1}{}}}
\newcommand{\optheta}[1]{\lambda_{#1} \frac{\partial}{\partial \lambda_{#1}}}
\newcommand{\CovM}{K}
\newcommand{\grad}{\nabla}
\newcommand{\marginal}{M}
\newcommand{\mybf}[1]{\boldsymbol{#1}}
\newcommand{\sra}{\text{\scriptsize{~$\rightarrow$~}}}
\newcommand{\seq}{\text{\scriptsize{~$=$~}}}
\newcommand{\Influence}{\mathcal{I}}
\newcommand{\II}{\mathcal{I}}
\newcommand{\TSAW}{T_{\textsc{saw}}}
\newcommand{\sL}{{\sigma_\Lambda}}
\newcommand{\eps}{\epsilon}
\newcommand{\NP}{\mathsf{NP}}
\newcommand{\RP}{\mathsf{RP}}
\newcommand{\fpras}{\mathsf{FPRAS}}
\newcommand{\fptas}{\mathsf{FPTAS}}
\newcommand{\even}{\mathsf{even}}
\newcommand{\odd}{\mathsf{odd}}
\title{Rapid Mixing of Glauber Dynamics up to Uniqueness via~Contraction}
\author{
	Zongchen Chen\thanks{Georgia Institute of Technology. Email: chenzongchen@gatech.edu. 
	Research supported in part by NSF grant CCF-2007022.} 
	\and Kuikui Liu\thanks{University of Washington. Email: liukui17@cs.washington.edu.
	Research supported in part by NSF grant CCF-1907845 and ONR-YIP grant N00014-17-1-2429.} 
	\and Eric Vigoda\thanks{University of California, Santa Barbara. Email: vigoda@ucsb.edu. 
	Research supported in part by NSF grant CCF-2007022.} %$^*$
}
\date{\today}
\begin{document}
\maketitle

\begin{abstract}

For general antiferromagnetic $2$-spin systems, including the hardcore model on weighted independent sets and the
antiferromagnetic Ising model, 
there is an $\fptas$ for the partition function on graphs of maximum degree $\Delta$ when 
the infinite regular tree lies in the uniqueness region by Li et al. (2013).  Moreover, in the tree non-uniqueness region,
Sly (2010) showed that there is no $\fpras$ to estimate the partition function unless $\NP=\RP$.
The algorithmic results follow from the correlation decay approach due to Weitz (2006)
or the polynomial interpolation approach developed by Barvinok (2016). However the running time is only
polynomial for constant $\Delta$.  For the hardcore model, recent work of Anari et al. (2020) establishes rapid mixing 
of the simple single-site Markov chain known as the Glauber dynamics in the tree uniqueness 
region.  Our work simplifies their analysis of the Glauber dynamics by considering the total pairwise influence of a fixed vertex $v$ on other vertices, as opposed to the total influence of other vertices on $v$, thereby extending their work to
all 2-spin models and improving the mixing time.

More importantly our proof ties together the three disparate algorithmic approaches: 
we show that contraction of the so-called tree recursions with a suitable potential function, which
is the primary technique for establishing efficiency of Weitz's correlation decay approach and
Barvinok's polynomial interpolation approach, also establishes rapid mixing of the Glauber dynamics. We emphasize that 
this connection holds for all 2-spin models (both antiferromagnetic and ferromagnetic), and existing proofs for the correlation decay or
polynomial interpolation approach immediately imply rapid mixing of the Glauber dynamics. 
Our proof utilizes that the graph partition function is a divisor of
the partition function for Weitz's self-avoiding walk tree. This fact leads to new tools for the analysis of the
influence of vertices, and may be of
independent interest for the study of complex zeros. 
\end{abstract}

\thispagestyle{empty}

\newpage

\setcounter{page}{1}

\section{Introduction}
A remarkable connection has been established between the computational complexity of 
approximate counting problems in general graphs of maximum degree $\Delta$ and the 
statistical physics phase transition on infinite, regular trees of degree $\Delta$ (or up to $\Delta$ in the more general case). 
This connection holds for 2-state antiferromagnetic spin systems -- the hardcore model on independent sets and the 
Ising model are the most interesting examples of such systems.

Given an $n$-vertex graph $G = (V, E)$, configurations of the 2-spin model are the $2^n$ assignments of spins ${0, 1}$ to the vertices.
A 2-spin system is defined by three parameters: edge weights $\beta,\gamma>0$ and a vertex weight $\lambda>0$.
Edge parameter $\beta$ controls the (relative)
strength of interaction between neighboring $1$-spins, $\gamma$ corresponds to neighboring $0$-spins, and $\lambda$ is the external field
applied to vertices with $1$-spins.   

Every spin configuration $\sigma \in \{0,1\}^V$ is assigned a weight
$$ w_G(\sigma) = \beta^{m_1(\sigma)} \gamma^{m_0(\sigma)} \lambda^{n_1(\sigma)}, $$ 
where, for spin $s\in\{0,1\}$, $m_s(\sigma) = \#\{uv \in E : \sigma_u = \sigma_v = s\}$ is the number of monochromatic edges with spin $s$, 
and $n_1(\sigma) = \#\{v \in V : \sigma_v = 1\}$ is the number of vertices with spin $1$
(as is standard, the parameters are normalized so we can avoid two additional parameters).
The Gibbs distribution over spin configurations is given by
%\[
$\mu_G(\sigma) = \frac{w_G(\sigma)}{Z_G(\beta,\gamma,\lambda)}, $
%\]
where  
%\begin{align*}
   $ Z_{G}(\beta,\gamma,\lambda) = \sum_{\sigma \in \{0,1\}^V} \beta^{m_1(\sigma)} \gamma^{m_0(\sigma)} \lambda^{n_1(\sigma)}$ 
%\end{align*}
is the partition function. 

There are two examples of particular interest: the hardcore model and the Ising model.
When $\beta=0$ and $\gamma=1$ then the only configurations with non-zero weight are 
independent sets of $G$ and the weight of an independent set $\sigma$ is $w(\sigma)=\lambda^{|\sigma|}$; this example is known as
the {\em hardcore model} where the parameter $\lambda$ corresponds to the fugacity.

In the case $\beta = \gamma$ then the important quantity is the total number of monochromatic edges 
$m(\sigma) = m_0(\sigma) + m_1(\sigma)$ and the weight of a configuration $\sigma$ is 
$w(\sigma)=\beta^{m(\sigma)}\lambda^{n_1(\sigma)}$; this is the classical {\em Ising model} where the parameter $\beta$ 
corresponds to the inverse temperature and $\lambda$ is the external field ($\lambda=1$ means no external field).
Note, when $\beta>1$ then the model is {\em ferromagnetic} as neighboring vertices prefer to have the same spin,
and $\beta<1$ is the {\em antiferromagnetic} Ising model.
In the general $2$-spin system, the model is ferromagnetic when $\beta\gamma>1$ and antiferromagnetic when $\beta\gamma<1$. (When $\beta\gamma = 1$ we get a trivial product distribution.)

The fundamental algorithmic tasks are to sample from the Gibbs distribution and to estimate the partition function.  
For the approximate sampling problem we are given a graph $G$ and an $\epsilon>0$ and our goal is to generate a sample
from a distribution $\pi$ which is within total variation distance $\leq\eps$ of the Gibbs distribution $\mu_G$ in
time $\poly(n,\log(1/\eps))$.   An efficient approximate sampling algorithm implies an $\fpras$ (fully-polynomial
randomized approximation scheme) for the approximate 
counting problem~\cite{JVV,SVV}.   Recall, given an $n$-vertex graph $G$, and $\eps,\delta>0$, an $\fpras$ 
outputs a $(1\pm\eps)$-approximation of $Z_G$ with probability $\geq 1-\delta$ in time $\poly(n,1/\eps,\log(1/\delta))$, whereas
an $\fptas$ is the deterministic analog (i.e., $\delta=0$).

A standard approach to the approximate sampling problem is the Markov Chain Monte Carlo (MCMC) method; in
fact there is a simple Markov chain known as the {\em Glauber dynamics}.  The Glauber dynamics works as follows: from
a configuration $X_t$ at time $t$, choose a random vertex $v$, we then set $X_{t+1}(w)=X_t(w)$ for all $w\neq v$, and 
finally we choose $X_{t+1}(v)$ from the conditional distribution of $\mu(\sigma_v|\sigma_w=X_{t+1}(w) \mbox{ for all } w\neq v)$.
For the case of the hardcore model, then $X_{t+1}(v)$ is set to occupied (i.e., spin $1$) with probability $\lambda/(1+\lambda)$ if
no neighbors are currently occupied, and otherwise it is set to unoccupied.  

It is straightforward to verify that the Glauber dynamics is ergodic with the Gibbs distribution as the unique stationary distribution.
The {\em mixing time} 
is the minimum number of steps to guarantee, from the worst initial state $X_0$, that the distribution of $X_t$ is within
total variation distance $\leq 1/4$ of the Gibbs distribution.  The goal is to prove that the mixing time is polynomial in $n$, in which
case the chain is said to be {\em rapidly mixing}.

For the case of the ferromagnetic Ising model (with or without an external field), 
a classical result of Jerrum and Sinclair~\cite{JS:ising} gives an $\fpras$ for all graphs via the MCMC method.
This is the only case with an efficient algorithm for general graphs.  
For antiferromagnetic 2-spin models the picture is closely tied to statistical physics phase transitions on the regular tree.

The uniqueness/non-uniqueness phase transition is nicely illustrated for the case of the hardcore model.
Consider the infinite $\Delta$-regular tree $T$ rooted at $r$, and let $T_h$ denote the tree truncated at the first $h$ levels.
This phase transition captures whether the configuration at the leaves of $T_h$ ``influences'' the root, in the limit $h\rightarrow\infty$.
For the hardcore model we can consider even height trees (corresponding to the all even boundary condition) versus
odd height trees.
Let $p_h$ denote the marginal probability that the root is occupied in the Gibbs distribution $\mu_{T_h}$.
Let $p_{\even} = \lim_{h\rightarrow\infty} p_{2h}$ and $p_{\odd} = \lim_{h\rightarrow\infty} p_{2h+1}$.
We say that {\em tree uniqueness} holds if $p_{\even} = p_{\odd}$ and {\em tree non-uniqueness} holds if they are not equal.
For all $\Delta\geq 3$ there exists a critical fugacity $\lambda_c(\Delta) = (\Delta-1)^{\Delta-1}/(\Delta-2)^{\Delta})$~\cite{Kelly},
where tree uniqueness holds iff $\lambda\leq\lambda_c(\Delta)$.

The remarkable connection is that an algorithmic phase transition for general graphs of maximum degree $\Delta$ occurs
at this same tree critical point.
For all constant $\Delta$, all $\delta>0$, all $\lambda<(1-\delta)\lambda_c(\Delta)$, all graphs of maximum degree $\Delta$,
\cite{Wei06} presented an $\fptas$ for approximating the partition function.  On the other side,
for all $\delta>0$, all $\lambda>(1+\delta)\lambda_c(\Delta)$,~\cite{Sly10,SS14,GSV16} proved that, 
unless $\NP=\RP$, there is no $\fpras$ for estimating the partition function.

%One important caveat is that the running time of Weitz's algorithm is $(n/\eps)^{C\log\Delta}$ where the approximation factor is
%$(1\pm\eps)$ and the constant $C$ depends polynomially on the gap $\delta$ (recall, $\lambda<(1-\delta)\lambda_c$).  
%Weitz's correlation decay algorithm was extended to all antiferromagnetic 2-spin systems in the corresponding
%tree uniqueness region (as we detail below) by Li, Lu, and Yin~\cite{LLY13} (see~\cite{SST} for the antiferromagnetic Ising
%model).
One important caveat is that the running time of Weitz's algorithm is $(n/\eps)^{C\log\Delta}$ where the approximation factor is
$(1\pm\eps)$ and the constant $C$ depends polynomially on the gap $\delta$ (recall, $\lambda<(1-\delta)\lambda_c$).  
Weitz's correlation decay algorithm was extended to 
the antiferromagnetic Ising model in the tree uniqueness region by Sinclair et al.~\cite{SST14}, 
and to all antiferromagnetic 2-spin systems in the corresponding
tree uniqueness region (as we detail below) by Li, Lu, and Yin~\cite{LLY13}. 

An intriguing new algorithmic approach was presented by Barvinok~\cite{Bar17} and refined by Patel and Regts~\cite{PR17},
utilizing the absence of zeros of the partition function in the complex plane to efficiently approximate a suitable transformation of
the logarithm of the partition function using Taylor approximation.  This polynomial interpolation approach was shown to be efficient in the same
tree uniqueness region as for Weitz's result by Peters and Regts~\cite{PR19}, although the exponent in the running time depends
exponentially on $\Delta$.

It was long conjectured that the simple Glauber dynamics is rapidly mixing in the tree uniqueness region.  
This was recently proved by Anari, Liu, and Oveis Gharan \cite{ALO20}; they proved, for all $\delta>0$, the mixing time is $n^{O(\exp(1/\delta))}$
whenever $\lambda<(1-\delta)\lambda_c(\Delta)$.    We improve this result.  First, we improve the 
mixing time from $n^{O(\exp(1/\delta))}$ to $n^{O(1/\delta)}$ as detailed in the following theorem.

\begin{thm}[Hardcore model]
\label{thm:main-hardcore}
Let $\Delta \ge 3$ be an integer and $\delta \in (0,1)$. 
For every $n$-vertex graph $G$ of maximum degree $\Delta$ and every $0< \lambda \le (1-\delta) \lambda_c(\Delta)$, 
the mixing time of the Glauber dynamics for the hardcore model on $G$ with fugacity $\lambda$ is $O(n^{2+32/\delta})$. 
\end{thm}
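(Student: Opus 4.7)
The plan is to apply the spectral independence framework of~\cite{ALO20}: if the pairwise influence matrix $\II$ has spectral radius at most $\eta$ uniformly over all pinnings (and all induced subgraphs), then the Glauber dynamics for the hardcore model mixes in time $n^{O(\eta)}$. Hence it suffices to bound the spectral radius of $\II$ by $O(1/\delta)$, which would yield the claimed $n^{2+O(1/\delta)}$ mixing time, with the exponent refined to $32/\delta$ by tracking constants.

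The first conceptual move is to bound the spectral radius via the row sum $\sum_v |\II(u \to v)|$ (the total influence of a fixed vertex $u$ on every other vertex) rather than via the column sum, as was done in~\cite{ALO20}. This reorientation is natural because the influence emanating from a single root $u$ is encoded by Weitz's self-avoiding walk tree $\Tsaw(G, u)$: every vertex $v \in V$ has a collection of ``copies'' in $\Tsaw(G, u)$ forming the preimage of $v$ under the natural projection. I would therefore aim to establish a decomposition identity of the form
\[
\II_G(u \to v) \;=\; \sum_{v' \in \Tsaw(G,u),\ v' \mapsto v} \pm\, \II_{\Tsaw(G,u)}(u \to v'),
\]
where the signs are dictated by the pinnings placed at the terminating vertices of $\Tsaw$. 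My approach to deriving this identity would start from the divisibility $Z_G \mid Z_{\Tsaw(G,u)}$ (as polynomials once each vertex activity is promoted to its own variable $\lambda_w$) highlighted in the abstract: taking logarithmic derivatives of both sides with respect to $\lambda_v$ and comparing to the definition of the influence from $u$ expresses $\II_G(u \to v)$ as a signed sum over copies of $v$ on the SAW tree.

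Once the problem is transferred to $\Tsaw(G, u)$, I would invoke the contraction of the standard tree recursion in the Li-Lu-Yin potential function~\cite{LLY13}: in the uniqueness regime $\lambda \le (1-\delta) \lambda_c(\Delta)$ the one-step tree recursion contracts at a rate of the form $1 - \Omega(\delta)$, which implies exponential decay of root-to-vertex influences with depth. Since the level-$\ell$ of $\Tsaw$ has at most $\Delta (\Delta-1)^{\ell-1}$ vertices and each contributes a factor that decays like $(1-\Omega(\delta))^{\ell}$ after accounting for the potential-function distortion, summing the geometric series over all levels bounds the row sum uniformly by $O(1/\delta)$, giving the desired spectral-independence constant.

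The main obstacle is the second step: ensuring that the divisibility identity yields a \emph{cancellation-friendly} decomposition of $\II_G(u \to v)$. A naive bound that sums $|\II_{\Tsaw}(u \to v')|$ over all copies of $v$ can be catastrophically large, since $\Tsaw$ has exponentially many vertices in $n$; one needs the signs to be arranged so that the logarithmic-derivative identity organizes the contributions strictly by level, with at most $(\Delta-1)^{\ell}$ active level-$\ell$ terms each geometrically damped by LLY contraction. Getting this identity into a form that plays well with the potential function---so that the contraction estimate of~\cite{LLY13} can be plugged in verbatim---is the core technical challenge. Once this lemma is in place, extracting the explicit exponent $32/\delta$ reduces to standard bookkeeping of the potential-function Lipschitz constants and the rate of contraction.
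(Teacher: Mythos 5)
Your high-level plan matches the paper's proof almost exactly: bound $\lambda_{\max}(\II_G^{\sigma_\Lambda})$ by the row sum $\sum_v |\II_G^{\sigma_\Lambda}(r\sra v)|$ rather than the column sum of \cite{ALO20}, transfer this to Weitz's SAW tree via a divisibility statement $Z_G\mid Z_{\TSAW(G,r)}$, and then invoke contraction of the tree recursion in the \cite{LLY13} potential. The constant chain $32/\delta$ also comes out the right way: $\lambda\le(1-\delta)\lambda_c$ gives up-to-$\Delta$ uniqueness with gap $\delta/4$ (Lemma C.1 of \cite{ALO20}), which yields an $(\alpha,c)$-potential with $\alpha\geq\delta/8$, $c\leq4$, and the mixing bound $O(n^{2+c/\alpha})$ then gives the exponent $32/\delta$.

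However, your description of the ``core technical challenge'' in Step 2 rests on a misconception. You worry that the decomposition of $\II_G(u\sra v)$ over copies in $\TSAW$ requires $\pm$ signs and delicate cancellation, because ``a naive bound that sums $|\II_{\TSAW}(u\sra v')|$ over all copies of $v$ can be catastrophically large.'' In fact the paper's identity (the corollary of \cref{lem:Tsaw-G}) is an unsigned sum,
\[
\Influence_G^{\sigma_\Lambda}(r\sra v) \;=\; \sum_{\hat v\in\mathcal{C}_v}\Influence_{T}^{\sigma_\Lambda}(r\sra\hat v),
\]
over the free copies $\mathcal{C}_v$ of $v$, and the proof uses only the triangle inequality: $\sum_{v}|\II_G(r\sra v)|\leq\sum_{\hat v\in V_T}|\II_T(r\sra\hat v)|$. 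No cancellation is needed, and the exponentially many vertices of $\TSAW$ do not make this bound vacuous. The reason is that the \hyperlink{cond:contraction}{Contraction} condition $\sum_{i=1}^d \tfrac{\psi(y)}{\psi(y_i)}|h(y_i)|\leq 1-\alpha$ is a bound on the \emph{sum over all children} of the one-step influence factors; the branching factor $d$ is already absorbed inside that sum, so what decays is the \emph{per-level total} $\sum_{v\in L_r(k)}|\II_T(r\sra v)|\leq c(1-\alpha)^{k-1}$, not the per-vertex influence. You were implicitly picturing each vertex at depth $\ell$ contributing $\approx(1-\Omega(\delta))^{\ell}$, which would indeed lose to the $(\Delta-1)^{\ell}$ level sizes; but the correct statement is that the \emph{level-sum} is geometrically small, and then a single geometric series over levels gives the $O(1/\delta)$ bound. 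Re-reading the induction in the proof of \cref{lem:A-and-B} should make this clear: the potential ratio $\psi(y)/\psi(y_i)$ and the $\ell_1$ contraction do all the work with no sign bookkeeping.
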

This bound is optimal barring further improvements in the local-to-global arguments from \cite{AL20}. 
Our improved result follows from a simpler, cleaner proof approach 
which enables us to extend our result to a wide variety of 2-spin models, matching the key results for the correlation decay algorithm with 
vastly improved running times.

Our proof approach unifies the three major algorithmic tools for approximate counting: correlation decay, polynomial interpolation, and MCMC.
Most known results for both correlation decay and polynomial interpolation approach are proved by showing contraction of a suitably defined potential function on
the so-called tree recursions; the tree recursions arise as a result of Weitz's self-avoiding walk tree that we will describe
in more detail later in this paper.  A recent work of Shao and Sun~\cite{SS19} unifies these two approaches by showing
that the contraction which is normally used to prove efficiency of the correlation decay algorithm,
also implies (under some additional analytic conditions) that the polynomial interpolation approach is efficient.

Here we prove that this same contraction of a potential function also implies rapid mixing of the
Glauber dynamics, with our improved running time that is independent of $\Delta$; see \cref{defn:potential}
and \cref{thm:contraction-implies-mixing}
for a detailed statement.  Our proof utilizes 
several new tools concerning Weitz's self-avoiding walk tree, which are detailed in \cref{sec:intro-proof}.  
In particular, we show that the partition function of a graph $G$ divides the partition function of Weitz's self-avoiding walk tree; see \cref{lem:Tsaw-G}.
This result is potentially of independent interest for establishing absence of zeros for the partition function with complex parameters, as
it enables one to consider the self-avoiding walk tree.  
This result also yields a new, useful equivalence for bounding the
influence in a graph in terms of the self-avoiding tree, which strengthens the previously known connection by Weitz~\cite{Wei06}; see \cref{lem:Tsaw-G} for details.

As an easy consequence we obtain 
rapid mixing for the Glauber dynamics for the antiferromagnetic Ising model in the tree uniqueness region.
In terms of the edge activity, the two critical points for the Ising model on the $\Delta$-regular tree are at
$\beta_c(\Delta) = \frac{\Delta-2}{\Delta}$ and $\overline{\beta}_{c}(\Delta) = \frac{1}{\beta_{c}(\Delta)} = \frac{\Delta}{\Delta-2}$; the first lies in the antiferromagnetic regime, while the second lies in the ferromagnetic regime. If $\beta_{c}(\Delta) < \beta < \overline{\beta}_{c}(\Delta)$, then uniqueness holds for all external field $\lambda$ on the $\Delta$-regular tree.

%On the other hand, if $\beta \leq \beta_{c}(\Delta)$, then there exist two critical thresholds $0 \leq \lambda_{c}(\beta,\Delta) \leq 1 \leq \overline{\lambda}_{c}(\beta,\Delta)$ in the external field $\lambda$ satisfying $\lambda_{c}(\beta,\Delta) \cdot \overline{\lambda}_{c}(\beta,\Delta) = 1$ such that uniqueness holds iff either $\lambda < \lambda_{c}(\beta,\Delta)$ or $\lambda > \overline{\lambda}_{c}(\beta,\Delta)$. \TODO move

As mentioned earlier, for the ferromagnetic Ising model, an $\fpras$ was known for general graphs \cite{JS:ising}. Furthermore, Mossel and Sly \cite{MS13} proved $O(n\log{n})$ mixing time of the Glauber dynamics
for the ferromagnetic Ising model when $1 \leq \beta < \overline{\beta}_{c}(\Delta)$.  However, rapid mixing for the antiferromagnetic Ising model in the tree uniqueness region was not known.

We provide the following mixing result for the case $\beta > \beta_{c}(\Delta)$. Note, when $\beta\leq\beta_c$ there is an additional uniqueness region for certain values of the external field $\lambda$; this region is covered by~\cref{thm:main-2-spin}.

%We first detail the simpler statement for the case of the Ising model without an external field.
%Recall that for the antiferromagnetic Ising model we have $\beta\in (0,1)$ and hence the following formulation is analogous to \cref{thm:main-hardcore}.

\begin{thm}[Antiferromagnetic Ising Model]
\label{thm:main-Ising}
Let $\Delta \ge 3$ be an integer and $\delta \in (0,1)$. 
Assume that $1 > \beta \ge \beta_{c}(\Delta) + \delta(1 - \beta_{c}(\Delta))$ and $\lambda > 0$. 
Then for every $n$-vertex graph $G$ of maximum degree $\Delta$, the mixing time of the Glauber dynamics for the Ising model on $G$ with edge weight $\beta$ and external field $\lambda$ is $O(n^{2+1.5/\delta})$.
\end{thm}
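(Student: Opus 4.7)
The plan is to derive Theorem~\ref{thm:main-Ising} as a direct consequence of the unified framework announced above: invoke \cref{thm:contraction-implies-mixing}, which asserts that contraction of the tree recursions with respect to a suitable potential function (in the sense of \cref{defn:potential}) implies rapid mixing of the Glauber dynamics. What remains is to exhibit such a potential for the antiferromagnetic Ising tree recursion in the regime $\beta \ge \beta_c(\Delta) + \delta(1-\beta_c(\Delta))$ and with an explicit contraction factor depending linearly on $\delta$.

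First I would recall the tree recursion for the Ising model: for a vertex with children carrying ``ratios of marginals'' $R_1,\ldots,R_d$ (where $d \le \Delta-1$), the parent ratio is $F(R_1,\ldots,R_d) = \lambda \prod_i \frac{\beta R_i + 1}{R_i + \beta}$. The classical potential function used by Sinclair--Srivastava--Thurley~\cite{SST14} is $\Phi(R) = \int \sqrt{\psi(x)}\,dx$ with a suitably chosen weight $\psi$ (typically $\psi(R) = \frac{1}{\sqrt{R}(R+\beta)(\beta R+1)}$ up to normalization). Under this change of variables, the one-step contraction factor of $F$ at the symmetric fixed point is maximized and equals $\frac{\sqrt{(1-\beta^2)^2 d}}{(\beta + R^{*})(\beta R^{*}+1)/R^{*}}$ evaluated at the $d$-regular fixed point, and standard uniqueness analysis shows this quantity is strictly less than $1$ whenever $\beta > \beta_c(\Delta)$. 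I would then make the dependence on the gap quantitative: writing $\beta = \beta_c(\Delta) + \delta(1-\beta_c(\Delta))$, a Taylor expansion of this contraction factor about $\beta_c(\Delta)$ gives an expansion of the form $1 - c\delta + O(\delta^2)$ for an explicit constant $c$, uniformly in $\Delta \ge 3$. Choosing constants carefully should yield contraction rate at least $2/3 \cdot \delta$, which is what is needed for the $1.5/\delta$ exponent.

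Next, I would check the side conditions required by \cref{defn:potential}: boundedness of the derivative of $\Phi$, and a Lipschitz-type bound on $\Phi^{-1}$ on the relevant range of ratios. Both properties follow from the explicit formula for $\Phi$ on the bounded interval where $R$ can lie (the ratios stay bounded away from $0$ and $\infty$ because $\lambda > 0$ is fixed and $\beta \in [\beta_c(\Delta), 1)$), and these verifications are essentially identical to what was done in~\cite{SST14,LLY13}. With these in hand, \cref{thm:contraction-implies-mixing} applies and delivers mixing time $O(n^{2 + C/\delta})$ for some absolute constant $C$.

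The main obstacle is the bookkeeping needed to nail down the constant $C = 1.5$ in the exponent, as opposed to just some polynomial dependence on $1/\delta$. This requires (i) choosing the SST potential with the sharpest possible constant in the contraction inequality, (ii) a careful Taylor expansion of the contraction rate around the critical point $\beta_c(\Delta)$, tracking the leading coefficient in $\delta$ uniformly in $\Delta$, and (iii) tracing the dependence on the contraction constant through the translation from \cref{defn:potential} into the exponent produced by \cref{thm:contraction-implies-mixing}. The other conceptual content of Theorem~\ref{thm:main-Ising}, namely the passage from tree-potential contraction to Glauber mixing on general graphs, is entirely absorbed by \cref{thm:contraction-implies-mixing} and the divisibility property of the SAW-tree partition function (\cref{lem:Tsaw-G}), so no further probabilistic argument is needed in this proof.
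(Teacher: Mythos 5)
Your high-level strategy is sound — reduce to \cref{thm:contraction-implies-mixing} by exhibiting a potential satisfying \cref{defn:potential}, translate the parameter gap $\beta \ge \beta_c(\Delta)+\delta(1-\beta_c(\Delta))$ into a uniqueness gap, and track constants. But the specific potential you propose is the wrong one for hitting the stated exponent $1.5/\delta$, and this is a substantive gap, not a bookkeeping issue.

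You suggest using the Sinclair--Srivastava--Thurley / Li--Lu--Yin potential, $\psi(R)\propto 1/\sqrt{R(\beta R+1)(R+\beta)}$, and hoping a careful Taylor expansion yields contraction rate $\gtrsim \tfrac{2}{3}\delta$. This cannot work: the contraction bound delivered by that potential (\cref{thm:potentialgradientbound}, \cref{lem:contraction-Psi}) is $\lVert\nabla H_d^\Psi\rVert_1 \le \sqrt{1-\delta}$, which gives $\alpha = 1-\sqrt{1-\delta}\approx \delta/2$, not $2\delta/3$ — the square root is intrinsic to the AM--GM/Cauchy--Schwarz step in the LLY13 analysis, not a slack you can recover by Taylor expansion. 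With $\alpha\ge\delta/2$ and $c\le 1.5$ (\cref{lem:potential-is-good}), \cref{thm:contraction-implies-mixing} yields $O(n^{2+3/\delta})$, a factor of two worse in the exponent than claimed. The paper's trick to get $1.5/\delta$ is to abandon the nontrivial potential entirely in this regime and take $\Psi=\mathrm{id}$: when $\sqrt{\beta\gamma}=\beta > \tfrac{\Delta-2}{\Delta}$, the raw derivative satisfies $|h(y)|\le\frac{1-\beta}{1+\beta}\le\frac{1-\delta}{\Delta-1}$ for \emph{all} $y$ (this is \cref{lem:psimaximizer} combined with \cref{lem:largebetagammaparamgap}), so $\lVert\nabla H_d\rVert_1 \le d\cdot\frac{1-\delta}{\Delta-1}\le 1-\delta$ with no square root loss, giving $\alpha\ge\delta$ directly, and boundedness $c\le 1.5$ is immediate. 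That is case~1 of \cref{thm:ss19ferroregion} and \cref{rmk:better-contraction}, which is exactly what the paper cites. Your proposal also frames the contraction as a fixed-point/Taylor-expansion computation, whereas the paper's argument is a uniform pointwise bound on $|h|$ valid over the entire interval $J$ — contraction only at the fixed point would not satisfy the Contraction clause of \cref{defn:potential}, which quantifies over all $(\tilde y_1,\dots,\tilde y_d)\in S^d$.
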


%Fix edge activities $\beta,\gamma > 0$, vertex weight $\lambda>0$ and assume that $\beta \gamma <1$. 
%
%
%New Notation: \\
%1. Tree recursion for log ratio: $\log R_v = H(\log R_w: w \in L_v(1))$. \\
%2. Potential function: $\Psi$, with derivative $\psi = \Psi'$.

Our results for the hardcore and Ising models fit within a larger framework of general antiferromagnetic 2-spin systems.
Recall that the antiferromagnetic case is when $\beta\gamma<1$.

For general 2-spin systems the appropriate tree phase transition is more complicated as there are models where
the tree uniqueness threshold is not monotone in $\Delta$.  Hence the appropriate notion is ``up-to-$\Delta$ uniqueness'' as considered by~\cite{LLY13}.
Roughly speaking, we say uniqueness with gap $\delta \in (0,1)$ holds on the $d$-regular tree if for every integer $\ell \geq 1$, all vertices at distance $\ell$ from the root have total ``influence'' $\lesssim (1 - \delta)^{\ell}$ on the marginal of the root. We say up-to-$\Delta$ uniqueness with gap $\delta$ holds if uniqueness with gap $\delta$ holds on the $d$-regular tree for all $1 \leq d \leq \Delta$; see \cref{sec:preliminaries} for the precise definition.

Both \cref{thm:main-hardcore} and \cref{thm:main-Ising} are corollaries of the following general rapid mixing result which holds for 
general antiferromagnetic $2$-spin systems in the entire tree uniqueness region.

\begin{thm}[General antiferromagnetic $2$-spin system]
\label{thm:main-2-spin}
Let $\Delta \ge 3$ be an integer and $\delta \in (0,1)$. 
Let $\beta,\gamma,\lambda$ be reals such that $0\le \beta \le \gamma$, $\gamma>0$, $\beta\gamma <1$ and $\lambda>0$. 
Assume that the parameters $(\beta,\gamma,\lambda)$ are up-to-$\Delta$ unique with gap $\delta$. 
Then for every $n$-vertex graph $G$ of maximum degree $\Delta$, the mixing time of the Glauber dynamics for the antiferromagnetic $2$-spin system on $G$ with parameters $(\beta,\gamma,\lambda)$ is $O(n^{2+72/\delta})$. 
\end{thm}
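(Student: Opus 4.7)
The approach is a reduction to Theorem~\ref{thm:contraction-implies-mixing}. Under the hypothesis that $(\beta,\gamma,\lambda)$ is up-to-$\Delta$ unique with gap $\delta$, the correlation-decay framework of Li, Lu, and Yin~\cite{LLY13} already supplies exactly the ingredient needed: a potential function $\Phi = \Phi_{\beta,\gamma,\lambda}$ in which the tree recursion
\[
F_d(R_1,\dots,R_d) \;=\; \lambda \prod_{i=1}^d \frac{\beta R_i + 1}{R_i + \gamma}
\]
contracts by a factor of at most $1-\delta$ on every branching of arity $1 \le d \le \Delta - 1$ and over the full range of ratios attainable under arbitrary pinnings. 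The assumption $0 \le \beta \le \gamma$ and $\beta\gamma < 1$ places us squarely in the antiferromagnetic regime where this potential is well-defined and its derivative bounds are explicit.

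With this contraction established, the plan is simply to invoke \cref{thm:contraction-implies-mixing}, which takes precisely such a potential contraction as input and outputs a mixing time bound of the form $O(n^{2+c/\delta})$. The explicit constant $c = 72$ should emerge from tracking three quantities through the proof: the derivative bounds $|\Phi'(R)|$ and $|1/\Phi'(R)|$ over the admissible range of ratios, the conversion from potential-metric contraction to an $\ell^1$ bound on the total pairwise influence $\sum_u |\Influence(v \to u)|$, and finally the local-to-global spectral-independence conversion of~\cite{AL20}. Theorems~\ref{thm:main-hardcore} and~\ref{thm:main-Ising} then follow by specializing to $(\beta,\gamma) = (0,1)$ and $\beta = \gamma$ respectively, and checking that the standard hardcore/Ising uniqueness hypotheses translate into up-to-$\Delta$ uniqueness with gap $\Omega(\delta)$.

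The main obstacle---the heart of \cref{thm:contraction-implies-mixing}---is promoting one-step tree contraction into a total pairwise influence bound on the graph $G$ that is robust under arbitrary pinnings. The mechanism is \cref{lem:Tsaw-G}: since $Z_G$ divides $Z_{\TSAW(G,v)}$, pairwise influences on the original graph admit a clean representation as weighted contributions along root-to-leaf paths of Weitz's self-avoiding walk tree $\TSAW(G,v)$. Iterating the one-step $(1-\delta)$-contraction along such paths yields $\sum_u |\Influence(v \to u)| = O(1/\delta)$, uniformly in $n$ and in the pinning; plugging this into the spectral-independence framework of~\cite{ALO20,AL20} delivers the advertised bound. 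The crucial subtlety here is that the contraction must be uniform across all arities $d \le \Delta-1$ and all pinnings, since conditioning on spins effectively reduces local degree and changes the range of ratios that appear---this is precisely why the hypothesis is formulated in the \emph{up-to-$\Delta$} form rather than only for the $\Delta$-regular tree.
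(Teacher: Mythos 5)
Your high-level plan matches the paper's: invoke the contraction-implies-mixing theorem with the Li--Lu--Yin potential (adapted to log-ratios), using \cref{lem:Tsaw-G} to transfer influence bounds from the SAW tree to $G$, and close via the spectral-independence framework of \cite{ALO20,AL20}. But there is a genuine gap at the step where you claim that iterating the one-step contraction along root-to-leaf paths yields $\sum_u |\Influence_G(v \to u)| = O(1/\delta)$ ``uniformly in $n$ and in the pinning.'' This is false in the parameter regime $\sqrt{\beta\gamma} \le \frac{\Delta-2}{\Delta}$ with $\gamma > 1$: the paper gives an explicit counterexample (the star graph with $\beta=0$, $\gamma>1$) in which $\sum_u |\Influence_G(r \to u)| = \Theta(\Delta)$ even though $(\beta,\gamma,\lambda)$ is in the uniqueness region, and indeed when up-to-$\infty$ uniqueness holds the row sum can be $\Theta(n)$. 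In that regime no $(\alpha,c)$-potential with $c$ independent of $\Delta$ exists in the sense of \cref{defn:potential}, so \cref{thm:contraction-implies-mixing} as stated cannot be invoked directly.

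The fix---and this is a substantial piece of the paper you have omitted---is the degree-weighted variant of the argument. One replaces the row sum $\sum_u |\Influence_G(r \to u)|$ by the weighted sum $\sum_u \rho_u |\Influence_G(r \to u)|$ with $\rho_u = \Delta_u$, which by \cref{lem:w-row-sum} still upper-bounds $\lambda_{\max}(\Influence_G)$ after the similarity transform $\mathcal{P}^{-1}\Influence_G\mathcal{P}$. This requires the relaxed \hyperlink{cond:boundedness'}{General Boundedness} condition of \cref{defn:potential-weighted} and the companion \cref{thm:contraction-mixing-w}, whose conclusion is $O(n^{2+2c/\alpha})$ (note the extra factor of $2$ in the exponent versus $O(n^{2+c/\alpha})$ for the unweighted case). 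Verifying this relaxed boundedness for the LLY potential in the hard case ($\sqrt{\beta\gamma} \le \frac{\Delta-2}{\Delta}$, $\gamma > 1$) is exactly \cref{lem:potential-is-good-lambdac} and relies on the non-monotone structure of the uniqueness region from \cite{LLY13}, including the case analysis over the three intervals $(0,\lambda_1(d_L))$, $(\lambda_2(d_R),\infty)$, $(\lambda_2(d_L),\lambda_1(d_R))$. This is also where your remark that $c=72$ ``should emerge'' leaves the main work undone: the constant comes from $\alpha \ge \delta/2$, $c \le 18$, and the factor of $2$ from the weighted argument, i.e.\ $2 \cdot 18 / (\delta/2) = 72/\delta$. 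Without the weighted machinery you would not get a $\Delta$-free exponent at all in that regime.
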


We also match existing correlation decay results \cite{GL18, SS19} for ferromagnetic 2-spin models; see \cref{sec:ferro} for results, and \cref{sec:ferroproofs} for proofs.

\subsection{Mixing by the potential method}
\label{sec:mixing-potential}
%Sampling is much easier for spin systems on trees and can be solved in a recursive way. 
The tree recursion is very useful in the study of approximating counting. 
Consider a tree rooted at $r$. 
Suppose that $r$ has $d$ children, denoted by $v_1,\dots,v_d$. 
For $1\le i\le \Delta_i$ we define $T_{v_i}$ to be the subtree of $T$ rooted at $v_i$ that contains all descendant of $v_i$. 
Let $R_r = \mu_T(\sigma_r \seq 1) / \mu_T(\sigma_r \seq 0)$ denote the marginal ratio of the root, 
and $R_{v_i} = \mu_{T_{v_i}}(\sigma_{v_i} \seq 1) / \mu_{T_{v_i}}(\sigma_{v_i} \seq 0)$ for each subtree. 
The \emph{tree recursion} is a formula that computes $R_r$ given $R_{v_1},\dots,R_{v_d}$, due to the independence of $T_{v_i}$'s. 
More specifically, we can write $R_r = F_d(R_{v_1},\dots, R_{v_d})$ where $F_d:[0,+\infty]^d \to [0,+\infty]$ is a multivariate function such that for $(x_1,\dots,x_d) \in [0,+\infty]^d$, 
\[
F_d(x_1,\dots,x_d) = \lambda \prod_{i=1}^d \frac{\beta x_i + 1}{x_i + \gamma}. 
\]
%As a crucial example, the tree-uniqueness is defined in terms of the function $F$ (see \cite{LLY13}). 
%For $\beta,\gamma,\lambda>0$ such that $\beta \gamma <1$, 
%we say $(\beta,\gamma,\lambda)$ is up-to-$\Delta$ unique if $| \tilde{F}'_d(x^*_d) | < 1$ for all $1 \leq d < \Delta$, where $\tilde{F}_d$ denote the univariate version of $F_d$ (i.e., $\tilde{F}_d(x) = F_d(x,\dots,x)$), and $x^*_d$ denotes the unique fixed point of $\tilde{F}_d$. 

In this paper, however, we pay particular interest in the log of marginal ratios. 
The reason is that we will carefully study the \emph{pairwise influence matrix} $\II_G$ of the Gibbs distribution $\mu_G$, introduced in \cite{ALO20} and defined as for every $r,v \in V$ 
\begin{equation*}
    \II_G(r \sra v) = \mu_G(\sigma_v \seq 1 \mid \sigma_r \seq 1) - \mu_G(\sigma_v \seq 1 \mid \sigma_r \seq 0). 
\end{equation*}
In \cite{ALO20}, the authors show that if the maximum eigenvalue of $\II_G$ is bounded appropriately, 
then the Glauber dynamics is rapid mixing. 
One crucial observation we make in this paper is that the influence $\II_G(r \sra v)$ of $r$ on $v$ can be viewed as the derivative of $\log R_r$ with respect to the log external field at $v$ (see \cref{lem:2spin-sys-property}). 
Thus, it is more convenient for us to work with the log ratios. 
To this end, we rewrite the tree recursion as $\log R_v = H_d(\log R_{v_1}, \dots, \log R_{v_d})$ where $H_d: [-\infty, +\infty]^d \to [-\infty, +\infty]$ is a function such that for $(y_1,\dots,y_d)\in [-\infty, +\infty]^d$, 
\[
H_d(y_1,\dots,y_d) = \log \lambda + \sum_{i=1}^d \log \left( \frac{\beta e^{y_i} + 1}{e^{y_i} + \gamma} \right). 
\]
Observe that $H = \log \circ F \circ \exp$. 
Moreover, we define 
\[
h(y) = - \frac{(1-\beta\gamma) e^y}{(\beta e^y + 1)(e^y + \gamma)} 
\]
for $y\in[-\infty,+\infty]$, 
so that $\frac{\partial}{\partial y_i} H_d(y_1,\dots,y_d) = h(y_i)$ for each $i$. 

To prove our main results, we use the potential method, which has been widely used to establish the decay of correlation. 
By choosing a suitable potential function for the log ratios, 
we show that the total influence from a given vertex decays exponentially with the distance, and thus establish 
rapid mixing of the Glauber dynamics. 
Let us first specify our requirements on the potential.
%To state our requirements, let us first 
For every integer $d\ge 0$, we define a bounded interval $J_d$ which contains all log ratios at a vertex of degree $d$. 
%(modulo pinnings of vertices to $0,1$). 
More specifically, we let $J_{d} = \wrapb{\log(\lambda \beta^{d}), \log(\lambda / \gamma^{d})}$ when $\beta\gamma < 1$, and $J_{d} = \wrapb{\log(\lambda / \gamma^{d}), \log(\lambda \beta^{d})}$ when $\beta\gamma > 1$. Furthermore, define $J = \bigcup_{d = 0}^{\Delta-1} J_{d}$ to be the interval containing all log ratios with degree less than $\Delta$. 
\begin{defn}[$(\alpha,c)$-Potential function]
\label{defn:potential}
Let $\Delta \ge 3$ be an integer. 
Let $\beta,\gamma,\lambda$ be reals such that $0\le \beta \le \gamma$, $\gamma>0$ and $\lambda>0$. 
Let $\Psi:[-\infty,+\infty] \to (-\infty,+\infty)$ be a differentiable and increasing function with image $S = \Psi[-\infty,+\infty]$ and derivative $\psi = \Psi'$. 
For any $\alpha\in(0,1)$ and $c>0$, 
we say $\Psi$ is an \emph{$(\alpha,c)$-potential function} with respect to $\Delta$ and $(\beta, \gamma, \lambda)$ if it satisfies the following conditions:
\begin{enumerate}
\item \linkdest{cond:contraction}(Contraction) For every integer $d$ such that $1\le d<\Delta$ and every $(\tilde{y}_1,\dots,\tilde{y}_d) \in S^d$, we have
\[
\norm{\grad H_{d}^{\Psi}(\tilde{y}_1,\dots,\tilde{y}_d)}_1 = \sum_{i=1}^d \frac{\psi(y)}{\psi(y_i)} \cdot |h(y_i)| \le 1-\alpha
\]
where $H_{d}^{\Psi} = \Psi \circ H_{d} \circ \Psi^{-1}$, $y_i = \Psi^{-1}(\tilde{y}_i)$ for $1\le i \le d$, and $y = H_{d}(y_1,\dots,y_d)$. 
\item \linkdest{cond:boundedness}(Boundedness) For every $y_{1},y_{2} \in J$, we have 
\[
\frac{\psi(y_{2})}{\psi(y_{1})} \cdot \abs{h(y_{1})} \leq \frac{c}{\Delta}.
%\sqrt{\frac{\Delta}{c}} \cdot |h(y)| \le \psi(y) \le \sqrt{\frac{c}{\Delta}}. 
%\quad\text{and}\quad 
%\frac{|h(y)|}{\psi(y)} \le \sqrt{\frac{c}{\Delta}}. 
\]
\end{enumerate}
\end{defn}

In the definition of $(\alpha,c)$-potential, one should think of $y$ as the log marginal ratio at a vertex and the potential function is of $\log R$. 
The following theorem establishes rapid mixing of the Glauber dynamics given an $(\alpha,c)$-potential function. 

%Our key technical contribution lies in a general criterion for obtaining rapid mixing of the Glauber dynamics from sampling any 2-spin system. 
%We show that if there exists a suitable contracting potential function w.r.t. the associated tree recurrence for the spin system, 
%then the Glauber dynamics mixes rapidly. 
%We emphasize that this result holds for the ferromagnetic case as well.

\begin{thm}
\label{thm:contraction-implies-mixing}
Let $\Delta \ge 3$ be an integer. 
Let $\beta,\gamma,\lambda$ be reals such that $0\le \beta \le \gamma$, $\gamma>0$ and $\lambda>0$. 
Suppose that there is an $(\alpha,c)$-potential with respect to $\Delta$ and $(\beta,\gamma,\lambda)$ for some $\alpha \in (0,1)$ and $c>0$. 
Then for every $n$-vertex graph $G$ of maximum degree $\Delta$, 
the mixing time of the Glauber dynamics for the $2$-spin system on $G$ with parameters $(\beta,\gamma,\lambda)$ is $O(n^{2+c/\alpha})$. 
\end{thm}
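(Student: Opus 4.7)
The plan is to apply the spectral-independence framework of Anari--Liu--Oveis~Gharan~\cite{ALO20} as sharpened by the local-to-global argument of Anari--Liu~\cite{AL20}: if for every pinning $\tau$ of spins on some $\Lambda\subseteq V$ the row sums of the pinned influence matrix satisfy $\sum_v |\II_{G}^{\tau}(r\sra v)| \le \eta$, then the Glauber dynamics mixes in $O(n^{2+\eta})$ steps. Our target bound is $\eta = c/\alpha$, and since pinning produces a smaller 2-spin instance to which the same contraction/boundedness hypotheses apply (the pinned SAW tree being the pruned original), it suffices to prove the unpinned bound $\sum_v|\II_G(r\sra v)|\le c/\alpha$.

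The first step is to pass from $G$ to Weitz's self-avoiding walk tree $T=\TSAW(G,r)$. Here the new divisibility result $Z_G\mid Z_{\TSAW}$ (\cref{lem:Tsaw-G}) enters: it upgrades Weitz's marginal-ratio identity $R_r^G=R_r^T$ from a numerical equality to an equality of rational functions in $\lambda$, so we can differentiate in $\log \lambda_v$ to get $\II_G(r\sra v)=\sum_{\tilde v\text{ copy of }v}\II_T(r\sra \tilde v)$, and the triangle inequality yields
\[ \sum_{v \in V}|\II_G(r\sra v)| \le \sum_{\tilde v\in T\setminus\{r\}}|\II_T(r\sra \tilde v)|. \]

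The second step is the tree analysis. On $T$, the chain rule along the unique path $r=\tilde u_0,\ldots,\tilde u_\ell=\tilde v$ gives $\II_T(r\sra\tilde v)=\prod_{j=1}^\ell h(y_{\tilde u_j})$, with $y_u=\log R_u$. Let $S(u)=\sum_{\tilde v\in T_u}|\II_{T_u}(u\sra \tilde v)|$, which satisfies $S(u)=1+\sum_{w\text{ child of }u}|h(y_w)|S(w)$. Weighting by $\psi(y_u)$, the quantity $W(u):=\psi(y_u)S(u)$ obeys the clean recursion
\[ W(u)=\psi(y_u)+\sum_{w}\tilde h_w\,W(w), \qquad \tilde h_w=\tfrac{\psi(y_u)}{\psi(y_w)}|h(y_w)|. \]
At any non-root node $u$, SAW continuations exclude the parent, so $u$ has at most $\Delta-1$ children in $T$ and the \emph{contraction} condition of \cref{defn:potential} gives $\sum_w \tilde h_w\le 1-\alpha$. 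A short induction on subtree height then yields $W(u)\le \psi_{\max}/\alpha$, where $\psi_{\max}=\max_{y\in J}\psi(y)$ (note $y_u\in J$ for any non-root $u$, since then $u$ has degree $<\Delta$). At the root (which may have up to $\Delta$ children), I instead use \emph{boundedness} with $(y_1,y_2)=(y_w,y^*)$ for $y^*\in J$ achieving $\psi_{\max}$ to obtain $|h(y_w)|/\psi(y_w)\le (c/\Delta)/\psi_{\max}$; combining with $W(w)\le \psi_{\max}/\alpha$ gives $|h(y_w)|S(w)\le c/(\Delta\alpha)$, and summing over the at-most-$\Delta$ children of the root yields $S(r)-1\le c/\alpha$, as required.

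The main technical subtlety is the graph-to-tree identity in the first step. Weitz's identity $R_r^G=R_r^T$ is originally an equality at the given physical parameters; differentiating it in $\log\lambda_v$ requires the equality to be \emph{functional} in the fields (in a neighborhood of the parameters), and this is exactly what the divisibility lemma supplies. Once that identity is in hand, the potential-function analysis on the tree is very clean: contraction drives the geometric bound at non-root nodes (using $d\le\Delta-1$), while boundedness converts the unweighted $|h|$ at the top level into the $c/\Delta$ factor that cancels the $\Delta$-fold sum at the root.
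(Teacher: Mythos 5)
Your tree analysis and your use of the SAW-tree divisibility result are both sound, and the subtree dynamic program with $W(u)=\psi(y_u)S(u)$ is a clean, slightly more compact alternative to the paper's level-by-level argument (which instead bounds $\sum_{v\in L_r(k)}|\II_T(r\sra v)|\le c(1-\alpha)^{k-1}$ via \cref{lem:A-and-B} and then sums the geometric series). Your handling of the root via \hyperlink{cond:boundedness}{Boundedness} with $(y_1,y_2)=(y_w,y^*)$, and of non-root vertices via \hyperlink{cond:contraction}{Contraction} using the degree bound $d\le\Delta-1$, reproduces exactly the structure of \cref{lem:bound-for-tree}.

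However, there is a genuine gap in your first step, which you treat as a black box: the claim that a uniform row-sum bound $\sum_{v}|\II_{G}^{\tau}(r\sra v)|\le\eta$ over all pinnings implies $O(n^{2+\eta})$ mixing. The theorem actually available from the local-to-global machinery (\cref{thm:spectral-influence}) gives a spectral gap of
\[
\frac{1}{n}\prod_{i=0}^{n-2}\wrapp{1-\frac{\eta_i}{n-i-1}},
\]
and with only the uniform bound $\eta_i\le c/\alpha$ this product is worthless for the $O(c/\alpha)$-many indices $i$ with $n-i-1\le c/\alpha$: there the trivial bound $\eta_i\le n-i-1$ makes the factor nonnegative, but it can be as small as $0$, so the product could vanish. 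To rescue the argument one needs a second, independent bound showing that for conditional instances with few free vertices each factor is bounded away from zero --- specifically $\eta_i\le C(n-i-1)$ for some universal $C<1$ depending only on $\beta,\gamma,\lambda,\Delta$. This is precisely what the paper supplies via \cref{lem:antiferroconstantsize} and \cref{lem:ferroconstantsize}, and the proof of \cref{thm:contraction-implies-mixing} then splits the product at $i=n-2\lceil c/\alpha\rceil$ and handles the two regimes separately, picking up the factor $(1-C)^{2\lceil c/\alpha\rceil-1}$ from the small instances. Your proof omits this entirely, so as written you cannot deduce the claimed spectral-gap lower bound $\Omega(n^{-(1+c/\alpha)})$. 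The fix is not hard (it amounts to the observation that every individual off-diagonal influence is bounded by a constant $C<1$ in terms of the model parameters), but it is a necessary ingredient, not a side remark. Finally, you should also make explicit that you are then converting the spectral-gap bound to a mixing-time bound via $\Tmix\le\frac{1}{\lambda^*(P)}\log(4/\min_x\mu(x))$ together with the fact that the Glauber dynamics is PSD, which is where the second $+1$ in the exponent $2+c/\alpha$ comes from.
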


%Note that if we weaken the \hyperlink{cond:boundedness}{Boundedness} condition (say, using a ``trivial'' bound for $\psi$), we can still deduce rapid mixing. The only loss is that the constant $c$ in the exponent may depend on all of the parameters $\beta,\gamma,\lambda,\Delta$, which is undesirable for mixing on graphs with potentially unbounded degrees. In particular, a straightforward black-box application of the results in \cite{LLY13} will then give $O(n^{c/\alpha})$ mixing where $c$ depends on $\beta,\gamma,\lambda,\Delta$.

We outline our proofs in \cref{sec:intro-proof}. 
Note that in both \cref{defn:potential} and \cref{thm:contraction-implies-mixing}, the constant $c$ is allowed to depend on the maximum degree $\Delta$ and parameters $(\beta,\gamma,\lambda)$ in general. 
For example, a straightforward black-box application of the potential in \cite{LLY13} would give $c = \Theta(\Delta)$ for the \hyperlink{cond:boundedness}{Boundedness} condition, resulting in $n^{\Theta(\Delta)}$ mixing. 
However, this is undesirable for graphs with potentially unbounded degrees. 
One of our contributions is that we show the \hyperlink{cond:boundedness}{Boundedness} condition holds for a universal constant $c$ \emph{independent} of $\Delta$ and $(\beta,\gamma,\lambda)$. 
Thus, our mixing time is $O(n^{2+c/\delta})$ with no parameters in the exponent except for $1/\delta$. 

In \cref{sec:antiferroweighted}, we give a slightly more general definition of $(\alpha,c)$-potentials, 
which relaxes the \hyperlink{cond:boundedness}{Boundedness} condition, and is necessary for our analysis of antiferromagnetic $2$-spin systems with $0 \le \beta < 1 < \gamma$. 
%which allow weights on vertices. 
\cref{thm:contraction-implies-mixing} still holds for this larger class of potentials. 
%In fact, we give an example showing this is necessary.

We remark that in all previous works of the potential method, results and proofs are always presented in terms of $F_d$, the tree recursion of $R$, and $\Phi$, a potential function of $R$. 
In fact, our results can also be translated into the language of $(F_d,\Phi)$. 
To see this, since $H_d = \log \circ F_d \circ \exp$, it is straightforward to check that $H_d^\Psi = \Psi \circ H_{d} \circ \Psi^{-1} = \Phi \circ F_d \circ \Phi^{-1} = F_d^\Phi$ if we pick $\Phi = \Psi \circ \log$, 
and thereby $\grad H_d^\Psi = \grad F_d^\Phi$.  
This implies that the \hyperlink{cond:contraction}{Contraction} condition in \cref{defn:potential} holds for $(H_d,\Psi)$ if and only if the corresponding contraction condition holds for $(F_d,\Phi)$. 
The \hyperlink{cond:boundedness}{Boundedness} condition can also be stated equivalently for $(F_d,\Phi)$. 
Nevertheless, in this paper we choose to work with $(H_d,\Psi)$ for the following two reasons. 
First, as mentioned earlier, the fact that $\II_G(r \sra v)$ is a derivative of $\log R_r$ makes it natural to consider the tree recursion for the log ratios. Indeed, it is easier and cleaner to present our results and proofs using $(H_d,\Psi)$ directly rather than switching to $(F_d,\Phi)$. 
Second, the potential function $\Psi$ we will use is obtained from the exact potential $\Phi$ in \cite{LLY13}, by the transformation $\Psi = \Phi \circ \exp$.\footnote{To be more precise, we also multiply a constant factor which only simplifies our calculation and does not matter much; also notice that \cite{LLY13} denotes the potential function by $\varphi$ and its derivative by $\Phi = \varphi'$.} 
It is intriguing to notice that the derivative of this potential is simply $\psi = \sqrt{|h|}$. 
Then the \hyperlink{cond:contraction}{Contraction} condition has a nice form: $\sum_{i=1}^d \sqrt{h(y) h(y_i)} \le 1-\alpha$; and the \hyperlink{cond:boundedness}{Boundedness} condition only involves an upper bound on $h(y)$. 
This seems to shed some light on the mysterious potential function $\Phi$ from \cite{LLY13}, and also indicates 
that $H_d$ is a meaningful variant of the tree recursion to consider. 
To add one more evidence, for a lot of cases (e.g., $\frac{\Delta-2}{\Delta} < \sqrt{\beta\gamma} < \frac{\Delta}{\Delta-2}$) where the potential $\Phi = \log$ is picked, that just means we can pick $\Psi$ to be the identity function and $H_d$ itself is contracting without any nontrivial potential.

\paragraph{Revision in July 2021.}
After the publication of this paper in FOCS 2020, a small error was found in \cite{LLY13} regarding descriptions of the uniqueness region for antiferromagnetic 2-spin systems. 
The error was fixed in the latest version of \cite{LLY13}. 
In this revision, we update corresponding results and proofs in \cref{sec:antiferroweighted} and \cref{sec:boundedness} that rely on the changes in \cite{LLY13}; in particular, \cref{lem:bound-Psi} is adjusted in accordance with the current description of uniqueness regions.
We remark that these changes are purely technical and do not affect the validity of our main results like \cref{thm:contraction-implies-mixing}.

%\section*{Acknowledgments}
\paragraph{Acknowledgments.} 
We would like to thank Shayan Oveis Gharan and Nima Anari for stimulating discussions. 
We also thank the anonymous referees for helpful comments and suggestions. 
We are grateful to Yitong Yin for communicating with us about the latest update of \cite{LLY13} and for providing helpful instructions on modifying statements and proofs of results in \cref{sec:boundedness}, particularly \cref{lem:bound-Psi}.

\section{Preliminaries}\label{sec:preliminaries}
%\textbf{Mixing time and spectral gap}
%\medskip

\subsubsection*{Mixing time and spectral gap}

\noindent
Let $P$ be the transition matrix of an ergodic (i.e., irreducible and aperiodic) Markov chain on a finite state space $\Omega$ with stationary distribution $\mu$. 
Let $P^t(x_0,\cdot)$ denote the distribution of the chain after $t$ steps starting from $x_0 \in \Omega$. 
The \emph{mixing time} of $P$ is defined as
\[
T_\mathrm{mix}(P) = \max_{x_0 \in \Omega} \min \left\{t \ge 0: \norm{P^t(x_0,\cdot) - \mu(\cdot)}_{\mathrm{TV}} \le \frac{1}{4} \right\}. 
\]
We say $P$ is \emph{reversible} if $\mu(x) P(x,y) = \mu(y) P(y,x)$ for all $x,y\in\Omega$. 
If $P$ is reversible, then $P$ has only real eigenvalues which can be denoted by $1=\lambda_1 \ge \dots \ge \lambda_{|\Omega|} \ge -1$. 
The \emph{spectral gap} of $P$ is defined to be $1-\lambda_2$ and 
the \emph{absolute spectral gap} of $P$ is defined as $\lambda^*(P) = 1 - \max\{|\lambda_2|, |\lambda_{|\Omega|}|\}$. 
If $P$ is also positive semidefinite with respect to the inner product $\langle \cdot, \cdot\rangle_\mu$, then all eigenvalues of $P$ are nonnegative and thus $\lambda^*(P) = 1-\lambda_2$. 
Finally, the mixing time and the absolute spectral gap are related by
\begin{equation}\label{eq:rel-mixing}
T_\mathrm{mix}(P) \le \frac{1}{\lambda^*(P)} \log\left(\frac{4}{\min_{x\in\Omega} \mu(x)}\right). 
\end{equation}

%Let $G=(V,E)$ be a graph of maximum degree at most $\Delta$. 
%Fix edge activities $\beta,\gamma > 0$ and assume that $\beta \gamma <1$. 
%Let $\lambda: V \to \mathbb{R}^+$ be a function of fields; namely, every vertex $v\in V$ has a field $\lambda_v > 0$. 
%Every spin configuration $\sigma \in \{0,1\}^V$ is assigned a weight
%$$ w_G(\sigma) = \beta^{m_1(\sigma)} \gamma^{m_0(\sigma)} \prod_{v\in V} \lambda_v^{\sigma_v}, $$ 
%where $m_1(\sigma) = \#\{uv \in E : \sigma_u = \sigma_v = 1\}$, $m_0(\sigma) = \#\{uv \in E : \sigma_u = \sigma_v = 0\}$ and $n_1(\sigma) = \#\{v \in V : \sigma_v = 1\}$. 
%The Gibbs distribution over spin configurations is given by
%\[
%\mu_G(\sigma) = \frac{w_G(\sigma)}{Z_G(\lambda)}, 
%\]
%where
%\begin{align*}
%    Z_{G}(\lambda) = \sum_{\sigma \in \{0,1\}^V} \beta^{m_1(\sigma)} \gamma^{m_0(\sigma)} \prod_{v\in V} \lambda_v^{\sigma_v}.
%\end{align*}
%is the partition function. 
%We shall view $\beta$ and $\gamma$ as some fixed constants and think of $\lambda = \{\lambda_v: v\in V\}$ as $|V|$ variables. 
%In this sense, we regard the weights $w_G(\sigma)$ as monomials in $\lambda$ and the partition function $Z_G(\lambda)$ as a polynomial in $\lambda$. 

%\emph{Everything in this section is a function of $\lambda$.}
%We use the definition of uniqueness given in \cite{LLY13}.

%%\medskip
%\noindent \textbf{Uniqueness}
%\medskip
\subsubsection*{Uniqueness}

\noindent
Let $\Delta \ge 3$ be an integer or $\Delta = \infty$. 
Let $\beta,\gamma,\lambda$ be reals such that $0\le \beta \le \gamma$, $\gamma>0$, $\beta\gamma <1$ and $\lambda>0$. 
For $1 \leq d < \Delta$, define 
%$f_{d}(R) = \lambda \wrapp{\frac{\beta R + 1}{R + \gamma}}^{d}$ 
\[
f_{d}(R) = \lambda \wrapp{\frac{\beta R + 1}{R + \gamma}}^{d}
\]
and denote the unique fixed point of $f_{d}$ by $R_d^*$. 
For $\delta \in(0,1)$, 
%\begin{defn}[Up-to-$\Delta$ Uniqueness]
we say the parameters $(\beta,\gamma,\lambda)$ are \emph{up-to-$\Delta$ unique with gap $\delta$} if $|f'_d(R^*_d)| < 1- \delta$ for all $1 \leq d < \Delta$. 
%\begin{equation*}
%    \abs{f_d'(R^*_d)} < 1 - \delta
%\end{equation*}
%for all $1 \leq d < \Delta$, where
%\begin{align*}
%    f_{d}(R) = \lambda \wrapp{\frac{\beta R + 1}{R + \gamma}}^{d}
%\end{align*}
%denotes the univariate tree recurrence, and $\hat{R}_{d}$ denotes the unique fixed point of $f_{d}$.
%\end{defn}

%\medskip
%\noindent \textbf{Ratio and influence}
%\medskip
\subsubsection*{Ratio and influence}

\noindent
Consider the $2$-spin system on a graph $G=(V,E)$. 
Let $\Lambda \subseteq V$ and $\sL \in \{0,1\}^\Lambda$. 
For all $v\in V \backslash \Lambda$, we define the \emph{marginal ratio} at $v$ to be
\[
R_G^\sL(v) 
%= \frac{\alpha_G^v(\lambda)}{1 - \alpha_G^v(\lambda)} 
= \frac{\mu_G(\sigma_v \seq 1 \mid \sL)}{\mu_G(\sigma_v \seq 0 \mid \sL)}. 
\]
For all $u,v \in V \backslash \Lambda$, we define the \emph{(pairwise) influence} of $u$ on $v$ by
\[
\II_G^\sL(u \sra v) = \mu_G(\sigma_v \seq 1 \mid \sigma_u \seq 1, \,\sL) - \mu_G(\sigma_v \seq 1 \mid \sigma_u \seq 0, \,\sL).
\]
Write $\II_G^\sL$ for the \emph{(pairwise) influence matrix} whose entries are given by $\II_G^\sL(u \sra v)$. 
\subsubsection*{Weitz's self-avoiding walk tree}

\noindent
%\zongchen{Figure and more explanation}
Let $G=(V,E)$ be a connected graph and $r\in V$ be a vertex of $G$. 
The \emph{self-avoiding walk (SAW) tree} is defined as follows. 
Suppose that there is a total ordering of the vertex set $V$. 
A self-avoiding walk from $r$ is a path $r=v_0 - v_1 - \dots - v_\ell$ such that $v_i\neq v_j$ for all $0\le i < j\le \ell$. 
The SAW tree $T_{\textsc{saw}}(G,r)$ is a tree rooted at $r$, consisting of all self-avoiding walks $r=v_0 - v_1 - \dots - v_\ell$ with $\deg(v_\ell) = 1$, and those appended with one more vertex that closes the cycle (i.e., $r=v_0 - v_1 - \dots - v_\ell - v_i$ for some $0\le i \le \ell-2$ such that $\{v_\ell,v_i\} \in E$). 
Note that a vertex of $G$ might have many copies in the SAW tree, and the degrees of vertices are preserved except for leaves. 
See \cref{fig:TSAW} for an example. 
%but the vertex $r$ only appears once in the tree, which is the root. 

We can define a $2$-spin system on $T_{\textsc{saw}}(G,r)$ with the same parameters $(\beta,\gamma,\lambda)$, in which some of the leaves are fixed to a particular spin. 
More specifically, for a self-avoiding walk $r=v_0 - v_1 - \dots - v_\ell$ appended with $v_i$, we fix $v_i$ to be spin $1$ if $v_{i+1} < v_\ell$ with respect to the total ordering on $V$, and spin $0$ if $v_{i+1} > v_\ell$. 
%Vertices that are pinned will not have a field; i.e., we set $\lambda_v = 1$ if $v$ is fixed to be $1$ or $0$. 
For each $v\in V$ we denote the set of all free (unfixed) copies of $v$ in $\TSAW(G,r)$ by $\mathcal{C}_v$. 
For $\Lambda \subseteq V$ and a partial configuration $\sigma_\Lambda \in \{0,1\}^\Lambda$, we define the SAW tree with conditioning $\sL$ 
%which is also denoted by $\sigma_\Lambda$ with an abuse of notation, 
by assigning the spin $\sigma_v$ to every copy $\hat{v}$ of $v$ from $\mathcal{C}_v$ and removing all descendants of $\hat{v}$, for each $v\in \Lambda$. 
Note that in general, different copies of $v$ from $\mathcal{C}_v$ can receive different spin assignments. 
Finally, in the case that every vertex $v$ has a distinct field $\lambda_v$, all copies of $v$ from $\mathcal{C}_v$ will have the same field $\lambda_v$ in the SAW tree. 

\begin{figure}[t]
\centering
\includegraphics[width = 0.9\textwidth]{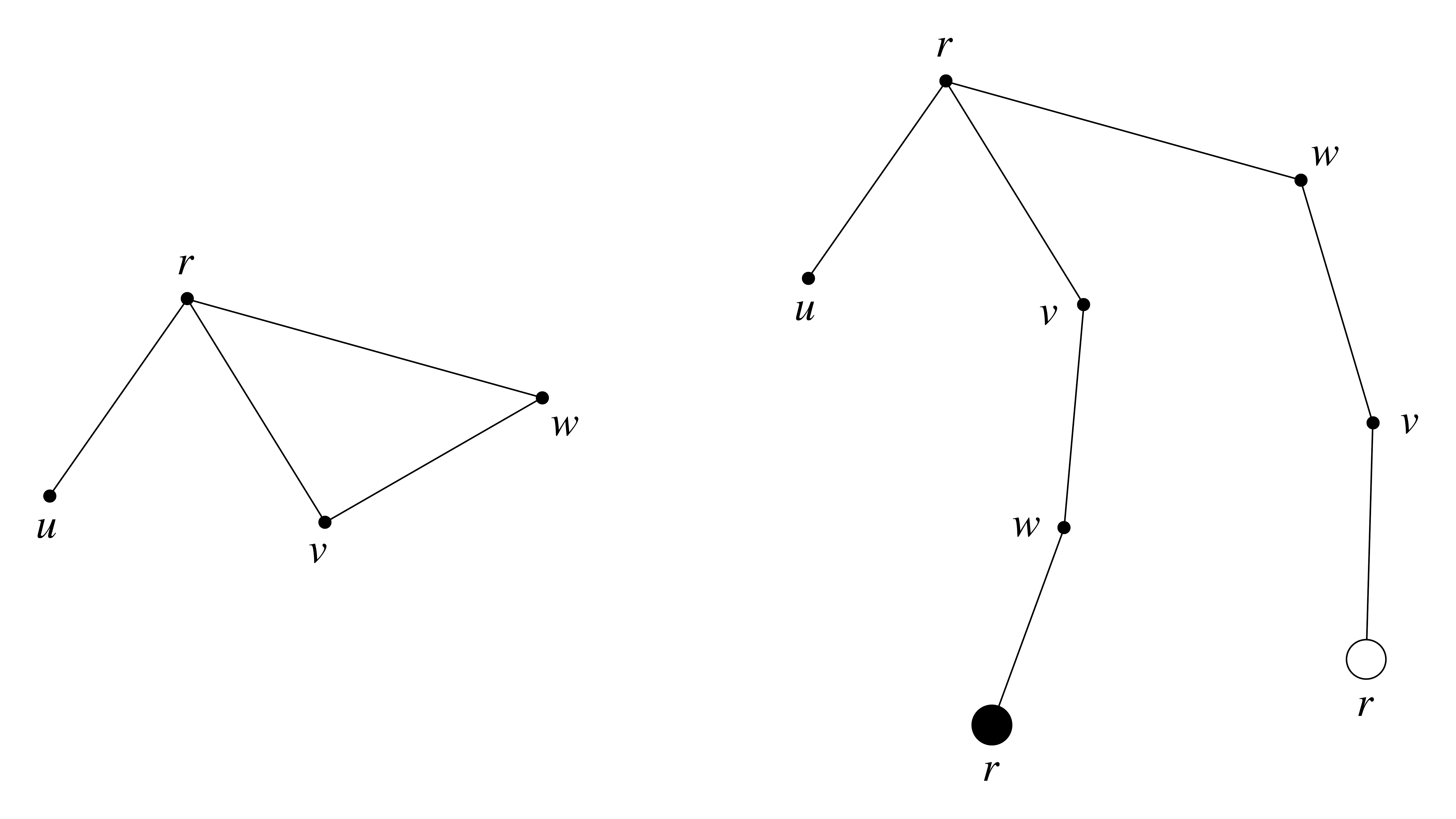}
\caption{A graph $G$ and the self-avoiding walk tree $\TSAW(G,r)$ rooted at $r$. Vertices with the same label in $\TSAW(G,r)$ are copies of the same vertex from $G$. ($\CIRCLE$/$\Circle$: fixed to spin $1$/$0$.)}
%and $\Circle$ means fixed to spin $0$.
\label{fig:TSAW}
\end{figure}
%width = 0.7\textwidth

\section{Proof outline for main results}
\label{sec:intro-proof}
%In this section, we give an overview of our proof approach. 

%%\medskip
%\noindent\textbf{Step 1 (\cite{ALO20}): Spectral Independence implies rapid mixing.}
%\medskip
\subsubsection*{Step 1 (\cite{ALO20}): Spectral Independence implies rapid mixing.}

\noindent
Our proof builds on \cite{ALO20} who showed that the Glauber dynamics for sampling from the hardcore distribution on graphs of maximum degree at most $\Delta$ mixes in $O(n^{\exp(O(1/\delta))})$ steps whenever $\lambda \leq (1 - \delta)\lambda_{c}(\Delta)$. 
One of the key ingredients of their proof is a notion they call spectral independence. \cite{ALO20} shows that the spectral independence property implies rapid mixing. 
Note that the diagonal entries of $\II_G^\sL$ are $1$, as opposed to $0$ in the original definition in \cite{ALO20}.

\begin{defn}[Spectral Independence \cite{ALO20}]
We say the Gibbs distribution $\mu_G$ on an $n$-vertex graph $G$ is \emph{$(\eta_{0}, \dots, \eta_{n-2})$-spectrally independent}, 
if for every $0\le k \le n-2$, $\Lambda\subseteq V$ of size $k$ and $\sigma_\Lambda \in \{0,1\}^\Lambda$, 
one has $\lambda_{\max}(\mathcal{I}_G^\sL) - 1 \leq \eta_k$. 
%We say the Gibbs distribution $\mu_G$ is \emph{$\eta$-spectrally independent} if $\lambda_{\max}(\mathcal{I}_G) - 1 \leq \eta$. 
%Furthermore, we say $\mu_G$ is \emph{$(\eta_{0}, \dots, \eta_{n-2})$-spectrally independent} where $n=|V|$, 
%if for every $0\le k \le n-2$, $\Lambda\subseteq V$ of size $k$ and $\sigma_\Lambda \in \{0,1\}^\Lambda$, the conditional distribution $\mu(\cdot \mid \sigma_\Lambda)$, viewed as a distribution over $\{0,1\}^{V\backslash \Lambda}$, is $\eta_{k}$-spectrally independent. 
%$\mu$ is $\eta_{0}$-spectrally independent, $\mu \mid \sigma_{i}$ is $\eta_{1}$-spectrally independent for all $i \in [n]$ and $\sigma_{i} \in \{0,1\}$, $\mu \mid \sigma_{i},\sigma_{j}$ is $\eta_{2}$-spectrally independent for all $i,j \in [n]$ and $\sigma_{i},\sigma_{j} \in \{0,1\}$, etc.
\end{defn}

\begin{thm}[\cite{ALO20}]
\label{thm:spectral-influence}
If $\mu$ is an $(\eta_{0},\dots,\eta_{n-2})$-spectrally independent distribution, then the Glauber dynamics for sampling from $\mu$ has spectral gap at least 
$$\frac{1}{n}\, \prod_{i=0}^{n-2} \wrapp{1 - \frac{\eta_{i}}{n-i-1}}. $$
\end{thm}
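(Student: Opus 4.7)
The plan is to interpret the Gibbs distribution as the top-level weights of a weighted simplicial complex and then invoke the local-to-global spectral gap machinery developed in the high-dimensional expander literature (Kaufman--Mass, Dinur--Kaufman, Oppenheim, Alev--Lau). Concretely, I would construct the pure simplicial complex $X$ of dimension $n-1$ on ground set $V \times \{0,1\}$ whose top faces are $F_\sigma = \{(v, \sigma_v) : v \in V\}$ for each $\sigma \in \{0,1\}^V$, weighted by $w_X(F_\sigma) = \mu(\sigma)$, with lower face weights induced by downward averaging. Under this encoding, the top-level down-up walk $P_n^\vee$ --- remove a uniformly random element $(v, \sigma_v)$ from the current top face, then resample $(v,s)$ from the star of the remaining $(n{-}1)$-face --- is easily seen to coincide with the Glauber dynamics. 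Bounding the spectral gap of $P_n^\vee$ is therefore the goal.

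The next step is to analyze the local walks at each link. For a face $\tau$ of size $k$ in $X$, encoding a pinning $\sL$ on some $\Lambda \subseteq V$ with $|\Lambda|=k$, the $0$-skeleton of the link $X_\tau$ consists of pairs $(v,s)$ with $v \notin \Lambda$, weighted by the conditional marginals $\mu_G(\sigma_v \seq s \mid \sL)/(n-k)$. The local walk $P_\tau$ transitions from $(v,s)$ by sampling $(u,t)$ with $u \neq v$ according to the conditional joint marginal of two coordinates given $\sL$. The crucial linear-algebraic computation, carried out in \cite{ALO20}, identifies the nontrivial spectrum of $P_\tau$ as precisely $\frac{1}{n-k-1}$ times the spectrum of $\II_G^\sL - I$; in particular,
\[
\lambda_2(P_\tau) \;\leq\; \frac{\lambda_{\max}(\II_G^\sL) - 1}{n-k-1} \;\leq\; \frac{\eta_k}{n-k-1}
\]
under the spectral independence hypothesis.

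Finally, I would invoke the Alev--Lau local-to-global theorem in its abstract form: for any weighted pure simplicial complex of dimension $n-1$,
\[
1 - \lambda_2(P_n^\vee) \;\geq\; \frac{1}{n} \prod_{k=0}^{n-2} \wrapp{1 - \max_{|\tau|=k} \lambda_2(P_\tau)}.
\]
Substituting the per-link bound yields the claimed product lower bound on the spectral gap. Combined with the standard relation \eqref{eq:rel-mixing}, this also gives the mixing time estimate one would then feed into the remainder of the paper.

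The main technical obstacle is the spectral identification in the second step. One must rewrite $P_\tau$ via a diagonal similarity by square-roots of the conditional marginals, exposing an explicit block structure whose off-diagonal blocks encode exactly the conditional pairwise influences $\II_G^\sL(u \sra v)$, and then carefully split off the trivial (all-ones-type) eigenvector at eigenvalue $1$ so that the remaining spectrum lines up with that of $\II_G^\sL - I$ rescaled by $1/(n-k-1)$. The other ingredient, the abstract product formula for the global gap, is proven inductively using the Dinur--Kaufman--Oppenheim cascading framework and does not involve any spin-system specifics.
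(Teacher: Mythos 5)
Your proposal is correct, and it faithfully reconstructs the argument from \cite{ALO20}, which is exactly the source this paper cites (the paper itself does not reprove this theorem). The three ingredients you identify --- encoding $\mu$ as the top-level weights of a pure $(n-1)$-dimensional complex on $V \times \{0,1\}$ so that the down-up walk is Glauber dynamics, the identification of the local-walk spectrum at a link pinned by $\sL$ with the rescaled spectrum of $\II_G^\sL - I$, and the Alev--Lau product formula for the global gap --- are precisely the three ingredients in the cited proof. One small caution: the claim that the nontrivial spectrum of $P_\tau$ is \emph{precisely} $\frac{1}{n-k-1}$ times that of $\II_G^\sL - I$ is a statement about eigenvalues of a similar matrix after symmetrization; what one actually needs (and what \cite{ALO20} establishes) is the one-sided bound $\lambda_2(P_\tau) \leq \frac{\lambda_{\max}(\II_G^\sL) - 1}{n-k-1}$, which your chain of inequalities already uses correctly, so no harm is done.
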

%In the case of a two-state spin system with parameters $(\beta,\gamma,\lambda)$ on a graph $G=(V,E)$, we will write $\II_G^\sL(r \sra v) = \mathcal{I}_{\mu_G}^\sL(r \sra v)$ for the associated pairwise influence for two vertices $r,v \in V$, and $\II_G^\sL = \II_{\mu_G}^\sL$ for the pairwise influence matrix. 
Our primary goal now is to bound the maximum eigenvalue of $\II_G^\sL$. 
%control the influence $\mathcal{I}_{G}^{r \sra v}$.

%\medskip
%\noindent\textbf{Step 2: Self-avoiding walk trees preserve influences.}
%\medskip
\subsubsection*{Step 2: Self-avoiding walk trees preserve influences.}

\noindent
From standard linear algebra, we know that the maximum eigenvalue of $\II_G^\sL$ is upper bounded by both the $1$-norm $\norm{\II_G^\sL}_1 = \max_{r\in V} \sum_{v\in V} |\II_G^\sL(v \sra r)|$, which corresponds to total influences on a vertex $r$, and the infinity-norm $\norm{\II_G^\sL}_\infty = \max_{r\in V} \sum_{v\in V} |\II_G^\sL(r \sra v)|$, corresponding to total influences of $r$. 
In \cite{ALO20} the authors use $\norm{\II_G^\sL}_1$ as an upper bound on $\lambda_{\max} (\II_G^\sL)$. 
Roughly speaking, they show that the sum of absolute influences on a fixed vertex $r$, is upper bounded by the maximum absolute influences on $r$ in the self-avoiding walk tree rooted at $r$, over all boundary conditions. 
Here in this paper, we will use $\norm{\II_G^\sL}_\infty$ to upper bound $\lambda_{\max}(\II_G^\sL)$ instead. In fact, much more is true if we look at the influences from $r$ in the self-avoiding tree. 
We show that for every vertex $v\in V$, the influence $\II_G^\sL(r \sra v)$ in $G$ is preserved in the self-avoiding walk tree $T = \TSAW(G,r)$ rooted at $r$, in the form of sum of influences $\II_T^\sL(r \sra \hat{v})$ over all copies $\hat{v}$ of $v$. 

The way we establish this fact is by viewing the partition function as a polynomial in $\lambda$. 
In fact, it will be useful to consider the more general case with an arbitrary external field $\lambda_v$ for every $v\in V$. 
Let $\mybf{\lambda} = \{\lambda_v: v\in V\}$ denote the fields. 
For $\Lambda \subseteq V$ and $\sL \in \{0,1\}^\Lambda$, the weight of $\sigma \in \{0,1\}^{V\backslash \Lambda}$ conditional on $\sL$ is defined to be
$w_G(\sigma \mid \sL) = \beta^{m_1(\sigma \mid \sL)} \gamma^{m_0(\sigma \mid \sL)} \prod_{v\in V\backslash \Lambda} \lambda_v^{\sigma_v}$ 
where $m_i(\cdot \mid \sL)$ is the number of $i$-$i$ edges with at least one endpoint in $V\backslash \Lambda$ for $i=0,1$. 
Furthermore, $Z_G^\sL = \sum_{\sigma \in \{0,1\}^{V\backslash \Lambda}} w_G(\sigma \mid \sL)$ is the partition function conditioned on $\sL$. 
%and a configuration $\sigma \in \{0,1\}^V$, let 
%$ w_G(\sigma) = \beta^{m_1(\sigma)} \gamma^{m_0(\sigma)} %\prod_{v\in V} \lambda_v^{\sigma_v} $.
%Then the Gibbs distribution is defined as $\mu_G(\sigma) = w_G(\sigma) / Z_G$ where $Z_G = \sum_\sigma w_G(\sigma)$. 
We shall view $\beta$ and $\gamma$ as some fixed constants and think of $\mybf{\lambda}$ as $n = |V|$ variables. 
In this sense, we regard the weights $w_G(\sigma \mid \sL)$ as monomials in $\mybf{\lambda}$ and the partition function $Z_G^\sL$ as a polynomial in $\mybf{\lambda}$. Moreover, the marginal ratios $R_G^\sL(v)$ and the influences $\II_G^\sL(r \sra v)$ for $r,v\in V$ are all functions in $\mybf{\lambda}$. 
%The conditional versions are defined in the same way for any $\Lambda \subseteq V$ and $\sL \in \{0,1\}^\Lambda$. 
Our main result is that the partition function of $G$ divides that of $\TSAW(G,r)$ for each $r\in V$. 
From that, we show that the SAW tree preserves influences of the root, as well as re-establishing Weitz's celebrated result \cite{Wei06}, see \cref{lem:I-G-T}. 

\begin{lem}
\label{lem:Tsaw-G}
Let $G=(V,E)$ be a connected graph, $r\in V$ be a vertex and $\Lambda \subseteq V \backslash \{r\}$ such that $G\backslash \Lambda$ is connected. 
%Denote by $Z_G(\lambda)$ the partition function of the associated spin system on $G$. 
Let $T = T_{\textsc{saw}}(G,r)$ be the self-avoiding walk tree of $G$ rooted at $r$. 
%and denote by $Z_T(\lambda)$ the partition function of the associated spin system on $T$. 
Then for every $\sL \in \{0,1\}^\Lambda$, $Z_G^\sL$ divides $Z_T^\sL$. 
More precisely, there exists a polynomial $P_{G,r}^\sL = P_{G,r}^\sL(\mybf{\lambda})$ independent of $\lambda_r$ such that
\begin{align}\label{eq:TsawDivisibility}
    Z_T^\sL = Z_G^\sL \cdot P_{G,r}^\sL.
\end{align}

%Moreover, the polynomial $P_{G,v}(\lambda)$ is independent of $\lambda_v$ and thus 
As a corollary, for each vertex $v\in V$, 
\begin{align}\label{eq:TsawGinf}
    \Influence_G^\sL(r\sra v) = \sum_{\hat{v} \in \mathcal{C}_v} \Influence_T^\sL(r\sra \hat{v}), 
\end{align}
where $\mathcal{C}_v$ is the set of all free (unfixed) copies of $v$ in $T$. 
\end{lem}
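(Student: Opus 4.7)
The plan is to prove the polynomial divisibility \eqref{eq:TsawDivisibility} first, and then deduce the influence identity \eqref{eq:TsawGinf} via a short differentiation argument. Throughout, we regard $Z_G^\sL$ and $Z_T^\sL$ as multivariate polynomials in the free external fields $\{\lambda_v\}_{v \in V \setminus \Lambda}$, with $\beta, \gamma$ and the fixed fields $\{\lambda_u\}_{u \in \Lambda}$ playing the role of scalars. Since the spin at $r$ takes only two values, both partition functions are affine in $\lambda_r$: we write $Z_G^\sL = A_G + \lambda_r B_G$ and $Z_T^\sL = A_T + \lambda_r B_T$, where $A_\star$ collects the contribution from $\sigma_r = 0$ and $B_\star$ from $\sigma_r = 1$ (with $\lambda_r$ factored out). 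The claim \eqref{eq:TsawDivisibility} is then equivalent to exhibiting a polynomial $P$ independent of $\lambda_r$ with $A_T = P\,A_G$ and $B_T = P\,B_G$ simultaneously.

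To construct such a $P$, I would induct on the number of free vertices $|V \setminus \Lambda|$. In the base case $|V\setminus\Lambda|=1$ (only $r$ is free), every non-root vertex in $T$ is a copy of some vertex in $\Lambda$ and hence fixed by $\sL$; after applying $\sL$ the tree collapses to $r$ together with fixed-spin leaves, and a direct comparison yields $Z_T^\sL = P\,Z_G^\sL$ for a suitable scalar $P \in \mathbb{Q}(\beta,\gamma,\{\lambda_u\}_{u \in \Lambda})$. For the inductive step I would exploit the recursive structure of $T$: each subtree of $T$ rooted at a neighbor $u_i$ of $r$ is naturally identified with the SAW tree of a modified graph $G_i$, obtained by deleting $r$ from $G$ and, for each other neighbor $w \in N(r)\setminus\{u_i\}$, attaching a pendant fixed-spin leaf (whose spin is determined by the SAW-tree ordering convention). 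On the graph side, conditioning $\sigma_r = 0$ (resp.\ $\sigma_r = 1$) removes $r$ and modifies each free neighbor's effective field by a scalar factor coming from the incident edge weight, which can be absorbed into the $\lambda_u$'s up to a monomial. The two decompositions line up child-by-child, and the inductive hypothesis applied to each $(G_i, u_i)$ (which has strictly fewer free vertices since $r$ is no longer free and only fixed leaves have been added) supplies local divisibility factors whose product is the desired global $P_{G,r}^\sL$.

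Granted \eqref{eq:TsawDivisibility}, the influence identity \eqref{eq:TsawGinf} follows in a few lines. By the preliminary fact that $\II_G^\sL(r\sra v) = \partial \log R_G^\sL(r)/\partial \log \lambda_v$ (cf.\ \cref{lem:2spin-sys-property}), and its analogue on $T$, it suffices to compare $\log R_G^\sL(r)$ and $\log R_T^\sL(r)$. The divisibility $Z_T^\sL = Z_G^\sL \cdot P_{G,r}^\sL$ with $P_{G,r}^\sL$ independent of $\lambda_r$ immediately gives $R_G^\sL(r) = R_T^\sL(r)$ as rational functions in the fields (recovering Weitz's classical theorem as a by-product), so their log-derivatives in $\lambda_v$ agree. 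On the tree side, all free copies $\hat{v} \in \mathcal{C}_v$ share the common variable $\lambda_v$, so the chain rule distributes the derivative as $\sum_{\hat{v}\in\mathcal{C}_v}\II_T^\sL(r\sra\hat{v})$, matching $\II_G^\sL(r\sra v)$ on the graph side.

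The main obstacle is the inductive step for the divisibility: identifying the precise modified graphs $G_i$ whose SAW trees are isomorphic to the subtrees of $T$ at the children of $r$, and checking that the boundary conditions coming from the total ordering on $V$ (which determine the fixed spins at cycle-closing leaves of $T$) translate correctly onto the pendant leaves of $G_i$. The delicate bookkeeping of these fixed-spin conventions is the heart of the proof; conceptually, though, the argument is a direct unfolding of the SAW tree construction, one vertex at a time.
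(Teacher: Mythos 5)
Your overall strategy matches the paper's: prove the divisibility \eqref{eq:TsawDivisibility} by induction, then deduce \eqref{eq:TsawGinf} from the $\lambda_r$-independence of the quotient via the log-derivative characterization of influence (\cref{lem:2spin-sys-property}, Part 3) and the chain rule over copies. Your deduction of the corollary is essentially complete and identical to \cref{lem:I-G-T}, and your affine-in-$\lambda_r$ setup ($A_T = P A_G$, $B_T = P B_G$) is exactly the strengthened form \eqref{eq:SAW-induction} that the paper proves.

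The gap is in the inductive step for the divisibility, and you flag it yourself as ``the heart of the proof,'' but the sketch as written does not close it. Conditioning $\sigma_r = s$ does produce a $2$-spin system on $G \setminus \{r\}$ with modified fields at the neighbors of $r$, but $G \setminus \{r\}$ is a single (generally connected) graph, so $Z_G^\sL(r\!=\!s)$ does not decompose as a product over the children $u_1,\dots,u_d$. The statement ``the two decompositions line up child-by-child'' is precisely what needs to be earned, and it requires the Weitz-style vertex split $G \rightsquigarrow G'$ (replace $r$ by $d$ pendant copies $r_1,\dots,r_d$) together with the telescoping factor $Q^\sL_{G,r} = \prod_{i=2}^d Z^\sL_{G'}(r_1\!=\!0,\dots,r_{i-1}\!=\!0,r_i\!=\!1,\dots,r_d\!=\!1)$. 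Multiplying $Z^\sL_G(r\!=\!s)$ by $Q^\sL_{G,r}$ is what converts a single conditional partition function into a product of $d$ conditional partition functions $\beta\, Z^{\sigma_{\Lambda_i}}_{G_i}(v_i\!=\!1) + Z^{\sigma_{\Lambda_i}}_{G_i}(v_i\!=\!0)$ that line up with the subtree recursion for $Z_T^\sL$, and $Q^\sL_{G,r}$ is manifestly $\lambda_r$-free. Without some such construction, the product over $i$ of the inductive divisors has nothing on the graph side to compare against. You should also address the case when $(G\setminus\{r\})\setminus\Lambda$ is disconnected: the hypothesis needed to invoke the inductive claim on each $G_i$ is that $G_i \setminus \Lambda_i$ be connected, which fails in that case; the paper handles it by decomposing $G'\setminus\Lambda$ into components, contracting the copies of $r$ in each, and applying the induction componentwise. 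Finally, a small remark on your choice of induction measure: inducting on $|V\setminus\Lambda|$ works for the main step (only $r$ ceases to be free, the new $r_j$'s are fixed) and also in the disconnected case (each component has strictly fewer free vertices when $k\geq 2$), so that choice is fine; the paper's count of edges touching a free vertex serves the same purpose.
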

\begin{remark}\label{rem:divisibility}
We emphasize that for the purposes of bounding the total influence of a vertex in $G$, only \cref{eq:TsawGinf} of \cref{lem:Tsaw-G} is needed, which can be proved in a purely combinatorial fashion. However, we believe the divisibility property \cref{eq:TsawDivisibility} of the multivariate partition function of $G$ and its self-avoiding walk tree may be of independent interest.
\end{remark}

We note that a univariate version of the divisibility statement \cref{eq:TsawDivisibility} has already appeared in \cite{Ben18} for the hardcore model and \cite{LSS19} for the zero-field Ising model in the study of complex roots of the partition function. 
From \cref{lem:Tsaw-G}, we can get $\sum_{v\in V} |\II_G^\sL(r \sra v)| \le \sum_{v\in V_T} |\II_T^\sL(r \sra v)|$ for any fixed $r$. 
That means, we only need to upper bound the sum of all influences for trees, in order to get an upper bound on $\lambda_{\max}(\II_G^\sL)$. 

%\medskip
%\noindent\textbf{Step 3: Decay of influences given a good potential.}
%\medskip
\subsubsection*{Step 3: Decay of influences given a good potential.}

\noindent
The tree recursion provides us a great tool for computing the (log) ratios of vertices recursively for trees. 
As we show in \cref{lem:2spin-sys-property}, the influence $\II_G^\sL(r \sra v)$ is in fact a version of derivative of the log marginal ratio at $r$. 
Thus, the tree recursion can be used naturally to relate these influences. 
We then apply the potential method, which has been widely used in literature to establish the decay of correlations (strong spatial mixing). 
The following lemma shows that the sum of absolute influences to distance $k$ has exponential decay with $k$, which can be thought of as the decay of pairwise influences. 

\begin{lem}
\label{lem:bound-for-tree}
%Let $\Psi:[-\infty,+\infty] \to (-\infty,+\infty)$ be a differentiable and increasing potential function. 
If there exists an $(\alpha,c)$-potential function $\Psi$ with respect to $\Delta$ and $(\beta,\gamma,\lambda)$ where $\alpha \in (0,1)$ and $c>0$, then for every $\Lambda \subseteq V_T \backslash \{r\}$, $\sL \in \{0,1\}^\Lambda$ and all integers $k \geq 1$, 
\begin{equation*}
    \sum_{v \in L_{r}(k)} \abs{\mathcal{I}_{T}^\sL(r\sra v)} \leq c \cdot (1 - \alpha)^{k-1} 
\end{equation*}
where $L_r(k)$ denote the set of all free vertices at distance $k$ away from $r$. 
\end{lem}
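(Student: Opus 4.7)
}
The plan is to combine the chain rule for derivatives on a tree with the potential-function trick. By \cref{lem:2spin-sys-property}, the influence $\II_T^\sL(r\sra v)$ equals the derivative of $\log R_r$ with respect to $\log \lambda_v$; feeding the tree recursion $\log R_w = H_d(\log R_{u_1},\dots,\log R_{u_d})$ into this identity and using $\partial H_d/\partial y_i = h(y_i)$ together with the independence of subtrees, I obtain
\[
|\II_T^\sL(r \sra v)| = \prod_{j=0}^{k-1} |h(y_{w_{j+1}})|,
\]
where $r = w_0, w_1, \ldots, w_k = v$ is the unique path in $T$ from $r$ to $v$ and $y_w = \log R_{T_w}^\sL$. (If any intermediate $w_j$ lies in $\Lambda$, conditioning on $\sigma_{w_j}$ separates $r$ from $v$ and the influence is already $0$, so I may assume the whole path is free.) The idea is then to telescope the potential derivative $\psi$ and rewrite
\[
\prod_{j=0}^{k-1} |h(y_{w_{j+1}})| = \frac{\psi(y_v)}{\psi(y_{w_1})} |h(y_{w_1})| \cdot \prod_{j=1}^{k-1} \frac{\psi(y_{w_j})}{\psi(y_{w_{j+1}})} |h(y_{w_{j+1}})|,
\]
which separates the contribution of the root's outgoing edge from the contributions of the purely internal edges.

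In the SAW tree, every non-root vertex $w$ has at most $\Delta - 1$ children (its degree in $G$ minus one), so its log-ratio sits in $J = \bigcup_{d<\Delta} J_d$; in particular $y_{w_1}, y_v \in J$ and \hyperlink{cond:boundedness}{Boundedness} gives $\tfrac{\psi(y_v)}{\psi(y_{w_1})}|h(y_{w_1})| \le c/\Delta$. For the remaining factor I would argue by induction on $\ell$ that for every non-root $w$,
\[
Q_\ell(w) \;:=\; \sum_{w = w_0', w_1', \dots, w_\ell'} \, \prod_{j=0}^{\ell-1} \frac{\psi(y_{w_j'})}{\psi(y_{w_{j+1}'})} |h(y_{w_{j+1}'})| \;\le\; (1-\alpha)^\ell,
\]
where the sum is over all length-$\ell$ paths in $T_w$ ending at a free vertex. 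Indeed $Q_0 \equiv 1$, and $Q_\ell(w) = \sum_{u} \tfrac{\psi(y_w)}{\psi(y_u)}|h(y_u)| \, Q_{\ell-1}(u) \le (1-\alpha) \max_u Q_{\ell-1}(u)$ by \hyperlink{cond:contraction}{Contraction}, applicable because $w$ has at most $\Delta-1 < \Delta$ children. Grouping paths from $r$ by their first step and using that $r$ has at most $\Delta$ children, I would conclude
\[
\sum_{v \in L_r(k)} |\II_T^\sL(r \sra v)| \;\le\; \frac{c}{\Delta} \sum_{w_1 \text{ child of } r} Q_{k-1}(w_1) \;\le\; \frac{c}{\Delta} \cdot \Delta \cdot (1-\alpha)^{k-1} \;=\; c(1-\alpha)^{k-1}.
\]

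The main delicate point is applying the two conditions at the right places: \hyperlink{cond:boundedness}{Boundedness} is used exactly once, at the outgoing step of the root, to absorb the asymmetry between the degree-$\Delta$ root and its degree-$(\Delta-1)$ descendants, while \hyperlink{cond:contraction}{Contraction} drives the geometric decay at every non-root internal vertex along a path. The supporting bookkeeping—tracking that all relevant $y_{w_j}$ lie in $J$, and that the telescoping $\psi$-factors recombine correctly into contraction-compatible summands—relies on the structural property that SAW-tree non-root vertices have $<\Delta$ children, and is the part of the argument most prone to off-by-one errors.
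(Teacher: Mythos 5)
Your proof is correct and takes essentially the same route as the paper: telescope potential ratios $\psi$ along the root-to-$v$ path, drive geometric decay via the \hyperlink{cond:contraction}{Contraction} condition at non-root internal vertices (all of which have $<\Delta$ children in the SAW tree), and spend the \hyperlink{cond:boundedness}{Boundedness} condition exactly once on the root's outgoing edge to absorb the factor $\Delta_r \le \Delta$. The only cosmetic difference is in the inductive bookkeeping: the paper proves an intermediate lemma whose inductive quantity $\sum_{v \in L_u(\ell)} \psi(\log R_u)\,\abs{\II_T^{\sL}(u\sra v)}$ carries a loose $\max_{v}\psi(\log R_v)$ factor that is later collected as $B_\Psi$, whereas you normalize each path summand by $\psi(y_v)$ at the endpoint so your $Q_\ell(w) = \sum_v \frac{\psi(y_w)}{\psi(y_v)}\abs{\II_T^{\sL}(w\sra v)}$ is scale-free and is bounded cleanly by $(1-\alpha)^\ell$; the two organizations are equivalent and invoke the same two conditions at the same places.
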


\cref{thm:contraction-implies-mixing} is then proved by combining \cref{thm:spectral-influence}, \cref{lem:Tsaw-G} and \cref{lem:bound-for-tree}. 
We leave its proof to \cref{sec:proof-main}. 

%\medskip
%\noindent\textbf{Step 4: Find a good potential.}
%\medskip
\subsubsection*{Step 4: Find a good potential.}

\noindent
As our final step, we need to find an $(\alpha,c)$-potential function as defined in \cref{defn:potential}. 
The potential $\Psi$ we choose is exactly the one from \cite{LLY13}, adapted to the log marginal ratios and the tree recursion $H$ (see \cref{sec:contraction-potential} for more details). 
We show that if the parameters $(\beta,\gamma,\lambda)$ are up-to-$\Delta$ unique with gap $\delta \in (0,1)$ and either $\sqrt{\beta\gamma} > \frac{\Delta-2}{\Delta}$ or $\gamma \le 1$, then $\Psi$ is an $(\alpha,c)$-potential. 

\begin{lem}%[\cite{LLY13}]
\label{lem:potential-is-good}
Let $\Delta \ge 3$ be an integer. 
Let $\beta,\gamma,\lambda$ be reals such that $0 \le \beta \le \gamma$, $\gamma>0$, $\beta\gamma < 1$ and $\lambda > 0$. 
Assume that $(\beta,\gamma,\lambda)$ is up-to-$\Delta$ unique with gap $\delta \in (0,1)$. 
Define the function $\Psi$ implicitly by
\begin{equation}
\label{eq:Psi}
    \Psi'(y) 
    = \psi(y) 
    = \sqrt{\frac{(1-\beta\gamma)e^{y}}{(\beta e^{y} + 1)(e^{y} + \gamma)}} 
    = \sqrt{\abs{h(y)}}, 
    \qquad
    \Psi(0) = 0. 
\end{equation}
If $\sqrt{\beta\gamma} > \frac{\Delta-2}{\Delta}$, then $\Psi$ is an $(\alpha,c)$-potential function with $\alpha \geq \delta/2$ and $c \leq 1.5$. 
If $\sqrt{\beta\gamma} \le \frac{\Delta-2}{\Delta}$ and $\gamma \le 1$, then $\Psi$ is an $(\alpha,c)$-potential with $\alpha \ge \delta/2$ and $c \leq 18$; we can further take $c \leq 4$ if $\beta = 0$. 
\end{lem}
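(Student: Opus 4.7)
My plan hinges on a dramatic simplification that occurs under the specific choice $\psi = \sqrt{|h|}$: the ratio $\frac{\psi(y)}{\psi(y_i)}|h(y_i)|$ collapses to the symmetric quantity $\sqrt{|h(y)|\,|h(y_i)|}$. Consequently, the \hyperlink{cond:contraction}{Contraction} sum reduces to
$$\sqrt{|h(y)|}\, \sum_{i=1}^{d} \sqrt{|h(y_i)|},$$
and the \hyperlink{cond:boundedness}{Boundedness} condition reduces to $\sqrt{|h(y_1)|\,|h(y_2)|} \leq c/\Delta$. Thus the entire proof reduces to controlling the single univariate function $|h|$ on suitable sets.

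For Contraction, I would first apply Cauchy--Schwarz to bound $\sum_i \sqrt{|h(y_i)|} \leq \sqrt{d \sum_i |h(y_i)|}$, reducing the goal to $d|h(y)| \cdot \tfrac{1}{d}\sum_i |h(y_i)| \leq (1-\delta/2)^2$. The key link to the uniqueness hypothesis is the identity
$$f'_d(R^*_d) \;=\; d \cdot h(y^*_d)$$
at the tree-recursion fixed point $y^*_d = \log R^*_d$, obtained by direct differentiation of $f_d(R) = \lambda\bigl((\beta R + 1)/(R+\gamma)\bigr)^d$. Hence up-to-$\Delta$ uniqueness with gap $\delta$ translates exactly into $d\,|h(y^*_d)| < 1-\delta$, which is precisely the contraction bound in the all-equal case $y_1 = \cdots = y_d = y^*_d$. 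The remaining step, which I expect to be the principal technical obstacle, is to show that this fixed-point configuration is essentially the worst case, modulo a $\delta/2$ loss. Following the strategy of \cite{LLY13}, I would reduce to a one-variable optimization by a symmetrization argument (pushing all $y_i$ to a common value is the worst case) and then exploit the monotonicity and log-concavity properties of $|h|$ along the relevant direction to close the $\delta/2$ gap uniformly for $1 \le d < \Delta$.

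For Boundedness, I would first show by elementary calculus that the global maximum of $|h|$ is
$$\max_y |h(y)| \;=\; \frac{1-\sqrt{\beta\gamma}}{1+\sqrt{\beta\gamma}},$$
attained at $R = e^y = \sqrt{\gamma/\beta}$. When $\sqrt{\beta\gamma} > (\Delta-2)/\Delta$, this immediately yields $|h| < 1/(\Delta-1) \leq 1.5/\Delta$ for $\Delta \geq 3$, so $\sqrt{|h(y_1)|\,|h(y_2)|} \leq 1.5/\Delta$ and we may take $c \leq 1.5$ globally, with no further work. When $\sqrt{\beta\gamma} \leq (\Delta-2)/\Delta$ and $\gamma \leq 1$, the global maximizer of $|h|$ may fall outside the interval $J = [\log(\lambda\beta^{\Delta-1}),\, \log(\lambda/\gamma^{\Delta-1})]$, and a secondary obstacle appears: I would restrict attention to $J$ and, using the unimodality of $|h|$ together with the constraint $\gamma \leq 1$ to identify which endpoint of $J$ is the worst, bound $|h|$ on $J$ by that boundary value to obtain $c \leq 18$. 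For $\beta = 0$, $|h|$ collapses to $e^y/(e^y+\gamma)$, which is monotone and admits a direct boundary estimate giving the sharper constant $c \leq 4$.
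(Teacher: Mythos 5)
Your opening observation — that $\psi=\sqrt{|h|}$ collapses both conditions to statements about the symmetric quantity $\sqrt{|h(y_1)||h(y_2)|}$ — is exactly the structural insight the paper leans on, and your computation $f_d'(R_d^*)=d\cdot h(\log R_d^*)$ is correct. The $c\le 1.5$ case via the global maximum $\max_y|h(y)|=\frac{1-\sqrt{\beta\gamma}}{1+\sqrt{\beta\gamma}}$ (the paper's \cref{lem:psimaximizer}) also matches. But there are two genuine gaps.

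For \hyperlink{cond:contraction}{Contraction}, the paper does not re-prove the $\sqrt{1-\delta}$ gradient bound at all: it cites Lemmas 12--14 of \cite{LLY13} as \cref{thm:potentialgradientbound} ($\|\nabla F_d^\Phi\|_1\le\sqrt{1-\delta}$), then verifies in \cref{lem:Psi-Phi} and \cref{lem:contraction-Psi} that $\Psi=\sqrt{1-\beta\gamma}\,(\Phi\circ\exp)$, so $\nabla H_d^\Psi=\nabla F_d^\Phi$ after a linear rescaling, and $\alpha\ge\delta/2$ follows from $\sqrt{1-\delta}\le1-\delta/2$. Your route — re-derive the bound from scratch via Cauchy--Schwarz then symmetrization — is not just much heavier; the Cauchy--Schwarz step as you present it would need $d\,|h(y)|\sum_i|h(y_i)|\le(1-\delta/2)^2$, which is false in general (take $|h(y_1)|$ close to its cap and the others near $0$: the LLY13 quantity $\sum_i\sqrt{|h(y)||h(y_i)|}$ stays bounded but $d|h(y)|\sum_i|h(y_i)|$ picks up an extra factor of $d$). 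You would have to symmetrize \emph{before} any Cauchy--Schwarz, at which point Cauchy--Schwarz is an equality and contributes nothing; and the symmetrization itself ("pushing all $y_i$ together is the worst case") is exactly the content of the LLY13 lemmas you'd be re-proving. You flag this as the "principal technical obstacle," and it is: as written the contraction half is not a proof.

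For \hyperlink{cond:boundedness}{Boundedness} in the hard cases ($\sqrt{\beta\gamma}\le\frac{\Delta-2}{\Delta}$, $\gamma\le 1$, and $\beta=0$), the plan "bound $|h|$ on $J$ by its boundary value" is missing the ingredient that actually makes the boundary value small: the uniqueness hypothesis must be converted into a quantitative bound on $\lambda$ relative to the critical thresholds $\lambda_c,\overline{\lambda}_c$. The paper's \cref{lem:twocriticalthresholdbounds} gives $\lambda_c\le\frac{4\gamma^{d+1}}{d-1}$ when $\beta=0$ and $\lambda_1(d)\le\frac{18\gamma^{d+1}}{\theta(d)}$, $\lambda_2(d)\ge\frac{\theta(d)}{18\beta^{d+1}}$ when $\beta>0$, and \cref{lem:bound-Psi} then pushes these through the endpoint of $J_{d_1}$ (e.g.\ $e^{y_1}\le\lambda/\gamma^{d_1}\le\lambda_c/\gamma^{\Delta-1}\le\frac{4\gamma}{\Delta-2}$) to get the final $c\le 4$ or $c\le 18$. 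Unimodality of $|h|$ plus $\gamma\le1$ alone does not determine any numeric constant — without the $\lambda_c$ estimate, the relevant endpoint of $J$ could put $|h|$ at $\Theta(1)$. So the Boundedness half is also incomplete in precisely the cases that carry the $c\le 4$ and $c\le 18$ constants in the statement.
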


We deduce \cref{thm:main-2-spin} for the case $\sqrt{\beta\gamma} > \frac{\Delta-2}{\Delta}$ or $\gamma \le 1$ from \cref{thm:contraction-implies-mixing} and \cref{lem:potential-is-good}. 
The proof of it can be found in \cref{sec:proof-main}. 
The case that $\sqrt{\beta\gamma} \le \frac{\Delta-2}{\Delta}$ and $\gamma > 1$ is trickier. 
As discussed in Section 5 of \cite{LLY13}, 
%when $\sqrt{\beta\gamma} \le \frac{\Delta-2}{\Delta}$ and $\gamma \le 1$ the spin system exhibits certain ``monotonicity in degree'' for uniqueness and correlation decay, and the threshold $\lambda_c(\Delta)$ for the external field decreases to $0$ with the maximum degree $\Delta$. 
%However, if 
when $\sqrt{\beta\gamma} \le \frac{\Delta-2}{\Delta}$ and $\gamma > 1$, 
%the threshold $\lambda_c$ for the external field is bounded below by some constant for all $\Delta$; so 
for some $\lambda>0$ the spin system lies in the uniqueness region for arbitrary graphs, even with unbounded degrees (i.e., up-to-$\infty$ unique). 
Thus, in this case the total influences of a vertex can be as large as $\Theta(\Delta/\delta)$, resulting in $n^{\Theta(\Delta/\delta)}$ mixing time. 
To deal with this, we consider a suitably weighted sum of absolute influences of a fixed vertex, which also upper bounds the maximum eigenvalue of the influence matrix. 
\cref{defn:potential} and \cref{thm:contraction-implies-mixing} are then modified to a slightly stronger version. 
The statements and proofs for this case are presented in \cref{sec:antiferroweighted} and \cref{sec:weighted-contraction-mixing}.

The rest of the paper is organized as follows. 
In \cref{sec:graph-tree} we prove \cref{lem:Tsaw-G} about properties of the SAW tree. 
%some of the proofs are postponed to \cref{sec:proof-SAW-G}. 
In \cref{sec:tree-influence} we establish \cref{lem:bound-for-tree} regarding the decay of influences by the potential method. 
We verify the \hyperlink{cond:contraction}{Contraction} condition in \cref{sec:contraction-potential} for our choice of potential. 
\cref{sec:antiferroweighted} is devoted to the case that $\sqrt{\beta\gamma} \le \frac{\Delta-2}{\Delta}$ and $\gamma > 1$, where a more general version of \cref{defn:potential} and \cref{thm:contraction-implies-mixing} is required; missing proofs can be found in \cref{sec:weighted-contraction-mixing}. 
In \cref{sec:boundedness} we verify the \hyperlink{cond:boundedness}{Boundedness} condition and its generalization for our potential in all cases. 
%when $\sqrt{\beta\gamma} > \frac{\Delta-2}{\Delta}$ or $\gamma \le 1$. 
We consider ferromagnetic spin systems in \cref{sec:ferro} and the proofs are left to \cref{sec:ferroproofs}. 
%including \cref{thm:main-hardcore}, \cref{thm:main-Ising}, \cref{thm:main-2-spin} and \cref{thm:contraction-implies-mixing}.
%Finally, we give conclusions and open problems in \cref{sec:conclusion} and prove all of our main results in \cref{sec:proof-main}. 
We prove all of our main results in \cref{sec:proof-main}.

\section{Preservation of influences for self-avoiding walk trees}
\label{sec:graph-tree}
In this section we show that the self-avoiding walk (SAW) tree, introduced in \cite{Wei06} (see also \cite{SS05}),  maintains all the influence of the root, and thus establishes \cref{lem:Tsaw-G}. 
To do this, we show that the partition function of $G$, viewed as a polynomial of the external fields $\mybf{\lambda}$, divides that of the SAW tree. From there we prove that the influence of the root vertex $r$ on another vertex $v$ in $G$, is exactly equal to that on all copies of $v$ in the SAW tree. 
Using our proof approach, we show that the marginal of the root is maintained in the SAW tree, re-establishing Weitz's celebrated result \cite{Wei06}, and also all pairwise covariances concerned with $v$ are preserved. 

\begin{thm}\label{thm:Tsaw-G}
Let $G=(V,E)$ be a connected graph, $r\in V$ be a vertex and $\Lambda \subseteq V \backslash \{r\}$ such that $G\backslash \Lambda$ is connected. 
%Denote by $Z_G(\lambda)$ the partition function of the associated spin system on $G$. 
Let $T = T_{\textsc{saw}}(G,r)$ be the self-avoiding walk tree of $G$ rooted at $r$. 
%and denote by $Z_T(\lambda)$ the partition function of the associated spin system on $T$. 
Then for every $\sL \in \{0,1\}^\Lambda$, $Z^\sL_G$ divides $Z^\sL_T$. 
More precisely, there exists a polynomial $P^\sL_{G,r} = P^\sL_{G,r}(\mybf{\lambda})$ such that
\[
Z^\sL_T = Z^\sL_G \cdot P^\sL_{G,r}. 
\]
Moreover, the polynomial $P^\sL_{G,r}$ is independent of $\lambda_r$. 
\end{thm}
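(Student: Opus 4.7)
The plan is to first reduce the theorem to proving the polynomial divisibility $Z_G^{\sL} \mid Z_T^{\sL}$ in $\mybf{\lambda}$; the $\lambda_r$-independence of the quotient $P_{G,r}^{\sL}$ would then follow automatically from a degree count. Both partition functions are affine in $\lambda_r$: in $Z_G^{\sL}$ this is because $r$ is the unique free vertex carrying the label $r$, while in $Z_T^{\sL}$ every non-root copy of $r$ in the SAW tree appears only as a \emph{pinned} leaf and hence contributes no factor of $\lambda_r$. Thus once divisibility is established, the quotient must have $\lambda_r$-degree zero.

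I would prove the divisibility itself by induction on the cyclomatic number $\mu(G) = |E(G)| - |V(G)| + 1$ of the connected graph $G$. The base case $\mu(G) = 0$ is trivial: then $G$ is a tree, $T_{\textsc{saw}}(G,r) = G$, and one may take $P_{G,r}^{\sL} = 1$. For the inductive step with $\mu(G) \geq 1$, I would choose a vertex $u \neq r$ lying on some cycle of $G \backslash \Lambda$ and perform a single ``vertex-splitting'' step of Weitz's construction: replace $u$ by $\deg_G(u)$ degree-one copies $u^{(1)},\dots,u^{(d)}$, connected one-to-one with $u$'s original neighbors, and pin a prescribed subset of them to specific spins according to the SAW-tree convention. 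Let $G'$ denote the resulting graph with extended boundary $\sigma_{\Lambda'}$. The key combinatorial property underlying Weitz's construction is that $T_{\textsc{saw}}(G', r)$ agrees with $T = T_{\textsc{saw}}(G, r)$ after the appropriate leaf pinnings; since $\mu(G') < \mu(G)$, the inductive hypothesis applied to $G'$ produces $Z_{G'}^{\sigma_{\Lambda'}} \mid Z_T^{\sL}$.

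To close the induction, I would establish a local polynomial factorization $Z_{G'}^{\sigma_{\Lambda'}} = Z_G^{\sL} \cdot Q$ for some polynomial factor $Q$ independent of $\lambda_r$. The corresponding \emph{ratio} identity $R_{G'}^{\sigma_{\Lambda'}}(r) = R_G^{\sL}(r)$ is precisely Weitz's tree recursion applied at $u$, so the main work is to upgrade from a rational identity to a polynomial one. I would do this by expanding $Z_G^{\sL}$ through conditioning on the spins at $u$ and at its neighbors, using the multi-affinity of the $2$-spin partition function in each $\lambda_v$, and matching terms against the analogous expansion of $Z_{G'}^{\sigma_{\Lambda'}}$.

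The main obstacle is exactly this upgrade from a ratio identity to a polynomial identity: a priori, the split introduces denominators coming from the pinnings of the $u^{(i)}$'s and from the marginal at $u$ in $G$, and one must verify that these cancel identically so that $Q$ is a bona fide polynomial rather than merely a rational function. I expect this to reduce, via the multi-affine structure of $Z$ in each $\lambda_v$ together with the specific form of Weitz's leaf-pinning rule, to a direct algebraic check localized at the neighborhood of $u$. Once the local factorization is in hand, composing it with the divisibility $Z_{G'}^{\sigma_{\Lambda'}} \mid Z_T^{\sL}$ obtained from the inductive hypothesis finishes the proof.
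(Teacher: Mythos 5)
Your reduction to the bare divisibility statement via a $\lambda_r$-degree count is sound and is actually a small simplification over the paper's argument: both $Z_G^{\sL}$ and $Z_T^{\sL}$ are affine in $\lambda_r$ (since $r$ is the unique free vertex labeled $r$ in $G$, and every non-root copy of $r$ in $T$ is a pinned leaf), so once $Z_G^{\sL}\mid Z_T^{\sL}$ as polynomials, the quotient necessarily has $\lambda_r$-degree zero. The paper instead carries $\lambda_r$-independence through the induction explicitly.

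However, the inductive step itself has a fatal gap: the claim that $T_{\textsc{saw}}(G',r)$ agrees with $T_{\textsc{saw}}(G,r)$ after pinnings is false when the split is performed at a non-root vertex $u\neq r$. Self-avoiding walks from $r$ that pass \emph{through} $u$ in $G$ terminate at the degree-one copy $u^{(i)}$ in $G'$, so the SAW tree is truncated rather than preserved. For a concrete example, take $G$ to be the triangle on $\{r,u,w\}$ with $\Lambda=\emptyset$: then $T_{\textsc{saw}}(G,r)$ has depth $3$ with two branches ($r\text{-}u\text{-}w\text{-}\hat r$ and $r\text{-}w\text{-}u\text{-}\hat r$, each ending in a pinned copy of $r$), but after you split $u$ into two pendant copies, $G'$ is a path $u^{(1)}\text{-}r\text{-}w\text{-}u^{(2)}$ and $T_{\textsc{saw}}(G',r)$ is just this depth-$2$ path. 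The trees are not related by leaf-pinnings, and there is no ``SAW-tree convention'' for pinning copies of a non-root vertex — Weitz's convention only pins the cycle-closing copies of the vertex being split, which is always the root.

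What makes the paper's proof go through is precisely that the split is at $r$: removing $r$ and pinning the dangling boundary copies $r_1,\dots,r_d$ in the $d$ residual graphs $G_1,\dots,G_d$ reproduces the recursive definition of $T_{\textsc{saw}}(G,r)$ as the union of $T_{\textsc{saw}}(G_i,v_i)$ hung from the root. The local factorization you need is then not $Z_{G'}=Z_G\cdot Q$ for a single modified graph, but rather the telescoping identity
\[
Z_G^{\sL}(r\seq 1)\cdot Q_{G,r}^{\sL} \;=\; \lambda_r\prod_{i=1}^d\bigl(\beta\,Z_{G_i}^{\sigma_{\Lambda_i}}(v_i\seq 1)+Z_{G_i}^{\sigma_{\Lambda_i}}(v_i\seq 0)\bigr),
\]
with $Q_{G,r}^{\sL}$ a product of partition functions of $G'$ under the intermediate pinnings, and similarly for $r\seq 0$ with $\beta$ replaced by $\gamma$. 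Combining this with the inductive hypothesis applied to each $(G_i,v_i)$ gives the divisibility. Your induction on cyclomatic number would also need care when $G'$ becomes disconnected (this is handled separately in the paper, where the induction is on the number of non-$\Lambda$ edges, which decreases even when $d=1$); but the substantive error is the non-root split, which does not correspond to any step of the SAW-tree recursion.
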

\begin{remark}\label{rem:divisibilityproof}
%The proof of \cref{thm:Tsaw-G} is highly analogous to the proof of \cite[Theorem 3.1]{Wei06}. Moreover, the proof of \cref{thm:Tsaw-G} can be adapted to give a purely combinatorial proof of \cref{eq:TsawGinf} in \cref{lem:Tsaw-G}. 
The proof of \cref{thm:Tsaw-G} can be adapted to give a purely combinatorial proof of \cref{eq:TsawGinf} in \cref{lem:Tsaw-G}. Like in the proof of \cite[Theorem 3.1]{Wei06}, one can proceed via vertex splitting and telescoping, where instead of telescoping a product of marginal ratios, one instead telescopes a sum of single-vertex influences.
\end{remark}

We remark that \cite{Ben18} proved a univariate version of \cref{thm:Tsaw-G} for the hardcore model, and \cite{LSS19} showed a similar result for the zero-field Ising model with a uniform edge weight. 
Our result holds for all $2$-spin systems and arbitrary fields for each vertex. 
We can also generalize it to arbitrary edge weights for each edge in a straightforward fashion. 
It is crucial that the quotient polynomial $P^\sL_{G,r}$ is independent of the field $\lambda_r$ at the root, from which we can deduce the preservation of marginal and influences of the root immediately.  

%For all $v\in V \backslash \Lambda$, define the \emph{marginal} at $v$ by $\marginal_G^\sL(v) = \mu_G(v \seq 1 \mid \sL)$, 
%%\[
%%\marginal_G^\sL(v) = \mu_G(v \seq 1 \mid \sL),
%%\]
%and recall that the \emph{marginal ratio} at $v$ is $R_G^\sL(v) = \marginal_G^\sL(v) / (1 - \marginal_G^\sL(v))$. 
%%\[
%%R_G^\sL(v) = \frac{\marginal_G^\sL(v)}{1 - \marginal_G^\sL(v)} = \frac{\mu_G(r \seq 1 \mid \sL)}{\mu_G(r \seq 0 \mid \sL)}. 
%%\]
%For all $u,v\in V \backslash \Lambda$, we define the \emph{covariance} of $u$ and $v$ to be
%\[
%\CovM_G^\sL(u,v) = \mu_G(u \seq v \seq 1 \mid \sL) - \mu_G(u \seq 1 \mid \sL) \mu_G(v \seq 1 \mid \sL),
%\]
%and recall that the \emph{influence} of $u$ on $v$ is defined as
%\[
%\II_G^\sL(u \sra v) = \mu_G(v \seq 1 \mid u \seq 1, \,\sL) - \mu_G(v \seq 1 \mid u \seq 0, \,\sL).
%\]

Before proving \cref{thm:Tsaw-G}, we first give a few consequences of it. 
For all $u,v\in V \backslash \Lambda$, we define the \emph{marginal} at $v$ as $\marginal_G^\sL(v) = \mu_G(v \seq 1 \mid \sL)$ (henceforth we write $v = i$ for the event $\sigma_v = i$ for convenience), 
and the \emph{covariance} of $u$ and $v$ as
\[
\CovM_G^\sL(u,v) = \mu_G(u \seq v \seq 1 \mid \sL) - \mu_G(u \seq 1 \mid \sL) \mu_G(v \seq 1 \mid \sL). 
\]
The following lemma relates the quantities we are interested in with appropriate derivatives of the (log) partition function. Parts 1 and 2 of the lemma are folklore. 

\begin{lem}\label{lem:2spin-sys-property}
%Let $Z^\sL_G = Z^\sL_G(\lambda)$ be the partition function of the $2$-spin system on a graph $G=(V,E)$. 
%Then, 
For every graph $G=(V,E)$, $\Lambda \subseteq V$ and $\sL \in \{0,1\}^\Lambda$, the following holds:
\begin{enumerate}
\item For all $v \in V$, 
%$\left(\optheta{v}\right) \log Z^\sL_G = \marginal_G^\sL(v)$; 
\[
\left(\optheta{v}\right) \log Z^\sL_G = \marginal_G^\sL(v); 
\]
\item For all $u,v \in V$, 
%$\left(\optheta{v}\right) \left(\optheta{u}\right) \log Z^\sL_G 
%= \left(\optheta{v}\right) \marginal_G^\sL(u) 
%= \CovM_G^\sL(u,v)$; 
\[
\left(\optheta{v}\right) \left(\optheta{u}\right) \log Z^\sL_G 
= \left(\optheta{v}\right) \marginal_G^\sL(u) 
= \CovM_G^\sL(u,v); 
\]
\item For all $u,v \in V$, 
%$\left(\optheta{v}\right) \log R_G^\sL(u) = \II_G^\sL(u \sra v)$. 
\[
\left(\optheta{v}\right) \log R_G^\sL(u) = \II_G^\sL(u \sra v). 
\]
\end{enumerate}
\end{lem}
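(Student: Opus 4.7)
The plan is to exploit the fact that the log-derivative operator $\lambda_v \frac{\partial}{\partial \lambda_v}$ acts on the monomial $\lambda_v^{\sigma_v}$ by multiplication by $\sigma_v$. Since the partition function is $Z_G^\sL = \sum_\sigma w_G(\sigma \mid \sL)$ with $w_G(\sigma \mid \sL)$ containing the factor $\prod_{w \in V \setminus \Lambda} \lambda_w^{\sigma_w}$, applying $\lambda_v \partial_{\lambda_v}$ to $Z_G^\sL$ yields $\sum_\sigma \sigma_v w_G(\sigma \mid \sL)$, which is exactly $Z_G^\sL \cdot \mu_G(\sigma_v = 1 \mid \sL)$. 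Dividing through gives part 1 immediately. This is the base computation that everything else builds on.

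For part 2, I would compute $\lambda_v \partial_{\lambda_v} \marginal_G^\sL(u)$ directly from the ratio formula $\marginal_G^\sL(u) = Z_G^\sL{}^{-1} \sum_\sigma \sigma_u\, w_G(\sigma \mid \sL)$ using the quotient/product rule; the numerator produces $\mu_G(\sigma_u = \sigma_v = 1 \mid \sL)$, and the correction from the denominator contributes $-\marginal_G^\sL(u) \marginal_G^\sL(v)$ by part 1 applied to $Z_G^\sL$. This gives the covariance $\CovM_G^\sL(u,v)$. The equality with the second mixed log-derivative is then just the standard identity $\partial^2 \log Z = \partial\, \Exp / \partial\theta$ combined with part 1.

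For part 3, the cleanest approach is to express $R_G^\sL(u) = Z_G^{\sL,\, u = 1} / Z_G^{\sL,\, u = 0}$, where $Z_G^{\sL,\, u = i}$ denotes the partition function with the pinning $\sigma_\Lambda = \sL$ extended by $\sigma_u = i$. This identity follows by noting $\mu_G(\sigma_u = i \mid \sL) = Z_G^{\sL, u=i} / Z_G^\sL$ and taking the ratio. Then $\log R_G^\sL(u) = \log Z_G^{\sL,\, u=1} - \log Z_G^{\sL,\, u=0}$, and applying part 1 to each of the two pinned partition functions (with pinning set $\Lambda \cup \{u\}$) gives the two conditional marginals $\mu_G(\sigma_v = 1 \mid \sigma_u = i, \sL)$, whose difference is $\II_G^\sL(u \sra v)$ by definition.

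None of the three parts is genuinely difficult: each is a direct computation using the log-derivative trick. The only mild subtlety is notational, namely bookkeeping the conditional partition functions $Z_G^{\sL,\, u=i}$ in part 3 and ensuring that the partial derivative with respect to $\lambda_v$ commutes with the fixing of a spin at $u$ (which it trivially does, since the pinning only kills the $\lambda_u^{\sigma_u}$ factors and leaves $\lambda_v$ free). I would dispatch parts 1 and 2 in a few lines each and then spend the bulk of the proof carefully setting up the pinned partition function identity for part 3 before invoking part 1.
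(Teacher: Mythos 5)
Your Parts 1 and 2 follow the paper's computation essentially verbatim: expand the log-derivative, observe that $\optheta{v}$ brings down the factor $\sigma_v$, and for Part 2 apply the product/quotient rule and read off the covariance. Nothing to say there.

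Part 3 is where you diverge, and your route is genuinely different. The paper writes $\log R_G^\sL(u) = \log \marginal_G^\sL(u) - \log\bigl(1 - \marginal_G^\sL(u)\bigr)$, differentiates using Part 2 to obtain $\CovM_G^\sL(u,v)/\CovM_G^\sL(u,u)$, and then separately verifies by a direct algebraic manipulation of probabilities that $\II_G^\sL(u \sra v) = \CovM_G^\sL(u,v)/\CovM_G^\sL(u,u)$ for any pair of binary random variables. You instead write $\log R_G^\sL(u)$ as the difference of log partition functions under the two extended pinnings $(\sL, u\seq 1)$ and $(\sL, u\seq 0)$, apply Part 1 to each of these pinned systems (valid since the lemma holds for arbitrary $\Lambda$), and obtain the conditional marginals $\mu_G(\sigma_v \seq 1 \mid \sigma_u \seq i, \sL)$ directly; the difference is $\II_G^\sL(u \sra v)$ by definition. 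This bypasses Part 2 entirely and avoids the intermediate covariance-ratio identity. One small imprecision: the identity $R_G^\sL(u) = Z_G^{\sL,u\seq 1}/Z_G^{\sL,u\seq 0}$ holds only up to a multiplicative prefactor $\lambda_u \beta^{a_1}/\gamma^{a_0}$ (with $a_i$ the number of $\Lambda$-neighbors of $u$ pinned to spin $i$), since the pinned partition functions do not include the factor $\lambda_u$ nor the $u$--$\Lambda$ edge weights. This prefactor is irrelevant for $\optheta{v}$ when $v\neq u$, and for $v=u$ the $\log\lambda_u$ term supplies exactly the $1 = \II_G^\sL(u\sra u)$ needed; so the argument is sound, but the prefactor should be made explicit rather than swept into "bookkeeping." What the paper's approach buys is the explicit formula $\II_G^\sL(u\sra v) = \CovM_G^\sL(u,v)/\CovM_G^\sL(u,u)$ as a byproduct; what yours buys is a more economical deduction of Part 3 from Part 1 alone.
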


\begin{proof}
The first two parts are standard. 
We include the proofs of these two facts in \cref{sec:proof-2spin} for completeness. 
For Part 3, we deduce from Part 2 that
\begin{align*}
\left(\optheta{v}\right) \log R_G^\sL(u) 
= \left(\optheta{v}\right) \log \left( \frac{\marginal_G^\sL(u)}{1-\marginal_G^\sL(u)} \right) 
= \frac{ \left(\optheta{v}\right) \marginal_G^\sL(u)  }{\marginal_G^\sL(u) \left(1-\marginal_G^\sL(u)\right)} 
= \frac{\CovM_G^\sL(u,v)}{\CovM_G^\sL(u,u)}. 
\end{align*}
It remains to show that 
\[
\II_G^\sL(u \sra v) 
= \frac{\CovM_G^\sL(u,v)}{\CovM_G^\sL(u,u)}, 
\]
which actually holds for any two binary random variables. 
To see this, we first compute $\CovM_G^\sL(u,u) \cdot \II_G^\sL(u \sra v)$ by definition: 
\begin{align*}
&\CovM_G^\sL(u,u) \cdot \II_G^\sL(u \sra v)\\ 
={}& \mu_G(u \seq 1 \mid \sL) \cdot \mu_G(u \seq 0 \mid \sL) \cdot
\left[ \mu_G(v \seq 1 \mid u \seq 1, \,\sL) - \mu_G(v \seq 1 \mid u \seq 0, \,\sL) \right]\\ 
={}& \mu_G(u \seq 1, v \seq 1 \mid \sL) \cdot \mu_G(u \seq 0 \mid \sL) - \mu_G(u \seq 1 \mid \sL) \cdot \mu_G(u \seq 0, v \seq 1 \mid \sL)\\ 
={}&  \mu_G(u \seq 1, v \seq 1 \mid \sL) \cdot \mu_G(u \seq 0, v \seq 0 \mid \sL) - \mu_G(u \seq 1, v \seq 0 \mid \sL) \cdot \mu_G(u \seq 0, v \seq 1 \mid \sL). 
\end{align*}
Meanwhile, the covariance can be written as
\begin{align*}
\CovM_G^\sL(u,v) &= \mu_G(u \seq 1, v \seq 1 \mid \sL) - \mu_G(u \seq 1 \mid \sL) \cdot \mu_G(v \seq 1 \mid \sL)\\ 
&= \mu_G(u \seq 1, v \seq 1 \mid \sL) \cdot \mu_G(u \seq 0, v \seq 0 \mid \sL) - \mu_G(u \seq 1, v \seq 0 \mid \sL) \cdot \mu_G(u \seq 0, v \seq 1 \mid \sL).
\end{align*}
This shows that $\II_G^\sL(u \sra v) = \CovM_G^\sL(u,v) / \CovM_G^\sL(u,u)$ and thus establishes Part 3. 
\end{proof}

%Recall that $\mathcal{C}_v$ is the set of all free (unfixed) copies of $v$ in the SAW tree. 
We deduce \cref{lem:Tsaw-G} from \cref{thm:Tsaw-G} and the second item of the following lemma. 
%We remark that the second item establishes \cref{lem:Tsaw-G}. 
%The proofs of \cref{thm:Tsaw-G} and \cref{lem:2spin-sys-property} can be found in \cref{sec:proof-SAW-G}. 
The proof of \cref{thm:Tsaw-G} is presented in \cref{sec:proof-SAW-G}. 

\begin{lem}\label{lem:I-G-T}
Let $G=(V,E)$ be a connected graph, $r\in V$ be a vertex and $\Lambda \subseteq V \backslash \{r\}$ such that $G\backslash \Lambda$ is connected. 
Let $T = T_{\textsc{saw}}(G,r)$ be the self-avoiding walk tree of $G$ rooted at $r$. 
Then for every $\sL \in \{0,1\}^\Lambda$ we have:
\begin{enumerate}
\item (\cite[Theorem 3.1]{Wei06}) Preservation of marginal of the root $r$:
\[
\marginal_G^\sL(r) = \marginal_T^\sL(r) \qquad\text{and}\qquad R_G^\sL(r) = R_T^\sL(r);
\]
\item Preservation of covariances and influences of $r$: for every $v\in V$, 
\[
\CovM_G^\sL(r,v) = \sum_{\hat{v} \in \mathcal{C}_v} \CovM_T^\sL(r,\hat{v}) \qquad\text{and}\qquad \II_G^\sL(r\sra v) = \sum_{\hat{v} \in \mathcal{C}_v} \II_T^\sL(r\sra \hat{v}). 
\]
where $\mathcal{C}_v$ is the set of all free (unfixed) copies of $v$ in $T$. 
\end{enumerate}
\end{lem}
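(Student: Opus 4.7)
The plan is to derive Lemma \ref{lem:I-G-T} as a direct corollary of the divisibility statement \cref{thm:Tsaw-G} together with \cref{lem:2spin-sys-property}, which expresses marginals, covariances, and influences as appropriate logarithmic derivatives of the partition function. The starting point is the identity
\[
\log Z_T^{\sigma_\Lambda} = \log Z_G^{\sigma_\Lambda} + \log P_{G,r}^{\sigma_\Lambda},
\]
where the crucial property is that the quotient polynomial $P_{G,r}^{\sigma_\Lambda}$ does not depend on the field $\lambda_r$ at the root.

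For Part 1, I would apply $\lambda_r \partial/\partial \lambda_r$ to both sides of the displayed identity. Since $P_{G,r}^{\sigma_\Lambda}$ is independent of $\lambda_r$, its contribution vanishes, and \cref{lem:2spin-sys-property}(1) immediately yields $M_T^{\sigma_\Lambda}(r) = M_G^{\sigma_\Lambda}(r)$. The marginal ratio identity $R_T^{\sigma_\Lambda}(r) = R_G^{\sigma_\Lambda}(r)$ then follows by applying $x \mapsto x/(1-x)$.

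For the covariance half of Part 2, I would iterate: apply $\lambda_r \partial/\partial \lambda_r$ first (killing the $\log P$ term), then apply $\lambda_v \partial/\partial \lambda_v$, and invoke \cref{lem:2spin-sys-property}(2). The key bookkeeping step is that in the self-avoiding walk tree each copy $\hat{v} \in \mathcal{C}_v$ shares the single field variable $\lambda_v$, so by the chain rule
\[
\lambda_v \frac{\partial}{\partial \lambda_v} = \sum_{\hat{v} \in \mathcal{C}_v} \lambda_{\hat{v}} \frac{\partial}{\partial \lambda_{\hat{v}}}
\]
when acting on any function of the fields of $T$. Consequently, $(\lambda_v\partial_{\lambda_v})(\lambda_r\partial_{\lambda_r})\log Z_T^{\sigma_\Lambda}$ equals $\sum_{\hat{v}\in\mathcal{C}_v} K_T^{\sigma_\Lambda}(r,\hat{v})$, while the same double derivative applied to $\log Z_G^{\sigma_\Lambda}$ equals $K_G^{\sigma_\Lambda}(r,v)$, yielding the desired covariance identity. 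The influence identity then follows by dividing by $K^{\sigma_\Lambda}(r,r)$ using the formula $\mathcal{I}(u \to v) = K(u,v)/K(u,u)$ established in the proof of \cref{lem:2spin-sys-property}(3), noting that $\mathcal{C}_r = \{r\}$ so both Part 1 and the covariance identity give $K_G^{\sigma_\Lambda}(r,r) = K_T^{\sigma_\Lambda}(r,r)$.

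The only real subtlety, and therefore the place to be most careful, is the chain-rule identification above: because the self-avoiding walk tree identifies the fields of all copies of a given vertex, one must justify that the symbolic operator $\lambda_v \partial/\partial \lambda_v$ acts on $Z_T^{\sigma_\Lambda}$ by summing single-copy derivatives over $\mathcal{C}_v$. Once this is in hand, the rest of the argument is mechanical, and in particular requires nothing about the structure of the SAW tree beyond the divisibility statement \cref{thm:Tsaw-G} and the field-sharing convention introduced at the end of \cref{sec:preliminaries}.
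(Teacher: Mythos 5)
Your proposal is correct and follows essentially the same route as the paper: apply $\optheta{r}$ to $\log Z_T^{\sigma_\Lambda}=\log Z_G^{\sigma_\Lambda}+\log P_{G,r}^{\sigma_\Lambda}$ (using that $P_{G,r}^{\sigma_\Lambda}$ is $\lambda_r$-independent) to get Part~1, then apply $\optheta{v}$ and the field-sharing chain rule to get the covariance identity. The only slight divergence is cosmetic: the paper obtains the influence identity directly from Part~3 of \cref{lem:2spin-sys-property} via $\optheta{v}\log R^{\sigma_\Lambda}(r)$, while you derive it by dividing the covariance identity by $\CovM^{\sigma_\Lambda}(r,r)$ and using $\mathcal{C}_r=\{r\}$ so that $\CovM_G^{\sigma_\Lambda}(r,r)=\CovM_T^{\sigma_\Lambda}(r,r)$ — both are valid and equivalent.
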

\begin{proof}
By \cref{thm:Tsaw-G}, there exists a polynomial $P^\sL_{G,r} = P^\sL_{G,r}(\mybf{\lambda})$ such that $Z^\sL_T = Z^\sL_G \cdot P^\sL_{G,r}$ 
%\[
%Z^\sL_T = Z^\sL_G \cdot P^\sL_{G,r} 
%\]
and $P^\sL_{G,r}$ is independent of $\lambda_r$. 
Then it follows from \cref{lem:2spin-sys-property} that
\[
\marginal_T^\sL(r) = \left(\optheta{r}\right) \log Z^\sL_T 
= \left(\optheta{r}\right) \left( \log Z^\sL_G + \log P^\sL_{G,r} \right) 
= \left(\optheta{r}\right) \log Z^\sL_G 
= \marginal_G^\sL(r), 
\]
and therefore $R_T^\sL(r) = R_G^\sL(r)$. 
For the second item, again from \cref{lem:2spin-sys-property} we get
\[
\CovM_G^\sL(r,v) = \left(\optheta{v}\right) \marginal_G^\sL(r) = \left(\optheta{v}\right) \marginal_T^\sL(r). 
%= \sum_{\hat{v} \in \mathcal{C}_v} \CovM_T^\sL(r,\hat{v}), 
\]
%where the last equality is because every copy of $v$ in $T$ has the same field $\lambda_v$. 
Recall that for the spin system on the SAW tree $T$, every free copy $\hat{v}$ of $v$ from $\mathcal{C}_v$ has the same external field $\lambda_{\hat{v}} = \lambda_v$. 
Then, by the chain rule of derivatives and \cref{lem:2spin-sys-property}, we deduce that
\[
\CovM_G^\sL(r,v) = \sum_{\hat{v} \in \mathcal{C}_v} \left(\optheta{\hat{v}}\right) \marginal_T^\sL(r) 
\cdot \frac{\partial \lambda_{\hat{v}}}{\partial \lambda_v} \cdot \frac{\lambda_v}{\lambda_{\hat{v}}} 
= \sum_{\hat{v} \in \mathcal{C}_v} \CovM_T^\sL(r,\hat{v}). 
\]
Finally, we have
\[
\II_G^\sL(r \sra v) = \left(\optheta{v}\right) \log R_G^\sL(r) = \left(\optheta{v}\right) \log R_T^\sL(r) = \sum_{\hat{v} \in \mathcal{C}_v} \II_T^\sL(r\sra \hat{v}), 
\]
where the last equality follows as above. 
\end{proof}

%\subsection{Proof of \cref{thm:Tsaw-G}}
\subsection{Proof of \texorpdfstring{\cref{thm:Tsaw-G}}{Theorem 11}}
\label{sec:proof-SAW-G}

%In this section we give the proof of \cref{thm:Tsaw-G}. 
%Let $\{0,1\}^V = \{0,1\}^V$ be the collection of all configurations on $G$ and $\{0,1\}^{V\backslash \Lambda}$ be those aligning with $\sigma_\Lambda$. 

Before presenting our proof, let us first review the notations and definitions introduced earlier. 
Denote the set of fields at all vertices by $\mybf{\lambda} = \{\lambda_v: v\in V\}$. 
For $\Lambda \subseteq V$ and $\sL \in \{0,1\}^\Lambda$, the weight of $\sigma \in \{0,1\}^{V\backslash \Lambda}$ conditional on $\sL$ is given by
\[
w_G(\sigma \mid \sL) = \beta^{m_1(\sigma \mid \sL)} \gamma^{m_0(\sigma \mid \sL)} \prod_{v\in V\backslash \Lambda} \lambda_v^{\sigma_v},
\]
where for $i=0,1$, $m_i(\cdot \mid \sL)$ denotes the number of edges such that both endpoints receive the spin $i$ and at least one of them is in $V\backslash \Lambda$. 
The partition function conditional on $\sL$ is defined as $Z_G^\sL = \sum_{\sigma \in \{0,1\}^{V\backslash \Lambda}} w_G(\sigma \mid \sL)$. 
For the SAW tree, we define the conditional weights and partition function in the same way. 
In particular, recall that when we fix a conditioning $\sL$ on the SAW tree, we also remove all descendants of $\hat{v} \in \mathcal{C}_v$ for each $v\in \Lambda$. 

For every $v\in V\backslash \Lambda$ and $i\in \{0,1\}$, we shall write $v = i$ to represent the set of configurations such that $\sigma_v = i$ (i.e., $\{\sigma \in \{0,1\}^{V\backslash \Lambda}: \sigma_v = i\}$) and let $Z_G^\sL(v \seq i)$ be sum of weights of all configurations with $v \seq i$. 
We further extend this notation and write $Z_G^\sL(U \seq \sigma_U)$ for every $U\subseteq V\backslash \Lambda$ and $\sigma_U \in \{0,1\}^U$. 
For the SAW tree we adopt the same notations as well. 

\begin{proof}[Proof of \cref{thm:Tsaw-G}]
We will show that there exists a polynomial $P^\sL_{G,r} = P^\sL_{G,r}(\mybf{\lambda})$, independent of $\lambda_r$, such that
\begin{equation}
\label{eq:SAW-induction}
Z^\sL_T(r \seq 1) = Z^\sL_G(r \seq 1) \cdot P^\sL_{G,r} 
\quad\text{and}\quad
Z^\sL_T(r \seq 0) = Z^\sL_G(r \seq 0) \cdot P^\sL_{G,r}.
\end{equation}
The high-level proof idea of \cref{eq:SAW-induction} is similar to the corresponding result in \cite[Theorem 3.1]{Wei06}. 
Let $m$ be the number of edges with at least one endpoint in $V\backslash \Lambda$. We use induction on $m$. 
When $m = 0$ the statement is trivial since $T=G$. 
Assume that \cref{eq:SAW-induction} holds for all graphs and all conditioning with less than $m$ edges. 
Suppose that the root $r$ has $d$ neighbors $v_1,\dots,v_d$. 
Define $G'$ to be the graph obtained by replacing the vertex $r$ with $d$ vertices $r_1,\dots,r_d$ and then connecting $\{r_i,d_i\}$ for $1\le i\le d$. 

Consider first the case where $(G\backslash \{r\}) \backslash \Lambda$ is still connected. 
%Then $G' \backslash \Lambda$ is connected. 
For each $i$, let $G_i = G' - r_i$. 
Define the $2$-spin system on $G_i$ with the same parameters $(\beta,\gamma,\mybf{\lambda})$, plus an additional conditioning that the vertices $r_1,\dots, r_{i-1}$ are fixed to spin $0$ while $r_{i+1},\dots, r_d$ are fixed to spin $1$; we denote this conditioning by $\sigma_{U_i}$ with $U_i = \{v_1,\dots,v_d\} \backslash \{v_i\}$. 
%Notice that since we do not count the weight of fixed vertices, the conditional partition function for each $G_i$ is independent of the field $\lambda_r$. 
Then, $T=\TSAW(G,r)$ can be generated by the following recursive procedure. 
%given a subroutine of constructing SAW trees for graphs with less than $m$ edges. 
Also see \cref{fig:saw_construction} for an illustration. 

\begin{figure}[t]
\centering
\includegraphics[width = \textwidth]{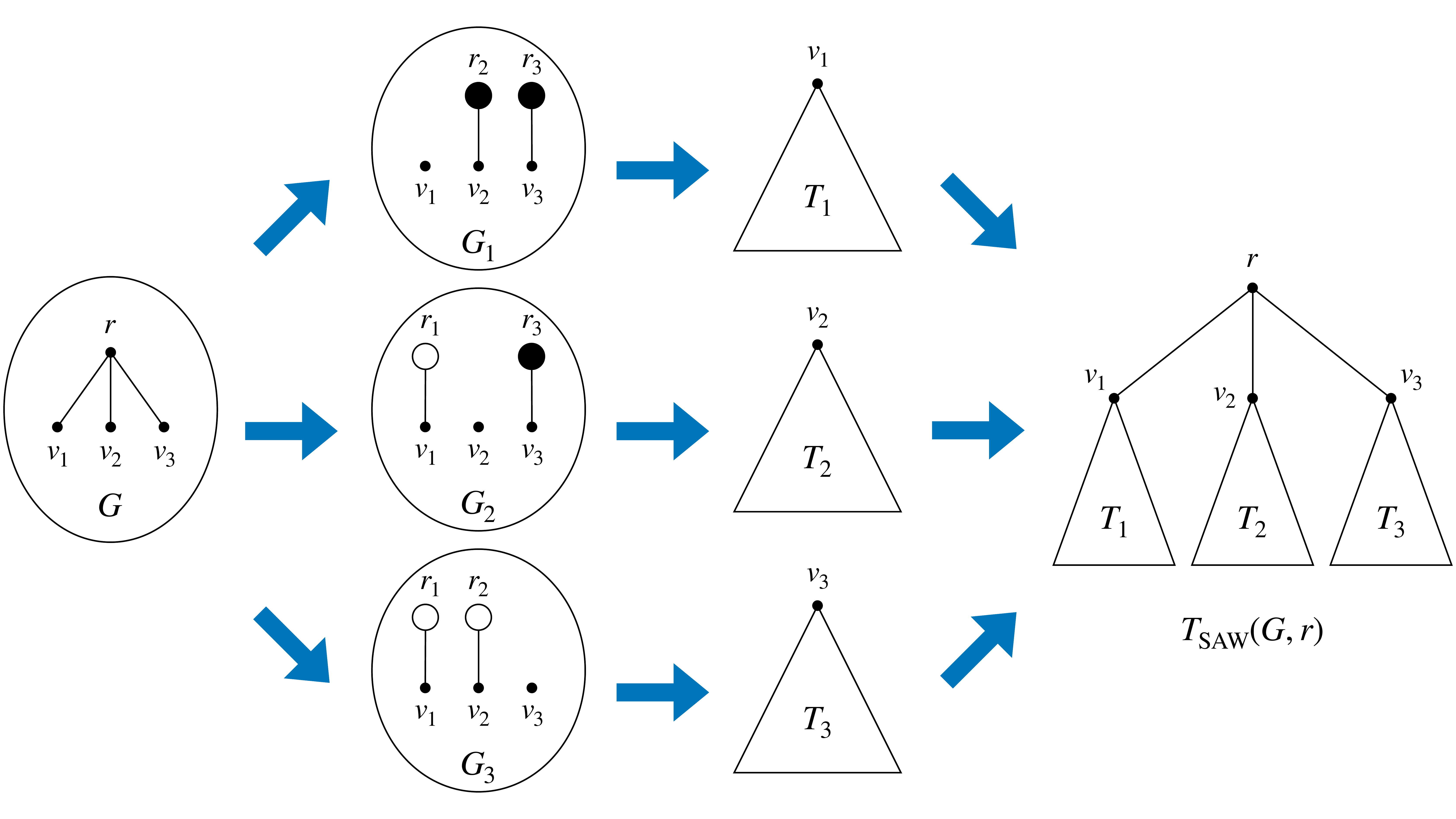}
\caption{A recursive construction of the self-avoiding walk (SAW) tree. Here $T_i$ is the SAW tree of $G_i$ rooted at $v_i$ for $i=1,2,3$. ($\CIRCLE$/$\Circle$: fixed to spin $1$/$0$.)}
\label{fig:saw_construction}
\end{figure}

\medskip
\paragraph{Algorithm: $\TSAW(G,r)$}
%\noindent \textbf{Algorithm: $\TSAW(G,r)$}
\begin{enumerate}
\item For each $i$, 
%let $G_i = G' - r_i$ and 
let $T_i = \TSAW(G_i,v_i)$ plus the conditioning $\sigma_{U_i}$; 
\item Let $T=\TSAW(G,r)$ be the union of $r$ and $T_1,\dots,T_d$ by connecting $\{r,v_i\}$ for $1\le i\le d$; output $T$. 
\end{enumerate}
\medskip

For the purpose of proof, we also consider the $2$-spin system on $G'$ with the same parameters $(\beta,\gamma,\mybf{\lambda})$, with an exception that we let the vertices $r_1,\dots,r_d$ have no fields (i.e., setting $\lambda_{r_i} = 1$ for $1\le i\le d$ instead of $\lambda_r$). 
%We define the $2$-spin system on $G'$ and $G_1,\dots,G_d$ with the same parameters $(\beta,\gamma,\mybf{\lambda})$. 
%In particular, for the model on $G'$ we let the vertices $r_1,\dots,r_d$ have no fields (i.e., setting $\lambda_{r_i} = 1$ for $1\le i\le d$ rather than $\lambda_r$). 
%Furthermore, for the model on $G_i$ where $1\le i\le d$, the vertices $v_1,\dots, v_{i-1}$ are fixed to spin $0$ while $v_{i+1},\dots, v_d$ are fixed to spin $1$. 
%We use $\sigma_{\Lambda_i}$ to denote $\sL$ combined with this conditioning. 
We then observe that
\[
Z^\sL_G(r \seq 1) = \lambda_r \cdot Z^\sL_{G'}(r_1 \seq 1, \dots, r_d \seq 1),
\]
and the same holds with spin $1$ replaced by $0$. 
For $1\le i \le d$, let $\sigma_{\Lambda_i}$ denote the union of the conditioning $\sL$ and $\sigma_{U_i}$, where $\Lambda_i = \Lambda \cup U_i$. 
Then for every $1\le i\le d$ we have
\[
Z^\sL_{G'}(r_1 \seq 0, \dots, r_{i-1} \seq 0, r_i \seq 1, \dots, r_d \seq 1) 
= \beta \cdot Z^{\sigma_{\Lambda_i}}_{G_i}(v_i \seq 1) + Z^{\sigma_{\Lambda_i}}_{G_i}(v_i \seq 0).
\]
Notice that both sides are independent of the field $\lambda_r$: for the left side, all $r_i$'s do not have a field for the spin system on $G'$; for the right side, recall that we do not count the weight of fixed vertices for the conditional partition function for each $G_i$. 
Now define $Q^\sL_{G,r} = Q^\sL_{G,r}(\mybf{\lambda})$ by 
\[
Q^\sL_{G,r} = \prod_{i=2}^d Z^\sL_{G'}(r_1 \seq 0, \dots, r_{i-1} \seq 0, r_i \seq 1, \dots, r_d \seq 1), 
\]
which is independent of $\lambda_r$. 
Then we get
\begin{align*}
Z^\sL_G(r \seq 1) \cdot Q^\sL_{G,r} &= 
\lambda_r \cdot \prod_{i=1}^d Z^\sL_{G'}(r_1 \seq 0, \dots, r_{i-1} \seq 0, r_i \seq 1, \dots, r_d \seq 1)\\ 
&= \lambda_r \cdot \prod_{i=1}^d 
\left( \beta \cdot Z^{\sigma_{\Lambda_i}}_{G_i}(v_i \seq 1) + Z^{\sigma_{\Lambda_i}}_{G_i}(v_i \seq 0) \right).
\end{align*}
Using a similar argument, we also have
\begin{align*}
Z^\sL_G(r \seq 0) \cdot Q^\sL_{G,r} &= 
\prod_{i=1}^d Z^\sL_{G'}(r_1 \seq 0, \dots, r_i \seq 0, r_{i+1} \seq 1, \dots, r_d \seq 1)\\ 
&= \prod_{i=1}^d 
\left( Z^{\sigma_{\Lambda_i}}_{G_i}(v_i \seq 1) + \gamma \cdot Z^{\sigma_{\Lambda_i}}_{G_i}(v_i \seq 0) \right).
\end{align*}
Since we assume that $(G\backslash \{r\}) \backslash \Lambda$ is connected, the graph $G_i \backslash \Lambda$ is also connected for each $i$. 
Then, by the induction hypothesis, for each $i$ there exists a polynomial $P^{\sigma_{\Lambda_i}}_{G_i,v_i} = P^{\sigma_{\Lambda_i}}_{G_i,v_i}(\mybf{\lambda})$ such that
\[
Z^{\sigma_{\Lambda_i}}_{T_i}(r \seq 1) = Z^{\sigma_{\Lambda_i}}_{G_i}(r \seq 1) \cdot P^{\sigma_{\Lambda_i}}_{G_i,v_i} 
\quad\text{and}\quad
Z^{\sigma_{\Lambda_i}}_{T_i}(r \seq 0) = Z^{\sigma_{\Lambda_i}}_{G_i}(r \seq 0) \cdot P^{\sigma_{\Lambda_i}}_{G_i,v_i}; 
\]
these polynomials are independent of $\lambda_r$ since the conditional partition functions for $G_i$'s do not involve $\lambda_r$. 
Now if we let 
\[
P^\sL_{G,r} = Q^\sL_{G,r} \cdot \prod_{i=1}^d P^{\sigma_{\Lambda_i}}_{G_i,v_i},
\]
then it follows from the tree recursion that
\begin{align*}
Z^\sL_T(r \seq 1) &= 
\lambda_r \cdot \prod_{i=1}^d 
\left( \beta \cdot Z^{\sigma_{\Lambda_i}}_{T_i}(v_i \seq 1) + Z^{\sigma_{\Lambda_i}}_{T_i}(v_i \seq 0) \right)\\ 
&= \lambda_r \cdot \prod_{i=1}^d 
\left( \beta \cdot Z^{\sigma_{\Lambda_i}}_{G_i}(v_i \seq 1) \cdot P^{\sigma_{\Lambda_i}}_{G_i,v_i} + Z^{\sigma_{\Lambda_i}}_{G_i}(v_i \seq 0) \cdot P^{\sigma_{\Lambda_i}}_{G_i,v_i} \right)\\
&= Z^\sL_G(r \seq 1) \cdot Q^\sL_{G,r} \cdot \prod_{i=1}^d P^{\sigma_{\Lambda_i}}_{G_i,v_i}\\ 
&= Z^\sL_G(r \seq 1) \cdot P^\sL_{G,r}. 
\end{align*}
The other equality $Z^\sL_T(r \seq 0) = Z^\sL_G(r \seq 0) \cdot P^\sL_{G,r}$ is established in the same way. This completes the proof for the case that $(G\backslash \{r\}) \backslash \Lambda$ is connected. 

If $(G \backslash \{r\}) \backslash \Lambda$ has two or more connected components, then we can construct $\TSAW(G,r)$ by the SAW tree of each component. 
Recall that $G'$ is defined by splitting the vertex $r$ into $d$ copies in the graph $G$. 
Suppose that $G'\backslash \Lambda$ has $k$ connected component for an integer $k\ge 2$. 
Let $G'_{(1)},\dots, G'_{(k)}$ be the subgraphs induced by each component, along with vertices from $\Lambda$ that are adjacent to it. 
For each $j$, let $G_{(j)}$ be the graph obtained from $G'_{(j)}$ by contracting all copies of $r$ into one vertex $r_{(j)}$, and let $T_{(j)} = \TSAW(G'_{(j)}, r_{(j)})$. 
Observe that once we contract the roots $r_{(1)},\dots, r_{(k)}$ of $T_{(1)},\dots, T_{(k)}$, the resulting tree is $\TSAW(G,r)$. 

We define the $2$-spin system on each $G_{(j)}$ with the same parameters $(\beta,\gamma,\mybf{\lambda})$, except that the vertex $r_{(j)}$ does not have a field (i.e., $\lambda_{r_{(j)}} = 1$ instead of $\lambda_r$). 
For $1\le j \le k$, let $\Lambda_{(j)} = \Lambda \cap V(G_{(j)})$ and $\sigma_{\Lambda_{(j)}}$ be the configuration $\sL$ restricted on $\Lambda_{(j)}$. 
Then $G_{(j)} \backslash \Lambda_{(j)}$ is connected for every $j$ and, 
since $k\ge 2$, each $G_{(j)}$ with conditioning $\sigma_{\Lambda_{(j)}}$ has fewer than $m$ edges. 
Thus, we can apply the induction hypothesis; 
namely, for $1\le j \le k$ there exists a polynomial $P^{\sigma_{\Lambda_{(j)}}}_{G_{(i)},r_{(i)}} = P^{\sigma_{\Lambda_{(j)}}}_{G_{(i)},r_{(i)}}(\mybf{\lambda})$, which is independent of $\lambda_r$, such that
\[
Z^{\sigma_{\Lambda_{(j)}}}_{T_{(j)}}(r_{(j)} \seq 1) = Z^{\sigma_{\Lambda_{(j)}}}_{G_{(j)}}(r_{(j)} \seq 1) \cdot P^{\sigma_{\Lambda_{(j)}}}_{G_{(j)},r_{(j)}} 
\quad\text{and}\quad
Z^{\sigma_{\Lambda_{(j)}}}_{T_{(j)}}(r_{(j)} \seq 0) = Z^{\sigma_{\Lambda_{(j)}}}_{G_{(j)}}(r_{(j)} \seq 0) \cdot P^{\sigma_{\Lambda_{(j)}}}_{G_{(j)},r_{(j)}}. 
\]
We define the polynomial $P^\sL_{G,r} = P^\sL_{G,r}(\mybf{\lambda})$ to be
\[
P^\sL_{G,r} = \prod_{j=1}^k P^{\sigma_{\Lambda_{(j)}}}_{G_{(j)}, r_{(j)}}. 
\]
It is then easy to check that
\begin{align*}
Z^\sL_T(r \seq 1) 
&= \lambda_r \cdot \prod_{j=1}^k Z^{\sigma_{\Lambda_{(j)}}}_{T_{(j)}}(r_{(j)} \seq 1) 
= \lambda_r \cdot \prod_{j=1}^k \left( Z^{\sigma_{\Lambda_{(j)}}}_{G_{(j)}}(r_{(j)} \seq 1) \cdot P^{\sigma_{\Lambda_{(j)}}}_{G_{(j)}, r_{(j)}} \right)\\ 
&= Z^\sL_G(r \seq 1) \cdot \prod_{j=1}^k P^{\sigma_{\Lambda_{(j)}}}_{G_{(j)}, r_{(j)}} 
= Z^\sL_G(r \seq 1) \cdot P^\sL_{G,r},  
\end{align*}
and similarly $Z^\sL_T(r \seq 0) = Z^\sL_G(r \seq 0) \cdot P^\sL_{G,r}$. The theorem then follows. 
\end{proof}
%\begin{remark} ### TODO ###
%Here, we sketch a purely combinatorial proof of \cref{eq:TsawGinf} in \cref{lem:Tsaw-G} as mentioned in \cref{rem:divisibilityproof}.
%\end{remark}

\section{Influence bound for trees}
\label{sec:tree-influence}
In this section, we study the influences of the root on other vertices in a tree. 
We give an upper bound on the total influences of the root on all vertices at a fixed distance away. 
To do this, we apply the potential method, which has been used to establish the correlation decay property (see, e.g., \cite{LLY12,LLY13,GL18}). 
Given an arbitrary potential function $\Psi$, our upper bound is in terms of properties of $\Psi$, involving bounds on $\norm{\grad H_d^\Psi}_1$ and $|\psi|$ where $\psi = \Psi'$. 
We then deduce \cref{lem:bound-for-tree} in the case that $\Psi$ an $(\alpha,c)$-potential.

Assume that $T=(V_T,E_T)$ is a tree rooted at $r$ of maximum degree at most $\Delta$. 
Let $\Lambda \subseteq V_T \backslash \{r\}$ and $\sL \in \{0,1\}^\Lambda$ be arbitrary and fixed. 
Consider the $2$-spin system on $T$ with parameters $(\beta,\gamma,\lambda)$, conditioned on $\sL$. 
We need to bound the influence $\II_T^\sL(r\sra v)$ from the root $r$ to another vertex $v\in V_T$. 
Notice that if $v$ is disconnected from $r$ when $\Lambda$ is removed, then $\II_T^\sL(r\sra v) = 0$ by the Markov property of spin systems. 
Therefore, we may assume that, by removing all such vertices, $\Lambda$ contains only leaves of $T$. 

For a vertex $v\in V_T$, let $T_v = (V_{T_v}, E_{T_v})$ be the subtree of $T$ rooted at $v$ that contains all descendant of $v$; note that $T_r = T$. 
We will write $L_{v}(k) \subseteq V_T \backslash \Lambda$ for the set of all free vertices at distance $k$ away from $v$ in $T_v$. 
We pay particular interest in the marginal ratio at $v$ in the subtree $T_v$, and write $R_v = R_{T_v}^\sL(v)$ for simplicity. 
The $\log R_v$'s are related by the tree recursion $H$. 
If a vertex $v$ has $d$ children, denoted by $u_1,\dots,u_d$, then the tree recursion is given by 
\[
\log R_v = H_d(\log R_{u_1}, \dots, \log R_{u_d}), 
\]
where for $1\le d \le \Delta$ and $(y_1,\dots,y_d) \in[-\infty,+\infty]^d$, 
\[
H_d(y_1,\dots,y_d) = \log \lambda + \sum_{i=1}^d \log \left( \frac{\beta e^{y_i} + 1}{e^{y_i} + \gamma} \right). 
\]
%Observe that $H(y) = \log \circ F \circ \exp (y)$. 
Also recall that for $y\in[-\infty,+\infty]$, we define 
\[
h(y) = - \frac{(1-\beta\gamma) e^y}{(\beta e^y + 1)(e^y + \gamma)} 
\]
and $\frac{\partial}{\partial y_i} H_d(y_1,\dots,y_d) = h(y_i)$ for all $1\le i\le d \le \Delta$. 

The following lemma allows us to bound the sum of all influences from the root to distance $k$, using an arbitrary potential function. 
\begin{lem}
\label{lem:A-and-B}
Let $\Psi: [-\infty,+\infty] \to (-\infty,+\infty)$ be a differentiable and increasing (potential) function with image $S = \Psi[-\infty,+\infty]$ and derivative $\psi = \Psi'$. 
Denote the degree of the root $r$ by $\Delta_r$. 
Then for every integer $k \ge 1$, 
%we have the following two bounds:
\begin{equation*}
    \sum_{v \in L_r(k)} \left| \mathcal{I}_{T}^\sL(r\sra v) \right|
    \le \Delta_r A_\Psi B_\Psi 
    \left( \max_{1\le d < \Delta} \,\sup_{\tilde{\mybf{y}} \in S^d} \norm{\grad H_{d}^{\Psi} (\tilde{\mybf{y}})}_1 \right)^{k-1}
\end{equation*}
where 
\[
A_\Psi = \max_{u\in L_r(1)} \left\{ \frac{|h(\log R_u)|}{\psi(\log R_u)} \right\} 
\quad\text{and}\quad 
B_\Psi = \max_{v\in L_r(k)} \left\{ \psi(\log R_v) \right\}. 
\]
\end{lem}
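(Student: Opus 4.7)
The plan is to reduce the sum of absolute influences to a telescoping product involving the gradient of $H_d^\Psi$, and then accumulate one contraction factor per level along the tree. Since $T$ is a tree, for any free vertex $v \in L_r(k)$ there is a unique path $r = w_0, w_1, \dots, w_k = v$. By Part~3 of \cref{lem:2spin-sys-property}, $\mathcal{I}_T^\sL(r\sra v) = \left(\lambda_v\frac{\partial}{\partial \lambda_v}\right) \log R_r$. Because $\lambda_v$ enters $\log R_{w_j}$ only through the child $w_{j+1}$, applying the chain rule to the tree recursion $\log R_{w_j} = H_{d_{w_j}}(\ldots)$ and using $\left(\lambda_v\frac{\partial}{\partial \lambda_v}\right)\log R_v = 1$ (since $\lambda_v$ is a multiplicative factor in $R_v$) yields
\[
\mathcal{I}_T^\sL(r\sra v) = \prod_{j=0}^{k-1} h(\log R_{w_{j+1}}).
\]

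Next, I insert the potential by telescoping. Multiplying and dividing by $\psi(\log R_{w_j})$ at each intermediate vertex gives
\[
\left|\mathcal{I}_T^\sL(r\sra v)\right| = \frac{\psi(\log R_v)}{\psi(\log R_r)}\prod_{j=0}^{k-1} \frac{\psi(\log R_{w_j})}{\psi(\log R_{w_{j+1}})}\,|h(\log R_{w_{j+1}})|.
\]
Bounding $\psi(\log R_v) \le B_\Psi$ and summing over $v \in L_r(k)$, it suffices to control $Q_k(r) := \sum_{v \in L_r(k)}\prod_{j=0}^{k-1}\frac{\psi(\log R_{w_j})}{\psi(\log R_{w_{j+1}})}|h(\log R_{w_{j+1}})|$. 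Grouping summands by the first vertex $w_1 = u_i$ on the path gives, for each subtree root $u$ with free children $u_1,\dots,u_d$, the recursion
\[
Q_m(u) = \sum_{i=1}^d \frac{\psi(\log R_u)}{\psi(\log R_{u_i})}\,|h(\log R_{u_i})|\cdot Q_{m-1}(u_i), \qquad Q_0 \equiv 1.
\]

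The coefficient sum in this recursion is exactly $\|\grad H_d^\Psi(\tilde{\mybf{y}})\|_1$ evaluated at $\tilde{y}_i = \Psi(\log R_{u_i})$, hence at most $K := \max_{1\le d<\Delta}\sup_{\tilde{\mybf{y}} \in S^d}\|\grad H_d^\Psi(\tilde{\mybf{y}})\|_1$. A straightforward induction on $m$ gives $Q_m(u) \le K^m$ for every $u$. To obtain the tighter $K^{k-1}$ at the root, I perform one step of the recursion at $r$ without invoking contraction, instead invoking the looser bound from $A_\Psi$: applying the inductive bound $Q_{k-1}(u_i) \le K^{k-1}$ on each child,
\[
Q_k(r) \le K^{k-1}\sum_{i=1}^{\Delta_r}\frac{\psi(\log R_r)}{\psi(\log R_{u_i})}\,|h(\log R_{u_i})| \le \Delta_r\, A_\Psi\, \psi(\log R_r)\, K^{k-1}.
\]
Combining this with the telescoping identity (dividing by $\psi(\log R_r)$ and multiplying by $B_\Psi$) yields the claimed bound.

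The main obstacle is the very first step, namely the chain-rule identity $\mathcal{I}_T^\sL(r\sra v) = \prod_j h(\log R_{w_{j+1}})$. One must handle the conditioning $\sL$ carefully: after removing disconnected pieces as noted in the setup, $\Lambda$ consists of leaves only, and the tree recursion relates only free children of a free vertex, with any pinned leaves contributing constant factors that do not depend on $\lambda_v$. The key observation making the chain rule factor cleanly is that at every level $w_j$, the field $\lambda_v$ lies in exactly one of $w_j$'s subtrees, namely the one rooted at $w_{j+1}$, so no cross-terms appear and the derivative propagates multiplicatively along the unique path.
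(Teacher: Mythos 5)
Your proof is correct and takes essentially the same approach as the paper: you telescope the influence along the unique path to a distance-$k$ vertex using $\psi$, recognize the resulting per-level coefficient sum as $\norm{\grad H_d^\Psi}_1$, bound it by the contraction constant for internal vertices (which have at most $\Delta-1$ children), and handle the root level separately via $A_\Psi$. The only packaging difference is that you derive the product formula $\II_T^\sL(r\sra v)=\prod_j h(\log R_{w_{j+1}})$ directly by iterating the chain rule, whereas the paper factors this through two intermediate lemmas (the multiplicativity of influences along tree paths, \cref{lem:inf-chain-rule}, and the identity $\II_T^\sL(v\sra u)=h(\log R_u)$, \cref{lem:inf-h}); the inductive quantity you call $Q_m(u)$ is the paper's \cref{eq:induct} divided by $\psi(\log R_v)$, which is an equivalent normalization.
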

%\begin{enumerate}
%    \item \begin{equation*}
%    \sum_{v \in L_r(k)} \left| \mathcal{I}_{T}^{r\sra v} \right|
%    \le \Delta_r A_\Psi B_\Psi 
%    \left( \max_{1\le d < \Delta} \sup_{\tilde{\mybf{y}}} \norm{\grad H_{d}^{\Psi} (\tilde{\mybf{y}})}_1 \right)^{k-1}
%    \end{equation*}
%    where
%    \[
%    A_\Psi = \max_{u\in L_r(1)} \left\{ \frac{|h(\log R_u)|}{\psi(\log R_u)} \right\} \quad\text{and}\quad B_\Psi = \max_{v\in L_r(k)} \left\{ \psi(\log R_v) \right\}. 
%    \]
%    \item 
%    \begin{align*}
%        \sum_{v \in L_{r}(k)} \abs{\mathcal{I}_{T}^{r \rightarrow v}} \leq \max_{v \in L_{r}(k)}\wrapc{\frac{\psi(\log R_{v})}{\psi(\log R_{r})}} \cdot  \left( \max_{1\le d < \Delta} \sup_{\tilde{\mybf{y}}} \norm{\grad H_{d}^{\Psi} (\tilde{\mybf{y}})}_1 \right)^{k}
%    \end{align*}
%\end{enumerate}

%\noindent\rule{\textwidth}{2pt}
%NEW VERSION

Before proving \cref{lem:A-and-B}, we first present two useful properties of the influences on trees. 
Firstly, it was shown in \cite{ALO20} that the influences satisfy the following form of chain rule on trees. 

\begin{lem}[{\cite[Lemma B.2]{ALO20}}]
\label{lem:inf-chain-rule}
Suppose that $u,v,w \in V_T$ are three distinct vertices such that $u$ is on the unique path from $v$ to $w$. Then 
\[
\II_T^\sL(v \sra w) = \II_T^\sL(v \sra u) \cdot \II_T^\sL(u \sra w). 
\]
\end{lem}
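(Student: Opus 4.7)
The plan is to prove the multiplicative identity directly from the definition of pairwise influence and the Markov property of the Gibbs distribution on a tree. The crucial structural fact is that since $T$ is a tree and $u$ lies on the unique path from $v$ to $w$, removing $u$ disconnects $v$ from $w$ in $T$, and this separation is preserved after further removing the frozen set $\Lambda$. Hence conditioned on $\sigma_u$ (and $\sL$), the spins $\sigma_v$ and $\sigma_w$ are independent.

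First, I will apply the law of total probability to split on the value of $\sigma_u$: for each $i \in \{0,1\}$,
\[
\mu_T(\sigma_w \seq 1 \mid \sigma_v \seq i, \,\sL) = \sum_{j \in \{0,1\}} \mu_T(\sigma_w \seq 1 \mid \sigma_u \seq j, \sigma_v \seq i, \,\sL) \cdot \mu_T(\sigma_u \seq j \mid \sigma_v \seq i, \,\sL).
\]
Next, I will invoke the Markov property on the tree: because $u$ separates $v$ from $w$ in $T \setminus \Lambda$, we have
\[
\mu_T(\sigma_w \seq 1 \mid \sigma_u \seq j, \sigma_v \seq i, \,\sL) = \mu_T(\sigma_w \seq 1 \mid \sigma_u \seq j, \,\sL),
\]
so that the dependence on $\sigma_v$ drops out of the inner factor.

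Then I will subtract the two expressions (for $i = 1$ and $i = 0$) to form $\II_T^\sL(v \sra w)$, and factor out the terms that do not depend on $i$. The difference in $\mu_T(\sigma_u \seq j \mid \sigma_v \seq i, \,\sL)$ across $i = 1$ and $i = 0$ equals $\II_T^\sL(v \sra u)$ when $j = 1$ and equals $-\II_T^\sL(v \sra u)$ when $j = 0$, using $\mu_T(\sigma_u \seq 0 \mid \cdot) = 1 - \mu_T(\sigma_u \seq 1 \mid \cdot)$. Collecting terms yields
\[
\II_T^\sL(v \sra w) = \II_T^\sL(v \sra u) \cdot \bigl[\mu_T(\sigma_w \seq 1 \mid \sigma_u \seq 1, \,\sL) - \mu_T(\sigma_w \seq 1 \mid \sigma_u \seq 0, \,\sL)\bigr] = \II_T^\sL(v \sra u) \cdot \II_T^\sL(u \sra w),
\]
which is exactly the claim.

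There is no real obstacle here; the only thing to be careful about is the well-definedness of the conditional probabilities (all conditioning events must have positive probability) and the exact justification of the Markov step. Both hold because, as noted in \cref{sec:tree-influence}, we may assume $\Lambda$ contains only vertices that do not disconnect $v, u, w$ from each other, and because the tree structure guarantees that $\{u\}$ is a separator between $v$ and $w$ in the graph obtained by deleting $\Lambda$. Everything else is pure algebra.
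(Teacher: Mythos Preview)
Your proof is correct. The paper does not give its own proof of this lemma but simply cites \cite[Lemma B.2]{ALO20}; your argument via the law of total probability and the Markov property of the tree is the standard one and is essentially what appears in the cited reference.
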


Secondly, for two adjacent vertices on a tree, the influence from one to the other is given by the function $h$.  

\begin{lem}
\label{lem:inf-h}
Let $v \in V_T$ and $u$ be a child of $v$ in the subtree $T_v$. Then
\[
\II_T^\sL(v \sra u) = h(\log R_u).
\]
\end{lem}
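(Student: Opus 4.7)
The plan is to carry out a direct computation, using the Markov property on the tree to reduce $\II_T^\sL(v \sra u)$ to a rational expression in $R_u$, and then rewrite the result in terms of $\log R_u$ to recognize $h$.

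First I would invoke the Markov property. Since $v$ separates $u$ (together with its subtree $T_u$) from the rest of $T$, conditioning on $\sigma_v$ makes $\sigma_u$ depend only on the one-edge interaction between $v$ and $u$ together with the conditional measure on $T_u$ given $\sL$ restricted to $T_u$. Concretely, weighting $\sigma_u \in \{0,1\}$ by the edge factor from $\{v,u\}$ and by $Z_{T_u}^{\sL}(u \seq \sigma_u)$, one gets
\begin{equation*}
\mu_T(\sigma_u \seq 1 \mid \sigma_v \seq 1, \sL) = \frac{\beta R_u}{\beta R_u + 1}, \qquad \mu_T(\sigma_u \seq 1 \mid \sigma_v \seq 0, \sL) = \frac{R_u}{R_u + \gamma},
\end{equation*}
where $R_u = R_{T_u}^\sL(u)$ as defined. (The edge $\{v,u\}$ contributes $\beta$ when both endpoints are $1$, $\gamma$ when both are $0$, and $1$ otherwise, and all other edges in $T_u$ and their contributions are absorbed in $Z_{T_u}^\sL(u \seq \cdot)$, whose ratio is $R_u$.)

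Next I would subtract these two expressions. A short algebraic simplification gives
\begin{equation*}
\II_T^\sL(v \sra u) = \frac{\beta R_u}{\beta R_u + 1} - \frac{R_u}{R_u + \gamma} = \frac{(\beta\gamma - 1)\, R_u}{(\beta R_u + 1)(R_u + \gamma)} = -\frac{(1-\beta\gamma)\, R_u}{(\beta R_u + 1)(R_u + \gamma)}.
\end{equation*}
Finally, substituting $R_u = e^{\log R_u}$ in the right-hand side matches the definition of $h$ verbatim, yielding $\II_T^\sL(v \sra u) = h(\log R_u)$.

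There is no substantial obstacle here. The only place to be careful is that the conditional computation really is a one-edge computation with the subtree $T_u$ factored out: this is immediate from the Markov property of Gibbs distributions on trees and the fact (built into the reduction made just before the lemma) that $\Lambda$ contains only leaves, so $\sL$ splits cleanly between $T_u$ and its complement in $T$. Everything else is a two-line algebraic identity.
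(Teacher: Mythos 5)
Your proof is correct, and it is exactly the ``explicit computation of the influence'' that the paper mentions as an alternative but deliberately does not carry out. You use the Markov property of the tree to reduce the two conditional marginals of $u$ given $\sigma_v = 1$ and $\sigma_v = 0$ to the one-edge expressions $\frac{\beta R_u}{\beta R_u + 1}$ and $\frac{R_u}{R_u + \gamma}$ (with all of $T_u$ absorbed into $R_u = Z_{T_u}^{\sL}(u \seq 1)/Z_{T_u}^{\sL}(u \seq 0)$), subtract, and simplify to $-\frac{(1-\beta\gamma)R_u}{(\beta R_u + 1)(R_u + \gamma)} = h(\log R_u)$. This is self-contained and elementary. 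The paper instead uses \cref{lem:2spin-sys-property} (Part 3) to write $\II_T^\sL(v \sra u) = \bigl(\lambda_u \tfrac{\partial}{\partial\lambda_u}\bigr)\log R_v$, applies the chain rule through the tree recursion $\log R_v = H_d(\log R_{u_1},\dots,\log R_{u_d})$, and uses $\frac{\partial}{\partial y_i} H_d = h(y_i)$ together with $\II_{T_{u_i}}^\sL(u_i \sra u) = \mathbbm{1}[u_i = u]$ to read off the answer. The paper's route is less elementary (it relies on the derivative characterization of influence) but it is the structurally illuminating one: it shows directly why $h$ is both the partial derivative of the tree recursion and the single-edge influence, which is the identity that drives the rest of the analysis. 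Both are valid; yours is the shorter direct verification, the paper's is the one that ties into the machinery it builds.
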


\begin{proof}
The lemma can be proved through an explicit computation of the influence. 
Here we present a more delicate proof utilizing \cref{lem:2spin-sys-property}, which gives some insights into the relation between the influence and the function $h$. 
We assume that $v$ has $d$ children in the subtree $T_v$, denoted by $u_1=u$ and $u_2, \dots, u_d$ respectively. 
We also assume, as a more general setting than uniform fields, that each vertex $w$ is attached to a field $\lambda_w$ of its own. 
Then \cref{lem:2spin-sys-property} and the tree recursion imply that
\begin{align*}
\II_T^\sL(v \sra u) 
&= \II_{T_v}^\sL(v \sra u)
= \left(\optheta{u}\right) \log R_v \\
&= \left(\optheta{u}\right) H_d(\log R_{u_1}, \dots, \log R_{u_d}) \\
&= \sum_{i=1}^d \frac{\partial}{\partial \log R_{u_i}} H_d(\log R_{u_1}, \dots, \log R_{u_d}) \cdot \left(\optheta{u}\right) \log R_{u_i} \\
&= \sum_{i=1}^d h(\log R_{u_i}) \cdot \II_{T_{u_i}}^\sL (u_i \sra u)
= h(\log R_u),
\end{align*}
where the last equality is because $\II_{T_{u_i}}^\sL (u_i \sra u) = 0$ for $u_i \neq u$ and $\II_{T_{u}}^\sL (u \sra u) = 1$. 
%The lemma then follows. 
\end{proof}

We are now ready to prove \cref{lem:A-and-B}. 

\begin{proof}[Proof of \cref{lem:A-and-B}]
For a vertex $v\in V_T$, denote the number of its children by $d_v$; note that $d_r = \Delta_r$. 
Let $u_1,\dots,u_{\Delta_r}$ be the children of the root $r$. 
We may assume that all these children of $r$ are free, since if $u_i$ is fixed then $\II_T^\sL(r \sra u_i) = 0$ by definition. 
%Let $v\in V_T \backslash \{r\}$ and assume that $v$ is a descendant of $u_i$ for some $i$. 
Then by \cref{lem:inf-chain-rule} and \cref{lem:inf-h}, we get 
\begin{align*}
 	\sum_{v \in L_r(k)} \left| \II_T^\sL(r \sra v) \right| 
    &= \sum_{i=1}^{\Delta_r} \left| \II_T^\sL(r \sra u_i) \right| \sum_{v\in L_{u_i}(k-1)} \left| \II_T^\sL(u_i \sra v) \right|\\ 
    &= \sum_{i=1}^{\Delta_r} \left| h(\log R_{u_i}) \right| \sum_{v\in L_{u_i}(k-1)} \left| \II_T^\sL(u_i \sra v) \right|\\
    &= \sum_{i=1}^{\Delta_r} \frac{\left| h(\log R_{u_i}) \right|}{\psi(\log R_{u_i})} 
    \sum_{v\in L_{u_i}(k-1)} \psi(\log R_{u_i}) \left| \II_T^\sL(u_i \sra v) \right|.
    %&= \sum_{i=1}^{\Delta_r} \frac{\left| h(\log R_{u_i}) \right|}{\psi(\log R_{u_i})} 
    %\sum_{v\in L_{u_i}(k-1)} \left| \left(\optheta{v}\right) \Psi(\log R_{u_i}) \right|. 
\end{align*}
Hence, we obtain that
\begin{equation}\label{eq:rt-influence}
    \sum_{v \in L_r(k)} \left| \II_T^\sL(r \sra v) \right| 
    \le \Delta_r \cdot 
    \max_{1\le i \le \Delta_r} \left\{ \frac{|h(\log R_{u_i})|}{\psi(\log R_{u_i})} \right\} \cdot 
    \max_{1\le i \le \Delta_r} \left\{ \sum_{v\in L_{u_i}(k-1)} \psi(\log R_{u_i}) \left| \II_T^\sL(u_i \sra v) \right| \right\}.
\end{equation}
    
Next, we show by induction that for every vertex $u\in V_T \backslash \{r\}$ and every integer $k\ge 0$ we have
\begin{equation}\label{eq:induct}
    \sum_{v\in L_u(k)} 
    \psi(\log R_u) \left| \II_T^\sL(u \sra v) \right|
    \le 
    \max_{v\in L_u(k)} \left\{ \psi(\log R_v) \right\} \cdot 
    \left( \max_{w\in V_{T_u}} \sup_{\tilde{\mybf{y}} \in S^{d_w}} \norm{\grad H_{d_w}^{\Psi} ( \tilde{\mybf{y}} )}_1 \right)^k.  
\end{equation}
%where $\grad = \grad_{\tilde{\mybf{y}}}$.
Observe that once we establish \cref{eq:induct}, the lemma follows immediately by plugging \cref{eq:induct} into \cref{eq:rt-influence}. 
We will use induction on $k$ to prove \cref{eq:induct}. 
When $k=0$, if $u \in \Lambda$ is fixed then $L_u(0) = \emptyset$ and there is nothing to show; otherwise, \cref{eq:induct} becomes
\[
\psi(\log R_u) \left| \II_T^\sL(u \sra u) \right| \le \psi(\log R_u),
\]
which holds with equality since $\II_T^\sL(u \sra u) = 1$.  
Now suppose that \cref{eq:induct} holds for some integer $k-1 \ge 0$ (and for every vertex $u\in V_T\backslash\{r\}$). 
Let $u\in V_T\backslash\{r\}$ be arbitrary and denote the children of $u$ by $w_1,\dots,w_{d}$, where $1\le d<\Delta$ (if $d=0$ then $L_u(k) = \emptyset$ and \cref{eq:induct} holds trivially). 
Again by \cref{lem:inf-chain-rule} and \cref{lem:inf-h} we have
\begin{align*}
\sum_{v\in L_u(k)} \psi(\log R_u) \left| \II_T^\sL(u \sra v) \right| 
&= \sum_{i=1}^{d} \psi(\log R_u) \left| \II_T^\sL(u \sra w_i) \right| 
\sum_{v \in L_{w_i}(k-1)} \left| \II_T^\sL(w_i \sra v) \right|\\ 
&= \sum_{i=1}^{d} \frac{\psi(\log R_u)}{\psi(\log R_{w_i})} \left| h(\log R_{w_i}) \right|
\sum_{v \in L_{w_i}(k-1)} \psi(\log R_{w_i}) \left| \II_T^\sL(w_i \sra v) \right|. 
\end{align*}
Using the induction hypothesis, we get
\begin{align*}
&\sum_{v\in L_u(k)} \psi(\log R_u) \left| \II_T^\sL(u \sra v) \right| \\ 
\le{}& 
\sum_{i=1}^{d} \frac{\psi(\log R_u)}{\psi(\log R_{w_i})} \left| h(\log R_{w_i}) \right|
\cdot \max_{v\in L_{w_i}(k-1)} \left\{ \psi(\log R_v) \right\} 
\cdot \left( \max_{w\in V_{T_{w_i}}} \sup_{\tilde{\mybf{y}} \in S^{d_w}} \norm{\grad H_{d_w}^{\Psi} (\tilde{\mybf{y}})}_1 \right)^{k-1}\\ 
\le{}& 
\max_{v\in L_u(k)} \left\{ \psi(\log R_v) \right\}
\cdot \left( \max_{w\in V_{T_u} \backslash \{u\}} \sup_{\tilde{\mybf{y}} \in S^{d_w}} \norm{\grad H_{d_w}^{\Psi} (\tilde{\mybf{y}})}_1 \right)^{k-1} 
\cdot \sum_{i=1}^{d} \frac{\psi(\log R_u)}{\psi(\log R_{w_i})} \left| h(\log R_{w_i}) \right|\\ 
\le{}& 
\max_{v\in L_u(k)} \left\{ \psi(\log R_v) \right\} 
\cdot \left( \max_{w\in V_{T_u}} \sup_{\tilde{\mybf{y}} \in S^{d_w}} \norm{\grad H_{d_w}^{\Psi} (\tilde{\mybf{y}})}_1 \right)^k, 
\end{align*}
where the last inequality follows from that
\begin{align*}
\sum_{i=1}^{d} \frac{\psi(\log R_u)}{\psi(\log R_{w_i})} \left| h(\log R_{w_i}) \right| 
&= \sum_{i=1}^{d} \left| \frac{\partial}{\partial \Psi(\log R_{w_i})} \, H_{d}^\Psi \left( \Psi(\log R_{w_1}), \dots, \Psi(\log R_{w_{d}}) \right) \right|\\
&= \norm{\grad H_{d}^\Psi \left( \Psi(\log R_{w_1}), \dots, \Psi(\log R_{w_{d}}) \right)}_1.
\end{align*}
This establishes \cref{eq:induct}, and thus completes the proof of the lemma.     
\end{proof}

We then derive \cref{lem:bound-for-tree} as a corollary. 
\begin{proof}[Proof of \cref{lem:bound-for-tree}]
Since $\Psi$ is an $(\alpha,c)$-potential, the \hyperlink{cond:contraction}{Contraction} condition implies that 
\[
\max_{1\le d < \Delta} \sup_{\tilde{\mybf{y}} \in S^d} \norm{\grad H_{d}^{\Psi} (\tilde{\mybf{y}})}_1 \le 1-\alpha. 
\]
Meanwhile, since the degree of a vertex $v\in V_T \backslash\{r\}$ in the subtree $T_v$ is less than $\Delta$,  we have $\log R_v \in J$. 
Then the \hyperlink{cond:boundedness}{Boundedness} condition implies that for all $u\in L_r(1)$ and $v\in L_r(k)$, 
\[
\frac{\psi(\log R_v)}{\psi(\log R_u)} \cdot |h(\log R_u)| \le \frac{c}{\Delta}. 
\]
Therefore, we get
\[
\Delta_{r} A_\Psi B_\Psi = \Delta_r \cdot \max_{u\in L_r(1)} \left\{ \frac{|h(\log R_u)|}{\psi(\log R_u)} \right\} \cdot \max_{v\in L_r(k)} \left\{ \psi(\log R_v) \right\} \le c. 
\]
%and
%\[
%B_\Psi = \max_{v\in L_r(k)} \left\{ \psi(\log R_v) \right\} \le \sqrt{\frac{c}{\Delta}}. 
%\]
The lemma then follows immediately from \cref{lem:A-and-B}. 
\end{proof}

\section{Verifying a good potential: Contraction}
\label{sec:contraction-potential}
In this section, we make a first step for proving \cref{lem:potential-is-good}. 
Let $\Delta \ge 3$ be an integer. 
Let $\beta,\gamma,\lambda$ be reals such that $0\le \beta \le \gamma$, $\gamma>0$, $\beta\gamma <1$ and $\lambda>0$. 
Recall that define our potential function $\Psi: [-\infty,+\infty] \to (-\infty,+\infty)$ through its derivative by 
\begin{equation*}
    \Psi'(y) 
    = \psi(y) 
    = \sqrt{\frac{(1-\beta\gamma)e^{y}}{(\beta e^{y} + 1)(e^{y} + \gamma)}}, %= \sqrt{\abs{h(y)}}. 
    \qquad
    \Psi(0) = 0. \tag{1}
\end{equation*}
%and $\Phi(0) = 0$. 
We include a short proof in \cref{sec:well-def-potential} to show that $\Psi$ is well-defined. 
If $(\beta,\gamma,\lambda)$ is up-to-$\Delta$ unique with gap $\delta \in (0,1)$, 
then we show that $\Psi$ satisfies the \hyperlink{cond:contraction}{Contraction} condition for $\alpha = \delta/2$. 
This holds for all parameters $(\beta,\gamma,\lambda)$ in the uniqueness region, \emph{without} requiring that $\gamma \le 1$. 
%In this section, we show contraction of the \cite{LLY13} potential function for all $(\beta,\gamma,\lambda)$ satisfying up-to-$\Delta$ uniqueness. 
Later in \cref{sec:boundedness}, we establish the \hyperlink{cond:boundedness}{Boundedness} condition for $\Psi$ when $\gamma \le 1$, 
completing the proof of \cref{lem:potential-is-good}. 
The case of $\gamma > 1$ is more complicated and is left to \cref{sec:antiferroweighted}.

Before giving our proof, we first point out that the potential function $\Psi$ is essentially the same potential function $\Phi$ used in \cite{LLY13} (notice that \cite{LLY13} uses $\varphi$ as the notation of the potential function and $\Phi = \varphi'$ for its derivative). 
Recall that the tree recursion for the marginal ratios is given by the function $F_d: [0,+\infty]^d \to [0,+\infty]$ where $1\le d\le \Delta$ such that for all $(x_1,\dots,x_d) \in [0,+\infty]^d$, 
\[
F_d(x_1,\dots,x_d) = \lambda \prod_{i=1}^d \frac{\beta x_i + 1}{x_i + \gamma}. 
\]
%Here $x$ would be the marginal ratio $R$ at a vertex. 
The potential function $\Phi: [0,+\infty] \to (-\infty,+\infty)$ from \cite{LLY13} is defined implicitly via its derivative as
\begin{equation*}
    \Phi'(x) = \varphi(x) = \frac{1}{\sqrt{x(\beta x + 1)(x + \gamma)}}, \qquad \Phi(1) = 0.
\end{equation*}
%and $\Phi(1) = 0$. 
The follows lemma explains how we obtain our potential $\Psi$ from $\Phi$. 
\begin{lem}
\label{lem:Psi-Phi}
We have $ \Psi = \sqrt{1-\beta \gamma} \cdot (\Phi \circ \exp) $; 
namely, 
$ \Psi(y) = \sqrt{1-\beta \gamma} \cdot \Phi(e^y) $ 
for all $y\in[-\infty,+\infty]$. 
\end{lem}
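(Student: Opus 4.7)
The plan is to verify the identity by showing that both sides are defined as antiderivatives of the same function with the same value at a single point, then invoke uniqueness of antiderivatives. Both $\Psi$ and $\sqrt{1-\beta\gamma}\cdot(\Phi\circ\exp)$ are smooth functions on $(-\infty,+\infty)$, so if I check that their derivatives agree everywhere and that they agree at $y=0$, the equality follows.

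First I would compute the derivative of the right-hand side via the chain rule:
\[
\frac{d}{dy}\wrapb{\sqrt{1-\beta\gamma}\cdot \Phi(e^y)}
= \sqrt{1-\beta\gamma}\cdot \varphi(e^y)\cdot e^y
= \sqrt{1-\beta\gamma}\cdot \frac{e^y}{\sqrt{e^y(\beta e^y+1)(e^y+\gamma)}}.
\]
Pulling the factor $e^y$ inside the square root as $e^{2y}$ and canceling one factor of $e^y$ with the one under the root, this simplifies to
\[
\sqrt{\frac{(1-\beta\gamma)\,e^y}{(\beta e^y+1)(e^y+\gamma)}},
\]
which is exactly $\psi(y) = \Psi'(y)$ from the defining equation \cref{eq:Psi}.

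Next I would check the boundary condition at $y=0$. By definition $\Phi(1)=0$, so $\sqrt{1-\beta\gamma}\cdot \Phi(e^0) = \sqrt{1-\beta\gamma}\cdot \Phi(1) = 0 = \Psi(0)$. Since $\Psi$ and $\sqrt{1-\beta\gamma}\cdot(\Phi\circ \exp)$ are both $C^1$ on $\mathbb{R}$, share a common value at $y=0$, and have identical derivatives, they coincide on all of $\mathbb{R}$; the equality then extends to $y=\pm\infty$ by continuity of $\Psi$ (established in the well-definedness discussion of \cref{sec:well-def-potential}) together with the monotone limits of $\Phi\circ\exp$ at $\pm\infty$.

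There is essentially no obstacle here; the content is purely a chain-rule computation together with matching the normalization constants $\Psi(0)=0$ and $\Phi(1)=0$. The only mild subtlety worth flagging in the write-up is that the domain includes the endpoints $\pm\infty$, so one should note that the pointwise identity on $\mathbb{R}$ passes to the closure by taking limits, which is legitimate because the integral defining $\Psi$ converges at both endpoints (this is exactly what \cref{sec:well-def-potential} establishes for $\Psi$, and the analogous fact for $\Phi$ at $0$ and $+\infty$ is handled the same way).
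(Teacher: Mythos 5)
Your proof is correct and takes essentially the same approach as the paper: the paper writes $\Psi(y)=\int_0^y\psi(t)\,\mathrm{d}t$, uses $\psi(t)=\sqrt{1-\beta\gamma}\cdot e^t\varphi(e^t)$, and substitutes $s=e^t$ to land on $\sqrt{1-\beta\gamma}\cdot\Phi(e^y)$, whereas you run the same chain-rule computation in reverse by differentiating the claimed right-hand side and matching the normalization $\Psi(0)=0=\Phi(1)$; the two are just the integration and differentiation views of one identity.
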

\begin{proof}
It is straightforward to check that
\[
\psi(y) = \sqrt{\frac{(1-\beta\gamma)e^{y}}{(\beta e^{y} + 1)(e^{y} + \gamma)}} = \sqrt{1-\beta \gamma} \cdot e^y \cdot \sqrt{\frac{1}{e^y(\beta e^{y} + 1)(e^{y} + \gamma)}} = \sqrt{1-\beta \gamma} \cdot e^y \varphi(e^y). 
\]
Therefore, 
\[
\Psi(y) = \int_{0}^{y} \psi(t) \,\mathrm{d}t 
= \sqrt{1-\beta \gamma} \cdot \int_{0}^{y} e^t \varphi(e^t) \,\mathrm{d}t 
= \sqrt{1-\beta \gamma} \cdot \int_{1}^{e^y} \varphi(s) \,\mathrm{d}s 
= \sqrt{1-\beta \gamma} \cdot \Phi(e^y). \qedhere
\]
\end{proof}

Combining the results of Lemmas 12, 13 and 14 from \cite{LLY13}, we get that the potential function $\Phi$ satisfies the following gradient bound when $(\beta,\gamma,\lambda)$ is in the uniqueness region. 
Note that this can be regarded as the \hyperlink{cond:contraction}{Contraction} condition but for $\Phi$ and $F_d$. 
%It satisfies the following gradient bound when $(\beta,\gamma,\lambda)$ is in the uniqueness region. 
\begin{thm}[\cite{LLY13}]
\label{thm:potentialgradientbound}
Let $S_\Phi = \Phi[0,+\infty]$ be the image of $\Phi$. 
If the parameters $(\beta,\gamma,\lambda)$ are up-to-$\Delta$ unique with gap $\delta \in (0,1)$, then for every integer $d$ such that $1\le d < \Delta$ and every $(\tilde{x}_1,\dots,\tilde{x}_d) \in S_\Phi^d$, 
\[
\norm{\grad F_d^\Phi(\tilde{x}_1,\dots,\tilde{x}_d)}_1 \le \sqrt{1-\delta}
\]
%\begin{align*}
%    \sup_{\tilde{x}} \norm{\nabla (\Phi \circ F \circ \Phi^{-1})(\tilde{x})}_{1} \leq \sqrt{1 - \delta}
%\end{align*}
where $F_d^\Phi = \Phi \circ F_d \circ \Phi^{-1}$. 
\end{thm}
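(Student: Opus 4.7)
The plan is to establish this as essentially an assembly of Lemmas 12, 13, and 14 of \cite{LLY13}, matching our notation (they work with $F_d$ and their $\varphi = \Phi'$, which are exactly ours). The overall strategy is to reduce the multivariate gradient bound to a univariate estimate at the symmetric fixed point, and then invoke the uniqueness assumption.

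First, I would expand $\norm{\grad F_d^\Phi(\tilde{x}_1,\dots,\tilde{x}_d)}_1$ by direct differentiation. Writing $y = F_d(x_1,\dots,x_d)$ with $x_i = \Phi^{-1}(\tilde{x}_i)$ and applying the chain rule, one obtains
\[
\norm{\grad F_d^\Phi(\tilde{x}_1,\dots,\tilde{x}_d)}_1 = \sum_{i=1}^d \frac{\varphi(y)}{\varphi(x_i)}\, \abs{\frac{\partial F_d}{\partial x_i}(x_1,\dots,x_d)}.
\]
Using $\partial F_d/\partial x_i = -(1-\beta\gamma)\,y\big/[(\beta x_i+1)(x_i+\gamma)]$ together with $\varphi(x) = 1/\sqrt{x(\beta x+1)(x+\gamma)}$, each summand simplifies to a compact algebraic expression in $x_i$ and $y$.

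Second, I would invoke the symmetrization step (Lemma 12 of \cite{LLY13}): for fixed image $y$, the supremum of the above expression over all preimages with $F_d(x_1,\dots,x_d)=y$ is attained when $x_1=\cdots=x_d = f_d^{-1}(y)$, where $f_d$ is the symmetric univariate recursion. This is the technical heart of the argument and the step I expect to be the main obstacle: it requires a Lagrange multiplier / convexity analysis showing that, under the chosen potential, making the $x_i$'s equal only increases the $\ell^1$ gradient norm. At the symmetric point the bound collapses to the simple form $d \cdot \frac{\varphi(y)}{\varphi(x)} \abs{\partial F_d/\partial x_i}\big|_{x_i=x}$, and a direct computation using the explicit form of $\varphi$ shows this equals $\sqrt{\abs{f_d'(x)\cdot f_d'(y)}}$ where $x = f_d^{-1}(y)$.

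Third, I would combine Lemmas 13 and 14 of \cite{LLY13} to finish. Lemma 13 provides a monotonicity / single-variable maximization argument showing that the quantity $\sqrt{\abs{f_d'(x) f_d'(y)}}$ subject to $y = f_d(x)$ is maximized exactly at the fixed point $x=y=R_d^*$, where it equals $\abs{f_d'(R_d^*)}^{1/2} \cdot \abs{f_d'(R_d^*)}^{1/2} = \abs{f_d'(R_d^*)}$. Then Lemma 14, which is essentially the definition of the up-to-$\Delta$ uniqueness gap, gives $\abs{f_d'(R_d^*)} < 1-\delta$ for every $1\le d < \Delta$. Taking square roots (because the symmetric worst case already carried one factor of $\sqrt{\cdot}$ from the potential) yields the stated bound $\sqrt{1-\delta}$. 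The whole argument is therefore a citation-plus-translation exercise; no new ideas are required beyond confirming that our $\Phi$, up to the harmless constant $\sqrt{1-\beta\gamma}$ identified in \cref{lem:Psi-Phi}, coincides with the potential used in \cite{LLY13}.
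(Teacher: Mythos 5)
The paper offers no proof of this theorem: it is a pure citation, assembling Lemmas 12, 13, 14 of \cite{LLY13}. Your meta-approach therefore matches the paper. However, your reconstruction of what those lemmas say — and how they combine — contains genuine errors that would derail the argument if you tried to carry it out.

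First, at a symmetric point $x_1=\cdots=x_d=x$ with $y=F_d(x,\dots,x)$, the gradient norm with the LLY13 potential is not $\sqrt{\abs{f_d'(x)\,f_d'(y)}}$. A direct computation with $\varphi(R)=1/\sqrt{R(\beta R+1)(R+\gamma)}$ and $\partial F_d/\partial x_i = (\beta\gamma-1)y/[(\beta x_i+1)(x_i+\gamma)]$ gives
\[
\norm{\grad F_d^\Phi}_1 = d\,\frac{\varphi(y)}{\varphi(x)}\abs{\frac{\partial F_d}{\partial x_i}} = \sqrt{\alpha_d(x)\,\alpha_d(y)}, \qquad \alpha_d(R):=\frac{d(1-\beta\gamma)R}{(\beta R+1)(R+\gamma)},
\]
whereas $\abs{f_d'(R)}=\alpha_d(R)\cdot f_d(R)/R$. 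The two quantities $\sqrt{\alpha_d(x)\alpha_d(y)}$ and $\sqrt{\abs{f_d'(x)f_d'(y)}}$ agree only when $f_d(f_d(x))=x$, i.e.\ at the fixed point. Second, the function $g(x):=\alpha_d(x)\,\alpha_d(f_d(x))$ is \emph{not} maximized at the fixed point $R_d^*$ when $\lambda<\lambda_c$ — one can check (e.g.\ for the hardcore model) that $g'(R_d^*)>0$ away from criticality — so the reduction ``the worst case is the fixed point'' that you attribute to Lemma~13 is false. Third, your final step is logically inconsistent: if the maximum were $\abs{f_d'(R_d^*)}<1-\delta$, you would have already proven $\norm{\grad F_d^\Phi}_1\le 1-\delta$ and there would be no ``taking square roots'' left to do; the theorem's weaker bound $\sqrt{1-\delta}$ is a symptom that the maximum of $g$ is larger than $\abs{f_d'(R_d^*)}^2$ and sits at an interior critical point $\hat{R}_d\neq R_d^*$. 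The actual content of the cited lemmas is the nontrivial estimate $\sup_x g(x)\le 1-\delta$ (with the supremum at $\hat{R}_d$, not $R_d^*$), under the hypothesis $\abs{f_d'(R_d^*)}<1-\delta$; recovering that relationship is precisely the analytic work that your sketch elides.
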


Recall our definition from \cref{sec:mixing-potential}. 
The tree recursion, in terms of the log marginal ratios, is described by the function $H_d: [-\infty, +\infty]^d \to [-\infty, +\infty]$ where $1\le d \le \Delta$ such that for every $(y_1,\dots,y_d)\in [-\infty, +\infty]^d$, 
\[
H_d(y_1,\dots,y_d) = \log \lambda + \sum_{i=1}^d \log \left( \frac{\beta e^{y_i} + 1}{e^{y_i} + \gamma} \right). 
\]
Observe that $H_d = \log \circ F_d \circ \exp$, since we move from ratios to log ratios. 
We are now ready to establish the \hyperlink{cond:contraction}{Contraction} condition for $\Psi$. 

\begin{lem}
\label{lem:contraction-Psi}
Let $S_\Psi = \Psi[-\infty,+\infty]$ be the image of $\Psi$. 
If the parameters $(\beta,\gamma,\lambda)$ are up-to-$\Delta$ unique with gap $\delta \in (0,1)$, then for every integer $d$ such that $1\le d < \Delta$ and every $(\tilde{y}_1,\dots,\tilde{y}_d) \in S_\Psi^d$, 
\[
\norm{\grad H_d^\Psi(\tilde{y}_1,\dots,\tilde{y}_d)}_1 \le \sqrt{1-\delta}
\]
where $H_d^\Psi = \Psi \circ H_d \circ \Psi^{-1}$. 
\end{lem}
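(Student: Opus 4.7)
The plan is to reduce the statement about $(H_d, \Psi)$ directly to the gradient bound for $(F_d, \Phi)$ already established in \cref{thm:potentialgradientbound}. The key inputs are (i) the factorization $H_d = \log \circ F_d \circ \exp$ noted in \cref{sec:contraction-potential}, and (ii) the identity $\Psi = \sqrt{1-\beta\gamma} \cdot (\Phi \circ \exp)$ from \cref{lem:Psi-Phi}. Writing $c = \sqrt{1-\beta\gamma}$, the second identity gives $\Psi^{-1}(z) = \log \Phi^{-1}(z/c)$, so I would simply substitute these expressions into the definition of $H_d^\Psi$ and simplify. A direct chain of compositions yields
\begin{equation*}
H_d^\Psi(\tilde{y}_1,\dots,\tilde{y}_d) \;=\; \Psi\bigl(H_d\bigl(\Psi^{-1}(\tilde{y}_1),\dots,\Psi^{-1}(\tilde{y}_d)\bigr)\bigr) \;=\; c\,F_d^\Phi\!\wrapp{\frac{\tilde{y}_1}{c},\dots,\frac{\tilde{y}_d}{c}},
\end{equation*}
where the $\log$ from $H_d$ cancels the $\exp$ inside $\Psi$, and the $\exp$ from $H_d$ cancels the $\log$ inside $\Psi^{-1}$.

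Differentiating this identity in each coordinate, the outer factor $c$ and the inner factor $1/c$ exactly cancel, so
\begin{equation*}
\frac{\partial H_d^\Psi}{\partial \tilde{y}_i}(\tilde{\mathbf{y}}) \;=\; \frac{\partial F_d^\Phi}{\partial \tilde{x}_i}\wrapp{\tilde{\mathbf{y}}/c}.
\end{equation*}
Consequently $\norm{\grad H_d^\Psi(\tilde{\mathbf{y}})}_1 = \norm{\grad F_d^\Phi(\tilde{\mathbf{y}}/c)}_1$. To finish, I must check that the domain is right: the image $S_\Psi = \Psi[-\infty,+\infty]$ and the image $S_\Phi = \Phi[0,+\infty]$ are related by $S_\Psi = c\,S_\Phi$ via $\Psi = c\,\Phi \circ \exp$, so as $\tilde{\mathbf{y}}$ ranges over $S_\Psi^d$, the rescaled vector $\tilde{\mathbf{y}}/c$ ranges over $S_\Phi^d$ exactly. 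Applying \cref{thm:potentialgradientbound} to $\tilde{\mathbf{y}}/c$ therefore gives the desired bound $\sqrt{1-\delta}$.

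I do not anticipate a real obstacle here since this lemma is essentially a change-of-variables observation: the remark in \cref{sec:mixing-potential} already notes that $H_d^\Psi = F_d^\Phi$ whenever one defines the $R$-potential by $\Phi = \Psi \circ \log$, and \cref{lem:Psi-Phi} tells us our choice of $\Psi$ corresponds to the LLY potential up to the multiplicative constant $c$. The only thing that needs a moment's care is to verify that this constant multiple does not affect the $\ell^1$ gradient norm of the conjugate, which is exactly the cancellation above. If one prefers to avoid the composition rewrite, the same conclusion can instead be reached by the direct calculation $\psi(y)/\psi(y_i) = [x \varphi(x)]/[x_i \varphi(x_i)]$ with $x = e^y,\,x_i = e^{y_i}$ (using $\psi(y) = c e^y \varphi(e^y)$) together with the identity $x_i\,(\partial F_d/\partial x_i) = F_d(\mathbf{x})\,h(y_i)$, which makes each summand of $\norm{\grad H_d^\Psi}_1$ equal to the corresponding summand of $\norm{\grad F_d^\Phi}_1$.
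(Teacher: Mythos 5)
Your proposal is correct and takes essentially the same route as the paper: both reduce to \cref{thm:potentialgradientbound} by writing $H_d^\Psi = a \circ F_d^\Phi \circ a^{-1}$ for the linear map $a(x) = \sqrt{1-\beta\gamma}\,x$ (using $H_d = \log\circ F_d\circ\exp$ and $\Psi = \sqrt{1-\beta\gamma}\cdot(\Phi\circ\exp)$) and then observing that conjugation by a scalar leaves the gradient unchanged. The domain-matching observation $S_\Psi = \sqrt{1-\beta\gamma}\cdot S_\Phi$ that you flag is the only point requiring care, and you handle it correctly.
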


\begin{proof}
Define the linear function $a: \R \to \R$ to be $a(x) = \sqrt{1-\beta\gamma} \cdot x$ for $x\in \R$. 
Then \cref{lem:Psi-Phi} gives $\Psi = a \circ \Phi \circ \exp$, and thereby $\Psi \circ \log = a \circ \Phi$. 
It follows that for every $1\le d < \Delta$,
\[
H_d^\Psi = \Psi \circ H_d \circ \Psi^{-1} 
= \Psi \circ \log \circ F_d \circ \exp \circ \Psi^{-1} 
= a \circ \Phi \circ F_d \circ \Phi^{-1} \circ a^{-1} 
= a \circ F_d^\Phi \circ a^{-1}. 
\]
That means, for every $(\tilde{y}_1,\dots,\tilde{y}_d) \in S_\Psi^d$ we have
\[
H_d^\Psi(\tilde{y}_1,\dots,\tilde{y}_d) = \sqrt{1-\beta\gamma} \cdot F_d^\Phi(\tilde{x}_1,\dots,\tilde{x}_d)
\]
where $\tilde{x}_i = \tilde{y}_i / \sqrt{1-\beta\gamma}$ for $1\le i\le d$. 
Then, for each $i$, 
\[
\frac{\partial}{\partial \tilde{y}_i} H_d^\Psi(\tilde{y}_1,\dots,\tilde{y}_d) = \sqrt{1-\beta\gamma} \cdot \frac{\partial}{\partial \tilde{x}_i} F_d^\Phi(\tilde{x}_1,\dots,\tilde{x}_d) \cdot \frac{\mathrm{d} \tilde{x}_i}{\mathrm{d} \tilde{y}_i} 
= \frac{\partial}{\partial \tilde{x}_i} F_d^\Phi(\tilde{x}_1,\dots,\tilde{x}_d). 
\]
This implies that $\grad H_d^\Psi(\tilde{y}_1,\dots,\tilde{y}_d) = \grad F_d^\Phi(\tilde{x}_1,\dots,\tilde{x}_d)$ for all $(\tilde{y}_1,\dots,\tilde{y}_d) \in S_\Psi^d$, and the lemma then follows from \cref{thm:potentialgradientbound}. 
\end{proof}

\section{Remaining antiferromagnetic cases\texorpdfstring{: $\sqrt{\beta \gamma} \le \frac{\Delta-2}{\Delta}$ and $\gamma > 1$}{}}
\label{sec:antiferroweighted}

In this section, we discuss the case where $\sqrt{\beta \gamma} \le \frac{\Delta-2}{\Delta}$ and $\gamma > 1$. 
As studied in \cite{LLY13}, in this case the uniqueness region is more complicated. 
For example, there exists a critical $\lambda_c^* >0$ such that the $2$-spin system with $\lambda < \lambda_c^*$ is in the uniqueness region for arbitrary graphs; namely, $(\beta,\gamma,\lambda)$ is up-to-$\infty$ unique. 
To deal with large degrees, we need to relax the \hyperlink{cond:boundedness}{Boundedness} condition in \cref{defn:potential} and define a more general version of $(\alpha,c)$-potentials. 
We shall see that \cref{thm:contraction-implies-mixing} still holds for this general $(\alpha,c)$-potential. 
The reason behind it is that in order to bound the maximum eigenvalue of the influence matrix, it suffices to consider a vertex-weighted sum of absolute influences of a vertex with large degree. 

\begin{rmk}
We give more background on the uniqueness region in \cref{sec:uniqueness-background}. 
Note that in a recent revision of \cite{LLY13}, the authors updated the descriptions of the uniqueness region for the case $\sqrt{\beta \gamma} \le \frac{\Delta-2}{\Delta}$ and $\gamma > 1$, fixing a small error in the previous version. 
Statements and proofs in this section and \cref{sec:boundedness} of this paper are also adjusted accordingly based on the new version of \cite{LLY13}. 
\end{rmk}

%In \cref{sec:lambda>lambda_c} we discuss the easier case $\lambda > \overline{\lambda}_c$, where we can apply \cref{thm:contraction-implies-mixing} directly as for the $\sqrt{\beta \gamma} > \frac{\Delta-2}{\Delta}$ or $\gamma \le 1$ case. 
%In \cref{sec:lambda>lambda_c} we give our generalization of $(\alpha,c)$-potentials and explain how it can be applied to overcome the difficulty caused by large degrees. 

Recall that our goal is to bound the maximum eigenvalue of the matrix $\II_G^\sL$. 
We can do this by upper bounding the absolute row sum $\sum_{v\in V \backslash \Lambda} |\II_G^\sL(r \sra v)|$ for fixed $r$, thereby giving us a valid upper bound on $\lambda_{\max}(\II_G^\sL)$. 
However, this approach does not work when $\sqrt{\beta \gamma} \le \frac{\Delta-2}{\Delta}$ and $\gamma > 1$. In this case, the potential $\Psi$ fails to be an $(\alpha,c)$-potential for a universal constant $c$ independent of $\Delta$. 
In fact, no such $(\alpha,c)$-potentials exist 
as the absolute row sum $\sum_{v\in V \backslash \Lambda} |\II_G^\sL(r \sra v)|$ can be as large as $\Theta(\Delta)$. 
Especially, if the parameters $(\beta,\gamma,\lambda)$ are up-to-$\infty$ unique, which means the spin system has uniqueness for arbitrary graphs, then the absolute row sum $\sum_{v\in V \backslash \Lambda} |\II_G^\sL(r \sra v)|$ can be $\Theta(n)$ where $n=|V|$. 
We give a specific example where this is the case.

%The potential function $\Psi$ satisfies the \hyperlink{cond:boundedness}{Boundedness} condition only for $c = \Omega(\Delta)$. 
%In the star graph centered at $r$, the absolute row sum $\sum_{v\in V \backslash \Lambda} |\II_G^\sL(r \sra v)| = \Omega(\Delta) = \Omega(n)$. 

%Here, we provide a simple example of a system with $\gamma > 1$ for which one cannot hope to upper bound $\sum_{v \neq r} \abs{\mathcal{I}_{G}(r \sra v)}$ by a constant independent of $\Delta$ in the worst case. This will motivate considering weighted sums of pairwise influences $\abs{\mathcal{I}_{G}(r \sra v)}$, which we analyze in the remainder of this section.
\begin{ex}
Consider the antiferromagnetic 2-spin system specified by parameters $\beta=0$, $\gamma > 1$ and $\lambda>0$ on the star graph centered at $r$ with $\Delta$ leaves. 
%We leave $\lambda > 0$ as a free parameter. 
A simple calculation reveals that $\abs{\mathcal{I}_{G}(r \sra v)} = \frac{\lambda}{\lambda+\gamma}$ 
%\begin{align*}
%    \abs{\mathcal{I}_{G}(r \sra v)} = \frac{\frac{\lambda}{\gamma}}{\frac{\lambda}{\gamma} + 1} = \frac{\lambda}{\lambda + \gamma}
%\end{align*}
for any leaf vertex $v \neq r$. Hence, $\sum_{v \neq r} \abs{\mathcal{I}_{G}(r \sra v)} = \Delta \cdot \frac{\lambda}{\lambda + \gamma}$. Now, since $\gamma > 1$, we have 
\[
\lambda_c = 
\lambda_{c}(\gamma,\Delta) = \min_{1 < d < \Delta} \frac{\gamma^{d+1}d^{d}}{(d-1)^{d+1}} = \Theta_{\gamma}(1), 
\]
forcing $\sum_{v \neq r} \abs{\mathcal{I}_{G}(r \sra v)} = \Theta_{\gamma}(\Delta)$ even when $\lambda < \lambda_c$ lies in the uniqueness region. 
However, we still have $\lambda_{\max}(\II_G) = O(1)$ since $\sum_{v \neq r} |\II_G(v \sra r)| = O(1)$. 
\end{ex}

To solve this issue, one might want to consider the absolute column sum, involving the sum of absolute influences on a fixed vertex. 
However, this will not allow us to use the beautiful connection between graphs and SAW trees as showed in \cref{lem:Tsaw-G}. 
Instead, we consider here a vertex-weighted version of the absolute row sum of $\II_G^\sL$, which also upper bounds the maximum eigenvalue.

%The following lemma shows that it suffices to bound the maximum weighted absolute row sum. 
\begin{lem}\label{lem:w-row-sum}
Let $\rho: V \to \mathbb{R}^+$ be a positive weight function of vertices. 
If there is a constant $\xi > 0$ such that for every $r\in V$ we have
\begin{equation}%\label{eq:weighted-row-sum}
\sum_{v\in V\backslash \Lambda} \rho_v \cdot \left|\II_G^\sL(r \sra v) \right| \le \xi \cdot \rho_r,
\end{equation}
then $\lambda_\mathrm{max}(\II_G^\sL) \le \xi$. 
\end{lem}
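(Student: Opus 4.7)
The plan is to reduce the weighted row-sum bound to the unweighted case via a diagonal similarity transformation. Let $D = \diag(\rho_v)_{v \in V \setminus \Lambda}$, which is well-defined and invertible since $\rho > 0$, and consider the conjugate matrix
\[
M' = D^{-1} \II_G^\sL D.
\]
The entries of $M'$ are $M'(r,v) = \frac{\rho_v}{\rho_r} \II_G^\sL(r \sra v)$, so the hypothesis of the lemma translates exactly into the statement that the unweighted absolute row sums of $M'$ are uniformly bounded by $\xi$:
\[
\sum_{v \in V \setminus \Lambda} |M'(r,v)| = \frac{1}{\rho_r} \sum_{v \in V \setminus \Lambda} \rho_v \cdot |\II_G^\sL(r \sra v)| \le \xi.
\]

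The first step is to conclude $\|M'\|_\infty \le \xi$ from this row-sum bound, which is the standard definition of the induced $\infty$-to-$\infty$ operator norm. The second step is to invoke the fact that for any matrix, the spectral radius is bounded by any induced operator norm, so every eigenvalue $\lambda$ of $M'$ satisfies $|\lambda| \le \xi$, and in particular $\lambda_{\max}(M') \le \xi$ (here $\lambda_{\max}$ denotes the largest real eigenvalue, as used throughout the paper in conjunction with spectral independence; if $\II_G^\sL$ has a complex spectrum, the same bound applies to its spectral radius). The third step is to observe that $M'$ and $\II_G^\sL$ are similar via $D$, hence share the same spectrum, so $\lambda_{\max}(\II_G^\sL) = \lambda_{\max}(M') \le \xi$.

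There is essentially no obstacle here: the argument is a textbook similarity trick, and the only thing to check carefully is that the weights $\rho_v$ are strictly positive (given in the hypothesis) so that $D$ is genuinely invertible. The lemma is the natural generalization of the fact that $\lambda_{\max}(M) \le \|M\|_\infty$; choosing $\rho \equiv 1$ recovers exactly the unweighted row-sum bound used in the earlier sections.
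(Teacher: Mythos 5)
Your proof is correct and is essentially identical to the paper's: the paper also conjugates $\II_G^\sL$ by the diagonal matrix $\mathrm{diag}\{\rho_v\}$, interprets the hypothesis as an $\infty$-norm bound on the conjugate, and uses similarity invariance of the spectrum. The only difference is that you spell out the intermediate steps (entries of the conjugate, spectral radius versus operator norm) that the paper leaves implicit.
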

\begin{proof}
Let $\mathcal{P} = \mathrm{diag}\{\rho_v: v\in V\backslash \Lambda\}$. Then the assumption is equivalent to $\|\mathcal{P}^{-1} \II_G^\sL \mathcal{P}\|_\infty \le \xi$. 
It follows that $\lambda_{\max}(\II_G^\sL) = \lambda_{\max}(\mathcal{P}^{-1}\II_G^\sL \mathcal{P}) \le \xi$. 
\end{proof}

We then modify our definition of $(\alpha,c)$-potentials from \cref{defn:potential} which allows a weaker \hyperlink{cond:boundedness}{Boundedness} condition. 
We remark that the only two differences between \cref{defn:potential-weighted} and \cref{defn:potential} is that: we allow $\Delta = \infty$; and the \hyperlink{cond:boundedness}{Boundedness} condition is relaxed to what we call \hyperlink{cond:boundedness'}{General Boundedness}. 
Recall that for every $0\le d < \Delta$, we let $J_{d} = \wrapb{\log(\lambda \beta^{d}), \log(\lambda / \gamma^{d})}$ when $\beta\gamma < 1$, and $J_{d} = \wrapb{\log(\lambda / \gamma^{d}), \log(\lambda \beta^{d})}$ when $\beta\gamma > 1$. 
%Furthermore, define $J = \bigcup_{d = 0}^{\Delta-1} J_{d}$.

\begin{defn}[General $(\alpha,c)$-potential function]
\label{defn:potential-weighted}
Let $\Delta \ge 3$ be an integer or $\Delta = \infty$. 
Let $\beta,\gamma,\lambda$ be reals such that $0\le \beta \le \gamma$, $\gamma>0$ and $\lambda>0$. 
Let $\Psi:[-\infty,+\infty] \to (-\infty,+\infty)$ be a differentiable and increasing function with image $S = \Psi[-\infty,+\infty]$ and derivative $\psi = \Psi'$. 
For any $\alpha\in(0,1)$ and $c>0$, 
we say $\Psi$ is a \emph{general $(\alpha,c)$-potential function} with respect to $\Delta$ and $(\beta, \gamma, \lambda)$ if it satisfies the following conditions: 
\begin{enumerate}
\item \linkdest{cond:contraction}(Contraction) For every integer $d$ such that $1\le d<\Delta$ and every $(\tilde{y}_1,\dots,\tilde{y}_d) \in S^d$, we have
\[
\norm{\grad H_{d}^{\Psi}(\tilde{y}_1,\dots,\tilde{y}_d)}_1 = \sum_{i=1}^d \frac{\psi(y)}{\psi(y_i)} \cdot |h(y_i)| \le 1-\alpha
\]
where $H_{d}^{\Psi} = \Psi \circ H_{d} \circ \Psi^{-1}$, $y_i = \Psi^{-1}(\tilde{y}_i)$ for $1\le i \le d$, and $y = H_{d}(y_1,\dots,y_d)$. 
\item \linkdest{cond:boundedness'}(General Boundedness) 
For all integers $d_{1},d_{2}$ such that $0\leq d_1,d_2 < \Delta$, 
%if $\Delta < +\infty$ or $\Delta_1,\Delta_2 >0$ if $\Delta = +\infty$, 
and all reals $y_{1} \in J_{d_{1}}, y_{2} \in J_{d_{2}}$, we have
\[
\frac{\psi(y_2)}{\psi(y_1)} \cdot |h(y_1)| \le \frac{2c}{d_1+d_2+2}. 
%\sqrt{\frac{\Delta}{c}} \cdot |h(y)| \le \psi(y) \le \sqrt{\frac{c}{\Delta}}. 
%\quad\text{and}\quad 
%\frac{|h(y)|}{\psi(y)} \le \sqrt{\frac{c}{\Delta}}. 
\]
\end{enumerate}
\end{defn}
Notice that \hyperlink{cond:boundedness'}{General Boundedness} is a weaker condition than \hyperlink{cond:boundedness}{Boundedness}. To see this, if a potential function $\Psi$ satisfies \hyperlink{cond:boundedness}{Boundedness} with parameter $c$, 
then for every $0\le d_i <\Delta$ and every $y_i \in J_{d_i}$ where $i=1,2$ we have
\[
\frac{\psi(y_2)}{\psi(y_1)} \cdot |h(y_1)| \le \frac{c}{\Delta} \le \frac{2c}{d_1+d_2+2}. 
\]
The following theorem generalizes \cref{thm:contraction-implies-mixing} and shows that 
a general $(\alpha,c)$-potential function is sufficient to establish rapid mixing of the Glauber dynamics. 

%Our key technical contribution lies in a general criterion for obtaining rapid mixing of the Glauber dynamics from sampling any 2-spin system. 
%We show that if there exists a suitable contracting potential function w.r.t. the associated tree recurrence for the spin system, 
%then the Glauber dynamics mixes rapidly. 
%We emphasize that this result holds for the ferromagnetic case as well.

\begin{thm}
\label{thm:contraction-mixing-w}
Let $\Delta \ge 3$ be an integer or $\Delta = +\infty$. 
Let $\beta,\gamma,\lambda$ be reals such that $0\le \beta \le \gamma$, $\gamma>0$ and $\lambda>0$. 
Suppose that there is a general $(\alpha,c)$-potential with respect to $\Delta$ and $(\beta,\gamma,\lambda)$ for some $\alpha \in (0,1)$ and $c>0$. 
Then for every $n$-vertex graph $G$ of maximum degree $\Delta$, 
the mixing time of the Glauber dynamics for the $2$-spin system on $G$ with parameters $(\beta,\gamma,\lambda)$ is $O(n^{2+2c/\alpha})$. 
\end{thm}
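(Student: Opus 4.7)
The plan mirrors the proof of \cref{thm:contraction-implies-mixing}, but replaces the unweighted row-sum bound on $\II_G^\sL$ with a degree-weighted one that can be fed into \cref{lem:w-row-sum}. Concretely, I would set $\rho_v := \deg_G(v) + 1$ and establish the uniform bound
\[
\sum_{v \in V \setminus \Lambda} \rho_v \abs{\II_G^\sL(r \sra v)} \le \left(1 + \tfrac{2c}{\alpha}\right) \rho_r
\qquad \text{for every } r \in V \setminus \Lambda \text{ and every } \sL.
\]
Via \cref{lem:w-row-sum}, this yields $\lambda_{\max}(\II_G^\sL) \le 1 + 2c/\alpha$, i.e.\ $(\eta_0,\dots,\eta_{n-2})$-spectral independence with $\eta_k \le 2c/\alpha$. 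Combining \cref{thm:spectral-influence} with \cref{eq:rel-mixing} then yields the claimed $O(n^{2 + 2c/\alpha})$ mixing time, exactly as in the derivation of \cref{thm:contraction-implies-mixing} from spectral independence.

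To prove the weighted row-sum bound, the first step is to pass from $G$ to the SAW tree $T := \TSAW(G, r)$ via \cref{lem:Tsaw-G}: writing $v(\hat{v}) \in V$ for the $G$-vertex of which $\hat{v} \in V(T)$ is a copy,
\[
\sum_{v \in V \setminus \Lambda} \rho_v \abs{\II_G^\sL(r \sra v)} \le \sum_{\hat{v} \text{ free in } T} \rho_{v(\hat{v})} \abs{\II_T^\sL(r \sra \hat{v})}.
\]
Because the SAW tree preserves $G$-degrees at every non-leaf vertex, and its free leaves are copies of degree-$1$ vertices of $G$, one checks that every free $\hat{v} \in V(T)$ satisfies $\rho_{v(\hat{v})} \le \delta_{\hat{v}} + 2$, where $\delta_{\hat{v}}$ denotes the number of $T$-children of $\hat{v}$; moreover $\log R_{\hat{v}} \in J_{\delta_{\hat{v}}}$.

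The core step is a weighted analogue of the induction in \cref{lem:A-and-B}: for every non-root $u \in V(T)$ and every integer $k \ge 0$,
\[
\sum_{v \in L_u(k)} \rho_{v(v)}\, \psi(\log R_u) \abs{\II_T^\sL(u \sra v)} \le \max_{v \in L_u(k)} \bigl\{\rho_{v(v)}\, \psi(\log R_v)\bigr\} \cdot (1 - \alpha)^k.
\]
This can be proved by the same induction on $k$ as in \cref{lem:A-and-B}, using \cref{lem:inf-chain-rule}, \cref{lem:inf-h}, and \hyperlink{cond:contraction}{Contraction}; the factor $\rho_{v(v)}$ rides along transparently since it depends only on the terminal vertex. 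Decomposing from the root $r$ by one application of the chain rule and plugging in this bound for each child $u_i$ with maximizer $v^\star_i$, the \hyperlink{cond:boundedness'}{General Boundedness} condition applied to $y_1 = \log R_{u_i} \in J_{\delta_{u_i}}$ and $y_2 = \log R_{v^\star_i} \in J_{\delta_{v^\star_i}}$ yields the key cancellation
\[
\frac{\abs{h(\log R_{u_i})}}{\psi(\log R_{u_i})} \cdot \rho_{v(v^\star_i)}\, \psi(\log R_{v^\star_i})
\le (\delta_{v^\star_i} + 2) \cdot \frac{2c}{\delta_{u_i} + \delta_{v^\star_i} + 2} \le 2c.
\]
Summing over the $\Delta_r \le \rho_r$ children of $r$ gives $\sum_{v \in L_r(k)} \rho_{v(v)} \abs{\II_T^\sL(r \sra v)} \le 2c\, \rho_r (1-\alpha)^{k-1}$ for $k \ge 1$; summing this geometric series and adding the diagonal contribution $\rho_r$ completes the target bound.

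The main obstacle is the calibration of the weights: the choice $\rho_v = \deg_G(v) + 1$ is designed so that on the ``target'' side the weight $\rho_{v(v^\star)}$ exactly absorbs the $\delta + 2$ factor in the denominator of \hyperlink{cond:boundedness'}{General Boundedness}, while on the ``source'' side the $\Delta_r$ factor produced by the root decomposition is absorbed by $\rho_r$. Once this calibration is made, no new analytic ingredient beyond \cref{lem:w-row-sum} is required; the remaining steps---spectral independence, the local-to-global estimate \cref{thm:spectral-influence}, and the conversion to mixing time via \cref{eq:rel-mixing}---parallel the proof of \cref{thm:contraction-implies-mixing}.
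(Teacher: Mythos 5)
Your proposal is correct and follows essentially the same route as the paper: pass to the SAW tree via \cref{lem:Tsaw-G}, run the weighted version of the potential-induction from \cref{lem:A-and-B} with the \hyperlink{cond:boundedness'}{General Boundedness} condition supplying the crucial $\frac{2c}{d_1+d_2+2}$ factor, and close with \cref{lem:w-row-sum}, \cref{thm:spectral-influence}, and \cref{eq:rel-mixing}. The only cosmetic difference is your weight choice $\rho_v = \deg_G(v)+1$, whereas the paper (\cref{lem:bound-for-tree-weighted}, \cref{lem:A-and-B-weighted}) uses $\rho_v = \deg_G(v)$; both calibrations absorb the $\delta_v + 2$ from the denominator and yield the same $\eta_k \le 2c/\alpha$ and hence the same $O(n^{2+2c/\alpha})$ mixing bound.
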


We then give a counterpart of \cref{lem:potential-is-good}, 
showing that $\Psi$ is a general $(\alpha,c)$-potential when $\sqrt{\beta \gamma} \le \frac{\Delta-2}{\Delta}$ and $\gamma > 1$. 
\cref{thm:main-2-spin} for this case is then obtained from \cref{thm:contraction-mixing-w} and \cref{lem:potential-is-good-lambdac}.

\begin{lem}%[\cite{LLY13}]
\label{lem:potential-is-good-lambdac}
Let $\Delta \ge 3$ be an integer. 
Let $\beta,\gamma,\lambda$ be reals such that $0\le \beta < 1 < \gamma$ and $\sqrt{\beta \gamma} \le \frac{\Delta-2}{\Delta}$. 
Assume that $(\beta,\gamma,\lambda)$ is up-to-$\Delta$ unique with gap $\delta \in (0,1)$. 
%\zongchen{PPPsi}
%Define the function $\Psi$ implicitly by
%\begin{equation*}
%    \Psi'(y) 
%    = \psi(y) 
%    = \sqrt{\frac{(1-\beta\gamma)e^{y}}{(\beta e^{y} + 1)(e^{y} + \gamma)}} = \sqrt{\abs{h(y)}}.
%\end{equation*}
Then the function $\Psi$ defined implicitly by \cref{eq:Psi} is a general $(\alpha,c)$-potential function with $\alpha \ge \delta/2$ and $c \leq 18$; 
we can further take $c \leq 4$ if $\beta = 0$. 
\end{lem}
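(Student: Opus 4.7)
The plan is to verify the two requirements in \cref{defn:potential-weighted} for the function $\Psi$ defined by \cref{eq:Psi}, for which $\psi(y) = \sqrt{|h(y)|}$.

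For the \hyperlink{cond:contraction}{Contraction} condition, nothing new needs to be proved: \cref{lem:contraction-Psi} already gives $\norm{\grad H_d^\Psi(\tilde{\mybf{y}})}_1 \le \sqrt{1-\delta}$ for every $1 \le d < \Delta$ and every $\tilde{\mybf{y}} \in S_\Psi^d$, under up-to-$\Delta$ uniqueness with gap $\delta$. Since $\sqrt{1-\delta} \le 1 - \delta/2$, we obtain $\alpha \ge \delta/2$ directly, with no use of the hypothesis $\gamma > 1$.

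For the \hyperlink{cond:boundedness'}{General Boundedness} condition, the key simplification is that $\psi = \sqrt{|h|}$ collapses the expression to
\[
\frac{\psi(y_2)}{\psi(y_1)} \cdot |h(y_1)| = \sqrt{|h(y_1)| \cdot |h(y_2)|},
\]
so the task reduces to showing $\sqrt{|h(y_1)||h(y_2)|} \le \frac{2c}{d_1+d_2+2}$ whenever $y_i \in J_{d_i}$ for $0 \le d_1,d_2 < \Delta$. The plan is to bound $|h(y)|$ pointwise on $J_d$ by rewriting
\[
|h(y)| = \frac{1-\beta\gamma}{\beta e^y + (1+\beta\gamma) + \gamma e^{-y}},
\]
and using that $y \in J_d$ forces $e^y \le \lambda/\gamma^d$, so $\gamma e^{-y} \ge \gamma^{d+1}/\lambda$. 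This gives the clean estimate $|h(y)| \le (1-\beta\gamma)\lambda/\gamma^{d+1}$, which decays exponentially in $d$ because $\gamma > 1$. Multiplying the bounds for $y_1$ and $y_2$ and taking square roots yields $\sqrt{|h(y_1)||h(y_2)|} \le (1-\beta\gamma)\lambda/\gamma^{(d_1+d_2+2)/2}$, so it suffices to bound
\[
\sup_{s \ge 2}\; \frac{(1-\beta\gamma)\lambda \cdot s}{2\, \gamma^{s/2}} \;\le\; c,
\]
where $s = d_1 + d_2 + 2$.

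The main obstacle is extracting the explicit numerical constants $c \le 18$ in general and $c \le 4$ when $\beta = 0$. Since $\sqrt{\beta\gamma} \le (\Delta-2)/\Delta$ together with $\gamma > 1$ puts us in the technically more delicate part of the uniqueness diagram from Section 5 of \cite{LLY13} (where the uniqueness region in $\lambda$ may consist of more than one interval and includes the up-to-$\infty$-unique regime $\lambda \le \lambda_c^*$), the bound on $(1-\beta\gamma)\lambda$ coming from up-to-$\Delta$ uniqueness must be inserted carefully. The strategy will be to split into subcases according to where $\lambda$ sits inside the uniqueness region, using the updated characterization of \cite{LLY13} to control $(1-\beta\gamma)\lambda$ against $\gamma$ and $\Delta$. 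For small $s$ (i.e., $d_1,d_2$ small) the exponential decay $\gamma^{-s/2}$ gives little, and the uniqueness bound on $\lambda$ is what forces the estimate; for large $s$, the factor $s/\gamma^{s/2}$ is itself bounded and the argument becomes automatic. The $\beta = 0$ case is substantially simpler because $|h(y)| = e^y/(e^y+\gamma)$ has no $\beta e^y$ term to track, the interval $J_d$ takes the simplified form $[-\infty, \log(\lambda/\gamma^d)]$, and the explicit shape of $\lambda_c(\Delta)$ for the hardcore-type model lets us sharpen the universal constant from $18$ down to $4$.
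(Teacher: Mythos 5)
Your reduction is sound up to the point where you convert \hyperlink{cond:boundedness'}{General Boundedness} into the bound $\sqrt{|h(y_1)||h(y_2)|} \le \frac{2c}{d_1+d_2+2}$, and your handling of the \hyperlink{cond:contraction}{Contraction} condition via \cref{lem:contraction-Psi} plus $\sqrt{1-\delta}\le 1-\delta/2$ matches the paper exactly. The issue is in the way you propose to bound $|h|$ afterwards. Your only pointwise estimate is $|h(y)| \le (1-\beta\gamma)\lambda/\gamma^{d+1}$, obtained from the lower bound $\gamma e^{-y}\ge \gamma^{d+1}/\lambda$, i.e., from the upper endpoint $e^y\le\lambda/\gamma^d$ of $J_d$. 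This bound is only useful when $\lambda$ is small relative to $\gamma^{d+1}$. But when $\beta>0$, $\gamma>1$ and $\sqrt{\beta\gamma}\le(\Delta-2)/\Delta$, the uniqueness region $\mathcal{A}$ of \cref{eq:regime-uniq} also contains arbitrarily \emph{large} $\lambda$ (the regime $\lambda>\lambda_2(d)$ for the relevant $d$), as well as possibly intermediate ``gap'' intervals $(\lambda_2(d_L),\lambda_1(d_R))$. For large $\lambda$ your estimate degenerates: $(1-\beta\gamma)\lambda/\gamma^{d+1}$ blows up, even though $|h|$ stays bounded by $\frac{1-\sqrt{\beta\gamma}}{1+\sqrt{\beta\gamma}}$ (\cref{lem:psimaximizer}). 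In that regime the correct estimate must come from the \emph{other} endpoint $e^y\ge\lambda\beta^d$, giving $\beta e^y\ge \lambda\beta^{d+1}$, which you never invoke.

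The paper's proof of case S.3 of \cref{lem:bound-Psi} makes this split explicit: after Cauchy--Schwarz, it considers three cases according to whether $\lambda<\lambda_1(d_L)$, $\lambda>\lambda_2(d_R)$, or $\lambda_2(d_L)<\lambda<\lambda_1(d_R)$, and for the second case it needs the lower bound on $e^{\bar y}$ via $\lambda\beta^d$ together with $\lambda_2(d_R)\ge\theta(d_R)/(18\beta^{d_R+1})$ from \cref{lem:twocriticalthresholdbounds}. The third case is the most delicate: one must combine \emph{both} bounds to control $e^{(y_2-y_1)/2}$ and exploit $\lambda_2(d_L)<\lambda_1(d_R)$. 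Your plan acknowledges that ``the strategy will be to split into subcases,'' but the subsequent computation $\sup_{s\ge2}\frac{(1-\beta\gamma)\lambda s}{2\gamma^{s/2}}\le c$ only implements the $\lambda$-small branch and would not yield a finite $c$ in the other two branches. (Your $\beta=0$ sketch, by contrast, is fine: there $J_d=(-\infty,\log(\lambda/\gamma^d)]$ and $|h|$ is monotone increasing, so the upper endpoint alone suffices, matching case H.2 of the paper.) To complete the proposal you would need to add the $\beta e^y\ge\lambda\beta^{d+1}$ bound and the middle-interval argument, mirroring Cases 2 and 3 of the paper's proof, and also replace the naive product $|h(y_1)||h(y_2)|$ by the tighter Cauchy--Schwarz bound of \cref{eq:Cauchy-Schwarz} to obtain the stated constant $c\le 18$.
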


The proof of \cref{thm:contraction-mixing-w} can be found in \cref{sec:weighted-contraction-mixing}. 
For \cref{lem:potential-is-good-lambdac}, the \hyperlink{cond:contraction}{Contraction} condition of $\Psi$ follows from \cref{lem:contraction-Psi}, and \hyperlink{cond:boundedness'}{General Boundedness} is proved in \cref{sec:boundedness} together with all other cases.

\section{Ferromagnetic cases}
\label{sec:ferro}
In the ferromagnetic case, the best known correlation decay results are given in \cite{GL18, SS19}. Using the potential functions in \cite{GL18} and \cite{SS19}, we show the following two results, which match the known correlation decay results. In fact, the potential function from \cite{SS19} turns out to be an $(\alpha,c)$-potential function for constants $\alpha = \Theta(\delta)$ and $c \leq O(1)$.
\begin{thm}\label{thm:ss19ferroregion}
Fix an integer $\Delta \geq 3$, positive real numbers $\beta,\gamma,\lambda$ and $0 < \delta < 1$, and assume $(\beta,\gamma,\lambda)$ satisfies one of the following three conditions:
\begin{enumerate}
    \item $\frac{\Delta-2+\delta}{\Delta-\delta} \leq \sqrt{\beta\gamma} \leq \frac{\Delta-\delta}{\Delta-2+\delta}$, and $\lambda$ is arbitrary;
    \item $\sqrt{\beta\gamma} \geq \frac{\Delta}{\Delta-2}$ and $\lambda \leq (1 - \delta) \frac{\gamma}{\max\{1,\beta^{\Delta-1}\} \cdot ((\Delta-2)\beta\gamma - \Delta)}$;
    \item $\sqrt{\beta\gamma} \geq \frac{\Delta}{\Delta-2}$ and $\lambda \geq \frac{1}{1 - \delta} \cdot \frac{(\Delta-2)\beta\gamma-\Delta}{\beta \cdot \min\{1,1/\gamma^{\Delta-1}\}}$.
\end{enumerate}
Then the identity function $\Psi(y) = y$ (based on the potential given in \cite{SS19}) is an $(\alpha,c)$-potential function for $\alpha = \Theta(\delta)$ and $c \leq O(1)$. Furthermore, for every $n$-vertex graph $G$ of maximum degree at most $\Delta$, the mixing time of the Glauber dynamics for the 2-spin system on $G$ with parameters $(\beta,\gamma,\lambda)$ is $O(n^{2+c/\delta})$, for a universal constant $c > 0$.
\end{thm}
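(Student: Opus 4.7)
\textbf{Proof plan for \cref{thm:ss19ferroregion}.}

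The mixing time bound will follow immediately from \cref{thm:contraction-implies-mixing} once I verify that $\Psi(y) = y$ is an $(\alpha, c)$-potential function with $\alpha = \Theta(\delta)$ and $c = O(1)$. Since $\psi \equiv 1$ for the identity potential, the \hyperlink{cond:contraction}{Contraction} condition reduces to the unweighted gradient bound
\[
\sum_{i=1}^{d} |h(y_i)| \leq 1 - \alpha \quad \text{for all } 1 \leq d < \Delta \text{ and all } (y_1,\dots,y_d) \in \mathbb{R}^d,
\]
and the \hyperlink{cond:boundedness}{Boundedness} condition reduces to $|h(y)| \leq c/\Delta$ for every $y \in J$. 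My plan is therefore to verify these two inequalities separately in each of the three parameter regions, leveraging the contraction analysis of \cite{SS19}, which (by the observation at the end of \cref{sec:mixing-potential}) translates through $\Psi = \Phi \circ \exp$ from their choice $\Phi = \log$ to our identity potential.

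A short calculation (setting the derivative of $h$ in $x = e^y$ to zero) identifies $y^{*} = \tfrac{1}{2}\log(\gamma/\beta)$ as the extremizer of $|h(y)|$, with extremal value $h_{\max} = |\sqrt{\beta\gamma}-1|/(\sqrt{\beta\gamma}+1)$. For Case 1, substituting either boundary $\sqrt{\beta\gamma} = (\Delta-\delta)/(\Delta-2+\delta)$ or its reciprocal into this formula yields $h_{\max} \leq (1-\delta)/(\Delta-1)$, and monotonicity of $h_{\max}$ in $|\sqrt{\beta\gamma}-1|$ extends the bound over the full interval. Hence $\sum_{i=1}^{d} |h(y_i)| \leq (\Delta-1)\,h_{\max} \leq 1-\delta$, which is Contraction with $\alpha = \delta$, and Boundedness holds with $c \leq 2$ directly from $h_{\max} \leq 2/\Delta$. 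No further input from \cite{SS19} is required in this case.

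For Cases 2 and 3 the pointwise bound $h_{\max}$ is too weak, since $\sqrt{\beta\gamma} \geq \Delta/(\Delta-2)$ allows $(\Delta-1)\,h_{\max}$ to approach $1$. The restriction on $\lambda$ instead forces every log-ratio at a free vertex of degree $d < \Delta$ into an interval $J_d$ that lies well to one side of $y^{*}$: in Case 2, the smallness of $\lambda$ pushes every $y_i$ far to the left of $y^{*}$, where $h(y) \approx (\beta\gamma-1)e^y/\gamma$ is exponentially small; in Case 3, the largeness of $\lambda$ pushes every $y_i$ far to the right, where $h(y) \approx (\beta\gamma-1)/(\beta e^y)$ is again exponentially small. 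The threshold on $\lambda$ in the statement is exactly calibrated so that $\sum_{i=1}^{d} |h(y_i)| \leq 1 - \Theta(\delta)$ and each $|h(y_i)| \leq O(1/\Delta)$, giving the required Contraction and Boundedness bounds.

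The main obstacle is the range-tracking argument in Cases 2 and 3, which must confirm (a) that the iterated tree recursion keeps all intermediate log-ratios inside the appropriate one-sided range, and (b) that the resulting pointwise bound on $|h|$ gives the sharp $(1 - \Theta(\delta))$ factor under the stated $\lambda$-constraint. Both are already carried out by \cite{SS19} for the tree recursion written in terms of marginal ratios $R$ with potential $\Phi = \log$; the cleanest execution is to invoke their contraction lemma, translate it via $H_d^{\Psi} = F_d^{\Phi}$ (where $\Phi = \log$), and then verify Boundedness as a by-product of the same estimate on $|h|$ over $J_d$ by plugging the endpoint values $\log(\lambda\beta^d)$ and $\log(\lambda/\gamma^d)$ into the formula for $h$.
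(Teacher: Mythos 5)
Your proposal is essentially correct and follows the same route as the paper. Case~1 is carried out exactly as you describe: \cref{lem:psimaximizer} gives the uniform bound $|h(y)| \le |1-\sqrt{\beta\gamma}|/(1+\sqrt{\beta\gamma})$, and substituting the endpoints of the interval for $\sqrt{\beta\gamma}$ yields $\le (1-\delta)/(\Delta-1)$ for Contraction and $\le 1.5/\Delta$ for Boundedness. For Cases~2 and~3 the paper actually does the whole thing by a short direct computation rather than invoking \cite{SS19} as a black box: it uses the endpoints of $J$ (namely $y \le \log(\lambda\max\{1,\beta^{\Delta-1}\})$ for Case~2 and $y \ge \log(\lambda\min\{1,1/\gamma^{\Delta-1}\})$ for Case~3) together with the stated $\lambda$-threshold to bound $|h(y)| \le (1-\Theta(\delta))/(\Delta-1)$ and $|h(y)| \le O(1/\Delta)$ directly. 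This is the same estimate your plan arrives at; citing the contraction lemma from \cite{SS19} and translating via $H_d^{\Psi} = F_d^{\Phi}$ with $\Phi = \log$ is a valid alternative, but is not shorter here since the computation is only a few lines.

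One conceptual point worth fixing: your concern~(a) about a ``range-tracking argument'' for the iterated tree recursion is a non-issue. The interval $J_d = [\log(\lambda/\gamma^d), \log(\lambda\beta^d)]$ (for $\beta\gamma > 1$) is an \emph{a priori} bound valid for every vertex of degree $d$, for every boundary condition and every subtree: each factor $\frac{\beta x + 1}{x + \gamma}$ of the tree recursion is monotone in $x$ and takes values in $[1/\gamma, \beta]$ for all $x \in [0,+\infty]$, so the product lies in $[\gamma^{-d}, \beta^d]$ without any induction or iteration. In the potential framework of \cref{lem:bound-for-tree}, one only needs the pointwise fact that $\log R_v \in J_{d_v}$ — no tracking of how the recursion propagates a one-sided constraint — so your item~(a) can be discharged by definition rather than by an argument from \cite{SS19}. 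Item~(b) is the real content and is handled as you and the paper both describe, by plugging the $\lambda$-threshold into the expression for $h$.
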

\begin{remark}\label{rmk:better-contraction}
Condition 1 includes both the ferromagnetic case $1< \sqrt{\beta\gamma} \le \frac{\Delta-\delta}{\Delta-2+\delta}$ and the antiferromagnetic case $\frac{\Delta-2+\delta}{\Delta-\delta} \le \sqrt{\beta\gamma} < 1$. 
Note that in both cases $(\beta,\gamma,\lambda)$ is up-to-$\Delta$ unique with gap $\delta$. 
For the antiferromagnetic case, the identity function $\Psi$ is an $(\alpha,c)$-potential with $c\le 1.5$ and a better contraction rate $\alpha \ge \delta$, compared with the bound $\alpha \ge \delta/2$ of the potential $\Psi$ given by \cref{eq:Psi} in \cref{lem:potential-is-good}. 
For the ferromagnetic case with $\beta = \gamma>1$ (Ising model), a stronger result by \cite{MS13} was known, which gives $O(n\log n)$ mixing. 
\end{remark}

The potential function from \cite{GL18} is indeed an $(\alpha,c)$-potential, but $c$ must, unfortunately, depend on $\Delta$. We have the following result, which is weaker than the correlation decay algorithm in \cite{GL18} for unbounded degree graphs.
\begin{thm}\label{thm:gl18ferroregion}
Fix an integer $\Delta \geq 3$, and nonnegative real numbers $\beta,\gamma,\lambda$ satisfying $\beta \leq 1 \leq \gamma$, $\sqrt{\beta\gamma} \geq \frac{\Delta}{\Delta-2}$, and $\lambda < \wrapp{\frac{\gamma}{\beta}}^{\frac{\sqrt{\beta\gamma}}{\sqrt{\beta\gamma} - 1}}$. Then for every $n$-vertex graph $G$ with maximum degree at most $\Delta$, the mixing time of the Glauber dynamics for the ferromagnetic 2-spin system on $G$ with parameters $(\beta,\gamma,\lambda)$ is $O(n^{C})$, for a constant $C$ depending only on $\beta,\gamma,\lambda,\Delta$, but not $n$.
\end{thm}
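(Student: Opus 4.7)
The plan is to follow the same template as the antiferromagnetic results: exhibit an $(\alpha,c)$-potential function $\Psi$ with respect to $\Delta$ and $(\beta,\gamma,\lambda)$, and then invoke \cref{thm:contraction-implies-mixing} directly. Concretely, I would take the potential function $\Phi$ constructed by Guo and Lu in \cite{GL18} for the ferromagnetic correlation decay analysis, and pass to the log-ratio setting via $\Psi = \Phi \circ \exp$, as explained at the end of \cref{sec:mixing-potential}. Under this change of variables $H_d^\Psi = F_d^\Phi$, and $\grad H_d^\Psi$ coincides with $\grad F_d^\Phi$ pointwise after the appropriate identification, so gradient bounds for the potential $\Phi$ transfer verbatim to $\Psi$.

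First I would verify the \hyperlink{cond:contraction}{Contraction} condition. The gradient bound $\norm{\grad F_d^\Phi}_1 \leq 1-\alpha$ that \cite{GL18} establishes in the ferromagnetic uniqueness regime $\sqrt{\beta\gamma} \geq \frac{\Delta}{\Delta-2}$ and $\lambda < (\gamma/\beta)^{\sqrt{\beta\gamma}/(\sqrt{\beta\gamma}-1)}$ then gives the required contraction for $H_d^\Psi$. Under the strict inequality on $\lambda$ in the hypothesis, one obtains some $\alpha = \alpha(\beta,\gamma,\lambda,\Delta) > 0$ bounded away from zero once the parameters are fixed.

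Second I would verify the \hyperlink{cond:boundedness}{Boundedness} condition on the interval $J = \bigcup_{d=0}^{\Delta-1} J_d$. In the ferromagnetic regime $\beta\gamma > 1$ we have $J_d = [\log(\lambda/\gamma^d), \log(\lambda\beta^d)]$, so $J$ is a bounded interval of length $O(\Delta \log(\gamma/\beta))$. On this compact set both $\psi$ and $|h|$ are continuous and strictly positive, so the ratio $\psi(y_2)/\psi(y_1) \cdot |h(y_1)|$ can be uniformly bounded by some finite $M = M(\beta,\gamma,\lambda,\Delta)$; taking $c = \Delta M$ then yields the \hyperlink{cond:boundedness}{Boundedness} condition. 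Combining the two conditions, \cref{thm:contraction-implies-mixing} gives mixing time $O(n^{2+c/\alpha}) = O(n^C)$ with $C = C(\beta,\gamma,\lambda,\Delta)$, as claimed.

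The main obstacle is this unavoidable $\Delta$-dependence of $c$. Unlike the antiferromagnetic setting handled via \cref{lem:bound-Psi}, where the delicate structure of the potential produces a universal constant independent of $\Delta$, the GL18 potential does not appear to admit such a uniform control over the whole interval $J$ as $\Delta$ grows: the endpoints of $J$ drift to $\pm\infty$ at rate $\Delta$, and either $\psi$ or $|h|$ can degenerate near these endpoints. Consequently the exponent $C$ in the mixing-time bound depends on $\Delta$ in addition to $(\beta,\gamma,\lambda)$, matching the weaker quantitative form of the theorem and in contrast with the $\Delta$-independent exponent of \cref{thm:ss19ferroregion}. Improving the $\Delta$-dependence here would likely require designing a different potential, rather than reusing the one from \cite{GL18}.
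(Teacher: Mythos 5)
Your proposal follows essentially the same route as the paper: adopt the Guo--Lu potential $\Phi$, pass to log-ratios via $\Psi = \Phi \circ \exp$ so that $\nabla H_d^\Psi = \nabla F_d^\Phi$, import their ferromagnetic contraction estimate for the \hyperlink{cond:contraction}{Contraction} condition, and argue \hyperlink{cond:boundedness}{Boundedness} by compactness of $J$ before invoking \cref{thm:contraction-implies-mixing}. The paper packages exactly this as \cref{thm:gl18universalpotential} and additionally records the explicit worst-case ratio $\psi(\log\lambda)/\psi(\log(\lambda/\gamma^{\Delta-1})) = \gamma^{\Delta-1}$ to show the $\Delta$-dependence of $c$ is genuine for this potential, which matches the caveat you raise at the end.
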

Proofs of these theorems are provided in \cref{sec:ferroproofs}.

\printbibliography
\pagebreak

\begin{appendices}
\crefalias{section}{appsec}
\crefalias{subsection}{appsec}

\section{Proof of main results}
\label{sec:proof-main}

In this section we give the proofs of \cref{thm:main-hardcore}, \cref{thm:main-Ising}, \cref{thm:main-2-spin} and \cref{thm:contraction-implies-mixing}. 
%\subsection{Proof of \cref{thm:contraction-implies-mixing}}

\begin{proof}[Proof of \cref{thm:contraction-implies-mixing}]
Note that since the transition matrix $P$ for the Glauber dynamics has all nonnegative eigenvalues, we have that $\lambda^*(P) = 1 - \lambda_{2}(P)$ and so in order to deduce mixing, it suffices to lower bound $1 - \lambda_{2}(P)$. 
We do this by employing \cref{thm:spectral-influence}. It suffices to show $(\eta_{0},\dots,\eta_{n-2})$-spectrally independence for sufficiently small $\eta_{i}$.

To bound $\eta_{i}$, it suffices to bound $\sum_{v \in V \backslash \{r\}} \abs{\mathcal{I}_{G}^{\sigma_{\Lambda}}(r \sra v)}$ for all graphs $G = (V,E)$ with $n = |V|$ vertices and all boundary conditions $\sigma_{\Lambda}$ on a subset $\Lambda$ of $i$ vertices. We claim the following:
%\begin{lem}\label{lem:sum-influence}
%For all graphs $G$ with $n$ vertices and all boundary conditions $\sigma_{\Lambda}$ on a subset $\Lambda$ of $i$ vertices, we have
\begin{equation}
\label{eq:influence-main}
    \sum_{v \in V \backslash \{r\}} \abs{\mathcal{I}_{G}^{\sigma_{\Lambda}}(r \sra v)} \leq \min\wrapc{\frac{c}{\alpha}, C(n-i-1)}
\end{equation}
where $C \in (0,1)$ is a constant depending only on $\beta,\gamma,\lambda,\Delta$. 
%for a constant $0 < C < 1$ depending only on $\beta,\gamma,\lambda,\Delta$.
%\end{lem}
%All that remains is to prove the upper bound $\frac{c}{\delta}$. 
%We have
The first upper bound $\frac{c}{\delta}$ is deduced by 
\begin{align*}
    \sum_{v \in V \backslash \{r\}} \abs{\mathcal{I}_{G}^{\sigma_{\Lambda}}(r \sra v)} 
    &\leq \sum_{v \in V_T \backslash \{r\}} \abs{\mathcal{I}_{T}^{\sigma_{\Lambda}}(r \sra v)} \tag*{(\cref{lem:Tsaw-G}; $T = \TSAW(G,r)$)} \\
    &= \sum_{k=1}^{\infty} \sum_{v \in L_{r}(k)} \abs{\mathcal{I}_{T}^{\sigma_{\Lambda_{\Lambda}}}(r \sra v)} \tag*{(split the sum by levels)} \\
    &\leq c \sum_{k=1}^{\infty} (1 - \alpha)^{k-1} \tag*{(\cref{lem:bound-for-tree})} \\
    &= \frac{c}{\alpha}.
\end{align*}
The second upper bound $C(n-i-1)$ is more trivial. 
Intuitively, it means each absolute pairwise influence $\abs{\mathcal{I}_{G}^{\sigma_{\Lambda}}(r \sra v)}$ is at most some constant $C$ and hence the sum of absolute influences is upper bounded by $C(n-i-1)$. 
The following two claims, whose proofs are provided in \cref{subsec:constantdimensions}, give a more precise statement. 
\begin{clm}[Antiferromagnetic Case]\label{lem:antiferroconstantsize}
Fix an integer $\Delta \geq 3$ and real numbers $\beta,\gamma,\lambda$, and assume $0\le \beta \leq \gamma$, $\gamma > 0$, $\beta \gamma < 1$ and $\lambda > 0$. 
Then for every $n$-vertex graph $G$ of maximum degree at most $\Delta$, the antiferromagnetic $2$-spin system on $G$ with parameters $(\beta,\gamma,\lambda)$ is $Cn$-spectrally independent, for a constant $0 < C < 1$ depending only on $\beta,\gamma,\lambda,\Delta$. Furthermore, if $(\beta,\gamma,\Delta)$ is up-to-$\Delta$ unique, then we can drop the dependence on $\Delta$.
\end{clm}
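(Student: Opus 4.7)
The plan is to upper bound $\lambda_{\max}(\mathcal{I}_G^{\sigma_\Lambda}) - 1$ by the $\infty$-norm $\|\mathcal{I}_G^{\sigma_\Lambda}\|_\infty - 1$, and to derive the latter from a uniform pointwise bound $|\mathcal{I}_G^{\sigma_\Lambda}(r \sra v)| \leq C < 1$ on the off-diagonal entries. The key observation is that, by the Markov property of the $2$-spin model, conditioning on the spins $\tau$ of $v$'s neighborhood gives
\[
    \mu(\sigma_v = 1 \mid \tau) = \frac{\lambda \beta^{n_1}}{\lambda \beta^{n_1} + \gamma^{n_0}},
\]
where $n_0, n_1$ denote the number of $v$'s neighbors assigned spin $0, 1$ respectively and $n_0 + n_1 \leq \Delta$. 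Since $\lambda > 0$, $\gamma > 0$, $\beta \geq 0$, and $\Delta < \infty$, I claim the maximum $p_{\max}(\beta,\gamma,\lambda,\Delta)$ of this expression over all configurations $\tau$ is strictly less than $1$; this will be verified by a short case analysis on the sign of $\gamma - 1$ (and on whether $\beta = 0$).

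Given this, since $\mu(\sigma_v = 1 \mid \sigma_r = b, \sigma_\Lambda)$ is a mixture of such single-site marginals (with mixing weights given by the conditional distribution of $\tau$ under $\{\sigma_r = b, \sigma_\Lambda\}$), it too lies in $[0, p_{\max}]$, so $|\mathcal{I}_G^{\sigma_\Lambda}(r \sra v)| \leq p_{\max}$. Summing over the at most $n - |\Lambda| - 1 \leq n$ off-diagonal non-fixed entries per row, with the diagonal contributing $1$, I would obtain
\[
    \|\mathcal{I}_G^{\sigma_\Lambda}\|_\infty \leq 1 + p_{\max} \cdot (n - |\Lambda| - 1) \leq 1 + p_{\max} \cdot n,
\]
yielding $\lambda_{\max}(\mathcal{I}_G^{\sigma_\Lambda}) - 1 \leq p_{\max} \cdot n$ and hence $Cn$-spectral independence with $C := p_{\max} \in (0,1)$ depending only on $\beta,\gamma,\lambda,\Delta$.

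For the ``furthermore'' claim under up-to-$\Delta$ uniqueness with gap $\delta$, I would invoke \cref{lem:potential-is-good} (or \cref{lem:potential-is-good-lambdac} when $\gamma > 1$ and $\sqrt{\beta\gamma} \leq (\Delta-2)/\Delta$) to obtain an $(\alpha, c)$-potential with $\alpha \geq \delta/2$ and $c$ a universal constant independent of $\Delta$. Applying \cref{lem:Tsaw-G} to reduce to the self-avoiding walk tree and then \cref{lem:bound-for-tree} to sum the decaying tree influences then yields
\[
    \sum_{v} |\mathcal{I}_G^{\sigma_\Lambda}(r \sra v)| \leq \sum_{v \in V_T} |\mathcal{I}_T^{\sigma_\Lambda}(r \sra v)| \leq c/\alpha,
\]
a constant independent of $\Delta$. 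Combining this with the trivial bound $\lambda_{\max}(\mathcal{I}) - 1 \leq n$ provides a $Cn$-bound with $C$ depending only on $\beta, \gamma, \lambda$ (through $\delta$).

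The main obstacle is verifying the pointwise bound $p_{\max} < 1$ uniformly, especially in the hardcore case $\beta = 0$ where the conditional marginal can vanish for certain configurations $\tau$. Fortunately, the bound is driven purely by the \emph{upper} tail of $\mu(\sigma_v = 1 \mid \tau)$, and a direct calculation shows that $p_{\max}$ is attained at either the all-$0$ or all-$\Delta$ neighborhood configuration, both of which evaluate to an explicit value strictly in $(0,1)$; hence $C_0 = p_{\max} < 1$ can be made explicit in terms of $(\beta,\gamma,\lambda,\Delta)$.
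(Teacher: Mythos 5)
Your treatment of the main claim is essentially the paper's: bound each conditional marginal $\mu(\sigma_v = 1 \mid \tau)$ over neighborhood configurations $\tau$ and conclude a pointwise bound on $|\II_G^\sL(r \sra v)|$; your verification that the extremal $\tau$ is $(n_1,n_0)=(0,\Delta)$ when $\gamma\le 1$ and $(0,0)$ when $\gamma > 1$ (using $\beta<1$ since $\beta\le\gamma$ and $\beta\gamma<1$) is sound. The paper's bound is marginally tighter because it uses that the conditional marginal lies in the interval $[p_{\min},p_{\max}]$ and bounds $|\II|$ by $p_{\max}-p_{\min}$ rather than by $p_{\max}$ alone, but this makes no difference for the existence of a $C<1$.

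The ``furthermore'' part, however, contains a genuine gap. You propose to deduce $\Delta$-independence by invoking \cref{lem:potential-is-good}, \cref{lem:Tsaw-G}, and \cref{lem:bound-for-tree}, obtaining $\sum_v |\II| \le c/\alpha$ with $\alpha \ge \delta/2$. But the claim only hypothesizes ``up-to-$\Delta$ unique,'' not ``unique with a $\Delta$-independent gap''; for fixed $(\beta,\gamma,\lambda)$, the gap $\delta = 1 - \max_{1\le d<\Delta}|f_d'(R_d^*)|$ generally shrinks as $\Delta$ grows, so $c/\alpha$ need not be bounded independently of $\Delta$. Moreover, this route is conceptually backwards: the entire reason \cref{lem:antiferroconstantsize} exists is to supply a trivial per-vertex fallback bound for the regime where the SAW/potential argument is lossy; re-importing that machinery to prove it defeats the purpose, and in any case $c/\alpha$ is a constant rather than a per-vertex bound, so converting it to a bound of the form $C(n-i-1)$ with $C<1$ does not go through cleanly. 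The paper instead proves the $\Delta$-independence directly on the pointwise bound: note that when $\gamma<1$ your $p_{\max} = \lambda/(\lambda+\gamma^\Delta) \to 1$ as $\Delta\to\infty$, so uniqueness must be used to control $\lambda/\gamma^\Delta$. Concretely, in the hardcore case \cref{lem:twocriticalthresholdbounds} gives $\lambda/\gamma^\Delta \le \lambda_c/\gamma^\Delta = O(1/\Delta)$; when $\sqrt{\beta\gamma} > (\Delta-2)/\Delta$ one uses $1-(\beta\gamma)^\Delta \le 1-e^{-O(1)}$; and in the remaining soft-constraint case one again uses $\lambda \le \lambda_c$ or $\lambda \ge \overline\lambda_c$ to get $O(\Delta_0/\Delta)$ with $\Delta_0$ depending only on $\beta,\gamma$. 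Each of these is a pointwise bound uniformly bounded away from $1$ by a quantity depending only on $(\beta,\gamma,\lambda)$, which is what the claim requires.
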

\begin{clm}[Ferromagnetic Case]\label{lem:ferroconstantsize}
Fix an integer $\Delta \geq 3$ and positive real numbers $\beta,\gamma,\lambda$, and assume $\beta \leq \gamma$ and $\beta \gamma > 1$. 
Then for every $n$-vertex graph $G$ of maximum degree at most $\Delta$, the ferromagnetic $2$-spin system on $G$ with parameters $(\beta,\gamma,\lambda)$ is $Cn$-spectrally independent, for a constant $0 < C < 1$ depending only on $\beta,\gamma,\lambda,\Delta$.
\end{clm}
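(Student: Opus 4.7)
The plan is to prove the claim by showing that each individual pairwise influence $|\mathcal{I}_G^{\sigma_\Lambda}(r\to v)|$ is uniformly bounded by a constant $C\in(0,1)$ depending only on $\beta,\gamma,\lambda,\Delta$, and then bounding $\lambda_{\max}(\mathcal{I}_G^{\sigma_\Lambda})$ by the $\ell_\infty$-norm of the influence matrix.

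First I would establish a two-sided bound on the conditional marginal of any single vertex in any graph of maximum degree at most $\Delta$. Condition on the full assignment $\sigma_{V\setminus\{v\}}$; by the standard conditional-probability computation, the one-vertex marginal ratio at $v$ equals $\lambda\beta^{k_1}/\gamma^{k_0}$, where $k_1,k_0$ are the numbers of neighbors of $v$ currently in states $1$ and $0$ respectively, and $k_0+k_1\le \deg(v)\le \Delta$. Since $\beta,\gamma,\lambda>0$ are strictly positive in the ferromagnetic case, this ratio lies in the compact positive interval $[\lambda/\gamma^\Delta,\,\lambda\max(1,\beta^\Delta)]$, and the assumption $\beta\gamma>1$ (together with $\beta\le\gamma$ forcing $\gamma>1$) ensures this interval is nondegenerate. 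Consequently, the conditional marginal $\mu_G(\sigma_v=1\mid\sigma_{V\setminus\{v\}})$ lies in some interval $[M_{\min},M_{\max}]\subset(0,1)$ depending only on $\beta,\gamma,\lambda,\Delta$. Averaging over $\sigma_{V\setminus\{v\}}$ conditional on any partial boundary condition preserves this bound, so $\mu_G(\sigma_v=1\mid\sigma_r=s,\sigma_\Lambda)\in[M_{\min},M_{\max}]$ for every choice of $r,v,s,\sigma_\Lambda$.

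Plugging this into the definition of pairwise influence,
\[
|\mathcal{I}_G^{\sigma_\Lambda}(r\to v)|=\bigl|\mu_G(\sigma_v=1\mid\sigma_r=1,\sigma_\Lambda)-\mu_G(\sigma_v=1\mid\sigma_r=0,\sigma_\Lambda)\bigr|\le M_{\max}-M_{\min}=:C<1.
\]
Since the diagonal entries of $\mathcal{I}_G^{\sigma_\Lambda}$ are $1$ by the convention of this paper, the $\ell_\infty$-norm of the $(n-|\Lambda|)\times(n-|\Lambda|)$ influence matrix satisfies $\|\mathcal{I}_G^{\sigma_\Lambda}\|_\infty\le 1+C(n-|\Lambda|-1)$, which gives $\lambda_{\max}(\mathcal{I}_G^{\sigma_\Lambda})-1\le C(n-|\Lambda|-1)\le Cn$, establishing $Cn$-spectral independence.

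The argument is essentially routine in the ferromagnetic setting precisely because strict positivity of $\beta,\gamma,\lambda$ prevents marginals from degenerating to $0$ or $1$; no appeal to the self-avoiding walk tree or to uniqueness is required. The ``main obstacle'' (and the reason the ferromagnetic and antiferromagnetic claims are stated separately) is really the contrast with \cref{lem:antiferroconstantsize}, where $\beta=0$ is allowed (hardcore model) and a neighbor in state $1$ forces the marginal to $0$, so the simple uniform bound $M_{\min}>0$ fails and one needs a more delicate argument.
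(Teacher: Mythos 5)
Your proof is correct and uses the same approach as the paper: the paper's own proof is given only for the antiferromagnetic variant (\cref{lem:antiferroconstantsize}), which bounds each marginal ratio $R_v$ in a compact interval of $(0,\infty)$ determined by $\beta,\gamma,\lambda,\Delta$ and then bounds each pairwise influence by the width of the corresponding interval of marginals, and the ferromagnetic case is declared identical. Your write-up just makes explicit the (standard) averaging step that reduces the conditional marginal to the fully-conditioned one-vertex computation, and correctly notes that $\gamma>1$ follows from $\beta\gamma>1$ and $\beta\le\gamma$, so the bounds are genuinely two-sided.
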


With \cref{eq:influence-main} in hand, we immediately see that by \cref{thm:spectral-influence}, 
\[
1 - \lambda_{2}(P) 
\geq \frac{1}{n} \,\prod_{i=0}^{n-2} \wrapp{1 - \frac{\eta_{i}}{n-i-1}} 
\geq \frac{1}{n} \cdot (1 - C)^{2\lceil c/\alpha \rceil -1} \cdot 
\prod_{i=0}^{n-2\lceil c/\alpha \rceil-1} \wrapp{1 - \frac{c}{\alpha} \cdot \frac{1}{n-i-1}}.
\]
Using the fact that $1-x \ge \exp(-x-x^2)$ for all $0\le x \le \frac{1}{2}$ (which can be proved straightforwardly by calculus), we get
\begin{align*}
\prod_{i=0}^{n-2\lceil c/\alpha \rceil-1} \wrapp{1 - \frac{c}{\alpha} \cdot \frac{1}{n-i-1}} 
= 
\prod_{j= 2\lceil c/\alpha \rceil}^{n-1} \wrapp{1 - \frac{c}{\alpha} \cdot \frac{1}{j}} 
\ge \exp \left( - \frac{c}{\alpha} \sum_{j= 2\lceil c/\alpha \rceil}^{n-1} \frac{1}{j} - \frac{c^2}{\alpha^2} \sum_{j= 2\lceil c/\alpha \rceil}^{n-1} \frac{1}{j^2} \right). 
\end{align*}
Now since
\[
\sum_{j= 2\lceil c/\alpha \rceil}^{n-1} \frac{1}{j} \le \sum_{j=2}^n \frac{1}{j} \le \int_1^n \frac{d x}{x} = \log n
\]
and 
\[
\sum_{j= 2\lceil c/\alpha \rceil}^{n-1} \frac{1}{j^2} \le \sum_{j=2}^\infty \frac{1}{j(j-1)} = 1,
\]
we deduce that
\[
1 - \lambda_{2}(P) \ge (1 - C)^{2\lceil c/\alpha \rceil -1} \cdot e^{-(c/\alpha)^2} \cdot n^{-(1+c/\alpha)}. 
\]
%\begin{align*}
%    1 - \lambda_{2}(P) &\geq \frac{1}{n} \prod_{i=0}^{n-2} \wrapp{1 - \frac{\eta_{i}}{n-i-1}} \geq \frac{1}{n}\cdot (1 - C)^{c/\alpha} \cdot \prod_{i=0}^{n-(c/\alpha)-2} \wrapp{1 - \frac{c}{\alpha} \cdot \frac{1}{n-i-1}} \\
%    &\geq \frac{1}{n} \cdot (1 - C)^{c/\alpha} \cdot \exp\wrapp{-\frac{c}{\alpha} \sum_{i=0}^{n-(c/\alpha)-2} \frac{1}{n-i-1}} \\
%    &\geq \frac{1}{n} \cdot (1 - C)^{c/\alpha} \cdot \exp\wrapp{-\frac{c}{\alpha}\log n} \\
%    &\geq (1 - C)^{c/\alpha} \cdot n^{-\wrapp{1 + \frac{c}{\alpha}}}.
%\end{align*}
The theorem then follows from \cref{eq:rel-mixing}. 
\end{proof}

\begin{proof}[Proof of \cref{thm:main-2-spin}]
%In this subsection, we prove \cref{thm:main-2-spin} in the case $0 \leq \beta \leq \gamma \leq 1$. 
%For the case $\sqrt{\beta\gamma} > \frac{\Delta-2}{\Delta}$ or $\gamma \le 1$, 
We leverage \cref{thm:contraction-implies-mixing} and \cref{thm:contraction-mixing-w}, which shows $O(n^{2 + \frac{c}{\alpha}})$ mixing 
as long as there is an $(\alpha,c)$-potential, 
or $O(n^{2 + \frac{2c}{\alpha}})$ mixing if there is a general $(\alpha,c)$-potential. 
We use the potential given by \cref{eq:Psi}, which is an adaptation of the potential function in \cite{LLY13} to the log marginal ratios. 
When $(\beta,\gamma,\lambda)$ is up-to-$\Delta$ unique with gap $\delta \in (0,1)$, 
it is an $(\alpha,c)$-potential or a general $(\alpha,c)$-potential by \cref{lem:potential-is-good} and \cref{lem:potential-is-good-lambdac}, with $\alpha \geq \delta/2$ and $c$ a universal constant specified by the range of parameters. 
The theorem then follows. 
%with $c \leq 18$ and $\alpha \geq \delta/2$ when $(\beta,\gamma,\lambda)$ is up-to-$\Delta$ unique with gap $0 < \delta < 1$, and so we are done. 
%Again, we note that the constant $18$ is not optimal for the hardcore and Ising models.
\end{proof}

%\subsection{Proofs for the Hardcore and Antiferromagnetic Ising models}
\begin{proof}[Proof of \cref{thm:main-hardcore}]
By \cref{lem:hardcoreparamgap} later in \cref{sec:gapped}, $\lambda \leq (1 - \delta)\lambda_{c}(\Delta)$ implies up-to-$\Delta$ uniqueness with gap $\geq\delta/4$. Since $\gamma \leq 1$, we can again appeal to \cref{lem:potential-is-good} to obtain an $(\alpha,c)$-potential with $\alpha\geq \delta/8$ and $c \leq 4$. \cref{thm:main-hardcore} then follows by \cref{thm:contraction-implies-mixing} with $O(n^{2 +32/\delta})$ mixing.
\end{proof}

\begin{proof}[Proof of \cref{thm:main-Ising}]
By \cref{lem:largebetagammaparamgap} later in \cref{sec:gapped}, $\beta \geq \beta_{c}(\Delta) + \delta(1 - \beta_{c}(\Delta))$ implies up-to-$\Delta$ uniqueness with gap $\delta$. Again, appealing to \cref{lem:potential-is-good}, we obtain an $(\alpha,c)$-potential with $\alpha \geq \delta/2$ and $c \leq 1.5$. \cref{thm:main-Ising} then follows by \cref{thm:contraction-implies-mixing} with $O(n^{2 + 3/\delta})$ mixing. 

Though we technically get $O(n^{2+3/\delta})$ by using the \cite{LLY13} potential, we can improve it to $O(n^{2+1.5/\delta})$ mixing by using the trivial identity function as the potential. 
See the first case of \cref{thm:ss19ferroregion} (proved in \cref{sec:ferro_proof_SS19}) and \cref{rmk:better-contraction}. 
\end{proof}

\subsection{Uniqueness gaps in terms of parameter paps}
\label{sec:gapped}

In this section we state and prove \cref{lem:hardcoreparamgap} and \cref{lem:largebetagammaparamgap}, which relate the parameter gaps with the uniqueness gaps. 

\begin{clm}[Hardcore Model; Lemma C.1 from \cite{ALO20}]\label{lem:hardcoreparamgap}
Fix an integer $\Delta \geq 3$, $0 < \delta < 1$, and $\beta = 0, \gamma > 0$. If $\lambda \leq (1 - \delta)\lambda_{c}(\gamma,\Delta)$, then $(\beta,\gamma,\lambda)$ is up-to-$\Delta$ unique with gap $\delta/4$.
\end{clm}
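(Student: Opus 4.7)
The plan is to reduce the problem to a one-variable monotonicity argument for each tree recursion separately. Since $\beta = 0$, the recursion simplifies to $f_d(R) = \lambda (R+\gamma)^{-d}$, so at the fixed point one has $|f_d'(R_d^*)| = dR_d^*/(R_d^* + \gamma)$. Setting $q_d := |f_d'(R_d^*)| \in [0, d)$, one rearranges $R_d^* = \gamma q_d/(d - q_d)$, giving $\lambda = R_d^*(R_d^* + \gamma)^d = \gamma^{d+1} q_d d^d/(d - q_d)^{d+1}$. Denoting $\lambda_c^{(d)} := \gamma^{d+1} d^d/(d-1)^{d+1}$, so that $\lambda_c(\gamma, \Delta) = \min_{2 \le d < \Delta} \lambda_c^{(d)}$ by the Example in the introduction, this yields the clean normalized identity
\begin{equation*}
\frac{\lambda}{\lambda_c^{(d)}} = G_d(q_d), \qquad \text{where} \quad G_d(q) := q \left( \frac{d-1}{d - q} \right)^{d+1}.
\end{equation*}
A direct derivative computation shows that $G_d$ is continuous and strictly increasing on $[0, d)$ with $G_d(1) = 1$.

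For each $d \ge 2$, the hypothesis combined with $\lambda_c(\gamma, \Delta) \le \lambda_c^{(d)}$ gives $G_d(q_d) \le 1 - \delta$. By strict monotonicity of $G_d$, it then suffices to verify
\begin{equation*}
G_d(1 - \delta/4) = (1 - \delta/4)\left(1 + \frac{\delta}{4(d-1)}\right)^{-(d+1)} \ge 1 - \delta \quad \text{for all } d \ge 2 \text{ and } \delta \in (0,1),
\end{equation*}
which would force $q_d \le 1 - \delta/4$. Using $(1 + x)^n \le e^{nx}$ together with $(d+1)/(d-1) \le 3$ for $d \ge 2$ (with equality at $d = 2$), this reduces to the scalar inequality $(1 - \delta)\, e^{3\delta/4} \le 1 - \delta/4$ on $[0, 1)$, which I would close by setting $h(\delta) := (1 - \delta/4) - (1 - \delta) e^{3\delta/4}$ and checking $h(0) = h'(0) = 0$ and $h''(\delta) > 0$. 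The degenerate case $d = 1$, where $\lambda_c^{(1)} = \infty$ makes the normalized identity degenerate, is handled separately: substituting $\lambda = \gamma^2 q_1/(1 - q_1)^2$ into the hypothesis produces $q_1 \le 1 - \delta/4$ as an elementary scalar inequality.

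The principal obstacle is the tightness at $d = 2$: the ratio $(d+1)/(d-1)$ is maximized there, so the exponential estimate $(1 + x)^n \le e^{nx}$ cannot afford any slack, and the resulting scalar inequality $(1 - \delta) e^{3\delta/4} \le 1 - \delta/4$ is tangent at the origin with $h(0) = h'(0) = 0$. The specific constant $1/4$ in the claimed gap $\delta/4$ is exactly calibrated to make $h'(0) = 0$; any larger constant (e.g., $\delta/3$) would give $h'(0) < 0$ and the inequality would fail as $\delta \to 0^+$. Once the $q_d$-parametrization is in place, every remaining step is routine.
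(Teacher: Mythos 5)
Your $q_d$-parametrization for $d \ge 2$ is correct and matches the standard reduction (essentially the route taken in Lemma~C.1 of \cite{ALO20}, which the paper cites without reproducing a proof): the identity $\lambda/\lambda_c^{(d)} = G_d(q_d)$ with $G_d(q) = q\left(\frac{d-1}{d-q}\right)^{d+1}$, the strict monotonicity of $G_d$ on $[0,d)$, the bound $(d+1)/(d-1) \le 3$ for $d \ge 2$, and the scalar inequality $(1-\delta)e^{3\delta/4} \le 1-\delta/4$ verified by second-order Taylor tangency at $\delta=0$ are all right. One detail you should make explicit: the definition of uniqueness with gap requires the \emph{strict} inequality $q_d < 1-\delta/4$; this follows because for $\delta>0$ every estimate in your chain is strict, giving $G_d(1-\delta/4) > 1-\delta \ge G_d(q_d)$.

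The real gap is the $d=1$ case, which you wave off as ``an elementary scalar inequality'' without writing it down --- and in fact, as the claim is worded (arbitrary $\gamma>0$), it is \emph{false} there. Note that $\lambda_c(\gamma,\Delta) = \min_{1<d<\Delta}\lambda_c^{(d)}$ imposes no constraint coming from $d=1$, whereas up-to-$\Delta$ uniqueness with gap $\delta/4$ does require $|f_1'(R_1^*)| < 1-\delta/4$. Take $\gamma=100$, $\Delta=3$, $\delta=1/2$: then $\lambda_c = 4\gamma^3 = 4\cdot 10^6$, so $\lambda = (1-\delta)\lambda_c = 2\cdot 10^6$ is admissible, the $d=1$ fixed point is $R_1^* = \bigl(-100+\sqrt{10^4 + 8\cdot 10^6}\bigr)/2 \approx 1365$, and $|f_1'(R_1^*)| = R_1^*/(R_1^*+\gamma) \approx 0.932 > 0.875 = 1-\delta/4$. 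So the statement implicitly requires $\gamma\le 1$ (which is all the paper actually uses: the hardcore model has $\gamma=1$, and \cite{ALO20} works exclusively at $\gamma=1$). Under $\gamma\le 1$ the $d=1$ case does close, but it is not a pure one-variable inequality: you must first combine $\gamma^{\Delta-2}\le 1$ and $(\Delta-1)^{\Delta-1}/(\Delta-2)^{\Delta}\le 4$ with the hypothesis to obtain $q_1/(1-q_1)^2 \le 4(1-\delta)$, and then $q_1 \ge 1-\delta/4$ would force $16(1-\delta/4)/\delta^2 \le 4(1-\delta)$, equivalently $4-\delta \le \delta^2-\delta^3$, which fails for all $\delta\in(0,1)$. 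Spell this out, and record the $\gamma\le 1$ restriction explicitly.
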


\begin{clm}[Large $\sqrt{\beta\gamma}$]\label{lem:largebetagammaparamgap}
Fix an integer $\Delta \geq 3$, and $0 < \delta < 1$. If $\sqrt{\beta\gamma} \geq \frac{\Delta-2}{\Delta} + \delta\wrapp{1 - \frac{\Delta-2}{\Delta}} = \frac{\Delta - 2(1-\delta)}{\Delta}$, then $(\beta,\gamma,\lambda)$ is up-to-$\Delta$ unique with gap $0 < \delta < 1$ for all $\lambda$. Note if $\beta = \gamma$, this is precisely the condition $\beta \geq \beta_{c}(\Delta) + \delta(1 - \beta_{c}(\Delta))$.
\end{clm}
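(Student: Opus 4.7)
\medskip

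\noindent\textbf{Proof proposal for \cref{lem:largebetagammaparamgap}.}

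The plan is to bound $|f_d'(R_d^*)|$ via a direct computation combined with an AM-GM estimate on the denominator, and then verify the resulting inequality algebraically under the hypothesis on $\sqrt{\beta\gamma}$. Recall the setting requires $\beta\gamma < 1$, so $s := \sqrt{\beta\gamma} \in [\tfrac{\Delta-2(1-\delta)}{\Delta},1)$ by the hypothesis of the claim together with the standing assumption $\beta\gamma<1$ in the definition of uniqueness.

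First, I would compute the derivative of the tree recursion directly. Writing $f_d(R) = \lambda \bigl(\tfrac{\beta R+1}{R+\gamma}\bigr)^d$, logarithmic differentiation yields
\[
f_d'(R) \;=\; f_d(R)\cdot \frac{d(\beta\gamma-1)}{(\beta R+1)(R+\gamma)}.
\]
At the fixed point $R=R_d^*$ we have $f_d(R_d^*)=R_d^*$, and since $\beta\gamma<1$,
\[
|f_d'(R_d^*)| \;=\; \frac{d(1-\beta\gamma)\,R_d^*}{(\beta R_d^*+1)(R_d^*+\gamma)}.
\]
Second, I would apply AM-GM to the expansion of the denominator. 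Writing $(\beta R+1)(R+\gamma) = \beta R^2 + \gamma + (1+\beta\gamma)R$ and using $\beta R^2 + \gamma \geq 2R\sqrt{\beta\gamma}$, one obtains
\[
(\beta R+1)(R+\gamma) \;\geq\; R\bigl(1+\sqrt{\beta\gamma}\bigr)^2.
\]
Substituting this at $R=R_d^*$ and factoring $1-\beta\gamma = (1-s)(1+s)$ gives the dimension-free bound
\[
|f_d'(R_d^*)| \;\leq\; \frac{d(1-s)}{1+s}.
\]

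Finally, I would finish with elementary algebra. Since $d \leq \Delta-1$, we have
\[
|f_d'(R_d^*)| \;\leq\; \frac{(\Delta-1)(1-s)}{1+s}.
\]
The hypothesis $s \geq \tfrac{\Delta-2(1-\delta)}{\Delta}$ gives $1-s \leq \tfrac{2(1-\delta)}{\Delta}$ and $1+s \geq \tfrac{2(\Delta-1+\delta)}{\Delta}$, so
\[
|f_d'(R_d^*)| \;\leq\; \frac{(\Delta-1)(1-\delta)}{\Delta-1+\delta} \;<\; 1-\delta,
\]
where the strict inequality uses $\delta>0$. This establishes up-to-$\Delta$ uniqueness with gap $\delta$. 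There is no genuine obstacle here; the one thing to be careful about is picking the right AM-GM grouping (pairing $\beta R^2$ with $\gamma$, not $\beta R^2$ with $\gamma R$) so that the factor of $R_d^*$ cancels cleanly and the bound becomes independent of both $\lambda$ and $R_d^*$, which is precisely what lets the claim hold uniformly in $\lambda$.
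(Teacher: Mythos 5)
Your proposal is correct and follows essentially the same route as the paper: compute $f_d'(R_d^*)$ at the fixed point, apply AM--GM to bound the denominator so the $R_d^*$-dependence cancels, and finish with the elementary estimate on $\tfrac{1-s}{1+s}$. The only cosmetic difference is that the paper delegates the AM--GM step to the standalone \cref{lem:psimaximizer} (since it is reused elsewhere), whereas you carry it out inline; the underlying inequality $\beta R^2+\gamma \ge 2R\sqrt{\beta\gamma}$ is the same.
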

\begin{proof}
Consider the univariate recursion for the marginal ratios with $d < \Delta$ children $f_{d}(R) = \lambda \wrapp{\frac{\beta R + 1}{R + \gamma}}^{d}$. Differentiating, we have
\begin{align*}
    f_{d}'(R) &= d\lambda \wrapp{\frac{\beta R + 1}{R + \gamma}}^{d-1} \cdot \wrapp{\frac{\beta}{R + \gamma} - \frac{\beta R + 1}{(R + \gamma)^{2}}} = -d(1 - \beta\gamma) \lambda \wrapp{\frac{\beta R + 1}{R + \gamma}}^{d} \cdot \frac{1}{(\beta R + 1)(R + \gamma)} \\
    &= -d (1 - \beta\gamma) \cdot \frac{f_{d}(R)}{(\beta R + 1)(R + \gamma)}.
\end{align*}
At the unique fixed point $R_{d}^{*}$, we have $f_{d}(R_{d}^{*}) = R_{d}^{*}$ so
\begin{align*}
    \abs{f_{d}'(R_{d}^{*})} = d(1 - \beta\gamma) \frac{R_{d}^{*}}{(\beta R_{d}^{*} + 1)(R_{d}^{*} + \gamma)}.
\end{align*}
By \cref{lem:psimaximizer}, 
%we have $\frac{R_{d}^{*}}{(\beta R_{d}^{*} + 1)(R_{d}^{*} + \gamma)} \leq \frac{1}{(1 + \sqrt{\beta\gamma})^{2}}$, whence 
we have the upper bound
\begin{align*}
    \abs{f_{d}'(R_{d}^{*})} \leq d \cdot \frac{1 - \beta\gamma}{(1 + \sqrt{\beta\gamma})^{2}} = d \cdot \frac{1 - \sqrt{\beta\gamma}}{1 + \sqrt{\beta\gamma}}.
\end{align*}
Since we assumed $\sqrt{\beta\gamma} \geq \frac{\Delta-2(1-\delta)}{\Delta}$, we obtain
\begin{align*}
    d \cdot \frac{1 - \sqrt{\beta\gamma}}{1 + \sqrt{\beta\gamma}} \leq d \cdot \frac{\Delta - (\Delta - 2(1-\delta))}{\Delta + (\Delta - 2(1-\delta))} = d \cdot \frac{1-\delta}{\Delta - 1 + \delta} \leq (1-\delta)\frac{d}{\Delta-1}.
\end{align*}
As this is at most $1 - \delta$ for all $d < \Delta$, we have up-to-$\Delta$ uniqueness with gap $\delta$.
\end{proof}

%We provide a proof for completeness, and which generalizes to all 2-spin systems with a hard constraint ($\beta = 0$).
%\begin{proof}
%Fix $1 \leq d < \Delta$, and consider the univariate tree recursion $f_{d}(R) = \lambda \wrapp{\frac{1}{R + \gamma}}^{d}$. We have
%\begin{align*}
%    f_{d}'(R) = -d\lambda \wrapp{\frac{1}{R+\gamma}}^{d+1} = \frac{-d}{R + \gamma} \cdot \underset{=f_{d}(R)}{\underbrace{\lambda\wrapp{\frac{1}{R+\gamma}}^{d}}} = -\frac{d \cdot f_{d}(R)}{R + \gamma}
%\end{align*}
%At the fixed point $R_{d}^{*}$, we have $\abs{f_{d}'(R_{d}^{*})} = d \cdot \frac{R_{d}^{*}}{R_{d}^{*} + \gamma} d \cdot \wrapp{1 - \frac{\gamma}{R_{d}^{*} + \gamma}}$. We have $\abs{f_{d}'(R_{d}^{*})} \leq 1 - f(\delta)$ if and only if
%\begin{align*}
    
%\end{align*}
%Assume $\lambda \leq (1 - \delta)\lambda_{c}(\gamma,\Delta)$. We hav
%\end{proof}

\subsection{Spectral independence bounds for constant-size graphs}\label{subsec:constantdimensions}

In this section, we prove spectral independence bounds for graphs with fewer than $O(c/\alpha)$-many vertices, since for graphs with such few vertices, our bounds based on contraction of the tree recursions become trivial.
\begin{proof}[Proof of \cref{lem:antiferroconstantsize}]
If $R_{v}$ denotes the marginal ratio of a vertex $v \in G$, then $R_{v} \geq \lambda \beta^{\Delta}$. In the case $\gamma \leq 1$, we have $R_{v} \leq \lambda/\gamma^{\Delta}$ as well; if $\gamma > 1$, we have $R_{v} \leq \lambda$. It follows that we immediately have the bounds
\begin{align*}
    \abs{\mathcal{I}_{G} (u \sra v)} \leq \begin{cases}
        \abs{\frac{\lambda}{\lambda + \gamma^{\Delta}} - \frac{\lambda\beta^{\Delta}}{1 + \lambda \beta^{\Delta}}} = \frac{\lambda(1 - \beta^{\Delta}\gamma^{\Delta})}{(\lambda + \gamma^{\Delta})(1 + \lambda \beta^{\Delta})}, &\quad\text{if } \gamma \leq 1 \\
        \abs{\frac{\lambda}{1 + \lambda} - \frac{\lambda \beta^{\Delta}}{1 + \lambda\beta^{\Delta}}} = \frac{\lambda (1 - \beta^{\Delta})}{(\lambda + 1)(1 + \lambda \beta^{\Delta})}, &\quad\text{o.w.}
    \end{cases}
\end{align*}
for all $u,v \in G$. Note that these constants are less than $1$, and only depend on $\beta,\gamma,\lambda,\Delta$, yielding the first claim.

Now, we proceed to remove the dependence on $\Delta$ when up-to-$\Delta$ uniqueness holds. We have the following cases:
\begin{enumerate}
    \item If $\gamma > 1$, we immediately obtain a bound of $\frac{\lambda}{1 + \lambda}$ which is independent of $\Delta$.
    \item If $\beta = 0$ and $\gamma \leq 1$, then $\frac{\lambda(1 - \beta^{\Delta}\gamma^{\Delta})}{(\lambda + \gamma^{\Delta})(1 + \lambda \beta^{\Delta})} = \frac{\lambda}{\lambda + \gamma^{\Delta}} \leq \frac{\lambda}{\gamma^{\Delta}}$. Since $(\beta,\gamma,\lambda)$ is up-to-$\Delta$ unique, we must have $\lambda \leq \lambda_{c}(\gamma,\Delta) = \min_{1 < d < \Delta} \frac{\gamma^{d+1}d^{d}}{(d-1)^{d+1}} \leq \frac{\gamma^{\Delta}(\Delta-1)^{\Delta-1}}{(\Delta-2)^{\Delta}} \leq \gamma^{\Delta} \cdot O(1/\Delta)$. It follows that $\frac{\lambda}{\gamma^{\Delta}} \leq O(1/\Delta)$.
    \item If $\sqrt{\beta\gamma} > \frac{\Delta-2}{\Delta}$ and $\gamma \leq 1$, then
    \begin{align*}
        \frac{\lambda (1 - \beta^{\Delta}\gamma^{\Delta})}{(\lambda + \gamma^{\Delta})(1 + \lambda \beta^{\Delta})} \leq 1 - \beta^{\Delta}\gamma^{\Delta} \approx 1 - e^{-2}.
    \end{align*}
    \item If $\sqrt{\beta\gamma} \leq \frac{\Delta-2}{\Delta}$, then let $\Delta_{0}$ be the maximal $1 < d < \Delta$ such that $\sqrt{\beta\gamma} > \frac{d-2}{d}$. If $\lambda \leq \lambda_{c}(\beta,\gamma,\Delta)$, then by \cref{lem:twocriticalthresholdbounds}, we have
    \begin{align*}
        \frac{\lambda (1 - \beta^{\Delta}\gamma^{\Delta})}{(\lambda + \gamma^{\Delta})(1 + \lambda \beta^{\Delta})} \leq \frac{\lambda}{\gamma^{\Delta}} \leq O(\Delta_{0}/\Delta).
    \end{align*}
    If $\lambda \geq \overline{\lambda}_{c}(\beta,\gamma,\Delta)$, then again by \cref{lem:twocriticalthresholdbounds}, we have
    \[
        \frac{\lambda (1 - \beta^{\Delta}\gamma^{\Delta})}{(\lambda + \gamma^{\Delta})(1 + \lambda \beta^{\Delta})} \leq \frac{1}{\lambda \beta^{\Delta}} \leq O(\Delta_{0}/\Delta).\qedhere
    \]
\end{enumerate}
\end{proof}
\begin{proof}[Proof of \cref{lem:ferroconstantsize}]
The proof is identical to the antiferromagnetic case and we omit it here.
\end{proof}

\section{Proof of \texorpdfstring{\cref{lem:2spin-sys-property}}{Lemma 12} (Parts 1 and 2)}
\label{sec:proof-2spin}
\begin{proof}[Proof of \cref{lem:2spin-sys-property} (Parts 1 and 2)]
%The first two parts are standard. 
%We include the proofs of these two facts for completeness. 
To see the first equality, we compute directly and get
\begin{align*}
\left(\optheta{v}\right) \log Z^\sL_G 
&= \frac{1}{Z^\sL_G} \cdot \left(\optheta{v}\right) Z^\sL_G\\ 
&= \frac{1}{Z^\sL_G} 
\sum_{\sigma \in \{0,1\}^{V\backslash \Lambda}} \left(\optheta{v}\right)
\left( \beta^{m_1(\sigma)} \gamma^{m_0(\sigma)} \prod_{w\in V} \lambda_w^{\sigma_w} \right)\\
&= \frac{1}{Z^\sL_G} 
\sum_{\sigma \in \{0,1\}^{V\backslash \Lambda}} \sigma_v
\left( \beta^{m_1(\sigma)} \gamma^{m_0(\sigma)} \prod_{w\in V} \lambda_w^{\sigma_w} \right)\\
&= \sum_{\sigma \in \{0,1\}^{V\backslash \Lambda}} \sigma_v \cdot \mu_G(\sigma \mid \sL)
= \marginal_G^\sL(v). 
\end{align*}
For Part 2, using the result above, we can also get
\begin{align*}
&\left(\optheta{v}\right) \left(\optheta{u}\right) \log Z^\sL_G\\ 
={}& \left(\optheta{v}\right) 
\left( \frac{1}{Z^\sL_G} \cdot \left(\optheta{u}\right) Z^\sL_G \right)\\
={}& \frac{1}{Z^\sL_G} \cdot \left(\optheta{v}\right) \left(\optheta{u}\right) Z^\sL_G 
- 
\frac{1}{(Z^\sL_G)^2} \cdot \left(\optheta{v}\right) Z^\sL_G \cdot \left(\optheta{u}\right) Z^\sL_G\\ 
={}& \frac{1}{Z^\sL_G} \cdot \left(\optheta{v}\right)
\left( \sum_{\sigma \in \{0,1\}^{V\backslash \Lambda}} \sigma_u
\left( \beta^{m_1(\sigma)} \gamma^{m_0(\sigma)} \prod_{w\in V} \lambda_w^{\sigma_w} \right) \right)
- \marginal_G^\sL(u) \cdot \marginal_G^\sL(v)\\
={}& \frac{1}{Z^\sL_G} \sum_{\sigma \in \{0,1\}^{V\backslash \Lambda}} \sigma_u \cdot
\left(\optheta{v}\right) 
\left( \beta^{m_1(\sigma)} \gamma^{m_0(\sigma)} \prod_{w\in V} \lambda_w^{\sigma_w} \right) 
- \marginal_G^\sL(u) \cdot \marginal_G^\sL(v)\\
={}& \frac{1}{Z^\sL_G} \sum_{\sigma \in \{0,1\}^{V\backslash \Lambda}} \sigma_u \cdot \sigma_v
\left( \beta^{m_1(\sigma)} \gamma^{m_0(\sigma)} \prod_{w\in V} \lambda_w^{\sigma_w} \right) 
- \marginal_G^\sL(u) \cdot \marginal_G^\sL(v)\\
={}& \sum_{\sigma \in \{0,1\}^{V\backslash \Lambda}} \sigma_u \cdot \sigma_v \cdot \mu_G(\sigma \mid \sL) 
- \marginal_G^\sL(u) \cdot \marginal_G^\sL(v)\\
={}& \CovM_G^\sL(u,v). \qedhere
\end{align*}
%Finally, 
%\begin{align*}
%\left(\optheta{v}\right) \log R_G^\sL(u) 
%= \left(\optheta{v}\right) \log \left( \frac{\marginal_G^\sL(u)}{1-\marginal_G^\sL(u)} \right) 
%= \frac{ \left(\optheta{v}\right) \marginal_G^\sL(u)  }{\marginal_G^\sL(u) \left(1-\marginal_G^\sL(u)\right)} 
%= \frac{\CovM_G^\sL(u,v)}{\CovM_G^\sL(u,u)}. 
%\end{align*}
%It remains to show that 
%\[
%\II_G^\sL(u \sra v) 
%= \frac{\CovM_G^\sL(u,v)}{\CovM_G^\sL(u,u)}, 
%\]
%which actually holds for any two binary random variables. 
%To see this, we first compute $\CovM_G^\sL(u,u) \cdot \II_G^\sL(u \sra v)$ and get
%\begin{align*}
%&\CovM_G^\sL(u,u) \cdot \II_G^\sL(u \sra v)\\ 
%={}& \mu_G(u \seq 1 \mid \sL) \cdot \mu_G(u \seq 0 \mid \sL) \cdot
%\left[ \mu_G(v \seq 1 \mid u \seq 1, \,\sL) - \mu_G(v \seq 1 \mid u \seq 0, \,\sL) \right]\\ 
%={}& \mu_G(u \seq 1, v \seq 1 \mid \sL) \cdot \mu_G(u \seq 0 \mid \sL) - \mu_G(u \seq 1 \mid \sL) \cdot \mu_G(u \seq 0, v \seq 1 \mid \sL)\\ 
%={}&  \mu_G(u \seq 1, v \seq 1 \mid \sL) \cdot \mu_G(u \seq 0, v \seq 0 \mid \sL) - \mu_G(u \seq 1, v \seq 0 \mid \sL) \cdot \mu_G(u \seq 0, v \seq 1 \mid \sL). 
%\end{align*}
%Meanwhile, the covariance can be written as
%\begin{align*}
%\CovM_G^\sL(u,v) &= \mu_G(u \seq 1, v \seq 1 \mid \sL) - \mu_G(u \seq 1 \mid \sL) \cdot \mu_G(v \seq 1 \mid \sL)\\ 
%&= \mu_G(u \seq 1, v \seq 1 \mid \sL) \cdot \mu_G(u \seq 0, v \seq 0 \mid \sL) - \mu_G(u \seq 1, v \seq 0 \mid \sL) \cdot \mu_G(u \seq 0, v \seq 1 \mid \sL).
%\end{align*}
%This shows that $\II_G^\sL(u \sra v) = \CovM_G^\sL(u,v) / \CovM_G^\sL(u,u)$ and thus establishes part 3. 
\end{proof}

\section{A technical lemma for \texorpdfstring{$\Psi$}{potential}}
\label{sec:well-def-potential}
The following lemma implies that the potential $\Psi$ given by \cref{eq:Psi} is well-defined. 

\begin{lem}
For all $\beta,\gamma>0$ such that $\beta\gamma<1$, we have 
\[
\int_{-\infty}^{+\infty} \sqrt{\frac{(1-\beta\gamma)e^{y}}{(\beta e^{y} + 1)(e^{y} + \gamma)}} < +\infty. 
\]
\end{lem}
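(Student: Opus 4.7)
The plan is to split the integral at $y = 0$ and bound the integrand on each half-line by an explicit exponential. Write the integrand as $I(y) = \sqrt{\frac{(1-\beta\gamma)e^y}{(\beta e^y + 1)(e^y + \gamma)}}$. Since $\beta,\gamma > 0$, both factors in the denominator are strictly positive for every real $y$, so the integrand is continuous on $\mathbb{R}$ and only the behavior at $\pm\infty$ is in question.

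For the tail $y \geq 0$, I would drop the $+1$ and $+\gamma$ in the denominator to get the lower bound $(\beta e^y + 1)(e^y + \gamma) \geq \beta e^y \cdot e^y = \beta e^{2y}$, yielding
\[
I(y) \leq \sqrt{\frac{(1-\beta\gamma)e^y}{\beta e^{2y}}} = \sqrt{\frac{1-\beta\gamma}{\beta}}\, e^{-y/2}.
\]
For the tail $y \leq 0$, I would instead drop the $\beta e^y$ and $e^y$ terms to get $(\beta e^y + 1)(e^y + \gamma) \geq 1\cdot \gamma = \gamma$, yielding
\[
I(y) \leq \sqrt{\frac{(1-\beta\gamma) e^y}{\gamma}} = \sqrt{\frac{1-\beta\gamma}{\gamma}}\, e^{y/2}.
\]
Both upper bounds are elementary exponentials whose integrals on $[0,\infty)$ and $(-\infty,0]$ respectively equal $2\sqrt{(1-\beta\gamma)/\beta}$ and $2\sqrt{(1-\beta\gamma)/\gamma}$. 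Adding these gives an explicit finite upper bound on $\int_{-\infty}^{+\infty} I(y)\, dy$.

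There is no substantive obstacle here; the assumptions $\beta, \gamma > 0$ are exactly what is needed so that the denominator does not vanish and the two exponential bounds are valid. (The condition $\beta\gamma < 1$ is only used to ensure the quantity under the square root is nonnegative, so that $\Psi$ is real-valued and monotone.) I would note in passing that an equivalent derivation proceeds via the substitution $x = e^y$, which converts the integral into $\sqrt{1-\beta\gamma}\int_0^\infty \frac{dx}{\sqrt{x(\beta x+1)(x+\gamma)}}$; this is precisely the integral defining the \cite{LLY13} potential $\Phi$, and its convergence follows from the integrable singularity $x^{-1/2}$ near $0$ and the decay $x^{-3/2}$ at infinity, giving an alternative (and perhaps more illuminating) viewpoint.
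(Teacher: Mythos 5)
Your proof is correct and takes essentially the same approach as the paper: split at $y=0$ and, on each half-line, bound the denominator below by a single dominant term (here $\beta e^{2y}$ for $y\ge 0$ and $\gamma$ for $y\le 0$) to obtain an integrable exponential majorant. The paper's version is identical in substance, merely dropping the harmless $\sqrt{1-\beta\gamma}<1$ prefactor, and your remark about the substitution $x=e^y$ recovering the \cite{LLY13} potential $\Phi$ is a nice complementary observation.
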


\begin{proof}
For the $+\infty$ side we have 
\[
\int_{0}^{+\infty} \sqrt{\frac{(1-\beta\gamma)e^{y}}{(\beta e^{y} + 1)(e^{y} + \gamma)}} 
= \int_{0}^{+\infty} \sqrt{\frac{1-\beta\gamma}{\beta e^y + \gamma e^{-y} + \beta\gamma + 1}} 
< \int_{0}^{+\infty} \frac{1}{\sqrt{\beta e^y}} < +\infty.
\]
Similarly, for the $-\infty$ side we have 
\[
\int_{-\infty}^{0} \sqrt{\frac{(1-\beta\gamma)e^{y}}{(\beta e^{y} + 1)(e^{y} + \gamma)}} 
< \int_{-\infty}^{0} \frac{1}{\sqrt{\gamma e^{-y}}} < +\infty. \qedhere
\]
\end{proof}

\section{Mixing by the potential method: Proof of \texorpdfstring{\cref{thm:contraction-mixing-w}}{Theorem 24}}
\label{sec:weighted-contraction-mixing}

In this section, we prove \cref{thm:contraction-mixing-w} in the same way of \cref{thm:contraction-implies-mixing}, as outlined in \cref{sec:intro-proof}. 
The major difference here is that we consider a weighted sum of absolute influences $\sum_{v\in V\backslash \Lambda} \rho_v \cdot \left|\II_G^\sL(r \sra v) \right|$ where $\rho: V \to \R^+$ is a weight function. 
This is sufficient for us to bound the eigenvalue of the influence matrix, as indicated by \cref{lem:w-row-sum}. 
We will choose the weight of a vertex $v$ to be $\rho_v = \Delta_v$, the degree of $v$. 
The following lemma provides us an upper bound on the weighted sum of absolute influences to distance $k$, given a general $(\alpha,c)$-potential. 
In particular, it generalizes \cref{lem:bound-for-tree}.

\begin{lem}
\label{lem:bound-for-tree-weighted}
%Let $\Psi:[-\infty,+\infty] \to (-\infty,+\infty)$ be a differentiable and increasing potential function. 
If there exists a general $(\alpha,c)$-potential function $\Psi$ with respect to $\Delta$ and $(\beta,\gamma,\lambda)$ where $\alpha \in (0,1)$ and $c>0$, then for every $\Lambda \subseteq V_T \backslash \{r\}$, $\sL \in \{0,1\}^\Lambda$ and all integers $k \geq 1$, 
\begin{equation*}
    \sum_{v \in L_{r}(k)} \Delta_v \cdot \abs{\mathcal{I}_{T}^\sL(r\sra v)} \leq 2c \cdot (1 - \alpha)^{k-1} \cdot \Delta_r
\end{equation*}
where $L_r(k)$ denote the set of all free vertices at distance $k$ away from $r$. 
\end{lem}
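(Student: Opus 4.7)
The plan is to mimic the proof of \cref{lem:bound-for-tree}, which itself follows from \cref{lem:A-and-B}, but now carrying the extra $\Delta_v$ factor through by exploiting the stronger \hyperlink{cond:boundedness'}{General Boundedness} condition in place of \hyperlink{cond:boundedness}{Boundedness}. The key identification is that for every non-root vertex $v$ of $T$, we have $\Delta_v = d_v + 1$, where $d_v$ is the number of children of $v$ in the subtree $T_v$; this is precisely the index $d$ appearing in the interval $J_{d}$ that contains $\log R_v$.

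First I would establish a weighted analogue of the internal induction (equation \cref{eq:induct} in the proof of \cref{lem:A-and-B}): for every non-root vertex $u$ of $T$ and every integer $k \geq 0$,
\[
\sum_{v \in L_u(k)} (d_v + 1)\, \psi(\log R_u)\, \abs{\II_T^\sL(u \sra v)} \leq (1 - \alpha)^k \cdot \max_{v \in L_u(k)} (d_v + 1)\, \psi(\log R_v).
\]
The induction would proceed exactly as in the proof of \cref{lem:A-and-B}: the base case $k=0$ is immediate (both sides equal $(d_u+1)\psi(\log R_u)$ when $u$ is free), and the inductive step combines \cref{lem:inf-chain-rule} and \cref{lem:inf-h} with the \hyperlink{cond:contraction}{Contraction} condition, which ensures $\sum_i \frac{\psi(\log R_u)}{\psi(\log R_{w_i})}\abs{h(\log R_{w_i})} \leq 1 - \alpha$ when $u$ has children $w_1,\dots,w_{d_u}$. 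The $(d_v+1)$ weight merely rides along on both sides.

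With this in hand, I would peel off the root via the chain rule and \cref{lem:inf-h}, writing
\[
\sum_{v \in L_r(k)} \Delta_v \abs{\II_T^\sL(r \sra v)} = \sum_{i=1}^{\Delta_r} \abs{h(\log R_{u_i})} \sum_{v \in L_{u_i}(k-1)} (d_v+1)\, \abs{\II_T^\sL(u_i \sra v)},
\]
then apply the weighted induction to each subtree $T_{u_i}$ (after dividing by $\psi(\log R_{u_i})$) to obtain
\[
\sum_{v \in L_r(k)} \Delta_v \abs{\II_T^\sL(r \sra v)} \leq (1-\alpha)^{k-1} \sum_{i=1}^{\Delta_r} \frac{\abs{h(\log R_{u_i})}}{\psi(\log R_{u_i})} \cdot (d_{v_i^*} + 1)\, \psi(\log R_{v_i^*}),
\]
where $v_i^* \in L_{u_i}(k-1)$ achieves the maximum. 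The final step is to invoke \hyperlink{cond:boundedness'}{General Boundedness} with $d_1 = d_{u_i}$ and $d_2 = d_{v_i^*}$, bounding each summand by $\frac{2c\,(d_{v_i^*}+1)}{d_{u_i} + d_{v_i^*} + 2} \leq 2c$ since $d_{u_i} \geq 0$. Summing over the $\Delta_r$ children of $r$ yields the claimed bound $2c \cdot \Delta_r \cdot (1-\alpha)^{k-1}$.

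The main conceptual point, which is also the chief obstacle, is recognizing that \hyperlink{cond:boundedness'}{General Boundedness} is precisely designed to absorb one factor of $(d_v + 1)$: the denominator $d_{u_i} + d_{v_i^*} + 2$ swallows the $(d_{v_i^*}+1)$ weight with room to spare. This is why the weaker General Boundedness (rather than $c/\Delta$ Boundedness) suffices despite the extra degree factor in the sum, and it is exactly the flexibility needed to handle the antiferromagnetic regime $\sqrt{\beta\gamma} \le \tfrac{\Delta-2}{\Delta}$ with $\gamma > 1$, where unbounded per-vertex influences force one to weight by $\Delta_v$ rather than bound row sums directly.
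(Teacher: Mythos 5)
Your proof is correct and takes essentially the same route as the paper: the paper proves a generic weighted analogue of \cref{lem:A-and-B} (with an arbitrary weight function $\rho$) and then specializes to $\rho_v = \Delta_v$, whereas you unroll the induction directly with $\rho_v = \Delta_v = d_v+1$ built in — the weight rides through the induction identically, and the final invocation of General Boundedness (noting $\frac{d_{v_i^*}+1}{d_{u_i}+d_{v_i^*}+2} \leq 1$) matches the paper's bound on $\Delta_r A_\Psi B_\Psi^\rho$. Your per-child bound before summing over $i$ is a marginally tighter formulation than the paper's separate $\max_u$ and $\max_v$, but the conclusion is the same.
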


To prove \cref{lem:bound-for-tree-weighted}, we first state the following generalization of \cref{lem:A-and-B} for any weight function $\rho$. The proof of \cref{lem:A-and-B-weighted} is identical to \cref{lem:A-and-B} and we omit here. 
\begin{lem}
\label{lem:A-and-B-weighted}
Let $\Psi: [-\infty,+\infty] \to (-\infty,+\infty)$ be a differentiable and increasing (potential) function with image $S = \Psi[-\infty,+\infty]$ and derivative $\psi = \Psi'$. 
Denote the degree of the root $r$ by $\Delta_r$. 
Then for every integer $k \ge 1$, 
%we have the following two bounds:
\begin{equation*}
    \sum_{v \in L_r(k)} \rho_v \cdot \left| \mathcal{I}_{T}^\sL(r\sra v) \right|
    \le \Delta_r A_\Psi B_\Psi^\rho 
    \left( \max_{1\le d < \Delta} \,\sup_{\tilde{\mybf{y}} \in S^d} \norm{\grad H_{d}^{\Psi} (\tilde{\mybf{y}})}_1 \right)^{k-1}
\end{equation*}
where 
\[
A_\Psi = \max_{u\in L_r(1)} \left\{ \frac{|h(\log R_u)|}{\psi(\log R_u)} \right\} 
\quad\text{and}\quad 
B_\Psi^\rho = \max_{v\in L_r(k)} \left\{ \rho_v \cdot \psi(\log R_v) \right\}. 
\]
\end{lem}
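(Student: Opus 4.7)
The plan is to invoke \cref{lem:A-and-B-weighted} with weight function $\rho_v = \Delta_v$ (the degree of $v$ in the tree $T$), and then bound each of the two quantities appearing on its right-hand side using the two conditions of a general $(\alpha,c)$-potential. The decay factor
$$\left( \max_{1 \le d < \Delta} \sup_{\tilde{\mybf{y}} \in S^d} \norm{\grad H_d^\Psi(\tilde{\mybf{y}})}_1 \right)^{k-1}$$
is immediately controlled by $(1-\alpha)^{k-1}$ via the \hyperlink{cond:contraction}{Contraction} condition of \cref{defn:potential-weighted}, so the only nontrivial work is to show that $A_\Psi \cdot B_\Psi^\rho \le 2c$; combined with the factor of $\Delta_r$ already present in \cref{lem:A-and-B-weighted}, this yields the claimed bound.

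To bound $A_\Psi \cdot B_\Psi^\rho$, I would pick maximizers $u^\ast \in L_r(1)$ and $v^\ast \in L_r(k)$ realizing the definitions of $A_\Psi$ and $B_\Psi^\rho$ respectively. The crucial structural observation is that if $w \in V_T \setminus \{r\}$ has degree $\Delta_w$ in $T$, then in the rooted subtree $T_w$ the vertex $w$ has exactly $\Delta_w - 1$ children (its parent in $T$ is removed when passing to $T_w$). Combined with the standard range analysis of the tree recursion (for $\beta\gamma < 1$, each factor $(\beta R + 1)/(R + \gamma)$ lies in $[\beta, 1/\gamma]$), this shows $\log R_w \in J_{\Delta_w - 1}$ for every $w \ne r$. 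Since $T$ has maximum degree at most $\Delta$, we have $0 \le \Delta_w - 1 < \Delta$, placing these log ratios inside the range allowed by the \hyperlink{cond:boundedness'}{General Boundedness} condition.

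Applying \hyperlink{cond:boundedness'}{General Boundedness} with $d_1 = \Delta_{u^\ast} - 1$, $d_2 = \Delta_{v^\ast} - 1$, $y_1 = \log R_{u^\ast}$, and $y_2 = \log R_{v^\ast}$ gives
$$\frac{\psi(\log R_{v^\ast})}{\psi(\log R_{u^\ast})} \cdot |h(\log R_{u^\ast})| \le \frac{2c}{\Delta_{u^\ast} + \Delta_{v^\ast}}.$$
Multiplying both sides by $\Delta_{v^\ast}$ and using $\Delta_{v^\ast} \le \Delta_{u^\ast} + \Delta_{v^\ast}$ yields
$$A_\Psi \cdot B_\Psi^\rho = \frac{|h(\log R_{u^\ast})|}{\psi(\log R_{u^\ast})} \cdot \Delta_{v^\ast} \cdot \psi(\log R_{v^\ast}) \le \frac{2c \cdot \Delta_{v^\ast}}{\Delta_{u^\ast} + \Delta_{v^\ast}} \le 2c,$$
as desired. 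Substituting this and the contraction bound into \cref{lem:A-and-B-weighted} gives exactly the stated inequality.

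I do not expect any serious obstacle; the only care needed is the bookkeeping matching a vertex's degree in $T$ to the correct interval $J_d$ (offset by one because $T_w$ excludes the parent), together with the boundary case that a free leaf $w$ has $\Delta_w = 1$, so $d = 0$ and $\log R_w = \log \lambda \in J_0$, which fits the \hyperlink{cond:boundedness'}{General Boundedness} hypothesis without issue.
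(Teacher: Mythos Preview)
You have proved the wrong statement. The lemma you were asked to prove is \cref{lem:A-and-B-weighted}, but your argument \emph{invokes} \cref{lem:A-and-B-weighted} as a black box and then derives the bound
\[
\sum_{v \in L_r(k)} \Delta_v \cdot |\mathcal{I}_T^{\sigma_\Lambda}(r \sra v)| \le 2c \cdot (1-\alpha)^{k-1} \cdot \Delta_r,
\]
which is exactly the statement of \cref{lem:bound-for-tree-weighted}. In other words, your proposal is a (correct, and essentially matching the paper's) proof of \cref{lem:bound-for-tree-weighted}, but it is circular as a proof of \cref{lem:A-and-B-weighted}.

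To actually prove \cref{lem:A-and-B-weighted}, you need to establish the inequality for an \emph{arbitrary} potential $\Psi$ and weight function $\rho$, without assuming any $(\alpha,c)$-potential structure. The paper states that the proof is identical to that of \cref{lem:A-and-B}: one uses the chain rule for influences on trees (\cref{lem:inf-chain-rule}) together with the formula $\mathcal{I}_T^{\sigma_\Lambda}(v \sra u) = h(\log R_u)$ for a child $u$ of $v$ (\cref{lem:inf-h}) to split the sum over $L_r(k)$ through the children of $r$, pull out the factor $A_\Psi$, and then prove by induction on $k$ that for every non-root vertex $u$,
\[
\sum_{v \in L_u(k)} \rho_v \cdot \psi(\log R_u) \cdot |\mathcal{I}_T^{\sigma_\Lambda}(u \sra v)| \le B_\Psi^\rho \cdot \left( \max_{1 \le d < \Delta} \sup_{\tilde{\mybf{y}} \in S^d} \norm{\grad H_d^\Psi(\tilde{\mybf{y}})}_1 \right)^k.
\]
The only change from the unweighted case is that the base case and the factor pulled out at the end carry the weight $\rho_v$, which is absorbed into $B_\Psi^\rho$ instead of $B_\Psi$.
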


We then prove \cref{lem:bound-for-tree-weighted} and \cref{thm:contraction-mixing-w}. 

\begin{proof}[Proof of \cref{lem:bound-for-tree-weighted}]
Denote the degree of a vertex $v\in V_T \backslash\{r\}$ by $\Delta_v$, and the degree of $v$ in the subtree $T_v$ by $d_v = \Delta_v -1$. 
Pick the weights of vertices to be $\rho_v = \Delta_v$ for all $v\in V_T$. 
Since $\Psi$ is a general $(\alpha,c)$-potential, the \hyperlink{cond:contraction}{Contraction} condition implies that 
\[
\max_{1\le d < \Delta} \sup_{\tilde{\mybf{y}} \in S^d} \norm{\grad H_{d}^{\Psi} (\tilde{\mybf{y}})}_1 \le 1-\alpha. 
\]
Since $\log R_v \in J_{d_v}$ by the definition of $J_d$, 
the \hyperlink{cond:boundedness'}{General Boundedness} condition implies that for all $u\in L_r(1)$ and $v\in L_r(k)$, 
\[
\frac{\psi(\log R_v)}{\psi(\log R_u)} \cdot |h(\log R_u)| \le \frac{2c}{\Delta_u + \Delta_v}. 
\]
Therefore, we get
\[
\Delta_{r} A_\Psi B_\Psi^\rho = \Delta_r \cdot \max_{u\in L_r(1)} \left\{ \frac{|h(\log R_u)|}{\psi(\log R_u)} \right\} \cdot \max_{v\in L_r(k)} \left\{ \Delta_v \cdot \psi(\log R_v) \right\} \le 2c \cdot \Delta_r.  
\]
%and
%\[
%B_\Psi = \max_{v\in L_r(k)} \left\{ \psi(\log R_v) \right\} \le \sqrt{\frac{c}{\Delta}}. 
%\]
The lemma then follows immediately from \cref{lem:A-and-B-weighted}. 
\end{proof}

\begin{proof}[Proof of \cref{thm:contraction-mixing-w}]
The proof of \cref{thm:contraction-mixing-w} is almost identical to \cref{thm:contraction-implies-mixing}. We point out that the only difference here is that we consider the weighted sum of absolute influences of a given vertex. Since the SAW tree preserve degrees of vertices, we can still apply \cref{lem:Tsaw-G}. Then, combining \cref{thm:spectral-influence}, \cref{lem:w-row-sum}, \cref{lem:Tsaw-G} and \cref{lem:bound-for-tree-weighted}, we complete the proof of the theorem. 
\end{proof}

\section{Verifying a good potential: Boundedness}
\label{sec:boundedness}
In this subsection, we show the \hyperlink{cond:boundedness}{Boundedness} or \hyperlink{cond:boundedness'}{General Boundedness} condition for our potential function $\Psi$ defined by \cref{eq:Psi} in different ranges of parameters. 
Combining \cref{lem:contraction-Psi}, we complete the proofs of \cref{lem:potential-is-good} and \cref{lem:potential-is-good-lambdac}. 

In \cref{sec:uniqueness-background} we give background on the uniqueness region of the parameters $(\beta,\gamma,\lambda)$, based on the work of \cite{LLY13}. 
We then show \hyperlink{cond:boundedness}{Boundedness} and \hyperlink{cond:boundedness'}{General Boundedness} in \cref{sec:boundedness-proof}. Proofs of technical lemmas are left to \cref{sec:boundedness-lemma}. 

%We begin with some lemmas bounding the parameters $(\beta,\gamma,\lambda)$ when up-to-$\Delta$ uniqueness holds.
%\begin{lem}[Lemma 21 from \cite{LLY13}]
%\begin{enumerate}
%%    \item $(\beta,\gamma,\lambda)$ is up-to-$2$ unique.
%%    \item If $\gamma \leq 1$, then uniqueness fails to hold on the infinite $d$-regular tree for all sufficiently large $d$.
%%    \item If $\gamma > 1$, then uniqueness holds for all sufficiently large $d$.
%    \item If $\beta = 0$, then $(\beta,\gamma,\lambda)$ is up-to-$\Delta$ unique if and only if $\lambda < \lambda_{c} = \lambda_{c}(\gamma,\Delta) \overset{\defin}{=} \min_{1 \leq d < \Delta} \frac{\gamma^{d+1}d^{d}}{(d-1)^{d+1}}$.
%    \item If $\beta > 0$ and $\sqrt{\beta\gamma} > \frac{\Delta-2}{\Delta}$, then $(\beta,\gamma,\lambda)$ is up-to-$\Delta$ unique for any choice of $\lambda$.
%    \item If $\beta > 0$ and $\sqrt{\beta\gamma} \leq \frac{\Delta-2}{\Delta}$, then there exists two critical thresholds $\lambda_{c} = \lambda_{c}(\beta,\gamma,\Delta)$ and $\overline{\lambda}_{c} = \overline{\lambda}_{c}(\beta,\gamma,\Delta)$ such that $(\beta,\gamma,\lambda)$ is up-to-$\Delta$ unique if and only if $\lambda \in (0,\lambda_{c}) \cup (\overline{\lambda}_{c}, \infty)$.
%\end{enumerate}
%\end{lem}

\subsection{Preliminaries of the uniqueness region}
\label{sec:uniqueness-background}
In this section we give a brief description of the uniqueness region of parameters $(\beta,\gamma,\lambda)$. 
All the results here, and also their proofs, can be found in Lemma 21 from the latest version of \cite{LLY13}. 

Let $\Delta \ge 3$ be an integer and $\beta,\gamma,\lambda$ be reals. 
We assume that $0 \le \beta \le \gamma$, $\gamma > 0$, $\beta \gamma < 1$ and $\lambda > 0$. 
For $1 \leq d \le \Delta$ define
\[
f_{d}(R) = \lambda \wrapp{\frac{\beta R + 1}{R + \gamma}}^{d}
\]
and denote the unique fixed point of $f_{d}$ by $R_d^*$. 
Recall that the parameters $(\beta,\gamma,\lambda)$ are up-to-$\Delta$ unique with gap $\delta \in (0,1)$ if $|f'_d(R^*_d)| < 1- \delta$ for all $1 \leq d < \Delta$. 

When $\beta=0$, the spin system is called a \emph{hard-constraint model}. 
In this case, there exists a critical threshold for the external field defined as
\[
\lambda_c = \lambda_c(\gamma,\Delta) = \min_{1<d<\Delta} \frac{\gamma^{d+1} d^d}{(d-1)^{d+1}}, 
\]
such that the parameters $(0,\gamma,\lambda)$ are up-to-$\Delta$ unique if and only if $\lambda < \lambda_c$. 
In particular, when $\gamma \le 1$ the critical field is given by
\[
\lambda_c = \lambda_c(\gamma,\Delta) = \frac{\gamma^\Delta (\Delta-1)^{\Delta-1}}{(\Delta-2)^\Delta}. 
\]

When $\beta>0$, the spin system is called a \emph{soft-constraint model}. 
If $\sqrt{\beta\gamma} > \frac{\Delta-2}{\Delta}$, then $(\beta,\gamma,\lambda)$ is up-to-$\Delta$ unique for all $\lambda > 0$. 
If $\sqrt{\beta\gamma} \le \frac{\Delta-2}{\Delta}$
the uniqueness region is more complicated which we now describe.
Let
\[
\overline{\Delta} = \frac{1+\sqrt{\beta\gamma}}{1-\sqrt{\beta\gamma}} ,
%\left\lceil \frac{1+\sqrt{\beta\gamma}}{1-\sqrt{\beta\gamma}} \right\rceil,
\]
so that for every $1\le d < \overline{\Delta}$ we have $d \cdot \frac{1-\sqrt{\beta\gamma}}{1+\sqrt{\beta\gamma}} < 1$, and for every $d \ge \overline{\Delta}$ we have $d \cdot \frac{1-\sqrt{\beta\gamma}}{1+\sqrt{\beta\gamma}} \ge 1$. 
For every $\overline{\Delta} \le d < \Delta$, we define $x_1(d) \le x_2(d)$ to be the two positive roots of the quadratic equation
\[
\frac{d(1-\beta\gamma)x}{(\beta x+1)(x+\gamma)} = 1.
\]
More specifically, $x_1(d)$ and $x_2(d)$ are given by
\[
x_1(d) = \frac{\theta(d) - \sqrt{\theta(d)^2 - 4\beta\gamma}}{2\beta} 
\qquad\text{and}\qquad 
x_2(d) = \frac{\theta(d) + \sqrt{\theta(d)^2 - 4\beta\gamma}}{2\beta}
\]
where 
\[
\theta(d) = d(1-\beta\gamma) - (1+\beta\gamma). 
\]
Notice that $\theta(d) \ge 2\sqrt{\beta \gamma}$ for all $d\ge \overline{\Delta}$. 
For $i=1,2$ we let  
\[
\lambda_i(d) = x_i(d) \left( \frac{x_i(d) + \gamma}{\beta x_i(d) + 1} \right)^d. 
\]
Then, the parameters $(\beta,\gamma,\lambda)$ are up-to-$\Delta$ unique if and only if $\lambda$ belongs to the following regime
\begin{equation}\label{eq:regime-uniq}
\mathcal{A} = 
\bigcap_{\overline{\Delta} \le d < \Delta} \Big[ (0, \lambda_1(d)) \cup (\lambda_2(d), \infty) \Big].
\end{equation}
In particular, when $\gamma \le 1$ 
there are two critical thresholds $0 < \lambda_c < \overline{\lambda}_c$ such that the parameters $(\beta,\gamma,\lambda)$ are up-to-$\Delta$ unique if and only if $\lambda < \lambda_c$ or $\lambda > \overline{\lambda}_c$ (i.e., $\mathcal{A} = (0,\lambda_c) \cup (\overline{\lambda}_c,\infty)$), where
\[
\lambda_c = \lambda_c(\beta,\gamma,\Delta) = \min_{\overline{\Delta} \le d < \Delta} \lambda_1(d)
\qquad\text{and}\qquad 
\overline{\lambda}_c = \overline{\lambda}_c(\beta,\gamma,\Delta) = \max_{\overline{\Delta} \le d < \Delta} \lambda_2(d) = \lambda_2(\Delta-1). 
\]

The following bounds on the critical fields are helpful for our proofs later. 
\begin{lem}\label{lem:twocriticalthresholdbounds}
\begin{enumerate}
\item If $\beta = 0$, then for every integer $d$ such that $1 < d < \Delta$ we have
\[
\lambda_c \le \frac{4\gamma^{d+1}}{d-1}.
\]
\item If $\beta>0$ and $\sqrt{\beta \gamma} \le \frac{\Delta-2}{\Delta}$, then for every integer $d$ such that $\overline{\Delta} \le d < \Delta$ we have
%\[
%\lambda_c \le \frac{18\gamma^{d+1}}{\theta(d)}
%\qquad\text{and}\qquad
%\overline{\lambda}_c \ge \frac{\theta(d)}{18\beta^{d+1}}. 
%\]
\[
\lambda_1(d) \le \frac{18\gamma^{d+1}}{\theta(d)}
\qquad\text{and}\qquad
\lambda_2(d) \ge \frac{\theta(d)}{18\beta^{d+1}}
\]
where $\theta(d) = d(1-\beta\gamma) - (1+\beta\gamma)$. 
\end{enumerate}
\end{lem}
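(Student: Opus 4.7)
The plan is to prove each part separately. \textbf{Part 1} ($\beta = 0$): Specialize the definition $\lambda_c = \min_{1 < d' < \Delta} \gamma^{d'+1}(d')^{d'}/(d'-1)^{d'+1}$ to any fixed $d$ with $1 < d < \Delta$ to obtain $\lambda_c \leq \gamma^{d+1} d^d/(d-1)^{d+1}$, reducing the claim to $(d/(d-1))^d = (1 + 1/(d-1))^d \leq 4$ for integer $d \geq 2$. A quick log-derivative computation (using $\log(1+1/(d-1)) \leq 1/(d-1)$) confirms the function is non-increasing in $d$, with equality $(d/(d-1))^d = 4$ at $d = 2$.

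\textbf{Part 2} ($\beta > 0$) begins by establishing the product identity $\lambda_1(d) \lambda_2(d) = (\gamma/\beta)^{d+1}$, which reduces both bounds to a single bound on $\lambda_1$. Indeed, by Vieta's formulas for the quadratic $\beta x^2 - \theta(d) x + \gamma = 0$ with roots $x_1, x_2$ (so $x_1 x_2 = \gamma/\beta$ and $x_1 + x_2 = \theta/\beta$), direct computation yields $(x_1 + \gamma)(x_2 + \gamma) = (\gamma/\beta) \cdot d(1-\beta\gamma)$ and $(\beta x_1 + 1)(\beta x_2 + 1) = d(1-\beta\gamma)$. Taking the ratio, raising to the $d$-th power, and multiplying by $x_1 x_2$ gives $\lambda_1(d) \lambda_2(d) = (\gamma/\beta) \cdot (\gamma/\beta)^d = (\gamma/\beta)^{d+1}$ as claimed.

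To bound $\lambda_1(d)$, the key identity is $A_1 := (x_1 + \gamma)/(\beta x_1 + 1) = \gamma + (x_1 + \gamma)/d$, which follows from the fixed-point equation $(\beta x_1 + 1)(x_1 + \gamma) = d(1-\beta\gamma) x_1$ after substituting $\beta x_1 + 1 = d(1-\beta\gamma)x_1/(x_1+\gamma)$ into $A_1 - \gamma = x_1(1-\beta\gamma)/(\beta x_1 + 1)$. Setting $u_1 := x_1/\gamma$, this recasts the target as
\[
\frac{\lambda_1(d)\,\theta(d)}{\gamma^{d+1}} = (u_1 \theta(d))\left(1 + \frac{u_1 + 1}{d}\right)^d \leq 18.
\]
I would then combine three elementary estimates: (a) $u_1 \theta(d) \leq 2$, obtained from the explicit form $x_1 = 2\gamma/(\theta + \sqrt{\theta^2 - 4\beta\gamma})$; (b) $\theta(d) \geq 2(d-1)/(d+1)$, obtained by substituting the extremal value $\beta\gamma = ((d-1)/(d+1))^2$ (the condition $d \geq \overline{\Delta}$ is equivalent to $\sqrt{\beta\gamma} \leq (d-1)/(d+1)$) into the monotone-decreasing map $v \mapsto (d-1) - v(d+1)$; and (c) combining (a) and (b), $u_1 \leq 2/\theta(d) \leq (d+1)/(d-1)$, so $(1 + (u_1+1)/d)^d \leq (1 + 2/(d-1))^d = ((d+1)/(d-1))^d \leq 9$ for integer $d \geq 2$. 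The main technical obstacle is verifying this last sharp inequality, which is tight at $d = 2$ (and motivates the constant $18$); the cleanest route is to show $d \log((d+1)/(d-1))$ is strictly decreasing for $d \geq 2$, starting at $2\log 3 \approx 2.197$ and tending to $\log(e^2) = 2$, via a short calculus check comparing $\log(1 + 2/(d-1))$ against $2d/(d^2-1)$. Multiplying (a) by the bound in (c) yields $\lambda_1(d)\theta(d)/\gamma^{d+1} \leq 2 \cdot 9 = 18$, and the bound on $\lambda_2$ follows from the product identity.
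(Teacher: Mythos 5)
Your Part 1 is the same as the paper's: both reduce to the elementary bound $(d/(d-1))^d \leq 4$ for integer $d \geq 2$; you just additionally supply the monotonicity check the paper leaves implicit.

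Part 2 is correct but genuinely different in structure. The paper bounds $\lambda_1(d)$ and $\lambda_2(d)$ by two parallel, symmetric arguments: after crudely bounding $x_1 \leq 2\gamma/\theta(d)$ and $x_2 \geq \theta(d)/(2\beta)$, it exploits the monotonicity of $x \mapsto (x+\gamma)/(\beta x + 1)$ (valid since $\beta\gamma < 1$) to show $A_1 := (x_1+\gamma)/(\beta x_1+1) \leq \gamma(d+1)/(d-1)$ and the symmetric lower bound for $A_2$, then applies $((d+1)/(d-1))^d \leq 9$ in both cases. You instead prove the identity $\lambda_1(d)\lambda_2(d) = (\gamma/\beta)^{d+1}$ by Vieta (a fact the paper never states), which cleanly reduces the $\lambda_2$ lower bound to the $\lambda_1$ upper bound and halves the work. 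For the $\lambda_1$ bound itself you also take a different route: rather than invoking monotonicity of $(x+\gamma)/(\beta x+1)$, you derive the exact identity $A_1 = \gamma + (x_1+\gamma)/d$ from the fixed-point equation and then plug in the estimates $u_1\theta(d) \leq 2$ and $\theta(d) \geq 2(d-1)/(d+1)$ (the latter being the sharp consequence of $d \geq \overline{\Delta}$). Both routes converge on $((d+1)/(d-1))^d \leq 9$ and hence the same constant $18$. Your version is a little more work algebraically but has the advantage of making the symmetry between $\lambda_1$ and $\lambda_2$ manifest via the product identity, whereas the paper's version is more direct and only needs a monotonicity observation rather than an exact rearrangement of the fixed-point relation.
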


The proof of \cref{lem:twocriticalthresholdbounds} is postponed to \cref{sec:boundedness-lemma}. 

%\begin{lem}\label{lem:twocriticalthresholdbounds}
%Assume $\beta > 0$ and $\sqrt{\beta\gamma} \leq \frac{\Delta-2}{\Delta}$. 
%Let $\overline{\Delta}$ be the maximal $1 < d < \Delta$ such that $\sqrt{\beta\gamma} > \frac{d-2}{d}$, i.e. $\frac{\overline{\Delta} - 2}{\overline{\Delta}} < \sqrt{\beta\gamma} \leq \frac{\overline{\Delta} - 1}{\overline{\Delta} + 1}$. 
%Assume that $\sqrt{\beta\gamma} \leq \frac{\Delta-2}{\Delta}$. 
%Then $\lambda_{c}$ and $\overline{\lambda}_{c}$ satisfy
%\begin{align*}
%    \frac{\lambda_{c}}{\gamma^{\Delta}} &\leq O\wrapp{\frac{\overline{\Delta}}{\Delta}} \\
%    \overline{\lambda}_{c} \beta^{\Delta} &\geq \Omega\wrapp{\frac{\Delta}{\overline{\Delta}}}
%\end{align*}
%\end{lem}

\subsection{Proofs of boundedness}
\label{sec:boundedness-proof}

In this section we complete the proofs of \cref{lem:potential-is-good} and \cref{lem:potential-is-good-lambdac} by establishing \hyperlink{cond:boundedness}{Boundedness} and \hyperlink{cond:boundedness'}{General Boundedness} in the corresponding range of parameters. 

%For each vertex $v\in V_T$ let $\Delta_v$ be the degree of $v$ and $d_v$ be the number of children of $v$; so $d_v = \Delta_v -1$ for all $v\neq r$ and $d_r = \Delta_r$. 
%We shall write $R_v = R_{T_v}^v(\lambda)$ for each $v\in V_T$. 
%Recall that 
%\[
%\psi(\log R_v) = \sqrt{\frac{(1-\beta \gamma) R_v}{(\beta R_v + 1)(R_v + \gamma)}} = \sqrt{|h(y)|}. 
%\]
Let $\Delta \ge 3$ be an integer. 
Let $\beta,\gamma,\lambda$ be reals such that $0\le \beta \le \gamma$, $\gamma>0$, $\beta\gamma <1$ and $\lambda>0$. 
Recall that the potential function $\Psi$ is defined by 
\begin{equation*}
    \Psi'(y) 
    = \psi(y) 
    = \sqrt{\frac{(1-\beta\gamma)e^{y}}{(\beta e^{y} + 1)(e^{y} + \gamma)}} 
    = \sqrt{\abs{h(y)}}, 
    \qquad
    \Psi(0) = 0. \tag{1}
\end{equation*}
It is surprising to find out that $\psi = \sqrt{|h|}$, as the potential $\Psi$ is exactly the one from \cite{LLY13} as indicated by \cref{lem:Psi-Phi}. 
This seems not to be a coincidence, and it provides some intuition why the potential from \cite{LLY13} works. 
More importantly, the fact that $\psi = \sqrt{|h|}$ is helpful in our proof of \hyperlink{cond:boundedness}{Boundedness} and \hyperlink{cond:boundedness'}{General Boundedness}. 
Recall that for $0\le d < \Delta$ and $\beta\gamma < 1$ we let $J_{d} = \wrapb{\log(\lambda \beta^{d}), \log(\lambda / \gamma^{d})}$ to be the range of log marginal ratios of a vertex with $d$ children. 
%and $J_{d} = \wrapb{\log(\lambda / \gamma^{d}), \log(\lambda \beta^{d})}$ when $\beta\gamma \geq 1$. 
%Furthermore, $J = \bigcup_{d = 0}^{\Delta-1} J_{d}$. 
Then for every $0\le d_i < \Delta$ and $y_i \in J_{d_i}$ where $i=1,2$, we have
\begin{equation}\label{eq:psi1psi2}
\frac{\psi(y_2)}{\psi(y_1)} \cdot |h(y_1)| = \sqrt{|h(y_1)| \cdot |h(y_2)|}. 
\end{equation}
The following lemma gives upper bounds on $\sqrt{|h(y_1)| \cdot |h(y_2)|}$, from which and \cref{eq:psi1psi2} we deduce \hyperlink{cond:boundedness}{Boundedness} and \hyperlink{cond:boundedness'}{General Boundedness} immediately. 
The brackets in the lemma indicate which lemma the bound is applied to.

\begin{lem}\label{lem:bound-Psi}
Let $\Delta \ge 3$ be an integer. 
Let $\beta,\gamma,\lambda$ be reals such that $0\le \beta \le \gamma$, $\gamma>0$, $\beta\gamma <1$ and $\lambda>0$. 
Assume that the parameters $(\beta,\gamma,\lambda)$ are up-to-$\Delta$ unique with gap $\delta \in (0,1)$. 
Then for all integers $d_1,d_2$ such that $0\le d_1,d_2 < \Delta$, and all reals $y_i \in J_{d_i}$ where $i=1,2$, the following holds:
\begin{enumerate}
\item[H.] Hard-constraint models: $\beta = 0$ and $\lambda < \lambda_c$. 
\begin{enumerate}[{H.}1.]
\item (\cref{lem:potential-is-good}) If $\gamma \le 1$, then
\[
|h(y_1)| \le \frac{4}{\Delta}. 
\]
\item (\cref{lem:potential-is-good-lambdac}) If $\gamma > 1$, then
\[
\sqrt{|h(y_1)| \cdot |h(y_2)|} \le \frac{8}{d_1 + d_2 + 2}. 
\]
\end{enumerate} 
\item[S.] Soft-constraint models: $\beta > 0$ and $\lambda \in \mathcal{A}$.
%$\lambda \in (0,\lambda_c) \cup (\overline{\lambda}_c, +\infty)$. 
\begin{enumerate}[{S.}1.]
\item (\cref{lem:potential-is-good}) If $\sqrt{\beta \gamma} > \frac{\Delta-2}{\Delta}$, then
\[
|h(y_1)| \le \frac{1.5}{\Delta}. 
\]
%\item (\cref{lem:potential-is-good-barlambdac}) If $\sqrt{\beta \gamma} \le \frac{\Delta-2}{\Delta}$ and $\lambda > \overline{\lambda}_c$, then
%\[
%|h(y_1)| \le \frac{18}{\Delta}. 
%\]
%\item (\cref{lem:potential-is-good}) If $\sqrt{\beta \gamma} \le \frac{\Delta-2}{\Delta}$, $\lambda < \lambda_c$ and $\gamma \le 1$, then
%\[
%|h(y_1)| \le \frac{18}{\Delta}. 
%\]
%\item (\cref{lem:potential-is-good-lambdac}) If $\sqrt{\beta \gamma} \le \frac{\Delta-2}{\Delta}$, $\lambda < \lambda_c$ and $\gamma > 1$, then
%\[
%\sqrt{|h(y_1)| \cdot |h(y_2)|} \le \frac{36}{d_1 + d_2 + 2}. 
%\]
\item (\cref{lem:potential-is-good}) If $\sqrt{\beta \gamma} \le \frac{\Delta-2}{\Delta}$ and $\gamma \le 1$, then
\[
|h(y_1)| \le \frac{18}{\Delta}. 
\]
\item (\cref{lem:potential-is-good-lambdac}) If $\sqrt{\beta \gamma} \le \frac{\Delta-2}{\Delta}$ and $\gamma > 1$, then
\[
\sqrt{|h(y_1)| \cdot |h(y_2)|} \le \frac{36}{d_1 + d_2 + 2}. 
\]
\end{enumerate} 
\end{enumerate}
\end{lem}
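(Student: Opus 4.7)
The plan is to exploit the identity $\psi = \sqrt{|h|}$ implied by \cref{eq:Psi}: since $\psi > 0$,
\[
\frac{\psi(y_2)}{\psi(y_1)} \cdot |h(y_1)| \;=\; \psi(y_1)\psi(y_2) \;=\; \sqrt{|h(y_1)| \cdot |h(y_2)|},
\]
so the bounds on $|h(y_1)|$ (respectively on $\sqrt{|h(y_1)||h(y_2)|}$ in the degree-dependent cases) asserted by \cref{lem:bound-Psi} are exactly what is required to conclude the \hyperlink{cond:boundedness}{Boundedness} and \hyperlink{cond:boundedness'}{General Boundedness} conditions. Setting $x = e^y$, I will work throughout with
\[
|h(y)| = \frac{(1-\beta\gamma)x}{(\beta x+1)(x+\gamma)},
\]
noting that $y \in J_d$ corresponds to $x \in [\lambda\beta^d, \lambda/\gamma^d]$, and that AM-GM on the denominator yields the universal bound $|h(y)| \le (1-\sqrt{\beta\gamma})/(1+\sqrt{\beta\gamma})$, saturated at $x = \sqrt{\gamma/\beta}$.

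Cases S.1 and H.1 fall out immediately. For S.1, the hypothesis $\sqrt{\beta\gamma} > (\Delta-2)/\Delta$ combined with the universal AM-GM bound gives $|h(y)| \le \frac{1}{\Delta-1} \le \frac{1.5}{\Delta}$ for $\Delta \geq 3$, with no use of the uniqueness constraint. For H.1, taking $\beta = 0$ reduces $|h(y)|$ to $x/(x+\gamma)$, which is monotone in $x$; hence its maximum on $J_{d_1}$ is attained at $x = \lambda/\gamma^{d_1}$, giving $|h(y_1)| \le \lambda/(\lambda+\gamma^{d_1+1})$. Invoking \cref{lem:twocriticalthresholdbounds}(1) at $d = \Delta - 1$ then gives $\lambda \leq \lambda_c \leq 4\gamma^\Delta/(\Delta-2)$, and direct substitution yields $|h(y_1)| \le \frac{4}{\Delta+2} \le \frac{4}{\Delta}$.

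The remaining soft-constraint cases with $\sqrt{\beta\gamma} \leq (\Delta-2)/\Delta$, namely S.2 and S.3, together with H.2, are more delicate. In S.2 and S.3 the uniqueness region splits as $(0,\lambda_c) \cup (\overline\lambda_c,\infty)$, and I will handle the two sub-regimes via the involution $(\beta,\gamma,\lambda,x) \leftrightarrow (\gamma,\beta,\lambda^{-1},x^{-1})$, which preserves $|h|$ and the interval $J_d$; hence it suffices to treat $\lambda < \lambda_c$. For S.2 ($\gamma \leq 1$), \cref{lem:twocriticalthresholdbounds}(2) at $d = \Delta-1$ gives $\lambda \leq 18\gamma^\Delta/\theta(\Delta-1)$, and the lower bound $\theta(\Delta-1) = (\Delta-2) - \Delta\beta\gamma \geq 2(\Delta-2)/\Delta$ (coming from $\sqrt{\beta\gamma} \leq (\Delta-2)/\Delta$) controls the range of $x$ on $J_{\Delta-1}$; I will combine this with the universal AM-GM bound $|h(y)| \leq (1-\sqrt{\beta\gamma})/(1+\sqrt{\beta\gamma})$, selecting whichever is tighter depending on whether $J_{\Delta-1}$ contains the interior maximizer $\sqrt{\gamma/\beta}$, to extract the $18/\Delta$ bound. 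For H.2 and S.3 ($\gamma > 1$) no uniform $O(1/\Delta)$ bound on $|h|$ is available (witness the star example), so the plan switches to a degree-dependent estimate: bound $|h(y_i)| \leq \lambda/(\lambda+\gamma^{d_i+1})$ in H.2 (and the analogous expression in S.3), apply AM-GM $(\lambda+\gamma^{d_1+1})(\lambda+\gamma^{d_2+1}) \geq (\lambda+\gamma^{(d_1+d_2+2)/2})^2$, and then use \cref{lem:twocriticalthresholdbounds} at $d$ close to $(d_1+d_2)/2 + 1$ to bound $\lambda/\gamma^{(d_1+d_2+2)/2}$. The main obstacle is the bookkeeping in S.3: tracking the $(1-\beta\gamma)$ factor, the $\theta(d)$ scaling, the correct integer choice of $d$ (whose choice depends on the parity of $d_1+d_2$), and the boundary cases where $(d_1+d_2)/2 + 1$ falls outside $[\overline\Delta, \Delta-1]$, all while landing the constant $\leq 36$, forces a careful case split and in some ranges a fallback to the universal AM-GM bound.
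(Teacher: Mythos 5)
The opening identity $\psi(y_1)\psi(y_2)=\sqrt{|h(y_1)||h(y_2)|}$ and the treatment of H.1, H.2 and S.1 are essentially what the paper does, and these parts are fine. Your use of the involution $(\beta,\gamma,\lambda,x)\leftrightarrow(\gamma,\beta,\lambda^{-1},x^{-1})$ to fold the $\lambda>\overline\lambda_c$ sub-case of S.2 into the $\lambda<\lambda_c$ sub-case is a genuinely different (and tidy) idea: one can verify it preserves $|h|$, sends $J_d$ to $J_d$, swaps $\lambda_1(d)$ with $1/\lambda_2(d)$, and preserves the fixed-point derivative, so up-to-$\Delta$ uniqueness carries over; the one thing to mind is that it breaks the $\beta\le\gamma$ normalization, so you would have to note that none of the estimates actually use that ordering. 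By contrast, the paper treats $\lambda>\overline\lambda_c$ directly via $e^{y_1}\ge\lambda\beta^{d_1}\ge\overline\lambda_c\beta^{\Delta-1}$, which avoids the convention issue at the cost of a symmetric second computation. Your residual plan for S.2 (``select whichever bound is tighter depending on whether $J_{\Delta-1}$ contains the maximizer $\sqrt{\gamma/\beta}$'') is more convoluted than necessary and underspecified: the paper simply drops the $\beta e^{y_1}$ term in the denominator of $|h|$ and uses the $\gamma e^{-y_1}\ge\theta(\Delta-1)/18$ bound unconditionally, which lands $18/\Delta$ without any case split. Your route can probably be made to work, but as stated it is not a proof.

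The genuine gap is in S.3. You model it on H.2 --- bound $|h(y_i)|$, AM-GM the two denominators, and then invoke \cref{lem:twocriticalthresholdbounds} at an integer $d$ near $\bar d+1$ --- and you flag the integer/parity and out-of-range issues as the ``main obstacle.'' But the real obstacle is structural: up-to-$\Delta$ uniqueness only tells you that $\lambda\in\mathcal A=\bigcap_{\overline\Delta\le d<\Delta}\bigl[(0,\lambda_1(d))\cup(\lambda_2(d),\infty)\bigr]$, and since the thresholds vary with $d$ you can have $\lambda_2(d_L)<\lambda<\lambda_1(d_R)$ with $d_L=\lfloor\bar d\rfloor$, $d_R=\lceil\bar d\rceil$. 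In that middle interval you have neither an upper bound on $\lambda$ (needed to control $e^{\bar y}$ from above) nor a lower bound (needed to control $e^{\bar y}$ from below), so the H.2-style AM-GM on $e^{\bar y}$ gives you nothing; and since $d_L>\overline\Delta$ there, the fallback $\sqrt{|h(y_1)||h(y_2)|}\le(1-\sqrt{\beta\gamma})/(1+\sqrt{\beta\gamma})=1/\overline\Delta$ is not $O(1/(d_1+d_2))$. The paper's Case~3 handles this by switching to a different quantity: via Cauchy--Schwarz it reduces to lower-bounding $\sqrt{\beta\gamma}\,e^{(y_2-y_1)/2}\ge\sqrt{\beta^{d_L+1}\gamma^{d_R+1}}$, and then uses \emph{both} halves of the middle interval, $\lambda>\lambda_2(d_L)\ge\theta(d_L)/(18\beta^{d_L+1})$ and $\lambda<\lambda_1(d_R)\le 18\gamma^{d_R+1}/\theta(d_R)$, multiplied together to deduce $\beta^{d_L+1}\gamma^{d_R+1}\ge\theta(d_L)\theta(d_R)/18^2$. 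This third case and the pivot from $e^{\bar y}$ to $e^{y_2-y_1}$ is a necessary idea that your sketch does not contain, so the S.3 argument as proposed would not close.
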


The following lemma, whose proof can be found in \cref{sec:boundedness-lemma}, is helpful. 
\begin{lem}\label{lem:psimaximizer}
For every $y\in[-\infty,+\infty]$ we have
\[
|h(y)| = \frac{|1-\beta\gamma| e^y}{(\beta e^y+1)(e^y+\gamma)} \le \frac{|1-\sqrt{\beta\gamma}|}{1+\sqrt{\beta\gamma}}. 
\]
\end{lem}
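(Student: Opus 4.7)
The plan is to reduce the claimed pointwise inequality to an elementary AM--GM statement via a direct calculation. Substituting $x = e^y \in (0,\infty)$, we have
\[
|h(y)| = \frac{|1-\beta\gamma|\, x}{(\beta x + 1)(x + \gamma)},
\]
and the boundary values $y = \pm\infty$ give $|h| = 0$, which trivially satisfies the bound, so it suffices to work with $x \in (0,\infty)$.

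Next I would factor $|1 - \beta\gamma| = |1 - \sqrt{\beta\gamma}|\,(1 + \sqrt{\beta\gamma})$, which holds since $\beta,\gamma \ge 0$. Dividing the desired inequality by $|1 - \sqrt{\beta\gamma}|$ (handling the degenerate case $\beta\gamma = 1$ separately, where both sides vanish), the claim reduces to
\[
(1 + \sqrt{\beta\gamma})^2\, x \;\le\; (\beta x + 1)(x + \gamma).
\]
Expanding the right-hand side gives $\beta x^2 + (1 + \beta\gamma) x + \gamma$, so after cancellation the inequality becomes
\[
\beta x^2 - 2\sqrt{\beta\gamma}\, x + \gamma \;\ge\; 0,
\]
which is exactly $(\sqrt{\beta}\,x - \sqrt{\gamma})^2 \ge 0$ and thus holds for every $x > 0$, with equality at $x = \sqrt{\gamma/\beta}$ when $\beta > 0$.

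The only remaining detail is the case $\beta = 0$, where the factorization above is still valid and both sides of the reduced inequality simplify nicely: the bound becomes $e^y/(e^y + \gamma) \le 1$, which is immediate. There is no substantive obstacle here; the whole proof is a one-line AM--GM computation after the substitution $x = e^y$, and the main thing to keep track of is the absolute value signs so the argument covers both the ferromagnetic ($\beta\gamma > 1$) and antiferromagnetic ($\beta\gamma < 1$) regimes uniformly.
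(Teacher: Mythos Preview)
Your proposal is correct and follows essentially the same route as the paper: the paper divides numerator and denominator by $e^{y}$ to write $|h(y)| = \frac{|1-\beta\gamma|}{\beta e^{y} + \gamma e^{-y} + (1+\beta\gamma)}$ and then applies AM--GM to $\beta e^{y} + \gamma e^{-y} \ge 2\sqrt{\beta\gamma}$, which is exactly your perfect-square inequality $(\sqrt{\beta}\,x - \sqrt{\gamma})^{2} \ge 0$ in disguise.
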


We present here the proof of \cref{lem:bound-Psi}. 
\begin{proof}[Proof of \cref{lem:bound-Psi}]
We use notations and results from \cref{sec:uniqueness-background}. 

\medskip
\noindent
\emph{H.~~Hard-constraint models: $\beta = 0$ and $\lambda < \lambda_c$.}

\medskip
\noindent
\emph{H.1.~~$\gamma \le 1$.} 

\medskip
For every $y_1 \in J_{d_1}$ we deduce from \cref{lem:twocriticalthresholdbounds} that 
\[
e^{y_1} \le \frac{\lambda}{\gamma^{d_1}} \le \frac{\lambda_c}{\gamma^{\Delta-1}} \le \frac{4\gamma}{\Delta - 2}. 
\]
Hence, 
\[
|h(y_1)| = \frac{e^{y_1}}{e^{y_1}+\gamma} 
\le \frac{\frac{4\gamma}{\Delta - 2}}{\frac{4\gamma}{\Delta - 2}+\gamma} 
= \frac{4}{\Delta+2} 
\le \frac{4}{\Delta}. 
\]

%In the case $\beta = 0$, we have $R_{v} \leq \frac{\lambda}{\gamma^{\Delta-1}}$ so that
%\begin{align*}
%    \psi(\log R_{v}) = \sqrt{\frac{R_{v}}{R_{v} + \gamma}} \leq \sqrt{\frac{R_{v}}{\gamma}} \leq \sqrt{\lambda/\gamma^{\Delta}}
%\end{align*}
%Since $\gamma \leq 1$, $\lambda_{c}(\gamma,\Delta) = \min_{1 < d < \Delta} \frac{\gamma^{d+1}d^{d}}{(d-1)^{d+1}} = \gamma^{\Delta} \cdot \frac{(\Delta-1)^{\Delta-1}}{(\Delta-2)^{\Delta}}$. It follows that $\psi(\log R_{v}) \leq \sqrt{\frac{(\Delta-1)^{\Delta-1}}{(\Delta-2)^{\Delta}}} \leq \sqrt{\frac{18}{\Delta}}$.

\medskip
\noindent
\emph{H.2.~~$\gamma > 1$.} 

\medskip
Let $\bar{y} = \frac{y_1+y_2}{2}$ and $\bar{d} = \frac{d_1+d_2}{2}$. 
Then we get
\[
\sqrt{|h(y_1)| \cdot |h(y_2)|} = \sqrt{\frac{e^{y_1}}{e^{y_1}+\gamma}} \cdot \sqrt{\frac{e^{y_2}}{e^{y_2}+\gamma}} 
= \frac{1}{\sqrt{(1+\gamma e^{-y_1})(1+\gamma e^{-y_2})}} \le \frac{1}{1 + \gamma e^{-\bar{y}}}, 
\]
where the last inequality follows from the AM–GM inequality by 
\[
(1+\gamma e^{-y_1})(1+\gamma e^{-y_2}) = 1 + \gamma (e^{-y_1} + e^{-y_2}) + \gamma^2 e^{-2\bar{y}} \ge 1 + 2\gamma e^{-\bar{y}} + \gamma^2 e^{-2\bar{y}} 
= (1 + \gamma e^{-\bar{y}})^2. 
\]
Since $y_i \in J_{d_i}$ for $i=1,2$, we have
\[
e^{\bar{y}} 
= \sqrt{e^{y_1} \cdot e^{y_2}} 
\le \sqrt{\frac{\lambda}{\gamma^{d_1}} \cdot \frac{\lambda}{\gamma^{d_2}}} 
= \frac{\lambda}{\gamma^{\bar{d}}}. 
\]
If $\bar{d} \ge 2$, then we deduce from \cref{lem:twocriticalthresholdbounds} and $\gamma > 1$ that 
\[
e^{\bar{y}} 
\le \frac{\lambda_c}{\gamma^{\lfloor \bar{d} \rfloor}} 
\le \frac{4\gamma}{\lfloor \bar{d} \rfloor - 1}. 
\]
It follows that
\[
\sqrt{|h(y_1)| \cdot |h(y_2)|} \le \frac{1}{1 + \gamma e^{-\bar{y}}} \le \frac{1}{1 + \frac{\lfloor \bar{d} \rfloor - 1}{4}} = \frac{4}{\lfloor \bar{d} \rfloor + 3} \le \frac{8}{d_1+d_2 + 2}. 
\]
If $\bar{d} < 2$, then it is easy to see that 
\[
\sqrt{|h(y_1)| \cdot |h(y_2)|} \le 1 \le \frac{8}{d_1+d_2+2}. 
\]

%\[
%\psi(\log R_v) = \sqrt{\frac{R_v}{R_v + \gamma}} \le \sqrt{\frac{R_v}{\gamma}} \le \sqrt{\frac{\lambda_\mathfrak{c}}{\gamma^{\Delta_u}}}
%\]
%where the last inequality follows from that $R^v \le \lambda/\gamma^{d_v} \le \lambda_\mathfrak{c}/ \gamma^{d_v}$. Let $\Delta_0 = (\Delta_u+\Delta_v)/2$. We then deduce that
%\[
%\psi\left( \log R_u \right) \psi\left( \log R_v \right) \le \frac{\lambda_\mathfrak{c}}{\gamma^{\Delta_0}}.
%\]
%If $\Delta_0 \ge 3$, then we have
%\[
%\lambda_\mathfrak{c} = \min_{1<d<\Delta} \frac{\gamma^{d+1} d^d}{(d-1)^{d+1}} =  \min_{1<d<\Delta} \frac{\gamma^{d+1}}{d} \cdot \frac{d^{d+1}}{(d-1)^{d+1}} \le \min_{1<d<\Delta} \frac{8\gamma^{d+1}}{d} \le \frac{8\gamma^{\lfloor \Delta_0 \rfloor}}{\lfloor \Delta_0 \rfloor - 1} \le \frac{16\gamma^{\Delta_0}}{\Delta_0}. 
%\]
%Thus, we conclude that
%\[
%\psi\left( \log R_u \right) \psi\left( \log R_v \right) \le \frac{16}{\Delta_0} = \frac{32}{\Delta_u + \Delta_v}.
%\]
%If $\Delta_0 < 3$, then it is easy to see that $\psi\left( \log R_u \right) \psi\left( \log R_v \right) \le 1 \le 32/(\Delta_u+\Delta_v)$. 

\medskip
\noindent
\emph{S.~~Soft-constraint models: $\beta > 0$ and $\lambda \in \mathcal{A}$.}

\medskip
\noindent
\emph{S.1.~~$\sqrt{\beta \gamma} > \frac{\Delta-2}{\Delta}$.} 

\medskip
For every $y_1 \in J$ we deduce from \cref{lem:psimaximizer} that 
\[
|h(y_1)| \le \frac{1-\sqrt{\beta\gamma}}{1+\sqrt{\beta\gamma}} \le \frac{1}{\Delta-1} \le \frac{1.5}{\Delta}. 
\]

%If $\sqrt{\beta\gamma} \geq \frac{\Delta-2}{\Delta}$, we get that
%\begin{align*}
%    \psi(\log R) \leq \sqrt{\sup_{0 \leq x \leq \infty} \frac{(1 - \beta\gamma) x}{(\beta x + 1)(x + \gamma)}} \leq \sqrt{\frac{1 - \beta\gamma}{(1 + \sqrt{\beta\gamma})^{2}}} = \sqrt{\frac{1 - \sqrt{\beta\gamma}}{1 + \sqrt{\beta\gamma}}} = \sqrt{\frac{1}{\Delta-1}}
%\end{align*}

\medskip
\noindent
%\emph{S.2.~~$\sqrt{\beta \gamma} \le \frac{\Delta-2}{\Delta}$ and $\lambda > \overline{\lambda}_c$.} 
\emph{S.2.~~$\sqrt{\beta \gamma} \le \frac{\Delta-2}{\Delta}$ and $\gamma \le 1$.}

\medskip
In this case, we have either $\lambda < \lambda_c$ or $\lambda > \overline{\lambda}_c$ where $\lambda_c, \overline{\lambda}_c$ are the two critical fields. 
Consider first $\lambda > \overline{\lambda}_c$. 
For every $y_1 \in J_{d_1}$ we deduce from \cref{lem:twocriticalthresholdbounds} and $\beta < 1$ that 
\[
e^{y_1} \ge \lambda \beta^{d_1} \ge \overline{\lambda}_c \beta^{\Delta-1} \ge \frac{\theta(\Delta-1)}{18\beta}
\]
where $\theta(d) = d(1-\beta\gamma) - (1+\beta\gamma)$. Hence, 
\begin{align*}
|h(y_1)| = \frac{(1-\beta\gamma)e^{y_1}}{(\beta e^{y_1} + 1)(e^{y_1} + \gamma)} 
&= \frac{1-\beta\gamma}{\beta e^{y_1} + \gamma e^{-y_1} + (1+\beta\gamma)}\\ 
&\le \frac{1-\beta\gamma}{\frac{\theta(\Delta-1)}{18} + (1+\beta\gamma)} = \frac{18(1-\beta\gamma)}{(\Delta-1)(1-\beta\gamma) + 17(1+\beta\gamma)} 
\le \frac{18}{\Delta}. 
\end{align*}

%\medskip
%\noindent
%\emph{S.3.~~$\sqrt{\beta \gamma} \le \frac{\Delta-2}{\Delta}$, $\lambda < \lambda_c$ and $\gamma \le 1$.} 

%\medskip
Next we consider $\lambda < \lambda_c$. 
For every $y_1 \in J_{d_1}$ we deduce from \cref{lem:twocriticalthresholdbounds} and $\gamma \le 1$ that 
\[
e^{y_1} \le \frac{\lambda}{\gamma^{d_1}} \le \frac{\lambda_c}{\gamma^{\Delta-1}} \le \frac{18\gamma}{\theta(\Delta-1)}. 
\]
%where $\theta(d) = d(1-\beta\gamma) - (1+\beta\gamma)$. 
Hence, 
\[
|h(y_1)| 
= \frac{1-\beta\gamma}{\beta e^{y_1} + \gamma e^{-y_1} + (1+\beta\gamma)} 
\le \frac{1-\beta\gamma}{\frac{\theta(\Delta-1)}{18} + (1+\beta\gamma)} \le \frac{18}{\Delta}. 
\]

%Finally, we consider when $\sqrt{\beta\gamma} < \frac{\Delta-2}{\Delta}$. Let $\overline{\Delta}$ be the maximal $1 < d < \Delta$ such that $\sqrt{\beta\gamma} > \frac{d-2}{d}$. If $\lambda \leq \lambda_{c}(\beta,\gamma,\Delta)$, since then since $\gamma \leq 1$, $R_{v} \leq \lambda/\gamma^{\Delta-1}$ so that
%\begin{align*}
%    \psi(\log R_{v}) \leq \sqrt{(1 - \beta\gamma) \cdot \frac{R_{v}}{\gamma}} \leq \sqrt{\underset{\leq O(\Delta/\overline{\Delta})}{\underbrace{(1 - \beta\gamma)}} \cdot \underset{\leq O(\overline{\Delta}/\Delta)}{\underbrace{\frac{\lambda}{\gamma^{\Delta}}}}}
%\end{align*}
%Here, the last inequality uses \cref{lem:twocriticalthresholdbounds} to upper bound $\frac{\lambda}{\gamma^{\Delta}}$. If $\lambda \geq \overline{\lambda}_{c}(\beta,\gamma,\Delta)$, then using $R_{v} \geq \lambda \beta^{\Delta-1}$, we obtain
%\begin{align*}
%    \psi(\log R_{v}) \leq \sqrt{(1 - \beta\gamma) \cdot \frac{1}{\beta R_{v}}} \leq \sqrt{\underset{\leq O(\Delta/\overline{\Delta})}{\underbrace{(1 - \beta\gamma)}} \cdot \underset{\leq O(\overline{\Delta}/\Delta)}{\underbrace{\frac{1}{\lambda \beta^{\Delta}}}}}
%\end{align*}
%Here, the last inequality again uses \cref{lem:twocriticalthresholdbounds} to upper bound $\frac{1}{\lambda\beta^{\Delta}}$.

\medskip
\noindent
%\emph{S.4.~~$\sqrt{\beta \gamma} \le \frac{\Delta-2}{\Delta}$, $\lambda < \lambda_c$ and $\gamma > 1$.} 
\emph{S.3.~~$\sqrt{\beta \gamma} \le \frac{\Delta-2}{\Delta}$ and $\gamma > 1$.} 

\medskip
Let $\bar{y} = \frac{y_1+y_2}{2}$, $\bar{d} = \frac{d_1+d_2}{2}$, 
$d_L = \lfloor \bar{d} \rfloor$, and $d_R = \lceil \bar{d} \rceil$. 
We first consider some trivial cases. 
If $\bar{d} \le 2$ then it is easy to see that
\[
\sqrt{|h(y_1)| \cdot |h(y_2)|} \le 1 \le \frac{6}{d_1+d_2+2}.
\]
If $\bar{d} > 2$ and $d_L \le \overline{\Delta}$, then we deduce from \cref{lem:psimaximizer} that 
%\[
%\sqrt{|h(y_1)| \cdot |h(y_2)|} \le \frac{1-\sqrt{\beta\gamma}}{1+\sqrt{\beta\gamma}} 
%\le \frac{1}{\overline{\Delta} - 1} 
%\le \frac{1}{\bar{d} - 1} 
%\le \frac{18}{\bar{d}+1}
%= \frac{36}{d_1+d_2+2}. 
%\]
\[
\sqrt{|h(y_1)| \cdot |h(y_2)|} 
\le \frac{1-\sqrt{\beta\gamma}}{1+\sqrt{\beta\gamma}} 
= \frac{1}{\overline{\Delta}} 
\le \frac{2}{d_1+d_2-2} 
\le \frac{6}{d_1+d_2+2}. 
\]
Hence, in the following we may assume that $\bar{d} > 2$ and $d_L > \overline{\Delta}$.

Since the parameters $(\beta,\gamma,\lambda)$ are up-to-$\Delta$ unique, we have $\lambda \in \mathcal{A}$ where the regime $\mathcal{A}$ is given by \cref{eq:regime-uniq}.
Observe that
\[
\mathcal{A} \subseteq (0, \lambda_1(d_L)) \cup (\lambda_2(d_R), \infty) \cup (\lambda_2(d_L), \lambda_1(d_R))
\]
where the last interval is nonempty only when $\lambda_2(d_L) < \lambda_1(d_R)$. 
This means that $\lambda$ is contained in at least one of the three intervals.
We establish the bound by considering these three cases separately.

%Consider three intervals
%\[
%\mathcal{A}_1 = (0, \lambda_1(d_L)), 
%\quad
%\mathcal{A}_2 = (\lambda_2(d_R), +\infty),
%\quad
%\mathcal{A}_3 = (\lambda_2(d_L), +\infty) \cap (0, \lambda_1(d_R)).
%\]
%Observe that $\mathcal{A} \subseteq \mathcal{A}_1 \cup \mathcal{A}_2 \cup \mathcal{A}_3$. 
%We establish the bound by considering each case $\lambda \in \mathcal{A}_i$ separately. 

\emph{Case 1: $\lambda < \lambda_1(d_L)$.}
By the Cauchy-Schwarz inequality, we have
\begin{align}
\sqrt{|h(y_1)| \cdot |h(y_2)|} 
&= \sqrt{ \frac{1-\beta\gamma}{\beta e^{y_1} + \gamma e^{-y_1} + (1+\beta\gamma)} } \cdot \sqrt{ \frac{1-\beta\gamma}{\beta e^{y_2} + \gamma e^{-y_2} + (1+\beta\gamma)} } \nonumber\\ 
&\le \frac{1-\beta\gamma}{\sqrt{ (\beta e^{y_1} + \gamma e^{-y_1}) (\beta e^{y_2} + \gamma e^{-y_2}) } + (1+\beta\gamma)}. \label{eq:Cauchy-Schwarz}
\end{align} 
%\begin{align*}
%\sqrt{|h(y_1)| \cdot |h(y_2)|} 
%&= \sqrt{ \frac{1-\beta\gamma}{\beta e^{y_1} + \gamma e^{-y_1} + (1+\beta\gamma)} } \cdot \sqrt{ \frac{1-\beta\gamma}{\beta e^{y_2} + \gamma e^{-y_2} + (1+\beta\gamma)} }\\ 
%&\le \frac{1-\beta\gamma}{\sqrt{ \gamma e^{-y_1} + (1+\beta\gamma)} \cdot \sqrt{ \gamma e^{-y_2} + (1+\beta\gamma) }}\\ 
%&\le \frac{1-\beta\gamma}{\gamma e^{-\bar{y}} + (1+\beta\gamma)}, 
%\end{align*}
%where the last inequality follows from the Cauchy-Schwarz inequality. 
%where the last inequality follows from the AM–GM inequality. 
%by 
%\[
%(\gamma e^{-y_1} + (1+\beta\gamma))(\gamma e^{-y_2} + (1+\beta\gamma)) 
%= \gamma^2 e^{-2\bar{y}} + (1+\beta\gamma) \gamma (e^{-y_1} + e^{-y_2}) + (1+\beta\gamma)^2 \ge \gamma^2 e^{-2\bar{y}} + 2(1+\beta\gamma) \gamma e^{-\overline{y}} + (1+\beta\gamma)^2 
%= (\gamma e^{-\bar{y}} + (1+\beta\gamma))^2. 
%\]
Therefore, we get
\[
\sqrt{|h(y_1)| \cdot |h(y_2)|} \le \frac{1-\beta\gamma}{\gamma e^{-\bar{y}} + (1+\beta\gamma)}.
\]
Since $y_i \in J_{d_i}$ for $i=1,2$ and $\gamma > 1$, we deduce from \cref{lem:twocriticalthresholdbounds} that
\[
e^{\bar{y}} 
\le \frac{\lambda}{\gamma^{\bar{d}}} 
\le \frac{\lambda_1(d_L)}{\gamma^{d_L}}
\le \frac{18\gamma}{\theta(d_L)},
%= \sqrt{e^{y_1} \cdot e^{y_2}} 
%\le \sqrt{\frac{\lambda}{\gamma^{d_1}} \cdot \frac{\lambda}{\gamma^{d_2}}} 
%= \frac{\lambda}{\gamma^{\bar{d}}}. 
\]
where $\theta(d_L) = d_L(1-\beta\gamma) - (1+\beta\gamma)$. 
It follows that
\[
\sqrt{|h(y_1)| \cdot |h(y_2)|} \le \frac{1-\beta\gamma}{\gamma e^{-\bar{y}} + (1+\beta\gamma)} 
\le \frac{1-\beta\gamma}{\frac{\theta(d_L)}{18} + (1+\beta\gamma)} 
\le \frac{36}{d_1+d_2 + 2}. 
\]

%If $\bar{d} \ge \overline{\Delta}$, then we deduce from \cref{lem:twocriticalthresholdbounds} and $\gamma > 1$ that 
%\[
%e^{\bar{y}} 
%\le \frac{\lambda_c}{\gamma^{\lfloor \bar{d} \rfloor}} 
%\le \frac{18\gamma}{\theta(\lfloor \bar{d} \rfloor)} 
%\]
%where $\theta(d) = d(1-\beta\gamma) - (1+\beta\gamma)$. 
%It follows that
%\[
%\sqrt{|h(y_1)| \cdot |h(y_2)|} \le \frac{1-\beta\gamma}{\gamma e^{-\bar{y}} + (1+\beta\gamma)} \le \frac{1-\beta\gamma}{\frac{\theta(\lfloor \bar{d} \rfloor)}{18} + (1+\beta\gamma)} 
%\le \frac{18}{ \bar{d} + 1} = \frac{36}{d_1+d_2 + 2}. 
%\]

\emph{Case 2: $\lambda > \lambda_2(d_R)$.}
Similarly, we obtain from \cref{eq:Cauchy-Schwarz} that
\[
\sqrt{|h(y_1)| \cdot |h(y_2)|} \le \frac{1-\beta\gamma}{\beta e^{\bar{y}} + (1+\beta\gamma)}.
\]
Since $y_i \in J_{d_i}$ for $i=1,2$ and $\beta < 1$, we deduce from \cref{lem:twocriticalthresholdbounds} that
\[
e^{\bar{y}} 
\ge \lambda \beta^{\bar{d}}
\ge \lambda_2(d_R) \beta^{d_R}
\ge \frac{\theta(d_R)}{18\beta},
\]
where $\theta(d_R) = d_R(1-\beta\gamma) - (1+\beta\gamma)$. 
It follows that
\[
\sqrt{|h(y_1)| \cdot |h(y_2)|} \le \frac{1-\beta\gamma}{\beta e^{\bar{y}} + (1+\beta\gamma)} \le \frac{1-\beta\gamma}{\frac{\theta(d_R)}{18} + (1+\beta\gamma)} 
\le \frac{36}{d_1+d_2 + 2}. 
\]

\emph{Case 3: $\lambda_2(d_L) <\lambda< \lambda_1(d_R)$.}
We may assume that $d_1 \ge d_2$. 
By \cref{eq:Cauchy-Schwarz}, we obtain
\[
\sqrt{|h(y_1)| \cdot |h(y_2)|} 
\le \frac{1-\beta\gamma}{\sqrt{\beta\gamma} e^{\frac{y_2-y_1}{2}} + (1+\beta\gamma)}. 
\]
Since $y_i \in J_{d_i}$ for $i=1,2$ and $\beta < 1 < \gamma$, we have
\[
e^{y_2-y_1} \ge \beta^{d_2}\gamma^{d_1} \ge \beta^{d_L}\gamma^{d_R}.
\]
Meanwhile, we deduce from \cref{lem:twocriticalthresholdbounds} that
\[
\frac{\theta(d_L)}{18 \beta^{d_L + 1}} \le \lambda_2(d_L) < \lambda < \lambda_1(d_R) \le \frac{18\gamma^{d_R + 1}}{\theta(d_R)}, 
\] 
which implies
\[
\sqrt{\beta\gamma} e^{\frac{y_2-y_1}{2}} \ge \sqrt{\beta^{d_L+1} \gamma^{d_R+1}} \ge \frac{\sqrt{\theta(d_L) \theta(d_R)}}{18} \ge \frac{\theta(d_L)}{18}. 
\]
It follows that
\[
\sqrt{|h(y_1)| \cdot |h(y_2)|} 
\le \frac{1-\beta\gamma}{\sqrt{\beta\gamma} e^{\frac{y_2-y_1}{2}} + (1+\beta\gamma)}
\le \frac{1-\beta\gamma}{\frac{\theta(d_L)}{18} + (1+\beta\gamma)}
\le \frac{36}{d_1+d_2 + 2}. \qedhere
\]
\end{proof}

\subsection{Proofs of technical lemmas}
\label{sec:boundedness-lemma}

\begin{proof}[Proof of \cref{lem:twocriticalthresholdbounds}]
1. 
For every $1<d<\Delta$ we have
\[
\lambda_c \le \frac{\gamma^{d+1} d^d}{(d-1)^{d+1}} = \frac{\gamma^{d+1}}{d-1} \left( \frac{d}{d-1} \right)^d \le \frac{4\gamma^{d+1}}{d-1},
\]
where the last inequality follows from that $(\frac{d}{d-1})^d \le 4$ for all integer $d > 1$. 
%is monotone decreasing in $d$. 

\medskip
\noindent 2. For every $\overline{\Delta} \le d <\Delta$ we have 
\[
x_1(d) = \frac{2\gamma}{\theta(d) + \sqrt{\theta(d)^2 - 4\beta\gamma}} \le \frac{2\gamma}{\theta(d)}. 
\]
Observe that the function $\frac{x+\gamma}{\beta x+1}$ is monotone increasing in $x$ when $\beta \gamma <1$, and thus we deduce that
\[
\frac{x_1(d) + \gamma}{\beta x_1(d) + 1} \le \frac{\frac{2\gamma}{\theta(d)} + \gamma}{\frac{2\beta\gamma}{\theta(d)} + 1} 
= \gamma \cdot \frac{2 + d(1-\beta\gamma) - (1+\beta\gamma)}{2\beta\gamma + d(1-\beta\gamma) - (1+\beta\gamma)} 
= \gamma \cdot \frac{d+1}{d-1}. 
\]
Therefore, 
\[
%\lambda_c \le 
\lambda_1(d) = x_1(d) \left( \frac{x_1(d) + \gamma}{\beta x_1(d) + 1} \right)^d \le \frac{2\gamma}{\theta(d)} \cdot \gamma^d \cdot \left( \frac{d+1}{d-1} \right)^d \le \frac{18\gamma^{d+1}}{\theta(d)}
\]
where the last inequality follows from that $(\frac{d+1}{d-1})^d \le 9$ for all integer $d > 1$. 
%is monotone decreasing in $d$. 

The second part can be proved similarly. For every $\overline{\Delta} \le d <\Delta$ we have 
\[
x_2(d) = \frac{\theta(d) + \sqrt{\theta(d)^2 - 4\beta\gamma}}{2\beta} \ge \frac{\theta(d)}{2\beta}, 
\]
and hence,
\[
\frac{x_2(d) + \gamma}{\beta x_2(d) + 1} \ge \frac{\frac{\theta(d)}{2\beta} + \gamma}{ \frac{\theta(d)}{2} + 1} 
= \frac{1}{\beta} \cdot \frac{d(1-\beta\gamma) - (1+\beta\gamma) + 2\beta \gamma}{d(1-\beta\gamma) - (1+\beta\gamma) + 2} 
= \frac{1}{\beta} \cdot \frac{d-1}{d+1}. 
\]
We then conclude that
\[
%\overline{\lambda}_c \ge 
\lambda_2(d) = x_2(d) \left( \frac{x_2(d) + \gamma}{\beta x_2(d) + 1} \right)^d \ge \frac{\theta(d)}{2\beta} \cdot \frac{1}{\beta^d} \cdot \left( \frac{d-1}{d+1} \right)^d \ge \frac{\theta(d)}{18 \beta^{d+1}}, 
\]
where the last inequality again follows from that $(\frac{d+1}{d-1})^d \le 9$ for all integer $d > 1$. 
\end{proof}

\begin{proof}[Proof of \cref{lem:psimaximizer}]
We deduce from the AM–GM inequality that 
\[
|h(y)| = \frac{|1-\beta\gamma|}{\beta e^y + \gamma e^{-y} + 1+\beta} \le \frac{|1-\beta\gamma|}{2\sqrt{\beta\gamma} + 1+\beta} = \frac{|1-\sqrt{\beta\gamma}|}{1+\sqrt{\beta\gamma}}. \qedhere
\]
\end{proof}

%The function $x \mapsto \frac{x}{(\beta x + 1)(x + \gamma)}$ is maximized on $\R_{\geq 0}$ at $x = \sqrt{\frac{\gamma}{\beta}}$, with corresponding objective value $\frac{1}{(1 + \sqrt{\beta\gamma})^{2}}$.

%The derivative of the function $x \mapsto \frac{x}{(\beta x + 1)(x + \gamma)}$ is given precisely by
%\begin{align*}
%    &\frac{1}{(\beta x + 1)(x + \gamma)} - \frac{x(\beta(x + \gamma) + (\beta x + 1)}{(\beta x + 1)^{2}(x + \gamma)^{2}} \\
%    &= \frac{(\beta x + 1)(x + \gamma) - x(\beta(x + \gamma) + (\beta x + 1))}{(\beta x + 1)^{2}(x + \gamma)^{2}} \\
%    &= \frac{\gamma - \beta x^{2}}{(\beta x + 1)^{2}(x + \gamma)^{2}}
%\end{align*}
%By inspection, the derivative is nonnegative when $x \leq \sqrt{\frac{\gamma}{\beta}}$, zero at $x = \sqrt{\frac{\gamma}{\beta}}$, and nonpositive when $x \geq \sqrt{\frac{\gamma}{\beta}}$. Hence, $x \mapsto \frac{x}{(\beta x + 1)(x + \gamma)}$ is maximized at $\sqrt{\frac{\gamma}{\beta}}$. A calculation also show that the objective value at $x = \sqrt{\frac{\gamma}{\beta}}$ is precisely $\frac{1}{(1 + \sqrt{\beta\gamma})^{2}}$.

\section{Proofs for ferromagnetic cases}\label{sec:ferroproofs}
\subsection{Proof of \texorpdfstring{\cref{thm:ss19ferroregion}}{Theorem 26}}\label{sec:ferro_proof_SS19}

\begin{proof}[Proof of \cref{thm:ss19ferroregion}]
Throughout, we use the ``trivial potential'' function $\Psi(y) = y$. Note that then, $\psi(y) = 1$ is a constant function. 
%In this case, by \cref{lem:A-and-B}, we have
%\begin{align*}
%    \sum_{v \in L_{r}(k)} \abs{\mathcal{I}_{T} (r \sra v)} \leq \Delta_{r} \max_{u \in L_{r}(1)} \wrapc{\abs{h(\log R_{u})}} \cdot \wrapp{\max_{1 \leq d < \Delta} \cdot \sup_{\mybf{y}}\norm{\nabla H_{d}(\mybf{y})}_{1}}^{k-1}.
%\end{align*}
Now, we prove \hyperlink{cond:contraction}{Contraction} and \hyperlink{cond:boundedness}{Boundedness}. We split into the three cases.
\begin{enumerate}
    \item 
    We first prove the \hyperlink{cond:contraction}{Contraction} part. 
    By \cref{lem:psimaximizer}, for all $y\in[-\infty,+\infty]$ we have
    \[
    \abs{h(y)} 
    \leq \frac{|1 - \sqrt{\beta\gamma}|}{1 + \sqrt{\beta\gamma}} \leq \frac{1-\delta}{\Delta-1}.
    \]
%    \begin{align*}
%        \abs{\frac{\partial H_{d}(\mybf{y})}{\partial y_{i}}} = \abs{h(y_{i})} 
%        %= \frac{\beta\gamma - 1}{1 + \beta\gamma + \gamma e^{-y_{i}} + \beta e^{y_{i}}} 
%        \leq \frac{|1 - \sqrt{\beta\gamma}|}{1 + \sqrt{\beta\gamma}} \leq \frac{1-\delta}{\Delta-1}.
%    \end{align*}
    Now let us prove the \hyperlink{cond:boundedness}{Boundedness} condition. From the above inequality we have
    \[
    \abs{h(y)} \le \frac{1}{\Delta-1} \le \frac{1.5}{\Delta} 
    \]
    for $\Delta \ge 3$. 
%    \begin{align*}
%        \abs{h(y)} \leq \frac{\beta\gamma-1}{(1 + \sqrt{\beta\gamma})^{2}} = \frac{\sqrt{\beta\gamma}-1}{1 + \sqrt{\beta\gamma}} \leq O(1/\Delta).
%    \end{align*}

    \item For the \hyperlink{cond:contraction}{Contraction} part, since $\log(\lambda \max\{1,1/\gamma^{\Delta-1}\}) \leq y_{i} \leq \log(\lambda \max\{1,\beta^{\Delta-1}\})$, we have
    \begin{align*}
        \abs{\frac{\partial H_{d}(\mybf{y})}{\partial y_{i}}} &= \abs{h(y_{i})} = \frac{\beta\gamma-1}{1 + \beta\gamma + \gamma e^{-y_{i}} + \beta e^{y_{i}}} \leq \frac{\beta\gamma-1}{1 + \beta\gamma + \gamma e^{-y_{i}}} \\
        &\leq \frac{\beta \gamma - 1}{1 + \beta\gamma + \frac{\gamma}{\lambda \max\{1,\beta^{\Delta-1}\}}}.
    \end{align*}
    Since we assumed $\lambda \leq (1 - \delta)\frac{\gamma}{\max\{1,\beta^{\Delta-1}\} \cdot ((\Delta-2)\beta\gamma - \Delta)}$, it follows that we have the upper bound
    \begin{align*}
        \frac{\beta\gamma - 1}{1 + \beta\gamma + \frac{(\Delta-2)\beta\gamma - \Delta}{1 - \delta}} 
        &= (1 - \delta) \frac{\beta \gamma - 1}{(\Delta-1-\delta)\beta\gamma - (\Delta - 1 + \delta)}\\ 
        &= (1 - \delta) \frac{\beta\gamma-1}{(\Delta-1-\delta)(\beta\gamma - 1) + 2\delta} \\
        &\leq \frac{1 - \delta}{\Delta - 1 - \delta} \leq (1 - \Theta(\delta)) \frac{1}{\Delta-1}.
    \end{align*}
    Now, we prove the \hyperlink{cond:boundedness}{Boundedness} condition. Note that since $\lambda \leq \frac{\gamma}{\max\{1,\beta^{\Delta-1}\} \cdot ((\Delta-2)\beta\gamma-\Delta)}$, it follows that $y \leq \log(\lambda \max\{1,\beta^{\Delta-1}\}) \leq \log\wrapp{\frac{\gamma}{(\Delta-2)\beta\gamma - \Delta}}$. %By \cref{claim:ferrothresholdsaroundmaximizer},
    A simple calculation reveals that $\frac{\gamma}{(\Delta-2)\beta\gamma - \Delta} \leq \sqrt{\frac{\gamma}{\beta}}$ and so by \cref{lem:psimaximizer}, we have
    \begin{align*}
        \abs{h(y)} &\leq \abs{h\wrapp{\log\wrapp{\frac{\gamma}{(\Delta-2)\beta\gamma - \Delta}}}} \leq \frac{(\beta\gamma-1)e^{\log\wrapp{\frac{\gamma}{(\Delta-2)\beta\gamma - \Delta}}}}{e^{\log\wrapp{\frac{\gamma}{(\Delta-2)\beta\gamma - \Delta}}} + \gamma} \\
        &= (\beta\gamma-1) \frac{1}{1 + (\Delta-2)\beta\gamma - \Delta} = \frac{\beta\gamma-1}{(\Delta-2)(\beta\gamma - 1) - 1} \leq O(1/\Delta).
    \end{align*}

    \item For the \hyperlink{cond:contraction}{Contraction} part, since $\log(\lambda \max\{1,1/\gamma^{\Delta-1}\}) \leq y_{i} \leq \log(\lambda \max\{1,\beta^{\Delta-1}\})$, we have
    \begin{align*}
        \abs{\frac{\partial H_{d}(\mybf{y})}{\partial y_{i}}} &= \abs{h(y_{i})} = \frac{\beta\gamma-1}{1 + \beta\gamma + \gamma e^{-y_{i}} + \beta e^{y_{i}}} \leq \frac{\beta\gamma-1}{1 + \beta\gamma + \beta e^{y_{i}}} \\
        &\leq \frac{\beta \gamma - 1}{1 + \beta\gamma + \beta \lambda \max\{1,1/\gamma^{\Delta-1}\}}.
    \end{align*}
    Since we assumed $\lambda \geq \frac{1}{1 - \delta} \cdot \frac{(\Delta-2)\beta\gamma-\Delta}{\beta \cdot \min\{1,1/\gamma^{\Delta-1}\}}$, it follows that we have the upper bound
    \begin{align*}
        \frac{\beta\gamma - 1}{1 + \beta\gamma + \frac{(\Delta-2)\beta\gamma - \Delta}{1 - \delta}}
    \end{align*}
    which is again is upper bounded by $(1 - \Theta(\delta))\frac{1}{\Delta-1}$ as we calculated in case 2 above.
    
    Now, we prove the \hyperlink{cond:boundedness}{Boundedness} condition. Note that since $\lambda \geq \frac{(\Delta-2)\beta\gamma - \Delta}{\beta \min\{1,1/\gamma^{\Delta-2}}$, it follows that $y \geq \log(\lambda \min\{1,1/\gamma^{\Delta-1}\} \geq \log\wrapp{\frac{(\Delta-2)\beta\gamma - \Delta}{\beta}}$. %By \cref{claim:ferrothresholdsaroundmaximizer},
    A simple calculation reveals that $\frac{(\Delta-2)\beta\gamma - \Delta}{\beta} \geq \sqrt{\frac{\gamma}{\beta}}$ and so by \cref{lem:psimaximizer}, we have
    \begin{align*}
        \abs{h(y)} &\leq \abs{h\wrapp{\log\wrapp{\frac{(\Delta-2)\beta\gamma - \Delta}{\beta}}}} \leq (\beta \gamma - 1) \frac{1}{\beta \cdot \frac{(\Delta-2)\beta\gamma - \Delta}{\beta} + 1} \\
        &= \frac{\beta\gamma - 1}{(\Delta-2)(\beta\gamma - 1) - 1} \leq O(1/\Delta).\qedhere
    \end{align*}
\end{enumerate}
\end{proof}

\subsection{Proof of \texorpdfstring{\cref{thm:gl18ferroregion}}{Theorem 27}}
In this subsection, we use results from \cite{GL18} to prove \cref{thm:gl18ferroregion}. Their potential function is implicitly defined by its derivative for the marginal ratios as
\begin{align*}
    \Phi'(R) = \phi(R) = \min\wrapc{\frac{\beta\gamma-1}{\alpha\gamma \log \frac{\lambda + \gamma}{\beta\lambda+1}}, \frac{1}{R \log \frac{\lambda}{R}}}
\end{align*}
for a constant $0 \leq \alpha \leq 1$ depending only on $\beta,\gamma,\lambda$ (see \cite{GL18} for a precise definition). In our context, the corresponding potential for the log ratios is
\begin{align*}
    \Psi'(y) = \psi(y) = e^{y}\phi(e^{y}) = \min\wrapc{\frac{\beta\gamma-1}{\alpha\gamma \log \frac{\lambda + \gamma}{\beta\lambda+1}} e^{y}, \frac{1}{\log \frac{\lambda}{e^{y}}}}
\end{align*}
and is bounded by constants depending on $\beta,\gamma,\lambda,\Delta$ for $\log(\lambda/\gamma^{\Delta-1}) \leq y \leq \log \lambda$.

One of the main technical results in \cite{GL18} is showing that the tree recursion is contracting with the potential function $\Phi$, and the derivative $\phi$ is bounded in the sense that there exist positive constants $C_{1},C_{2}$ depending only on $\beta,\gamma,\lambda$ such that $C_{1} \leq \phi(R) \leq C_{2}$ for all $0 \leq R \leq \lambda$. \cite{GL18} refer to such a function as a \textit{universal potential function}.

In our context, we get that $\Psi$ is an $(\alpha,c)$-potential function which satisfies \cref{defn:potential}, but with a constant $c$ that depends on $\gamma,\Delta$. Indeed, worst case, we have
\begin{align*}
    \max_{y_{1},y_{2}} \frac{\psi(y_{2})}{\psi(y_{1})} \geq \frac{\psi(\log \lambda)}{\psi(\log(\lambda/\gamma^{\Delta-1}))} = \frac{\lambda\frac{\beta\gamma-1}{\alpha \gamma \log\frac{\lambda+\gamma}{\beta\lambda+1}}}{\frac{\beta\gamma-1}{\alpha \log \frac{\lambda+\gamma}{\beta\lambda+1}} \cdot \frac{\lambda}{\gamma^{\Delta}}} = \gamma^{\Delta-1}.
\end{align*}
More precisely, we have the following result from \cite{GL18}, stated in terms of the log marginal ratios.
\begin{thm}%[\cite{GL18}]
\label{thm:gl18universalpotential}
Assume $\beta,\gamma,\lambda$ are nonnegative real numbers satisfying $\beta \leq 1 \leq \gamma$, $\sqrt{\beta\gamma} \geq 1$, and $\lambda < \wrapp{\frac{\gamma}{\beta}}^{\frac{\sqrt{\beta\gamma}}{\sqrt{\beta\gamma} - 1}}$. Then the function $\Psi$ is an $(\alpha,c)$-potential function for a constant $0 < \alpha < 1$ depending on $\beta,\gamma,\lambda$, and a constant $c > 0$ depending on $\beta,\gamma,\lambda,\Delta$.
\end{thm}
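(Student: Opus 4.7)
The plan is to deduce \cref{thm:gl18ferroregion} as an immediate combination of \cref{thm:gl18universalpotential} and \cref{thm:contraction-implies-mixing}. The former furnishes an $(\alpha,c)$-potential $\Psi$ in the specified regime, with constants $\alpha = \alpha(\beta,\gamma,\lambda) > 0$ and $c = c(\beta,\gamma,\lambda,\Delta) > 0$, and the latter then yields mixing time $O(n^{2 + c/\alpha}) = O(n^{C})$ for a constant $C$ depending only on $\beta,\gamma,\lambda,\Delta$. The real content is therefore in justifying \cref{thm:gl18universalpotential}: translating the universal potential of \cite{GL18}---originally formulated for the marginal-ratio recursion $F_d$ and potential $\Phi$---into an $(\alpha,c)$-potential for the log-ratio recursion $H_d$ and the potential $\Psi = \Phi \circ \log$.

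For the Contraction condition, I would use the identity $H_d^\Psi = F_d^\Phi$, and hence $\grad H_d^\Psi = \grad F_d^\Phi$, which was observed in the discussion immediately after \cref{thm:contraction-implies-mixing}. Under the hypotheses $\beta \leq 1 \leq \gamma$, $\sqrt{\beta\gamma} \geq \Delta/(\Delta-2)$, and $\lambda < (\gamma/\beta)^{\sqrt{\beta\gamma}/(\sqrt{\beta\gamma}-1)}$, the main technical result of \cite{GL18} is precisely a bound of the form $\|\grad F_d^\Phi\|_1 \leq 1 - \alpha$ for some $\alpha > 0$ depending only on $\beta,\gamma,\lambda$. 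Under the above identity, this transfers verbatim to the desired Contraction bound for $H_d^\Psi$ with the same~$\alpha$.

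For the Boundedness condition, I would exploit the universality of $\Phi$ from \cite{GL18}: the derivative $\phi = \Phi'$ satisfies $C_1 \leq \phi(R) \leq C_2$ on $R \in [0,\lambda]$ for positive constants $C_{1}, C_{2}$ depending only on $\beta,\gamma,\lambda$. Since $\psi(y) = e^y \phi(e^y)$, the ratio $\psi(y_2)/\psi(y_1)$ is at most $(C_2/C_1) \cdot e^{y_2 - y_1}$. Under our hypotheses, because $\beta \leq 1 \leq \gamma$ and $\beta\gamma > 1$, the interval $J$ is contained in $[\log(\lambda/\gamma^{\Delta-1}),\, \log \lambda]$, which has length $(\Delta-1)\log \gamma$; hence this ratio is bounded by a constant depending on $\beta,\gamma,\lambda,\Delta$. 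Combining with the uniform estimate on $|h|$ from \cref{lem:psimaximizer} gives $(\psi(y_2)/\psi(y_1)) \cdot |h(y_1)| \leq c/\Delta$ for some $c = c(\beta,\gamma,\lambda,\Delta)$, verifying Boundedness.

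The main obstacle---and the reason this theorem is stated with a $\Delta$-dependent exponent rather than in the cleaner form of \cref{thm:main-2-spin}---is that the Boundedness constant $c$ inherits an exponential-in-$\Delta$ dependence from the length of $J$, since the upper bound on $\psi(y_2)/\psi(y_1)$ blows up like $\gamma^{\Delta-1}$. Unlike in the antiferromagnetic cases handled by \cref{lem:bound-Psi}, no analogue of the hand-tailored $\Delta$-independent bound appears to be available here from the universal potential alone, which is consistent with the remark preceding the theorem that \cref{thm:gl18ferroregion} is strictly weaker than the correlation decay result of \cite{GL18} for unbounded-degree graphs.
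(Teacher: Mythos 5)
Your proposal is correct and follows essentially the same route as the paper: translate the GL18 contraction via the identity $\grad H_d^\Psi = \grad F_d^\Phi$, invoke the universality bounds $C_1 \le \phi \le C_2$ on $[0,\lambda]$ for the Boundedness condition, and accept a $\Delta$-dependent constant $c$ because $\psi(y_2)/\psi(y_1)$ can be as large as order $\gamma^{\Delta-1}$ over $J$. One small slip: you write ``$\Psi = \Phi \circ \log$,'' but the correct relation (as used in the paper, and as your own formula $\psi(y) = e^y\phi(e^y)$ actually requires) is $\Psi = \Phi \circ \exp$, equivalently $\Phi = \Psi \circ \log$; the rest of your argument is carried out under the correct relation, so nothing downstream breaks.
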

Combined with \cref{thm:contraction-implies-mixing}, this gives $O(n^{C})$ mixing with a constant $C$ depending only on $\beta,\gamma,\lambda,\Delta$. We note this is weaker than the correlation decay result in \cite{GL18}, since there, $C$ does not depend on $\Delta$, and hence is efficient for arbitrary graphs.

\section{Slightly faster mixing}\label{sec:slightlyfastermixing}
In this section, we slightly optimize our mixing time results for certain antiferromagnetic 2-spin systems by more carefully taking into account the tradeoff between the (nontrivial) spectral independence bound we prove based on contraction, and the (trivial) spectral independence bound we obtained in \cref{subsec:constantdimensions} for handling constant-sized graphs.
\begin{prop}
Suppose a distribution $\mu$ on subsets of $[n]$ is $(\eta_{0},\dots,\eta_{n-2})$-spectrally independent for $\eta_{i} \leq \min\{a, (n-i-1)b\}$, for some $a \geq 0$ and $0 \leq b \leq 1$. Then the Glauber dynamics for sampling from $\mu$ has spectral gap at least $\frac{1}{n} \cdot \Omega\wrapp{\frac{a}{b n}}^{a}$
\end{prop}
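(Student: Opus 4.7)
The plan is to directly apply \cref{thm:spectral-influence} and carefully estimate the resulting product. By \cref{thm:spectral-influence}, the spectral gap of the Glauber dynamics is at least
\[
\frac{1}{n}\prod_{i=0}^{n-2}\left(1 - \frac{\eta_i}{n-i-1}\right).
\]
Substituting the hypothesis $\eta_i \leq \min\{a, (n-i-1)b\}$ and reindexing via $j = n-i-1$, this becomes $\frac{1}{n}\prod_{j=1}^{n-1}\left(1 - \min\{a/j,\,b\}\right)$. The $\min$ splits naturally at $j^{\star} := \lceil a/b\rceil$: for $j < j^{\star}$ the effective bound on $\eta_i/(n-i-1)$ is $b$, giving a factor of $1-b$; for $j \geq j^{\star}$ the effective bound is $a/j$, giving a factor of $1 - a/j$.

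For the ``small $j$'' range $1 \leq j < j^{\star}$, the contribution is $(1-b)^{j^{\star}-1} \geq (1-b)^{a/b}$. Using the standard inequality $\log(1-b) \geq -b/(1-b)$ valid for $b \in [0,1)$, this is $\geq \exp(-a/(1-b))$, a quantity of the form $\Omega(1)^{a}$ provided $b$ stays bounded away from $1$ (which is the relevant regime from \cref{subsec:constantdimensions}). This will be absorbed into the constant inside the final $\Omega(\cdot)^{a}$.

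For the ``large $j$'' range $j^{\star} \leq j \leq n-1$, I would take logarithms and apply the estimate $\log(1-x) \geq -x - x^2$ valid for $0 \leq x \leq 1/2$; this applies since by construction $a/j \leq b$, and we may assume $b \leq 1/2$ (otherwise the proposition is essentially trivial after rescaling). This yields
\[
\sum_{j=j^{\star}}^{n-1}\log\left(1 - \frac{a}{j}\right) \geq -a\sum_{j=j^{\star}}^{n-1}\frac{1}{j} \;-\; a^{2}\sum_{j=j^{\star}}^{n-1}\frac{1}{j^{2}}.
\]
The first sum is bounded above by $\log(n/j^{\star}) + O(1) = \log(bn/a) + O(1)$; the second is bounded by $O(1/j^{\star}) = O(b/a)$, contributing at most $O(ab) = O(a)$. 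Exponentiating yields $\prod_{j=j^{\star}}^{n-1}(1 - a/j) \geq e^{-a\log(bn/a) - O(a)} = \Omega(1)^{a} \cdot (a/(bn))^{a}$. Multiplying by the small-$j$ contribution and the $1/n$ prefactor gives the claimed bound $\frac{1}{n}\cdot\Omega\wrapp{\frac{a}{bn}}^{a}$.

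The main (minor) obstacle is bookkeeping the edge cases cleanly: when $a/b \geq n$ only the small-$j$ part is present; when $a \leq b$ only the large-$j$ part is present; when $b$ approaches $1$ the factor $(1-b)^{a/b}$ degenerates and must be treated as part of the $\Omega$ hidden constant rather than absorbed via $e^{-a/(1-b)}$. None of these are substantive difficulties, but they do need to be dispatched in order to state the constants in $\Omega(\cdot)$ precisely.
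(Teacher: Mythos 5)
Your proposal is correct and follows essentially the same route as the paper's proof: both split the product $\prod_i \bigl(1 - \eta_i/(n-i-1)\bigr)$ into the regime where the $b(n-i-1)$ bound is tighter (yielding a $(1-b)^{a/b} = \Omega(1)^a$ factor under the same implicit assumption that $b$ is bounded away from $1$) and the regime where the $a$ bound is tighter (yielding $\exp(-a\log(bn/a) - O(a)) = \Omega(a/(bn))^a$ via a harmonic-sum estimate). Your reindexing $j = n-i-1$ and threshold $j^\star = \lceil a/b\rceil$ are a cosmetic repackaging of the paper's fraction $c = 1 - a/(bn)$, and your use of $\log(1-x) \ge -x - x^2$ is a slightly more explicit version of the paper's $\gtrsim$ bookkeeping.
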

\begin{proof}
Suppose we have already conditioned on $c$-fraction of elements to be ``in/out. The resulting distribution is both $b (1-c)n$-spectrally independent and $a$-spectrally independent. The exact threshold $c$ for which the bound $b(1-c)n$ is better than $a$ is given by
\begin{align*}
    c = 1 - \frac{a}{b n}
\end{align*}
We note such a $c$ only makes sense when $0 \leq 1 - \frac{a}{b n} \leq 1$, or equivalently, $b n \geq a$. Now, we apply the $a$-spectral independence bound for all conditional distributions based on fixing at most $c$-fraction of vertices. We apply the $(n-i-1)b$-spectral independence otherwise. We obtain a final spectral gap lower bound of
\begin{align*}
    \frac{1}{n} \cdot (1 - b)^{(1 - c)n} \cdot \prod_{k=0}^{cn} \wrapp{1 - \frac{a}{n-k-1}}
\end{align*}
Observe that
\begin{align*}
    (1 - b)^{(1-c)n} = (1 - b)^{\frac{a}{b}} \gtrsim \exp(-a)
\end{align*}
We also have
\begin{align*}
    \prod_{k=0}^{cn} \wrapp{1 - \frac{a}{n-k-1}} &\gtrsim \exp\wrapp{-a \sum_{k=0}^{cn} \frac{1}{n-k-1}} \\
    &\gtrsim \exp\wrapp{-a \wrapp{\underset{\approx \log n}{\underbrace{\sum_{k=0}^{n-2} \frac{1}{n-k-1}}} - \underset{\approx \log (1-c)n}{\underbrace{\sum_{k=cn+1}^{n-2} \frac{1}{n-k-1}}}}} \\
    &\gtrsim \exp\wrapp{-a \cdot \log \frac{1}{1-c}} \\
    &\gtrsim \exp\wrapp{-a \log \frac{b n}{a}} \\
    &\gtrsim \wrapp{\frac{a}{b n}}^{a}
\end{align*}
Putting these together, we obtain the desired lower bound.
\end{proof}
With this result, we can apply it to the antiferromagnetic models with $\sqrt{\beta\gamma} \leq \frac{\Delta-2}{\Delta}, \gamma \leq 1$ and $\beta = 0, \gamma \leq 1$, since looking in the proof of \cref{lem:antiferroconstantsize}, we have such systems are $Cn$-spectrally independent roughly with $C \leq O(1/\Delta)$.
\begin{coro}[Soft Constraints]
Fix integers $\Delta \geq 3$, $1 < \overline{\Delta} < \Delta$. Let $\beta,\gamma,\lambda \geq 0$ be nonnegative real numbers satisfying $\frac{\overline{\Delta}-2}{\overline{\Delta}} \leq \sqrt{\beta\gamma} \leq \frac{\overline{\Delta}-1}{\overline{\Delta}+1}$ and $\gamma \leq 1$. Assume further that $(\beta,\gamma,\lambda)$ is up-to-$\Delta$ unique with gap $0 < \delta < 1$. Then for every $n$-vertex graph $G$ with maximum degree at most $\Delta$, the Glauber dynamics for sampling from the antiferromagnetic 2-spin system with parameters $(\beta,\gamma,\lambda)$ mixes in $O\wrapp{\frac{\overline{\Delta} \cdot n}{\Delta}}^{O(1/\delta)}$ steps.
\end{coro}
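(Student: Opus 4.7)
The plan is to combine two ingredients already developed in the paper via the tradeoff proposition stated at the start of this section. The new input enabling the $\overline{\Delta}/\Delta$ improvement is a refined per-pair influence bound: roughly, each individual $|\II_G^\sL(u \sra v)|$ is at most $O(\overline{\Delta}/\Delta)$ under the stated parameter regime, not just $O(1)$.

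First I would verify that the two bounds on $\sqrt{\beta\gamma}$ pin down the quantity $\Delta_0$ from case 4 of the proof of \cref{lem:antiferroconstantsize} (the maximal $1 < d < \Delta$ with $\sqrt{\beta\gamma} > \frac{d-2}{d}$) to satisfy $\Delta_0 \in \{\overline{\Delta}, \overline{\Delta}+1\}$. Indeed, the lower bound $\sqrt{\beta\gamma} \ge \frac{\overline{\Delta}-2}{\overline{\Delta}}$ gives $\Delta_0 \ge \overline{\Delta}$, and since a short calculation shows $\frac{\overline{\Delta}-1}{\overline{\Delta}+1} < \frac{\overline{\Delta}}{\overline{\Delta}+2}$, the upper bound on $\sqrt{\beta\gamma}$ forces $\Delta_0 \le \overline{\Delta}+1$. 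Since $\gamma \le 1$, case 4 of that proof (combining $\lambda \le \lambda_c \le \lambda_1(\Delta-1) \le 18\gamma^\Delta/\theta(\Delta-1)$ with $\theta(\Delta-1) = \Omega(\Delta/\overline{\Delta})$, which follows from $1-\beta\gamma \ge 4\overline{\Delta}/(\overline{\Delta}+1)^2$) then yields $|\II_G^\sL(u \sra v)| \le O(\overline{\Delta}/\Delta)$ for every pair $u, v$ and every boundary condition $\sL$.

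Second, since $\sqrt{\beta\gamma} \le \frac{\overline{\Delta}-1}{\overline{\Delta}+1} < \frac{\Delta-2}{\Delta}$ and $\gamma \le 1$, \cref{lem:potential-is-good} gives that the potential $\Psi$ of \cref{eq:Psi} is an $(\alpha, c)$-potential with $\alpha \ge \delta/2$ and $c \le 18$. Combining \cref{lem:Tsaw-G} and \cref{lem:bound-for-tree} exactly as in the proof of \cref{thm:contraction-implies-mixing} produces the row-sum bound $\sum_{v \ne r} |\II_G^\sL(r \sra v)| \le c/\alpha = O(1/\delta)$.

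Putting these together, the Gibbs distribution is $(\eta_0, \dots, \eta_{n-2})$-spectrally independent with $\eta_i \le \min\{a, (n-i-1)b\}$ for $a = O(1/\delta)$ and $b = O(\overline{\Delta}/\Delta)$. Feeding $(a, b)$ into the tradeoff proposition then delivers a spectral gap lower bound of order $\frac{1}{n} \cdot \Omega(\Delta/(\overline{\Delta} \delta n))^{O(1/\delta)}$. Converting to a mixing time via \cref{eq:rel-mixing} (with $\log(1/\mu_{\min}) = O(n)$) and absorbing the polynomial-in-$n$ and $\delta$-dependent factors into the $O(1/\delta)$-th power yields the claimed $O(\overline{\Delta} n/\Delta)^{O(1/\delta)}$ bound. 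The main subtlety is purely bookkeeping: verifying that $\Delta_0 = \Theta(\overline{\Delta})$, and carefully tracking how $\theta(\Delta-1)$ scales with $\overline{\Delta}$ so that case 4 of \cref{lem:antiferroconstantsize} really delivers the sharper $O(\overline{\Delta}/\Delta)$ constant rather than merely $O(1)$.
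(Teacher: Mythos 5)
Your proposal is correct and follows exactly the route the paper intends: combine the contraction-based spectral-independence bound $a = O(1/\delta)$ (via \cref{lem:potential-is-good}, \cref{lem:Tsaw-G}, \cref{lem:bound-for-tree}) with the per-pair influence bound $b = O(\overline{\Delta}/\Delta)$ coming from case~4 of \cref{lem:antiferroconstantsize}, and feed $(a,b)$ into the tradeoff proposition. One minor bookkeeping slip: the quantity $\Delta_0$ actually lies in $\{\overline{\Delta}-1, \overline{\Delta}\}$, not $\{\overline{\Delta}, \overline{\Delta}+1\}$ — the hypothesis $\sqrt{\beta\gamma} \le \frac{\overline{\Delta}-1}{\overline{\Delta}+1} = \frac{(\overline{\Delta}+1)-2}{\overline{\Delta}+1}$ already rules out $d = \overline{\Delta}+1$ (not just $d = \overline{\Delta}+2$), and when $\sqrt{\beta\gamma} = \frac{\overline{\Delta}-2}{\overline{\Delta}}$ holds with equality the strict inequality in the definition of $\Delta_0$ excludes $d = \overline{\Delta}$ — but $\Delta_0 = \Theta(\overline{\Delta})$ still holds, so the asymptotic conclusion is unaffected.
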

\begin{coro}[Hard Constraints]
Fix an integer $\Delta \geq 3$, fix $\beta = 0$, and let $0 \leq \gamma \leq 1, \lambda \geq 0$ be up-to-$\Delta$ unique with gap $0 < \delta < 1$. Then for every $n$-vertex graph $G$ with maximum degree at most $\Delta$, the Glauber dynamics for sampling from the antiferromagnetic 2-spin system with parameters $(\beta,\gamma,\lambda)$-mixes in $O\wrapp{\frac{n}{\Delta}}^{O(1/\delta)}$ steps.
\end{coro}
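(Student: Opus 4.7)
The plan is to deduce the corollary as a direct consequence of the proposition stated immediately above it, which converts two complementary spectral independence bounds into an improved mixing rate. I need an ``absolute'' bound $a$ independent of the graph size, together with a ``per-vertex'' bound of the form $(n-|\Lambda|-1)b$ where $b$ is small when $\Delta$ is large. Once both bounds are in place, applying the proposition with $a = O(1/\delta)$ and $b = O(1/\Delta)$ gives spectral gap $\tfrac{1}{n} \cdot \Omega(\delta \Delta/n)^{O(1/\delta)}$, and the claimed mixing time follows by the standard conversion $T_\mathrm{mix} \le \text{gap}^{-1} \log(1/\mu_\mathrm{min})$ after absorbing polynomial-in-$n$ and $1/\delta$ factors into the exponent.

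For the absolute bound, I invoke Lemma~\ref{lem:potential-is-good} in the case $\beta=0$, $\gamma \le 1$: the function $\Psi$ defined by~\eqref{eq:Psi} is an $(\alpha,c)$-potential with $\alpha \ge \delta/2$ and $c \le 4$. Combined with Lemma~\ref{lem:Tsaw-G} (preservation of row-sums of influences via the self-avoiding walk tree) and Lemma~\ref{lem:bound-for-tree} (exponential decay of influences on trees given an $(\alpha,c)$-potential), this reproduces the argument from the proof of Theorem~\ref{thm:contraction-implies-mixing} and yields, for every $\Lambda \subseteq V$, every $\sigma_\Lambda \in \{0,1\}^\Lambda$, and every free vertex $r$,
\[
    \sum_{v \neq r} \abs{\mathcal{I}_G^{\sigma_\Lambda}(r \sra v)} \;\le\; \frac{c}{\alpha} \;\le\; \frac{8}{\delta},
\]
so I may take $a = O(1/\delta)$. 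For the per-vertex bound, I use the argument of Claim~\ref{lem:antiferroconstantsize}, case~2. Since $\beta = 0$, any free marginal ratio satisfies $R_v \in [0,\lambda/\gamma^\Delta]$ under every boundary condition, hence
\[
    \abs{\mathcal{I}_G^{\sigma_\Lambda}(r \sra v)} \;\le\; \frac{\lambda}{\lambda + \gamma^\Delta} \;\le\; \frac{\lambda}{\gamma^\Delta}.
\]
Up-to-$\Delta$ uniqueness forces $\lambda \le \lambda_c(\gamma,\Delta) \le \gamma^\Delta (\Delta-1)^{\Delta-1}/(\Delta-2)^\Delta = \gamma^\Delta \cdot O(1/\Delta)$, so each individual influence is at most $O(1/\Delta)$. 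Summing over the $n - |\Lambda| - 1$ free vertices gives the dimension-linear bound $\eta_{|\Lambda|} \le (n - |\Lambda| - 1) b$ with $b = O(1/\Delta)$.

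With both bounds in hand, plugging $a = O(1/\delta)$ and $b = O(1/\Delta)$ into the preceding proposition is mechanical. The only mild subtlety, and really the only place where one must pay attention, is ensuring the two bounds are uniform over all pinnings $\sigma_\Lambda$, which is immediate here: the contraction argument works boundary-conditionally because Lemma~\ref{lem:Tsaw-G} and Lemma~\ref{lem:bound-for-tree} are stated with an arbitrary conditioning, while the per-vertex estimate only uses the a priori range of marginal ratios forced by $\beta = 0$. No genuine new idea is required beyond composing Lemma~\ref{lem:potential-is-good}, Claim~\ref{lem:antiferroconstantsize}, and the proposition; the content of the corollary is exactly the tradeoff optimization enabled by having both a constant and a $1/\Delta$-type bound available simultaneously.
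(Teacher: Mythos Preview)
Your proposal is correct and follows exactly the approach the paper intends: the paper states just before the corollaries that one applies the proposition with the contraction-based bound $a=O(1/\delta)$ (from \cref{lem:potential-is-good}, \cref{lem:Tsaw-G}, \cref{lem:bound-for-tree}) and the per-vertex bound $b=O(1/\Delta)$ (from case~2 of \cref{lem:antiferroconstantsize}), and you have spelled out precisely these ingredients. One small slip: in $\Omega(a/(bn))^{a}$ the ratio is $\Theta(\Delta/(\delta n))$, so the $\delta$ sits in the denominator rather than the numerator, but this $\delta$-dependent factor is absorbed into the $O(1/\delta)$ exponent and does not affect the conclusion.
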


\end{appendices}
\end{document}